\newtheorem{theorem}{Theorem}[section]
\newtheorem*{theorem*}{Theorem}
\newtheorem{proposition}{Proposition}[section]
\newtheorem{lemma}{Lemma}[section]
\newtheorem{corollary}{Corollary}[section]
\theoremstyle{remark}
\newtheorem{claim}{Claim}[section]
\theoremstyle{definition}
\newtheorem{definition}{Definition}[section] % Use this for non-trivial definitions
\newtheorem{example}{Example}[section]
\DeclareMathOperator{\stab}{Stab}
\DeclareMathOperator{\spn}{span}
\DeclareMathOperator{\tr}{tr}
\DeclareMathOperator{\rr}{\mathbb{R}}
\DeclareMathOperator{\cc}{\mathbb{C}}
\DeclareMathOperator{\kk}{\mathbb{K}}
\DeclareMathOperator{\arity}{arity}
\DeclareMathOperator{\fc}{\mathcal{F}}
\DeclareMathOperator{\gc}{\mathcal{G}}
\DeclareMathOperator{\xc}{\mathcal{X}}
\DeclareMathOperator{\tc}{\mathcal{V}}
\DeclareMathOperator{\ac}{\mathcal{A}}
\DeclareMathOperator{\holant}{Holant}
\DeclareMathOperator{\biholant}{Bi-Holant}
\DeclareMathOperator{\csp}{\#CSP}
\DeclareMathOperator{\eq}{\mathcal{EQ}}
\DeclareMathOperator{\gl}{GL}
\DeclareMathOperator{\va}{\mathbf{a}}
\DeclareMathOperator{\vd}{\mathbf{d}}
\DeclareMathOperator{\vx}{\mathbf{x}}
\DeclareMathOperator{\vk}{\mathbf{K}}
\DeclareMathOperator{\vj}{\mathbf{J}}
\DeclareMathOperator{\sym}{Sym}
\DeclareMathOperator{\qk}{\mathfrak{Q}}
\newcommand{\prop}[1]{\left\langle #1 \right\rangle}
\newcommand{\dprop}[3]{\leftindex_{#2}{\left\langle #1 \right\rangle}_{#3}}
\newcommand{\bdprop}[3]{\leftindex[][|]_{#2}{\big\langle #1 \big\rangle}_{#3}}
\newcommand{\rprop}[2]{\leftindex_{#1}{\mathfrak{R}}_{#2}}
\newcommand{\qprop}[2]{\leftindex_{#1}{(\mathfrak{Q}_{\fc})}_{#2}}
\newcommand\numberthis{\addtocounter{equation}{1}\tag{\theequation}}
\newcommand{\TEdge}[2]{
\draw[thick] (#1) -- (#2) 
    node[draw, fill=white, regular polygon, regular polygon sides=3, minimum size = 3pt, inner sep=1.4pt, pos=0.2] {}
    node[draw, fill=black, regular polygon, regular polygon sides=3, minimum size = 3pt, inner sep=1.4pt, pos=0.8] {};
}
\newcommand{\STEdge}[3]{
\draw[thick] (#1) -- (#2) 
    node[draw, fill=white, regular polygon, regular polygon sides=3, minimum size = 3pt, inner sep=1.4pt, pos=0.44-#3, sloped] {}
    node[draw, fill=black, regular polygon, regular polygon sides=3, minimum size = 3pt, inner sep=1.4pt, pos=0.56-#3, sloped] {};
}
\title{Vanishing Signatures, Orbit Closure, \\ and the Converse of the Holant Theorem}
\author{Jin-Yi Cai\footnote{Department of Computer Sciences, University of Wisconsin-Madison}
        \hspace{4cm} 
        Ben Young{\footnotesize *}\\
    \texttt{\hspace{0.7cm}\href{mailto:jyc@cs.wisc.edu}{jyc@cs.wisc.edu}} \hspace{2.2cm}
\texttt{\href{mailto:benyoung@cs.wisc.edu}{benyoung@cs.wisc.edu}}}
\date{}
\begin{document}
\maketitle

\begin{abstract}
    Valiant's Holant theorem is a powerful tool for algorithms and 
    reductions for
    %in the theory of 
    counting problems.
    %on graphs, 
    It states that if two sets $\fc$ and
    $\gc$ of tensors (a.k.a.~constraint functions or signatures)
    %for counting problems) 
    are related by a \emph{holographic transformation},
    %with an  invertible matrix $T$, 
    % I think for abstract the sentences should be brief.
    then $\fc$ and $\gc$ are \emph{Holant-indistinguishable}, i.e.,
    %roughly meaning that 
    every 
    %Holant 
    tensor network using tensors 
    %come 
    from $\fc$,  respectively from $\gc$,  contracts to the same value.
    %whether its .
    Xia (ICALP 2010) conjectured the converse of the Holant theorem, but 
    a counterexample was found %in~\cite{cai_complete_2016}  
    %to the converse
    based on \emph{vanishing
    signatures}, those which are Holant-indistinguishable from 0. 
    
    We prove two 
    % not-quite-converses 
    % JYC I suggest the term near-converses. to asvid a negative tone of not-quite...
    near-converses of the Holant theorem using techniques from invariant
    theory.
    %The first states that 
    (I) Holant-indistinguishable $\fc$ and $\gc$ always admit
    two sequences of holographic transformations mapping them
    arbitrarily close to each other, i.e., 
    %-- that is, 
    their $\gl_q$-orbit closures intersect.
%    (II) We show that vanishing signatures are
%%, in a sense, 
%the only obstacle to a true
%    converse of the Holant theorem.
    (II) We show that vanishing signatures are
the only true obstacle to a
converse of the Holant theorem.
    As corollaries of the two theorems
    we obtain
    the first characterization of homomorphism-indistinguishability over graphs of bounded
    degree, 
    a long standing open problem, 
    and show that two graphs with invertible adjacency matrices are isomorphic
    if and only if they are homomorphism-indistinguishable over graphs with maximum degree
    at most three. We also show that 
    Holant-indistinguishability is
    complete for a complexity class \textbf{TOCI} introduced by Lysikov and Walter \cite{lysikov}, and hence hard for graph isomorphism.
\end{abstract}

\section{Introduction}
Let $F,G \in \cc^{q\times q}$ be two matrices. If $F$ and $G$ are similar, 
then $\tr(F^k) = \tr(G^k)$ for every $k$ -- that is,
$F$ and $G$ are \emph{indistinguishable} by the function $\tr((\cdot)^k)$. Conversely, if
$\tr(F^k) = \tr(G^k)$ for every $k$, then we may only conclude that $F$ and $G$ have the same
multiset of eigenvalues; $F$ and $G$ are not necessarily similar.
%What other conditions are equivalent to
%$\tr((\cdot)^k)$-indistinguishability, 
In addition,  what other assumptions on $F$ and $G$ suffice to
obtain similarity? The Holant theorem and questions about its converse are
vast generalizations of this example.

\paragraph{Holant Problems and the Holant Theorem.}
Holant problems, first  introduced 
%by Cai, Lu, and Xia 
in~\cite{cai_computational_2011}, are a framework
for expressing counting problems on graphs. Let $\fc$ be a set of tensors over a 
finite-dimensional vector space $\kk^q$ (typically $\kk = \cc$).
A \emph{signature grid}, or a tensor network, is a (multi)graph $\Omega$ with vertices assigned
tensors from $\fc$ and edges act as variables.
Depending on the choice of $\fc$,
one can express many counting problems as the \emph{Holant value} $\holant_{\fc}(\Omega)$,
the 
contraction of $\Omega$ as a tensor network. These  include
the number of matchings, proper vertex or edge-colorings, and Eulerian orientations
%in/
of $\Omega$ and the number of homomorphisms from $\Omega$ to a possibly weighted and
directed graph $G$. While Holant is very 
%sufficiently 
expressive,
%to encode all the above problems,
it is also restrictive enough to prove sweeping dichotomy theorems.
%to admit answers to the following question: given
%$\fc$, is the problem $\holant_{\fc}$ of computing the Holant value of an input $%\Omega$ 
%whose vertices are assigned tensors from $\fc$ polynomial-time tractable or \#P-hard? 
%Sweeping dichotomy theorems
These
classify $\holant_{\fc}$ as either P-time tractable or \#P-hard for
\emph{any} 
%, subject to conditions such as restricting the domain size $q$ to
%2 or 3 or requiring that $\kk = \rr$ or that the tensors in $\fc$ be symmetric
set $\fc$~\cite{cai_computational_2011,cai2008holographic,huang_2016_dichotomy,cai_complete_2016,cai2015holant,lin_complexity_2018,shao,cai2013dichotomy,domain3}.
While most existing work focuses on domain size $q =2$ or 3,
the current work is for all $q$.

Valiant's \emph{Holant theorem} \cite{valiant,valiant_2006_accidental}, the genesis
%original motivation 
for Holant problems,
states that:
If two sets $\fc$ and $\gc$ of tensors are related by a \emph{holographic transformation}
-- essentially a basis change by a $T \in \gl_q$ -- then $\fc$ and $\gc$ are 
\emph{Holant-indistinguishable}, meaning
that every signature grid $\Omega$ has the same Holant value whether its vertices are assigned
tensors from $\fc$ or the corresponding transformed tensors in $\gc$. This implies that
$\holant_{\fc}$ and $\holant_{\gc}$ have the same complexity, leading to the notions of
\emph{holographic reductions} between Holant problems and \emph{holographic algorithms}. 
Later work~\cite{cai2007valiant, dimensionality, art_to_science} formalized the Holant theorem and 
holographic reductions
%; in particular, Cai and
%Choudhary \cite{cai2007valiant} expressed the Holant theorem 
in terms of covariant
and contravariant tensors.
%, the perspective used in the current work. 
In this form, $\Omega$
is a bipartite graph whose two bipartitions are assigned contravariant
tensors from $\fc$ and covariant tensors from $\fc'$, respectively. The problem is
denoted $\holant_{\fc|\fc'}$. 
%The tensor perspective
%leads naturally to considering the Holant theorem on domains larger than Valiant's %original
%$q=2$, and, 
Xia \cite{xia} conjectured the converse of the Holant theorem: if $\fc|\fc'$ and $\gc|\gc'$ 
are Holant-indistinguishable,
then there is
%must there be 
a holographic transformation between them.
But 
    a counterexample was found in~\cite{cai_complete_2016}  
    based on \emph{vanishing
    signatures}, those $\fc$ which are Holant-indistinguishable from the set 
of all-0 tensors. 

\paragraph{Homomorphism Indistinguishability.}
The Holant framework is broader than graph homomorphism~\cite{lovasz_operations,Hell-Nesetril}.
The results in this work encompass a long list of other results in this area of research.
Most prominently this includes \emph{homomorphism indistinguishability} of graphs.
Lovász \cite{lovasz_operations} showed that two graphs $F$ and $G$ are isomorphic if and only 
if they admit the same number of homomorphisms from all graphs. This result was later improved
to $F$ and $G$ with edge and vertex weights \cite{lovasz,schrijver,cai-lovasz}.
Another line of work aims to determine the relaxations of isomorphism which must relate
any $F$ and $G$ indistinguishable under homomorphisms \emph{from} restricted classes of 
graphs \cite{dvorak_recognizing_2010,dell,planar,kar2025npa,rattan_weisfeiler,Grohe2025Homomorphism,lasserre}.
% These include quantum isomorphism for planar graphs \cite{planar,kar2025npa}, fractional
% isomorphism for trees \cite{dell}, orthogonal transformation for cycles \cite{orthogonal},
% pseudo-stochastic `transformation' for paths \cite{dell,Grohe2025Homomorphism,orthogonal},
% isomorphism itself for 2-degenerate graphs \cite{dvorak_recognizing_2010} and graphs 
% with only even-degree vertices \cite{orthogonal}, and more 
% \cite{Grohe2025Homomorphism,lasserre}. 
% As noted, several of these cases fall under the
% umbrella of non-bipartite Holant and follow from the orthogonal converse (Cai and
% Young \cite{cai_planar_2023} also used the Holant framework to extend the quantum isomorphism
% result to \#CSP). 
One notable graph class whose homomorphism indistinguishability relation had, since the
seminal 2010 work of Dvořák \cite{dvorak_recognizing_2010}, eluded any full
characterization is the graphs of bounded degree. 
Roberson \cite{oddomorphism} showed that
homomorphism indistinguishability from graphs of degree at most $d$ define distinct 
relations strictly weaker than isomorphism
on the set of graphs for distinct $d$, but
did not characterize them further. By expressing 
bounded-degree graph homomorphism as a bipartite 
Holant problem, we obtain as a corollary of our first main theorem the first
characterization of its indistinguishability relation.

Indistinguishability theorems also exist for other subclasses of Holant, including \#CSP and vertex and edge-coloring models
\cite{szegedy_edge_2007,schrijver_graph_2008,regts_edge_reflection,cai_planar_2023,orthogonal}.
The connections developed in this work 
demonstrate the advantage of expressing, via Holant, counting problems such as graph 
homomorphism and \#CSP as tensor networks, which appear in a host of other areas and are
subject to powerful theorems from invariant theory.

\paragraph{Orbit Equality and Orbit Closure Intersection.}
The $\gl_q$-\emph{orbit} of a finite set $\fc$ of tensors is the set
$\{T \cdot \fc \mid T \in \gl_q\}$, where $T$ acts simultaneously on the tensors in $\fc$, in
our setting by holographic transformation. Therefore the converse of the
Holant theorem would state that, if $\fc|\fc'$ and $\gc|\gc'$ are Holant-indistinguishable,
then the $\gl_q$-orbits of $\fc|\fc'$ and $\gc|\gc'$ intersect and hence are equal. A weaker
and often better-behaved notion is that of orbit \emph{closure} intersection
(Euclidean closure, for $\kk = \cc$).
There has been much research in recent years
on the computational complexity of orbit intersection and orbit closure intersection for various actions of a linear-algebraic group $H$
\cite{complexity_isomorphism1,complexity_isomorphism3,complexity_isomorphism4,
orbit_closure_matrix,derksen2020algorithms,operator_scaling,search_problems,minimal_canonical,
lysikov} with connections to geometric complexity theory 
\cite{Landsberg_2017}, including border rank
%, a variant of tensor rank defined using orbit closure 
with applications to matrix multiplication \cite{burgisser2011geometric}, and polynomial identity testing.

Several such works \cite{derksen2020algorithms,search_problems,orbit_closure_matrix,minimal_canonical,lysikov}
apply a theorem 
(\autoref{thm:mumford} below) from geometric
invariant theory which states that the $H$-orbit closures of $\fc$ and $\gc$ intersect if 
and only if $\fc$ and $\gc$ are indistinguishable over all $H$-invariant polynomials
(i.e. every such polynomial takes the same value on inputs $\fc$ and $\gc$).
Acuaviva et al. \cite[Theorem 4.11]{minimal_canonical} prove an orbit-closure indistinguishability theorem
for a family of vertex-regular tensor networks from quantum physics called PEPS 
networks, which admit a variant of holographic transformation called a \emph{gauge transformation}. A PEPS signature set $\fc$ has common arity $2n$, with inputs paired into $n$ dimensions (with possibly distinct domains) and only allows contractions between inputs in the same dimension.
% Acuaviva et al. show that every $\gl_q$-invariant polynomial in this context is a (sum of)
% polynomials realizable as the contraction of some PEPS network containing tensors with
% indeterminate entries. It follows that, if $\fc$ and $\gc$ are indistinguishable 
% over all PEPS networks, then the $\gl_q$-orbit closures of $\fc$ and $\gc$ intersect.
Lysikov and Walter \cite{lysikov} define the complexity class
\textbf{TOCI} of orbit closure intersection problems, showing that it
contains  \textbf{GI} (all problems  reducible
to graph isomorphism).
%They also prove a characterization \cite[Theorem 4.23]{lysikov}
%of $\gl_q$-invariant polynomials in terms of contractions of tensors
%with indeterminate entries.

\paragraph{Our Results.}
We develop new connections between invariant theory and counting problems to prove
two near-converses of the Holant theorem. First, we show
that the converse of the Holant theorem holds for orbit
closure intersection instead of orbit intersection as conjectured in~\cite{xia}.
%Xia's conjecture.
\begin{theorem*}[first main theorem, \autoref{thm:main1}]
    Finite $\fc|\fc'$ and $\gc|\gc'$ are Holant-indistinguishable if and only if the $\gl_q$-orbit
    closures of $\fc|\fc'$ and $\gc|\gc'$ intersect.
\end{theorem*}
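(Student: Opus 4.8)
The plan is to reduce the statement to Mumford's theorem (\autoref{thm:mumford}, cited above as the geometric-invariant-theory characterization: $H$-orbit closures of $\fc$ and $\gc$ intersect iff $\fc$ and $\gc$ agree on all $H$-invariant polynomials). The group here is $H = \gl_q$ acting diagonally on the tuple $\fc | \fc'$ by holographic transformation — covariant tensors in $\fc$ transform by $T$, contravariant tensors in $\fc'$ by the inverse transpose $T^{-\mathsf{T}}$ (or $T^{-1}$, depending on the index conventions fixed earlier in the paper). Since $\fc|\fc'$ is finite, the whole tuple lives in a fixed finite-dimensional space $V$ on which $\gl_q$ acts rationally, so Mumford's theorem applies verbatim. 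The forward direction (orbit closures intersect $\Rightarrow$ Holant-indistinguishable) is just the Holant theorem itself plus continuity: each signature grid $\Omega$ defines a polynomial map on $V$ whose value is $\holant_{\fc|\fc'}(\Omega)$, this polynomial is $\gl_q$-invariant by the Holant theorem, and invariant polynomials are constant on orbit closures. So the content is the reverse direction.

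For the reverse direction, by Mumford's theorem it suffices to show: if $\fc|\fc'$ and $\gc|\gc'$ are Holant-indistinguishable, then they agree on every $\gl_q$-invariant polynomial on $V$. The key step is a spanning/generation statement: the polynomials $p_\Omega \colon V \to \kk$ given by $p_\Omega = \holant_{(\cdot)|(\cdot)}(\Omega)$, as $\Omega$ ranges over all bipartite signature grids, linearly span the full ring (or at least the full space in each degree) of $\gl_q$-invariant polynomials on $V$. The natural tool is the \emph{first fundamental theorem of invariant theory for $\gl_q$}: every $\gl_q$-invariant of a collection of vectors and covectors (equivalently, of tensors and co-tensors, after flattening) is a polynomial in the complete contractions — the "full contractions" obtained by pairing up covariant and contravariant indices. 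But a complete contraction of a product of tensors from $\fc$ and $\fc'$, with every covariant index matched to a contravariant index, is \emph{exactly} the Holant value of a bipartite signature grid: vertices on one side get tensors of $\fc$, vertices on the other side tensors of $\fc'$, and the matching of indices is the edge set. Thus every generator of the invariant ring is some $p_\Omega$, so Holant-indistinguishability forces agreement on all generators, hence on the whole invariant ring, and Mumford closes the argument.

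I expect the main obstacle to be bookkeeping the FFT correctly in the mixed covariant/contravariant, multi-tensor, multi-copy setting. One must allow a signature grid to use several copies of the same signature (this corresponds to allowing monomials of arbitrary degree in each tensor's entries), must allow arbitrary wiring including the possibility that some pairing connects two inputs "within" the same side — but in the bipartite $\fc|\fc'$ formulation only covariant-to-contravariant edges occur, which is precisely the shape of a $\gl_q$-complete-contraction, so this matches cleanly. A subtlety is whether the FFT for $\gl_q$ (as opposed to $\mathrm{SL}_q$ or $\mathrm{O}_q$) really does give \emph{only} complete contractions as generators with no extra determinant-type invariants; for $\gl_q$ acting on vectors and covectors this is classical and no determinantal invariants appear, so this is fine, but it is the point to state carefully. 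A second, more technical, point: Mumford's theorem as usually stated is over an algebraically closed field and for the Zariski closure, and one wants the Euclidean closure when $\kk = \cc$; the standard fact that for a reductive group acting on an affine variety the Zariski and Euclidean closures of an orbit coincide (Birkes/Kempf) bridges this, and this should be invoked explicitly. Modulo these points the proof is short, as it is essentially "Holant theorem $=$ FFT for $\gl_q$" fed into Mumford.
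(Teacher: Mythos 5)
Your proposal is correct and follows essentially the same route as the paper: the forward direction via the Holant theorem and continuity, and the reverse via Mumford's criterion combined with the fact that Holant values of grids span the $\gl_q$-invariant polynomials (the paper's \autoref{thm:poly_fft}, which it derives from the tensor first fundamental theorem by a multihomogeneous-decomposition argument, whereas you invoke the polynomial FFT for mixed vectors/covectors directly). The subtleties you flag — multi-copy bookkeeping, absence of determinantal invariants for $\gl_q$, and Euclidean versus Zariski closure over $\cc$ — are exactly the points the paper handles, so there is no substantive gap.
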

This means, there are two sequences of holographic transformations taking $\fc|\fc'$ and
$\gc|\gc'$ arbitrarily close to each other.
%The proof of this theorem
%, as above, 
%comes down to showing 
The key idea in the proof is to show that every $\gl_q$-invariant
polynomial is realizable as a sum of the Holant values of indeterminate-valued signature
grids. A special case is a characterization of vanishing sets which applies to any set on
any domain. This greatly generalizes 
%, while not as concrete as 
the symmetric Boolean-domain characterization of \cite{cai_complete_2016}. 
It also follows that the problem of testing whether $\fc|\fc'$ and 
$\gc|\gc'$ are Holant-indistinguishable is decidable. 

Our second near-converse of the Holant theorem does give a true holographic transformation
between $\fc|\fc'$ and $\gc|\gc'$, but requires that 
$\fc|\fc'$ and $\gc|\gc'$ be \emph{quantum-nonvanishing}. Roughly, this means that $\fc|\fc'$
cannot produce a \emph{quantum gadget} (a linear combination of contractions of tensors
in $\fc|\fc'$) that causes every $\fc|\fc'$-grid containing it to have Holant value 0.
Quantum gadgets generalize several other constructions used 
in counting indistinguishability, including homomorphism tensors/bi-labeled graphs 
\cite{planar, Grohe2025Homomorphism, kar2025npa} and their namesake, quantum
labeled graphs \cite{freedman_reflection,lovasz,dvorak_recognizing_2010} 
(in fact, quantum-vanishing signatures generalize
the concept of the annihilator of the quantum labeled graph algebra
\cite{freedman_reflection}).
\begin{theorem*}[second main theorem, \autoref{thm:main2}]
    If $\fc|\fc'$ and $\gc|\gc'$ are Holant-indistinguishable and quantum-nonvanishing, then
    there is a holographic transformation between $\fc|\fc'$ and $\gc|\gc'$.
\end{theorem*}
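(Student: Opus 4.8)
The plan is to bootstrap from the first main theorem. By \autoref{thm:main1}, Holant-indistinguishability of $\fc|\fc'$ and $\gc|\gc'$ says exactly that $\overline{\gl_q\cdot(\fc|\fc')}\cap\overline{\gl_q\cdot(\gc|\gc')}\ne\emptyset$; fix a point $H$ in this intersection. As the Holant value of a fixed grid is a (multilinear) polynomial, hence continuous, in the entries of its signatures, $H$ is Holant-indistinguishable from both $\fc|\fc'$ and $\gc|\gc'$. Now a holographic transformation between $\fc|\fc'$ and $\gc|\gc'$ is precisely an element of $\gl_q$ carrying the first to the second -- i.e.\ the assertion that the two \emph{orbits}, not merely their closures, coincide. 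So the theorem reduces to the key lemma: \emph{if $\fc|\fc'$ is quantum-nonvanishing then its $\gl_q$-orbit is closed} (in the Euclidean, equivalently Zariski, topology). Granting this, both $\gl_q\cdot(\fc|\fc')$ and $\gl_q\cdot(\gc|\gc')$ are closed; since their closures meet at $H$ and a closed orbit is its own closure, $H$ lies in both orbits, so $\fc|\fc'$ and $\gc|\gc'$ lie in a common $\gl_q$-orbit.

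To prove the key lemma I argue the contrapositive: a non-closed $\gl_q$-orbit of $\fc|\fc'$ produces a nonzero vanishing quantum gadget, contradicting quantum-nonvanishingness. Since $\gl_q$ is reductive and acts linearly (by holographic transformation) on the affine space of signature tuples, a non-closed orbit admits, by the Hilbert--Mumford/Kempf criterion, a one-parameter subgroup $\lambda\colon\cc^{*}\to\gl_q$ with $H_{0}:=\lim_{t\to0}\lambda(t)\cdot(\fc|\fc')$ existing and lying in the boundary $\overline{\gl_q\cdot(\fc|\fc')}\setminus\gl_q\cdot(\fc|\fc')$. Diagonalizing, write $\lambda(t)=\operatorname{diag}(t^{w_{1}},\dots,t^{w_{q}})$ in a suitable basis of $\kk^{q}$; this grades $\kk^{q}$ by weight and hence grades every signature of $\fc|\fc'$, and more importantly every quantum gadget built from $\fc|\fc'$, by total weight -- a contracted edge pairs a slot of weight $w$ with one of weight $-w$ and so contributes no net weight, whence the total weight of a contribution to a grid value is the sum of the weights of the chosen vertex components. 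Existence of the limit forces every strictly-negative-weight component of every signature to vanish, so $H_{0}$ is exactly the weight-$0$ part of $\fc|\fc'$, and the strictly-positive-weight part is nonzero -- otherwise $H_{0}$ would equal $\fc|\fc'$, which lies in its own orbit. The technical heart is then to assemble this ``lost'' positive-weight part into an honest, nonzero quantum gadget $Q$ realizable over $\fc|\fc'$ and to check $Q$ is vanishing: in any $\fc|\fc'$-grid $\Omega$ incorporating $Q$, the weight grading writes $\holant_{\fc|\fc'}(\Omega)$ as a sum of homogeneous pieces, the piece through $Q$ has strictly positive weight, and by the Holant theorem applied to $\lambda$ the value $\holant(\Omega)$ is $t$-independent, so every nonzero-weight piece -- the one through $Q$ in particular -- vanishes.

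The main obstacle is precisely this last step: converting the weight-graded ``tail'' of a one-parameter degeneration into a single \emph{realizable} quantum gadget that is certifiably nonzero and vanishing. One must (i) simulate the relevant weight projections, or adequate surrogates, by quantum gadgets over $\fc|\fc'$, so that a positive-weight component can be isolated \emph{inside a gadget} rather than merely as an abstract graded summand of a Holant value, and (ii) exclude the pathology that every such isolated gadget is the zero tensor even though some signature carries a nonzero positive-weight part -- which demands careful bookkeeping of how the grading interacts with contraction (a vertex contributes weight, an edge contributes none). Everything else is soft: the reduction above, and the elementary fact that two closed orbits with intersecting closures are equal. (An alternative route to the key lemma, modeled on Lovász-type reconstruction, is to show quantum-nonvanishingness is equivalent to the algebra of quantum gadgets over $\fc|\fc'$ having trivial annihilator under the Holant pairing; its structure constants, being Holant values, then reconstruct $\fc|\fc'$ up to holographic transformation directly, without invoking \autoref{thm:main1}.)
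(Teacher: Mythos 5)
Your reduction is sound as far as it goes: given \autoref{thm:main1}, the theorem would indeed follow from the single claim that a quantum-nonvanishing $\fc|\fc'$ has a \emph{closed} $\gl_q$-orbit, since two closed orbits with intersecting closures coincide. But that claim is exactly where all the content lives, and you have not proved it -- you explicitly flag the decisive step as an ``obstacle.'' Your Hilbert--Mumford sketch is fine up to the point where the limit $H_0$ is the weight-$0$ part and some generator has a nonzero strictly-positive-weight component; and it is true that any \emph{purely} positive-weight element of $\prop{\fc|\fc'}$ would be $\fc|\fc'$-vanishing (the pairing is $\gl_q$-invariant, so a weight-$a$ piece pairs to zero with any weight-$b$ piece unless $a+b=0$, and all gadget signatures have nonnegative weights). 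The gap is that the positive-weight part $F-F_0$ of a generator, or any weight projection, is not a gadget operation and need not lie in $\prop{\fc|\fc'}$ at all, so no vanishing element of the PROP is exhibited; nothing in your sketch rules out that every gadget-realizable ``isolation'' of the tail is zero. Note also that the key lemma is not a softer statement than \autoref{thm:main2}: combined with \autoref{thm:main1} it immediately implies it, and \autoref{thm:main2} itself does not imply it (a non-closed orbit of a quantum-nonvanishing set would only force the \emph{boundary} point to be quantum-vanishing), so it would need a proof of comparable depth, which the proposal does not supply.

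For comparison, the paper never proves or uses orbit-closedness and does not route through \autoref{thm:main1}. Instead it applies Derksen--Makam's duality (\autoref{thm:duality}) to the direct-sum PROP $\prop{\prop{\fc|\fc'}\oplus\prop{\gc|\gc'}}$ to produce a stabilizer element with a nonzero off-diagonal block (\autoref{lem:nonconstructive}), converts it into an intertwiner $I_Z^\shortuparrow$ between corresponding contravariant and covariant gadget signatures (\autoref{lem:intertwine}), and then runs an induction over subdomains -- Jordan-form transformations via \autoref{lem:subdomain}, the skew-block bookkeeping of Claims 5.1--5.3, and repeated use of quantum-nonvanishing -- to force the residual blocks to vanish and conclude $\fc|\fc'=\gc|\gc'$ after explicit transformations. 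If you want to pursue your route, you would need to establish the closed-orbit lemma (perhaps via the Derksen--Makam stabilizer and a Kempf--Ness/Luna-type argument, or by upgrading your observation that indistinguishability plus nonvanishing forces $\ker\Phi_{\fc_0}\subseteq\ker\Phi_{\fc}$ into a contradiction with the limit lying outside the orbit); as it stands, the proposal is a plausible plan with its central lemma unproved.
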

The proof of this theorem uses an invariant-theoretic 
characterization due to Derksen and Makam \cite{prop} of the quantum $\fc|\fc'$-gadget 
algebra for quantum-nonvanishing $\fc|\fc'$, analogous to the duality theorems used by
\cite{planar, cai_planar_2023, orthogonal} to prove their indistinguishability results.
However, the quantum-nonvanishing requirement adds new difficulties.
% while we follow
% the rough strategy of \cite{orthogonal} -- refining the domain $[q]$ into subdomains and 
% studying the restrictions of $\fc|\fc'$ to each subdomain -- we cannot apply the key step of
% adding new tensors to $\fc|\fc'$ and $\gc|\gc'$. 
We use Derksen and Makam's theorem
to initially split the problem into two subdomains, then gradually refine these subdomains
by holographic transformations until quantum-nonvanishing forces $\fc|\fc' = \gc|\gc'$.
We use similar techniques to prove \autoref{thm:binary}, a variant of the second main theorem
for quantum-nonvanishing sets 
$\fc$ and $\gc$ of matrices: every product of matrices in $\fc$ has the same trace as the 
corresponding product in $\gc$ if and only if $\fc$ and $\gc$ are simultaneously similar.
The proof of this result is `constructive' in the sense that the recovered transformation 
between $\fc$ and $\gc$ is composed of Jordan decompositions of
quantum-$\fc$-gadget-realizable matrices, and of these matrices themselves
(although the gadgets are obtained nonconstructively). The proof
of the second main theorem is similarly `constructive' except for the application of
Derksen and Makam's theorem.

In \autoref{sec:corollaries}, we use the second main theorem to show that, 
while homomorphism
indistinguishability of graphs $F$ and $G$ over graphs of any bounded degree is not in 
general equivalent to isomorphism, homomorphism indistinguishability over graphs of
maximum degree at most three is equivalent to isomorphism for $F$ and $G$ with invertible
adjacency matrices.
We also apply the first main theorem and results of \cite{lysikov} to show that the problem of Holant-indistinguishability is \textbf{TOCI}-complete and \textbf{GI}-hard.

\section{Background and Preliminaries}
Throughout, let $\kk$ be an algebraically closed field of characteristic 0. 
We work with the finite-dimensional vector space $\kk^q$ and its dual space $(\kk^q)^*$. 
The \emph{mixed tensor algebra} over $\kk^q$ is
\[
    \tc = \tc(\kk^q) := \bigcup_{\ell,r \geq 0} \leftindex_\ell{\tc}_r, \text{ where }
    \leftindex_\ell{\tc}_r = (\kk^q)^{\otimes \ell} \otimes ((\kk^q)^*)^{\otimes r}.
\]
$\tc(\kk^q)$ is bigraded $\kk$-vector space (each grade $\leftindex_\ell{\tc}_r$ is a $\kk$-vector space) and admits the usual tensor product
$\otimes: \leftindex_{\ell_1}{\tc}_{r_1} \times \leftindex_{\ell_2}{\tc}_{r_2} \to \leftindex_{\ell_1+\ell_2}{\tc}_{r_1+r_2}$.
Tensors in 
%$\bigcup_{n \geq 1} (\kk^q)^{\otimes n} \subset \tc(\kk^q)$
%Try?
$\bigcup_{n \geq 1} \leftindex_{n}{\tc}_{0}\subset \tc(\kk^q)$
are called
\emph{contravariant} (or as column vectors lexicographically indexed), and tensors in
$\bigcup_{n \geq 1} \leftindex_{0}{\tc}_{n} $ are called \emph{covariant} (row vectors).
Tensors in $\leftindex_{\ell}{\tc}_r$ for $\ell r > 0$ ($q^\ell \times q^r$ matrices) 
are \emph{mixed}. Note that $\leftindex_0{\tc}_0 = \kk$. 

Given $A = \sum_{i,j = 1}^q a_{i,j} e_i \otimes e_j \in (\kk^q)^{\otimes 2}$, define $A^{1,1} = \sum_{i,j = 1}^q a_{i,j} e_i \otimes e_j^* \in \kk^q \otimes (\kk^q)^*$, also thought of as a
matrix $(a_{i,j})_{i,j=1}^q \in \kk^{q \times q}$. Define $A^{1,1}$ for
binary covariant $A$ similarly.

\subsection{Holant and Bi-Holant}
A \emph{signature} is a function $F: [q]^{n} \to \kk$ on $n = \arity(F)$ inputs from a finite 
\emph{domain} $[q]$. Use $\fc$ to denote a set of signatures sharing a common domain $[q]$,
but possibly with different arities. Given $\fc$, a \emph{signature grid} (or $\fc$-grid) 
$\Omega$ is a multigraph
along with an assignment of an $n$-ary $F_v \in \fc$ to each degree-$n$ 
vertex $v$ in $\Omega$, with an ordering of the $n$ edges $\delta(v)$ incident to $v$ to 
serve as the $n$
inputs to $F$. 
For technical reasons, we also allow $\Omega$ to contain \emph{vertexless loops}
$\bigcirc$ (a loop with one edge  and no vertex).
%edges whose endpoints are connected to each other).
The goal of $\holant_{\fc}$ is to compute the \emph{Holant value}
\[
    \holant_{\fc}(\Omega) = \sum_{\sigma: E(\Omega) \to [q]} \prod_{v \in V(\Omega)}
    F_v(\sigma(\delta(v)))
\]
of $\Omega$, where $F_v(\sigma(\delta(v)))$ is the evaluation of $F_v$ on the $n$ domain
elements assigned to the incident edges of $v$. Each vertexless loop in $\Omega$ 
%counts as a
%single edge, and 
contributes a factor  $q$.
%to the Holant value. 
Note that the Holant
value of a disconnected signature grid is the product of the Holant values of its connected
components.

For example, suppose $\fc$ consists of, for each $n \geq 1$, the $n$-ary $\{0,1\}$-valued 
signature on the Boolean domain $\{0,1\}$ ($q = 2$) that evaluates to 1 if at most one
(resp. exactly one) of its inputs is 1, and evaluates to 0 otherwise. For any  $\Omega$ without  vertexless
loops,  let $\sigma$ have a nonzero evaluation.
The edges assigned 1 
%by each nonzero edge assignment $\sigma$ 
form a matching (resp.
perfect matching) in $\Omega$, so
$\holant_{\fc}(\Omega)$ equals the number of matchings (resp. perfect matchings) in $\Omega$.
%(assuming $\Omega$ contains no vertexless
%loops).

The coefficients of a tensor $F \in \leftindex_\ell{\tc}_r$ 
define an $\ell+r$-arity signature on domain $[q]$. In this work, we will generally think of 
signatures as tensors in this way. We view a single $n$-ary signature as taking different 
shapes (i.e. different choices of $(\ell, r)$: 
$\ell+r = n$) or, as in the case of unrestricted Holant above, ignore this
covariant/contravariant input distinction, depending on the 
context. Viewing signatures as fully contravariant or covariant gives the following
well-studied bipartite setting.
\begin{definition}[$\holant_{\fc|\fc'}$]
    Let $\fc$ and $\fc'$ be sets of contravariant and covariant
    tensors, respectively. An $(\fc|\fc')$-grid $\Omega$ is a bipartite 
    $(\fc \sqcup \fc')$-grid $\Omega$ in which the vertices in the two bipartitions are
    assigned signatures from $\fc$ and $\fc'$, respectively.
\end{definition}
\begin{definition}[$\eq, =_n$]
    Define the set of \emph{equality} signatures
    $\eq = \{=_n \mid n \geq 1\}$, where $=_n$ is the $n$-ary 
    signature defined by $(=_n)(x_1,\ldots,x_n) = 1$ if $x_1 = \ldots = x_n$, and 0 otherwise.
\end{definition}
% The following well-known result shows that bipartite Holant can simulate
% `general' (non-bipartite) Holant.
%%% JYC if delete the proof of this prop. then I suggest we  add
%Bipartite Holant is equivalent to `general' Holant in the presence of $=_2$.
\begin{proposition}
    \label{prop:shapeless}
    For any $\fc \subset \tc$, $\holant_{\fc}$, $\holant_{=_2|\fc}$, 
    and $\holant_{=_2|\fc,=_2}$ are equivalent.
\end{proposition}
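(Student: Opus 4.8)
The plan is to prove the equivalences by exhibiting mutual simulations between the three Holant problems via local gadget substitutions, so that each signature grid in one model is transformed into a signature grid in another with the same Holant value. The three models are: (1) ordinary $\holant_{\fc}$, where a signature grid is a multigraph with vertices labeled by $\fc$ and edges are the contracted variables; (2) the bipartite $\holant_{=_2|\fc}$, where one side carries binary equalities $=_2$ and the other side carries signatures from $\fc$ (viewed as covariant); and (3) the ``doubly-bipartite'' variant $\holant_{=_2|\fc,=_2}$, where we also allow $=_2$ on the $\fc$-side of the bipartition (so the signature set on the right is $\fc \cup \{=_2\}$). I would structure the argument as a cycle of reductions: $\holant_{\fc} \to \holant_{=_2|\fc} \to \holant_{=_2|\fc,=_2} \to \holant_{\fc}$, or prove each pair equivalent directly; either way the individual steps are short.

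First I would handle $\holant_{\fc} \leftrightarrow \holant_{=_2|\fc}$. Given an $\fc$-grid $\Omega$, subdivide every edge once, placing a $=_2$ vertex at each subdivision point; this produces a bipartite graph with $\fc$-vertices on one side and $=_2$-vertices on the other, and since $(=_2)(x,y) = 1$ iff $x = y$, the contraction is unchanged — summing over the two half-edges of a subdivided edge with a $=_2$ in the middle collapses to summing over a single variable. A vertexless loop $\bigcirc$ becomes a single $=_2$ vertex with both inputs joined to each other (a $=_2$ with a self-loop), whose contribution is $\sum_{x \in [q]} (=_2)(x,x) = q$, matching the stipulated factor. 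Conversely, given an $(=_2|\fc)$-grid, every edge joins a $=_2$ vertex to an $\fc$-vertex; I would contract each degree-$2$ equality vertex, merging its two incident edges into one, which again preserves the Holant value since $=_2$ acts as the identity on the wire. A $=_2$ vertex whose two edges go to the same $\fc$-vertex becomes a self-loop on that vertex; a $=_2$ vertex with a self-loop (both edges to each other) becomes a vertexless loop. One must also note that the arity of every signature used from $\fc$ is preserved, so the resulting objects are legitimate grids in the target model. The key observation, which I would state as a lemma if it isn't already folklore here, is simply that $=_2$ is the ``identity wire'' tensor, i.e., contracting a tensor against $=_2$ in one index returns the same tensor.

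Next, $\holant_{=_2|\fc} \leftrightarrow \holant_{=_2|\fc,=_2}$ is essentially trivial in one direction: every $(=_2|\fc)$-grid is an $(=_2|\fc,=_2)$-grid that happens not to use the extra $=_2$ on the right side. For the other direction, in an $(=_2|\fc,=_2)$-grid I need to eliminate the right-side $=_2$ vertices; but a right-side $=_2$ vertex is adjacent only to left-side $=_2$ vertices (by bipartiteness), so a left-$=_2$--right-$=_2$ edge is again an identity wire, and I can contract the right-side $=_2$ vertices into their neighbors, possibly merging several left-side $=_2$ vertices into one (the composition of identity wires is an identity wire) or creating a vertexless loop, leaving only left-side $=_2$ vertices adjacent to $\fc$-vertices — an $(=_2|\fc)$-grid. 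The only bookkeeping is to track how loops and parallel edges and isolated equality-components map through these contractions, and to check the vertexless-loop factor of $q$ is consistent at each stage; this is routine.

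I do not expect a serious obstacle here — the content is entirely the ``$=_2$ is an identity wire'' principle applied in three slightly different combinatorial packagings, plus careful handling of degenerate features (self-loops, vertexless loops, isolated all-equality components, and the fact that a subdivided grid must remain a valid multigraph). If anything, the subtlest point is making the bijection between grids (not just the equality of Holant values) clean in the presence of these degeneracies, and making sure the ordering of edges at each vertex — part of the data of a signature grid — is transported correctly under subdivision and contraction; I would fix conventions for this at the outset and then the verification that Holant values agree is immediate from the definition of $\holant$ by summing over edge assignments.
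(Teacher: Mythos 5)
Your proof is correct and takes essentially the same approach as the paper: subdivide each edge with a degree-2 $=_2$ vertex to pass from an $\fc$-grid to a $(=_2|\fc)$-grid (which is trivially also a $(=_2|\fc,=_2)$-grid), and contract $=_2$ vertices back to edges for the return direction, with the observation that this may connect arbitrary inputs of $\fc$-signatures, which is permitted in $\holant_{\fc}$. The paper's proof is terser and organized as the single cycle $\holant_{\fc} \to \holant_{=_2|\fc} \to \holant_{=_2|\fc,=_2} \to \holant_{\fc}$, but the underlying ``$=_2$ is an identity wire'' principle and the handling of vertexless loops are the same.
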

\begin{proof}
    % Recall that the tensors in $\fc$ are viewed as shapeless (allowing connection of any
    % two inputs) in $\holant_{\fc}$, and are viewed as fully contravariant in
    % $\holant_{\fc, =_2 \mid =_2}$. 
    Convert an
    $\fc$-grid $\Omega$ into a $(=_2|\fc)$-grid (which is also a 
    $(=_2|\fc, =_2)$-grid) by placing a degree-2 vertex 
    assigned $=_2$ on each edge. The resulting grid is bipartite between $=_2$ and $\fc$ and,
    since $=_2$ acts identically to an edge, does not change the Holant value.
    Conversely, given an $(=_2|\fc,=_2)$-grid $\Omega$, 
    replace each vertex assigned $=_2$ with an edge. This connects arbitrary inputs of 
    signatures in $\fc$, but this is allowed in $\holant_{\fc}$.
\end{proof}

For a problem (only easily) expressible in the bipartite setting, consider the problem of 
counting homomorphisms from graphs of bounded degree. A \emph{graph homomorphism} from graph 
$X$ to graph $G$ is a map $\varphi: V(X) \to V(G)$ such that, for every edge $uv$ of $X$, 
$\varphi(u)\varphi(v)$ is an edge of $G$. Let $V(G) = [q]$ and
$A_G \in \{0,1\}^{q \times q}$ be the adjacency matrix of $G$, thought of as a binary
signature. Construct
a bijection between left-side graphs $X$ and (vertexless-loop-less)
$\eq|A_G$-grids $\Omega_X$ as shown in \autoref{fig:hom}.
\begin{figure}[ht!]
    \center
    \begin{tikzpicture}[scale=0.75]
    \GraphInit[vstyle=Classic]
    \SetUpEdge[style=-]
    \SetVertexMath

    \node[font=\Large] at (-2.5,0) {$X$};
    \Vertex[x=0,y=0,NoLabel]{a1}
    \Vertex[x=-1.5,y=1,NoLabel]{b1}
    \Vertex[x=-1.5,y=-1,NoLabel]{c1}
    \Vertex[x=1.5,y=0,NoLabel]{d1}

    \begin{scope}[xshift=6cm]
        \node[font=\Large] at (-2.5,0) {$\Omega_X$};
        % gadget equality vertices
        \Vertex[x=0,y=0,L={=_3},Lpos=90]{a2}
        \Vertex[x=-1.5,y=1,L={=_2},Lpos=90]{b2}
        \Vertex[x=-1.5,y=-1,L={=_2},Lpos=270]{c2}
        \Vertex[x=1.5,y=0,L={=_1},Lpos=0]{d2}

        % constraint vertices
        \tikzset{VertexStyle/.style = {shape=rectangle, fill=black, minimum size=5pt, inner sep=1pt, draw}}
        \Vertex[x=0.75,y=0,NoLabel]{ax1}
        \Vertex[x=-1.5,y=0,NoLabel]{ax2}
        \Vertex[x=-0.75,y=0.5,NoLabel]{ax3}
        \Vertex[x=-0.75,y=-0.5,NoLabel]{ax4}
    \end{scope}

    \foreach \num in {1,2} {
        \Edges(a\num,b\num,c\num,a\num,d\num)
    }

    % legend
    \Vertex[x=10,y=0.4,L={\in \eq}]{ax8}
    \tikzset{VertexStyle/.style = {shape=rectangle, fill=black, minimum size=5pt, inner sep=1pt, draw}}
    \Vertex[x=10,y=-0.1,L={=A_G}]{ax7}

\end{tikzpicture}
    \caption{A graph $X$ and $\eq|A_G$-grid $\Omega_X$ such that $\holant(\Omega_X) = \hom(X,G)$.}
    \label{fig:hom}
\end{figure}
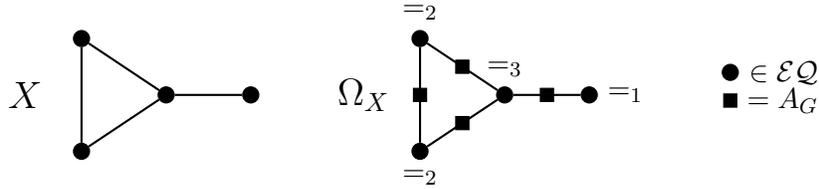
Each equality signature, assigned to an original $X$ vertex, forces all incident edges to 
take the same value in $[q] = V(G)$. Therefore any nonzero edge assignment in $\Omega$ defines
a map $V(X) \to V(G)$. The $A_G$ signatures then enforce that this map is a graph 
homomorphism. Thus $\holant_{\eq|A_G}(\Omega_X) = \hom(X,G)$, the number of 
homomorphisms from $X$ to $G$. By the same construction, defining $\eq_{\leq d} \subset \eq$ 
to be the set of equality signatures of arity at most $d$, 
$\holant_{\eq_{\leq d}|A_G}$ captures the problem of counting homomorphisms from
graphs $X$ of maximum degree at most $d$ to $G$. $\holant_{\eq | A_G}$ is equivalent to
the non-bipartite
$\holant_{\eq\cup\{A_G\}}$ because we can, without affecting the Holant value, 
insert a dummy $=_2$ between any two adjacent $A_G$ vertices and combine 
adjacent $=_a$ and $=_b$ vertices into a single vertex assigned $=_{a+b-2}$.
However, expressing homomorphisms from 
bounded-degree graphs does require bipartiteness, because merging two equality signatures 
of arity $\leq d$ could produce an equality signature of arity $> d$. 

% While the generators in the $\fc|\fc'$ setting are purely covariant
% and contravariant, $\fc|\fc'$-gadget signatures (see below) can be mixed. As usual, we wish
% to construct signature grids containing gadget signatures
% while preserving bipartiteness of the underlying $\fc|\fc'$-grid.
% This motivates the following definition.
\begin{definition}[Bi-Holant]
    For $\fc \subset \tc(\kk^q)$, a \emph{Bi-Holant} $\fc$-grid $\Omega$ is a Holant
    $\fc$-grid respecting the shapes of its signatures -- that is, the edge between any 
    adjacent $u$ and $v$ must be a contravariant input to $F_u$ and a covariant input to
    $F_v$, or vice-versa.
\end{definition}
In particular, if $\fc$ and $\fc'$ are sets of contravariant and covariant tensors, then
$\biholant_{\fc \cup \fc'}$ is equivalent to $\holant_{\fc|\fc'}$. Therefore, by
\autoref{prop:shapeless}, Bi-Holant generalizes Holant.
\begin{definition}[(Bi-Holant) $\fc$-gadget]
    For $\fc \subset \tc(\kk^q)$, an
    $\fc$-\emph{gadget} is Bi-Holant $\fc$-grid in which zero or more edges are
    \emph{dangling}, with zero or one endpoints not incident to any vertex.
    % but serving as
    % an ordered list of external inputs to the gadget.
    In an $(\ell,r)$-$\fc$-gadget, $\ell$ dangling edges are contravariant inputs to their
    incident signatures, and $r$ are covariant, drawn with left-facing and right-facing 
    dangling ends, respectively. 
    The dangling ends are ordered from top to bottom on both
    the left and right. 
    A two-sided dangling edge (called a \emph{wire}) always has one 
    contravariant and one covariant dangling end. 

    The \emph{signature} $K \in \leftindex_\ell{\tc}_r$ of an $(\ell,r)$-$\fc$-gadget $\vk$ 
    is defined by letting
    $K(a_1,\ldots,a_\ell,b_1,\ldots,b_r)$ be the Holant value of $\vk$ when the $\ell$ left
    and $r$ right dangling edges are fixed to values $a_1,\ldots,a_\ell$ and $b_1,\ldots,b_r$
    (summing only over assignments $\sigma$ to the internal edges). The signature of a wire
    is $(=_2) = I \in \kk^q \times (\kk^q)^*$, as the inputs to its two ends must match.
\end{definition}
Gadget signatures are defined so that, if $F$ is the signature of an $\fc$-gadget $\vk_F$, then any
$(\fc \cup \{F\})$-grid corresponds to an $\fc$-grid with the same Holant value constructed
by replacing every instance of $F$ and its incident edges with $\vk_F$ (with appropriately
ordered dangling edges). $\biholant_{\fc}(\Omega)$ is the value of the contraction of $\Omega$
as a tensor network with the usual primal/dual tensor contraction -- for example, slicing the
$n$ edges of an $\fc|\fc'$-grid $\Omega$ yields two gadgets with signatures 
$F_1 \in (\kk^q)^{\otimes n}$ and $F_2^* \in ((\kk^q)^*)^{\otimes n}$ such that
$\holant_{\fc}(\Omega) = F_2^*(F_1)$. Similarly, if $F_1,F_2 \in \kk^q \otimes (\kk^q)^*$,
then the signature formed by contracting the right input of $F_1$ with the 
left input of $F_2$ is the (matrix) composition of $F_1$ and $F_2$. 

\begin{figure}[ht!]
    \begin{center}
    \begin{subfigure}{0.35\textwidth}
        \centering
        \begin{tikzpicture}[scale=.4]
\GraphInit[vstyle=Classic]
\SetUpEdge[style=-]
\SetVertexMath

\def\ysh{2}
\def\wlen{2}
\def\wgap{1.5}

\draw[thin, color=gray] (0,0) .. controls
+(0.8,0) and +(0,-1) ..
(\wlen,\ysh/2) .. controls
+(0,1) and +(0.8,0) ..
(0,\ysh);

\begin{scope}[xshift=5.5cm]
\draw[thin, color=gray] (0,0) .. controls
+(-0.8,0) and +(0,-1) ..
(-\wlen,\ysh/2) .. controls
+(0,1) and +(-0.8,0) ..
(0,\ysh);
\end{scope}

\draw[thin, color=gray] (0,-\wgap) -- (2*\wlen + \wgap, -\wgap);

% \node at (\wlen/2,\ysh+1) {\small $=_2 \in \leftindex_2{\tc}_0$};
% \node at (3*\wlen/2 + \wgap,\ysh+1) {$(=_2) \in \leftindex_0{\tc}_2$};
% \node at (\wlen + \wgap/2,-\wgap-1) {$(=_2) \in \leftindex_1{\tc}_1$};

\end{tikzpicture}
        \vspace{0.2cm}
        \caption{$=_2$ in $\leftindex_2{\tc}_0$ (top left), $\leftindex_0{\tc}_2$ (top
        right) and $\leftindex_1{\tc}_1$ (bottom), respectively.}
        \label{fig:wires}
    \end{subfigure}
    \hspace{0.5cm}
    \begin{subfigure}{0.6\textwidth}
        \centering
        \begin{tikzpicture}[scale=0.7]
\GraphInit[vstyle=Classic]
\SetUpEdge[style=-]
\SetVertexMath

% length of dangling edges
\def\wlen{1}

\draw[thin, color=gray] (0,0) -- (-\wlen,0);
\draw[thin, color=gray] (1,1) -- (-\wlen,1);
\draw[thin, color=gray] (2,2) .. controls +(0,0.4) .. (-\wlen,2.4);
\draw[thin, color=gray] (2,0) -- (2+\wlen,0);

\tikzset{VertexStyle/.style = {shape=circle, fill=black, minimum size=6pt, inner sep=1pt, draw}}
\Vertex[x=0,y=0,NoLabel]{u0}
\Vertex[x=1,y=1,NoLabel]{u1}
\Vertex[x=2,y=2,NoLabel]{u2}
\tikzset{VertexStyle/.style = {shape=rectangle, fill=black, minimum size=6pt, inner sep=1pt, draw}}
\Vertex[x=0,y=2,NoLabel]{v0}
\Vertex[x=2,y=0,NoLabel]{v1}
\Edges(u0,v0,u2,v1,u0)
\Edges(v0,u1,v1)

\node at (-\wlen-0.7,1) {\huge 5};

\begin{scope}[xshift=6cm]
    \node at (-\wlen-1.3,1) {\huge +~7};

    \draw[thin, color=gray] (2,0) -- (2+\wlen,0);
    \draw[thin, color=gray] (0,1) .. controls +(0,1.4) .. (-\wlen,2.4);
    \draw[thin, color=gray] (0,1) -- (-\wlen,1);
    \draw[thin, color=gray] (0,1) .. controls +(0,-1) .. (-\wlen,0);

    \Vertex[x=1,y=1.5,NoLabel]{vv0}
    \Vertex[x=2,y=0,NoLabel]{vv1}
\tikzset{VertexStyle/.style = {shape=circle, fill=black, minimum size=6pt, inner sep=1pt, draw}}
    \Vertex[x=0,y=1,NoLabel]{uu0}
    \Vertex[x=2,y=2,NoLabel]{uu1}
    \Edges(uu1,vv0,uu0,vv1)

    % legend
    \Vertex[x=2+\wlen+0.7,y=1.7,L={\in \fc}]{f}
\tikzset{VertexStyle/.style = {shape=rectangle, fill=black, minimum size=6pt, inner sep=1pt, draw}}
    \Vertex[x=2+\wlen+0.7,y=1.1,L={\in \fc'}]{fp}
\end{scope}

\end{tikzpicture}
        \caption{A $(3,1)$-quantum-$\fc|\fc'$-gadget. Note that left/right dangling edges are
        incident to vertices in $\fc$/$\fc'$, respectively.}
        \label{fig:quantum_gadget}
    \end{subfigure}
    \end{center}
    \caption{Examples of (quantum) gadgets}
\end{figure}
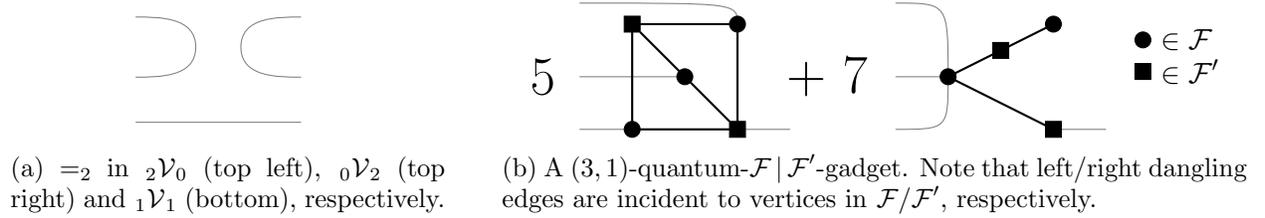

\begin{definition}[$\qk_{\fc}, \prop{\fc}$]
    An \emph{$(\ell,r)$-quantum-$\fc$-gadget} $\vk$ is a formal $\kk$-linear combination of 
    $(\ell,r)$-$\fc$-gadgets. Any component of a term of $\vk$ without any 
    dangling edges evaluates to a scalar and is absorbed into term's coefficient, 
    so assume no term of $\vk$ has any such components.

    Define $\qk_{\fc}$ and $\prop{\fc}$ to be the spaces of all quantum-$\fc$-gadgets and
    quantum-$\fc$-gadget signatures, respectively (extending the gadget signature function 
    linearly), and
    $\dprop{\fc}{\ell}{r} := \prop{\fc} \cap \leftindex_{\ell}{\tc}_r$.
\end{definition}
See \autoref{fig:quantum_gadget}.
Extend left/right dangling edge contraction linearly
to $\qk_{\fc}$. Note that $\qk_{\fc}$ and $\prop{\fc}$ are closed under quantum gadget
construction, as every $\prop{\fc}$-gadget $\vk$ decomposes into an quantum-$\fc$-gadget after
replacing every
$F \in \prop{\fc} \setminus \fc$ in $\vk$ with the quantum-$\fc$-gadget realizing $F$ and 
expanding linearly. 
For every $\ell,r \geq 0$, we have the standard bilinear form 
$\langle\cdot,\cdot\rangle: \leftindex_\ell{\tc}_r \times \leftindex_r{\tc}_\ell \to \kk$. If
$K \in \dprop{\fc}{\ell}{r}$ and $K' \in \dprop{\fc}{r}{\ell}$ are the signatures of
$\prop{\fc}$-gadgets $\vk$ and $\vk'$, then $\langle K, K'\rangle = \holant(\Omega)$, where 
$\Omega$ is constructed by connecting the right inputs of $\vk$ with the left inputs of $\vk'$
and vice-versa (this extends bilinearly to quantum gadgets).

While $\dprop{\fc}{1}{1}$ always contains $I = (=_2)^{1,1}$ as the signature of
a wire, we do not always have $(=_2) \in \dprop{\fc}{0}{2}$ or $(=_2) \in \dprop{\fc}{2}{0}$
(see \autoref{fig:wires});
such a co/contravariant $(=_2)$ is quite powerful, as it allows connecting two left
or two right dangling edges with each other, circumventing bipartiteness (as seen in 
\autoref{prop:shapeless}), and allows
reshaping tensors -- e.g. construct $A^{1,1}$ from 
$A \in (\kk^q)^{\otimes 2}$ by connecting a right-facing $(=_2)$ to the second input of $A$.

\subsection{Transformations, Indistinguishability, and the Holant Theorem}
\label{sec:transform}
Throughout, we treat pairs $\fc,\gc \subset \tc$ of signature sets that are \emph{bijective},
meaning there is an arity-preserving bijection $\leftrightsquigarrow$ between $\fc$ and $\gc$.
Call $\fc \ni F \leftrightsquigarrow G \in \gc$ \emph{corresponding} signatures.
\begin{definition}[$(\cdot)_{\fc\to\gc}$, (Bi-)Holant-indistinguishable]
    Given a $\vk \in \qk_{\fc}$ (possibly with no dangling edges, in which case
    $\vk = \Omega$ is a quantum $\fc$-grid), construct
    $\vk_{\fc\to\gc} \in \qk_{\gc}$ by replacing every $F \in \fc$ in every term of $\vk$ 
    with the corresponding $G \in \gc$.

    Say that $\fc$ and $\gc$ are \emph{Holant-indistinguishable} if 
    $\holant_{\fc}(\Omega) = \holant_{\gc}(\Omega_{\fc\to\gc})$ for every $\fc$-grid
    $\Omega$. Define Bi-Holant-indistinguishable similarly.
\end{definition}
The $(\cdot)_{\fc\to\gc}$ operation induces a bijection between $\prop{\fc}$ and $\prop{\gc}$,
where $K \leftrightsquigarrow \widetilde{K}$ if $K$ and $\widetilde{K}$ are the signatures
of $\vk$ and $\vk_{\fc\to\gc}$ (viewing $\prop{\fc}$ and 
$\prop{\gc}$ as multisets in bijection with $\qk_{\fc}$). Under this bijection, if $\fc$ and
$\gc$ are (Bi-)Holant-indistinguishable, then so are $\prop{\fc}$ and $\prop{\gc}$.

\begin{definition}[$T \cdot F$, $T\fc$]
    \label{def:action}
    For $T \in \gl_q$ and $F \in \leftindex_\ell{\tc}_r$,
    define $T \cdot F := T^{\otimes \ell} F (T^{-1})^{\otimes r}$. Then for 
    $\fc \subset \tc$, define $T\fc = \{T \cdot F \mid F \in \fc\}$.
\end{definition}

\begin{theorem}[The Holant Theorem \cite{valiant}]
    If $\fc|\fc' = T(\gc|\gc')$ for $T \in \gl_q$, then $\fc|\fc'$ and
    $\gc|\gc'$ are Holant-indistinguishable.
    \label{thm:holant}
\end{theorem}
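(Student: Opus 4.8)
The plan is to check the defining identity of Holant-indistinguishability by hand: for an arbitrary $(\fc|\fc')$-grid $\Omega$ I will show $\holant_{\fc|\fc'}(\Omega) = \holant_{\gc|\gc'}(\Omega')$, where $\Omega'$ is obtained from $\Omega$ by replacing each signature with the corresponding one. Writing the correspondence as $F = T \cdot G$, a contravariant $F \in \fc$ of arity $d$ equals $T^{\otimes d} G$ and a covariant $F' \in \fc'$ of arity $d$ equals $G' (T^{-1})^{\otimes d}$. Each vertexless loop of $\Omega$ contributes a factor $q$ to both Holant values, and $\Omega$ and $\Omega'$ have the same underlying multigraph, so I may assume $\Omega$ has no vertexless loops. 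The one structural fact I will lean on is that, since an $(\fc|\fc')$-grid is bipartite with the $\fc$-side supplying only contravariant inputs and the $\fc'$-side only covariant inputs, \emph{every edge of $\Omega$ has exactly one contravariant endpoint} (at an $\fc$-vertex) \emph{and exactly one covariant endpoint} (at an $\fc'$-vertex).

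The main step uses the ``slicing'' description of the Holant value recorded after the gadget definitions. Let $n = |E(\Omega)|$; cutting every edge decomposes $\Omega$ into two $(\fc|\fc')$-gadgets, each with no internal edges (as all edges of the bipartite $\Omega$ join the two sides): the gadget of all $\fc$-vertices, whose signature $F_1 \in (\kk^q)^{\otimes n}$ is the tensor product of the contravariant $F_v$'s with legs reindexed by the edge each came from, and the gadget of all $\fc'$-vertices, whose signature $F_2^* \in ((\kk^q)^*)^{\otimes n}$ is the product of the covariant $F'_w$'s; then $\holant_{\fc|\fc'}(\Omega) = F_2^*(F_1)$, the contraction pairing the leg of $F_1$ and the leg of $F_2^*$ belonging to the same edge. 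Since the tensor product is functorial and reindexing tensor factors commutes with a uniform tensor power, $F_1 = T^{\otimes n} G_1$ and $F_2^* = G_2^* (T^{-1})^{\otimes n}$, where $G_1$ and $G_2^*$ are the products of the corresponding $G_v$'s and $G'_w$'s — exactly the signatures obtained by slicing $\Omega'$ the same way — and both tensor powers are formed with respect to the common edge-indexing of the legs. Therefore
\[
\holant_{\fc|\fc'}(\Omega) = F_2^*(F_1) = G_2^*\big((T^{-1})^{\otimes n}\,T^{\otimes n}\, G_1\big) = G_2^*\big((T^{-1}T)^{\otimes n}\, G_1\big) = G_2^*(G_1) = \holant_{\gc|\gc'}(\Omega'),
\]
using $(T^{-1})^{\otimes n}\,T^{\otimes n} = (T^{-1}T)^{\otimes n} = I$. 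Equivalently, one can argue edge by edge inside the Holant sum: contracting a primal leg of the form $T u$, coming from the contravariant $\fc$-side, against a dual leg of the form $\psi \circ T^{-1}$, coming from the covariant $\fc'$-side, yields $\psi(T^{-1} T u) = \psi(u)$, the contraction of the untransformed legs; performing this cancellation on all $n$ edges turns $\holant_{\fc|\fc'}(\Omega)$ into $\holant_{\gc|\gc'}(\Omega')$ term by term.

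I do not expect a genuine obstacle here: the statement is a bookkeeping identity, with the invariant-theoretic content appearing only in its (much harder) converses. The single point requiring care is the structural observation highlighted above — it is precisely the bipartite contravariant/covariant typing of an $(\fc|\fc')$-grid that forces each edge to receive exactly one $T$ and one $T^{-1}$, so that they cancel; for a non-bipartite grid an edge between two contravariant tensors would collect two copies of $T$ with nothing to cancel them, which is exactly why the Holant theorem is naturally stated in the Bi-Holant / bipartite setting. A secondary, entirely routine, point is that permuting tensor factors commutes with a uniform tensor power $T^{\otimes n}$, which is what legitimizes the leg-reindexing used in the slicing argument.
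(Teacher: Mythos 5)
Your proof is correct and is essentially the same argument the paper sketches: the paper states that the theorem "follows from the fact that left/right contractions are $\gl_q$-equivariant for the action of $\gl_q$ in Definition 2.7" and illustrates the $T$/$T^{-1}$ cancellation on each edge in Figure 2.4, which is exactly the slicing-and-cancellation computation you carry out in full, including the key structural observation that bipartiteness forces each edge to collect precisely one $T$ (contravariant side) and one $T^{-1}$ (covariant side).
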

\autoref{thm:holant} follows from the fact that left/right contractions are
$\gl_q$-equivariant for the action of $\gl_q$ in \autoref{def:action}.
See \autoref{fig:holant}
\begin{figure}[ht!]
    \centering
    \begin{tikzpicture}[scale=.76]
    \tikzstyle{every node}=[font=\small]
    \GraphInit[vstyle=Classic]
    \SetUpEdge[style=-]
    \SetVertexMath

    % arm y coordinates
    \def\ysh{1.5}
    \def\gap{1.5}

    \foreach \yy in {1,2} {
        \Vertex[x=0,y={-{(\yy+0.5)}*\ysh},L={F'_\yy},Lpos=180]{f\yy}
    }
    \foreach \yy in {1,2,3} {
        \Vertex[x=\gap,y=-\yy*\ysh,L={F_\yy}]{ff\yy}
    }
    \Edges(ff1,f1,ff2,f2,ff3,f1)

    \tikzset{VertexStyle/.style = {shape=rectangle, fill=black, minimum size=5pt, inner sep=1.2pt, draw}}

    %%%%%%%%%%%%%%%%%%%%%%%%%%%%%%%%%%%%%%%%%%%
    \begin{scope}[xshift=4.6cm]
    \node at (-1.5,-2*\gap) {$=$};
    \foreach \yy in {1,2} {
        \Vertex[x=0,y=-{(\yy+0.5)}*\ysh,L={G'_\yy},Lpos=180]{g\yy}
    }
    \foreach \yy in {1,2,3} {
        \Vertex[x=1.3*\gap,y=-\yy*\ysh,L={G_\yy}]{gg\yy}
    }
    \TEdge{g1}{gg1}
    \TEdge{g1}{gg2}
    \TEdge{g1}{gg3}
    \TEdge{g2}{gg2}
    \TEdge{g2}{gg3}
    \end{scope}
    %%%%%%%%%%%%%%%%%%%%%%%%%%%%%%%%%%%%%%%%%%%

    %%%%%%%%%%%%%%%%%%%%%%%%%%%%%%%%%%%%%%%%%%%
    \begin{scope}[xshift=9.8cm]
    \node at (-1.6,-2*\gap) {$=$};
    \foreach \yy in {1,2} {
        \Vertex[x=0,y=-{(\yy+0.5)}*\ysh,L={G'_\yy},Lpos=180]{h\yy}
    }
    \foreach \yy in {1,2,3} {
        \Vertex[x=1.8*\gap,y=-\yy*\ysh,L={G_\yy}]{hh\yy}
    }
    \STEdge{h1}{hh1}{0}
    \STEdge{h1}{hh2}{-0.2}
    \draw[thick] (h1) -- (hh3) 
        node[draw, fill=white, regular polygon, regular polygon sides=3, minimum size = 3pt, inner sep=1.4pt, pos=0.16, sloped] {}
        node[draw, fill=black, regular polygon, regular polygon sides=3, minimum size = 3pt, inner sep=1.4pt, pos=0.24, sloped] {};
    \STEdge{h2}{hh2}{0.2}
    \STEdge{h2}{hh3}{0}
    \end{scope}
    %%%%%%%%%%%%%%%%%%%%%%%%%%%%%%%%%%%%%%%%%%%

    %%%%%%%%%%%%%%%%%%%%%%%%%%%%%%%%%%%%%%%%%%%
    \begin{scope}[xshift=15.5cm]
    \node at (-1.5,-2*\gap) {$=$};
    \foreach \yy in {1,2} {
        \Vertex[x=0,y={-{(\yy+0.5)}*\ysh},L={G'_\yy},Lpos=180]{i\yy}
    }
    \foreach \yy in {1,2,3} {
        \Vertex[x=\gap,y=-\yy*\ysh,L={G_\yy}]{ii\yy}
    }
    \Edges(ii1,i1,ii2,i2,ii3,i1)

    % legend
    \tikzset{VertexStyle/.style = {shape=regular polygon, regular polygon sides=3, fill=black, minimum size=5pt, inner sep=1.4pt, draw}}
    \Vertex[x=\gap+1.75,y=-2*\ysh+0.3,L={= T}]{t}
    \tikzset{VertexStyle/.style = {shape=regular polygon, regular polygon sides=3, fill=white, minimum size=5pt, inner sep=1.4pt, draw}}
    \Vertex[x=\gap+1.75,y=-2*\ysh-0.3,L={= T^{-1}}]{t}
    \end{scope}

    %%%%%%%%%%%%%%%%%%%%%%%%%%%%%%%%%%%%%%%%%%%

\end{tikzpicture}
    \caption{Illustrating the proof of \autoref{thm:holant}, with 
    $F'_i = G'_i(T^{-1})^{\otimes n_i}$ and $F_i = T^{\otimes n_i} G_i$.}
    \label{fig:holant}
\end{figure}
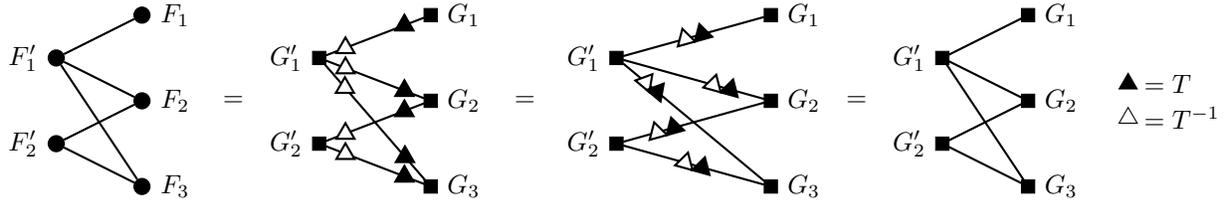

Xia \cite{xia} conjectured the converse of the Holant Theorem:
if $\fc|\fc'$ and $\gc|\gc'$ are Holant-indistinguishable, then there is
a holographic transformation between them.
Cai, Guo, and Williams \cite[Section 4.3]{cai_complete_2016}
discovered the following Boolean-domain counterexample.
\begin{example}
    \label{ex:counter2}
    Let $F' = [f_0,f_1,f_2,f_3,f_4] = [a,b,1,0,0]$, where $f_i$ is the value of $F'$ on
    inputs of Hamming weight $i$ and $a$ and $b$ are not both 0. Define $G' = [0,0,1,0,0]$ and
    $(\neq_2) = [0,1,0]$ similarly.
    Define $\fc|\fc' = (\neq_2)|F'$ and $\gc|\gc' = (\neq_2)|G'$.
    In an $(\neq_2)|F'$-grid
    $\Omega$, the $\neq_2$ signatures in the left bipartition force any nonzero edge 
    assignment $\sigma$ to assign 0 to exactly half of the edges and 1 to the other half. 
    Also, $\sigma$ must provide every $[a,b,1,0,0]$ in the right bipartition no more 1 
    than 0 inputs. If $\sigma$ provides any $[a,b,1,0,0]$ strictly fewer 1 than 0 inputs 
    (to obtain $a$
    or $b$), it must provide a different $[a,b,1,0,0]$ strictly more 1 than 0 inputs to 
    preserve the 0/1 balance, and becomes zero. Hence $(\neq_2)|F'$ is indistinguishable 
    from $(\neq_2)|G'$. However, there is no $T \in \gl_2$ transforming $(\neq_2)|F'$ to $(\neq_2)|G'$.
\end{example}
While there is no invertible matrix transforming $\fc|\fc'$ to $\gc|\gc'$ in 
\autoref{ex:counter2}, observe that
\[
    {\begin{bmatrix} \epsilon^{-1} & 0 \\ 0 & \epsilon \end{bmatrix}}^{\otimes 2} (\neq_2)
    = (\neq_2) \quad\text{and}\quad
    [a,b,1,0,0] {\begin{bmatrix} \epsilon & 0 \\ 0 & \epsilon^{-1} \end{bmatrix}}^{\otimes 4}
    = [a \epsilon^4,b \epsilon^2,1,0,0] \xrightarrow[\epsilon \to 0]{} [0,0,1,0,0],
\]
so $\left[\begin{smallmatrix} \epsilon^{-1} & 0 
\\ 0 & \epsilon\end{smallmatrix}\right] \in \gl_2$ take $\fc|\fc'$ arbitrarily close to $\gc|\gc'$ as $\epsilon
\to 0$. \autoref{thm:main1} below extends this to \emph{any} Bi-Holant-indistinguishable
$\fc$ and $\gc$: the converse of \autoref{thm:holant} holds up to orbit closure.

Cai, Guo and Williams discovered \autoref{ex:counter2} while studying
\emph{vanishing} signature sets, those sets which are Holant-indistinguishable from 0
(more precisely, the appropriate all-0 set).
Reasoning similarly to \autoref{ex:counter2}, $(\neq_2 \mid [a,b,0,0,0])$ is 
vanishing. We will see in \autoref{sec:prop} that the fact that $(\neq_2 \mid [a,b,0,0,0])$ is
vanishing explains why \autoref{ex:counter2} exists, and \autoref{thm:main2} shows
that \emph{any} counterexample $\fc$ to the converse of the Holant theorem
is due to the presence of a signature that vanishes in the context of $\fc$.

% Define $Z := \frac{1}{\sqrt{2}} \left[\begin{smallmatrix} 1 & i \\ 1 & -i 
% \end{smallmatrix}\right] \in \gl_2$, which satisfies $Z^{\otimes 2} (=_2) = (\neq_2)$.
% Applying a holographic transformation by $Z$ to the problem $\holant_{\fc} \equiv
% \holant_{=_2|\fc}$ (by \autoref{prop:shapeless}) thus yields $\holant_{\neq_2|Z \fc}$. 
% Cai, Guo, and Williams \cite[Theorem 26, Lemma 30]{cai_complete_2016} characterized 
% vanishing sets of symmetric Boolean-domain signatures as exactly those $\fc$ satisfying
% $Z \cdot F = F (Z^{-1})^{\otimes n} = [f_0,\ldots,f_d,0,\ldots,0]$ with $d < \frac{n}{2}$ for every
% $F \in \fc$ (e.g. $\fc = \{[a,b,0,0,0] Z^{\otimes 4}\}$ yields our vanishing set
% $\neq_2 \mid [a,b,0,0,0]$). While there is no invertible transformation mapping any such
% $(\neq_2 | Z \fc)$ to 0 (hence each such set is itself a counterexample to the converse 
% of the Holant theorem), observe that, for $\epsilon > 0$,

Xia proved several subcases of the converse of the Holant theorem for $\fc = \gc = \{=_2\}$,
which by \autoref{prop:shapeless} is the non-bipartite setting.
Young proved that this non-bipartite converse holds if $\kk = \rr$. In this case,
vanishing signatures do not occur (see \autoref{cor:complex} below).
By the following proposition, which follows from the fact that
$(T^{\otimes 2}A)^{1,1} = T A^{1,1} T^{\top}$ for $A \in (\kk^q)^{\otimes 2}$ (or see
\cite[Figure 2.3]{orthogonal}), the transformation must be orthogonal.
\begin{proposition}
    \label{prop:orthogonal}
    $T \in \gl_q$ is orthogonal iff $T \cdot (=_2) = (=_2)$
    for contravariant or covariant $=_2$.
\end{proposition}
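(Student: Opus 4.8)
The plan is to reduce both the contravariant and covariant cases to the single matrix equation $TT^\top = I_q$ via the reshaping map $(\cdot)^{1,1}$, and then invoke the definition of orthogonality. First I would record the two basic facts: the contravariant $(=_2) = \sum_{i=1}^q e_i \otimes e_i \in (\kk^q)^{\otimes 2}$ satisfies $(=_2)^{1,1} = I_q$ (and likewise in the covariant case), and the map $(\cdot)^{1,1}$ is a $\kk$-linear bijection onto $\kk^{q\times q}$ since it merely relabels the second index $j \mapsto j^*$ (or $j^* \mapsto j$). Hence $T\cdot(=_2) = (=_2)$ is equivalent to $(T\cdot(=_2))^{1,1} = I_q$.

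Next, for the contravariant case, \autoref{def:action} with $r = 0$ gives $T\cdot(=_2) = T^{\otimes 2}(=_2)$, so the stated identity $(T^{\otimes 2}A)^{1,1} = TA^{1,1}T^\top$ applied to $A = (=_2)$ yields $(T\cdot(=_2))^{1,1} = T\, I_q\, T^\top = TT^\top$. Thus $T\cdot(=_2) = (=_2)$ iff $TT^\top = I_q$, i.e.\ iff $T$ is orthogonal. For the covariant case, \autoref{def:action} with $\ell = 0$ gives $T\cdot(=_2) = (=_2)(T^{-1})^{\otimes 2}$, and the analogous reshaping identity is $\big(A(T^{-1})^{\otimes 2}\big)^{1,1} = (T^{-1})^\top A^{1,1} (T^{-1})$, which I would verify by a one-line index computation or cite \cite[Figure 2.3]{orthogonal}. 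With $A = (=_2)$ this gives $(T\cdot(=_2))^{1,1} = (T^{-1})^\top T^{-1} = (T^\top T)^{-1}$, which equals $I_q$ iff $T^\top T = I_q$. Finally I would note that for a square matrix a one-sided inverse is two-sided, so $TT^\top = I_q \iff T^\top T = I_q \iff T$ is orthogonal, closing the "iff" in both cases.

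There is no real obstacle here; the proof is a short bookkeeping argument. The only points requiring care are keeping straight which of the two inputs of $(=_2)$ is declared covariant versus contravariant, getting the direction of the transpose right in each reshaping identity, and substituting $\ell = 0$ versus $r = 0$ into \autoref{def:action} correctly so that the $T^{-1}$ factors sit on the correct side in the covariant computation.
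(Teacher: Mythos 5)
Your argument is correct and is essentially the paper's own: the paper states that \autoref{prop:orthogonal} follows from the reshaping identity $(T^{\otimes 2}A)^{1,1} = TA^{1,1}T^\top$ applied to $A=(=_2)$, which is precisely your computation in both the contravariant and covariant cases. You merely spell out the bookkeeping (including the covariant direction and the one-sided-implies-two-sided-inverse observation) that the paper leaves implicit.
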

\begin{theorem}[{\cite[Theorem 2.3]{orthogonal}}]
    \label{thm:orthogonal}
    Real-valued $\fc$ and $\gc$ are Holant-indistinguishable if
    and only if there is a real orthogonal $T$ such that $T\fc = \gc$.
\end{theorem}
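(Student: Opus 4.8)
The forward direction is immediate: for orthogonal $T$ we have $T\cdot(=_2)=(=_2)$ by \autoref{prop:orthogonal}, so $T$ is a holographic transformation relating $\{=_2\}\mid\fc$ and $\{=_2\}\mid\gc$, and \autoref{prop:shapeless} together with \autoref{thm:holant} translates this back to non-bipartite Holant-indistinguishability of $\fc$ and $\gc$ (equivalently: grid Holant values are $O(q,\rr)$-invariant under the simultaneous action, hence unchanged when $\fc$ is replaced by $T\fc$). So the real content is the converse, and the plan is to reinterpret Holant-indistinguishability invariant-theoretically and then exploit the \emph{compactness} of $O(q,\rr)$.

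First I would observe that a loop-free $\fc$-grid $\Omega$ is exactly a complete contraction of (possibly repeated) tensors of $\fc$: its edges pair up all the input slots, each pair is contracted using the bilinear form $\delta_{ij}$ on $\rr^q$ (the $=_2$ pairing), which is precisely the standard $O(q,\rr)$-invariant form, and vertexless loops just contribute the constant factor $q$. Viewing $\fc$ as a tuple of contravariant tensors acted on arity-wise as in \autoref{def:action} -- legitimate for orthogonal $T$, since reshaping via $(=_2)$ is $T$-equivariant, so the action is $F\mapsto T^{\otimes\arity(F)}F$ -- the first fundamental theorem of invariant theory for the orthogonal group (classical in characteristic $0$; by polarization it reduces to the fact that $O(q)$-invariant multilinear functionals on tuples of tensors are spanned by complete matchings) says exactly that these complete contractions generate the ring of $O(q,\rr)$-invariant polynomials in the entries of $\fc$. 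Since products of grid values are values of disjoint-union grids, the $\rr$-algebra generated by the grid values is the whole invariant ring; hence $\fc$ and $\gc$ are Holant-indistinguishable if and only if every $O(q,\rr)$-invariant polynomial agrees on $\fc$ and $\gc$.

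Now I would invoke the one substantial external input: for a compact group acting linearly on a real vector space, the ring of invariant polynomials \emph{separates orbits}. (Sketch: orbits are compact, hence closed; two disjoint compact orbits are separated by a continuous function, which Stone--Weierstrass approximates by a polynomial, which averaging over Haar measure renders invariant without destroying the separation; expressing it in a finite generating set of invariants then contradicts equality of all invariants.) Since $O(q,\rr)$ is compact and all $O(q,\rr)$-invariant polynomials agree on $\fc$ and $\gc$, these tuples lie in the same orbit, i.e.\ there is an orthogonal $T$ with $T\fc=\gc$. (For infinite $\fc$, apply this to every finite subtuple and use compactness of $O(q,\rr)$ and the finite-intersection property of the resulting closed sets of admissible $T$. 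Taking $\gc$ all-$0$ recovers that real vanishing sets are $0$, since $0$ lies in its own closed orbit.)

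The crux is the compactness step, and it is exactly where real-valuedness is essential: over $\cc$ the group $O(q,\cc)$ is not compact, its orbits need not be closed, and invariants separate only orbit \emph{closures} -- the very gap responsible for \autoref{ex:counter2} and addressed by \autoref{thm:main1}. If one wished to avoid quoting the compact orbit-separation theorem, realness supplies a direct substitute: the ``doubling'' grid obtained by contracting a gadget with a mirror copy of itself has Holant value $\sum_x K(x)^2=\|K\|_2^2\ge 0$, which vanishes iff the gadget signature $K$ is $0$, so Holant-indistinguishability applied to such grids and their polarizations shows the map sending each $\fc$-gadget signature to the corresponding $\gc$-gadget signature is a well-defined linear isometry commuting with tensor products and contractions and fixing $(=_2)$; one then reconstructs a single orthogonal $T$ from this data. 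Carrying out that reconstruction in arbitrary arity is, however, essentially a bare-hands reproof of the FFT, so I would keep the first route. Either way only mild care is needed with covariant/contravariant bookkeeping, which is harmless because \autoref{prop:orthogonal} forces the recovered $T$ to fix $(=_2)$ in every shape.
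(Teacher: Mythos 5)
Your forward direction and the overall invariant-theoretic reading of Holant-indistinguishability are sound, and the core of your converse argument -- First Fundamental Theorem for $O(q,\rr)$ plus orbit separation for compact group actions -- is a valid proof, but it takes a genuinely different route from the one the cited paper \cite{orthogonal} actually uses. From the present paper's own remarks, Young's proof of this theorem rests on Schrijver's tensor duality theorem (characterizing the closure of the $\fc$-gadget algebra as the tensors fixed by a subgroup of $O(q)$, the orthogonal analogue of \autoref{thm:duality}), combined with an explicit Vandermonde-interpolation-based domain-separation step (\cite[Proposition 4.1]{orthogonal}, the ancestor of \autoref{lem:subdomain}) and a nonconstructive stabilizer lemma (\cite[Lemma 3.2]{orthogonal}, the ancestor of \autoref{lem:nonconstructive}). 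Your route is shorter and more ``textbook'' if one is willing to quote two classical facts: the tensor FFT for $O(q)$ (to identify the algebra generated by Holant values with the full invariant ring, exactly as \autoref{thm:poly_fft} does for $\gl_q$) and the fact that for a compact real linear group, polynomial invariants separate orbits (orbits are compact hence closed; Stone--Weierstrass plus Haar averaging). What you give up relative to the duality-plus-interpolation route is constructiveness and portability: that route yields a handle on the recovered $T$ via Jordan-form/interpolation data and generalizes (with the quantum-nonvanishing hypothesis) to the noncompact $\gl_q$ case, which is precisely the program the present paper carries out via \autoref{thm:duality}, \autoref{lem:subdomain}, and \autoref{thm:main2}; your compactness argument collapses completely once $O(q,\rr)$ is replaced by $\gl_q(\cc)$, as you yourself observe, which is exactly where \autoref{ex:counter2} lives. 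One small bookkeeping caveat, which you partially address: invoking orbit separation requires fixing a finite tuple, so the finite-intersection-property argument for infinite $\fc$ (and the reduction of all signatures to a purely contravariant shape via the $T$-fixed $=_2$) should be stated explicitly rather than parenthetically; otherwise the argument is complete.
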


We conclude this section with the following generalization of \autoref{thm:holant}.
\begin{proposition}
    \label{prop:respect}
    For $T \in \gl_q$ and $\fc \subset \tc$, we have $T\prop{\fc} = \prop{T\fc}$.
\end{proposition}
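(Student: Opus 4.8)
The plan is to show mutual inclusion $T\prop{\fc} \subseteq \prop{T\fc}$ and $\prop{T\fc} \subseteq T\prop{\fc}$, and in fact the second follows from the first applied to $T^{-1}$ and the set $T\fc$, since $T^{-1}(T\fc) = \fc$ gives $T^{-1}\prop{T\fc} \subseteq \prop{\fc}$, hence $\prop{T\fc} \subseteq T\prop{\fc}$. So it suffices to prove $T \cdot K \in \prop{T\fc}$ for every quantum-$\fc$-gadget signature $K$. Because the gadget signature function and the $\gl_q$-action (\autoref{def:action}) are both $\kk$-linear, and $\prop{T\fc}$ is a $\kk$-subspace, I may reduce to the case where $K$ is the signature of a single (non-quantum) $\fc$-gadget $\vk$.

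The core of the argument is the same equivariance that underlies the Holant theorem itself (the remark after \autoref{thm:holant}): left and right contractions commute with the $\gl_q$-action in the sense illustrated in \autoref{fig:holant}, because inserting a $T$ on the contravariant end of a wire and a $T^{-1}$ on the covariant end is absorbed by the identity $T^{-1}T = I$. Concretely, let $\vk$ be an $(\ell,r)$-$\fc$-gadget with vertices $v_1,\ldots,v_m$ assigned signatures $F_1,\ldots,F_m \in \fc$, where $F_i \in \leftindex_{\ell_i}{\tc}_{r_i}$. Form the gadget $\vk'$ over $T\fc$ with the same underlying graph but with $v_i$ reassigned $T\cdot F_i = T^{\otimes \ell_i} F_i (T^{-1})^{\otimes r_i} \in T\fc$. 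I claim the signature of $\vk'$ is exactly $T \cdot K$, where $K$ is the signature of $\vk$. To see this, expand the signature of $\vk'$ as a sum over internal edge assignments of a product of tensor entries; on each internal edge $e$ joining a covariant slot of some $v_i$ to a contravariant slot of some $v_j$, the inserted $T^{-1}$ from $T\cdot F_i$ meets the inserted $T$ from $T\cdot F_j$ and cancels (this is the picture in \autoref{fig:holant}, read on internal rather than dangling edges). What remains uncancelled are the $T$'s on the $\ell$ contravariant dangling edges and the $T^{-1}$'s on the $r$ covariant dangling edges, which is precisely the action $T^{\otimes\ell}(\cdot)(T^{-1})^{\otimes r}$ applied to $K$. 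Hence the signature of $\vk'$ is $T\cdot K$, so $T \cdot K \in \dprop{T\fc}{\ell}{r} \subseteq \prop{T\fc}$. Extending linearly over quantum gadgets gives $T\prop{\fc}\subseteq\prop{T\fc}$, and the reverse inclusion follows as above.

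The main point requiring care — though it is conceptual bookkeeping rather than a genuine obstacle — is making the cancellation-on-internal-edges claim rigorous at the level of tensor entries while respecting the covariant/contravariant shape discipline of Bi-Holant gadgets: one must check that every internal edge of a Bi-Holant gadget indeed joins one covariant and one contravariant slot (true by the definition of Bi-Holant $\fc$-grid), so that each internal contraction is a genuine primal/dual pairing into which a $T^{-1}T$ can be inserted, and that dangling wires (whose signature is $I = (=_2)^{1,1}$) are handled consistently. I would either write this out as an explicit index computation of the coefficient $K(a_1,\ldots,a_\ell,b_1,\ldots,b_r)$, or — more in the spirit of the paper — invoke the $\gl_q$-equivariance of left/right contraction cited after \autoref{thm:holant} and argue by induction on the number of internal edges of $\vk$, contracting one edge at a time. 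Either way the proof is short.
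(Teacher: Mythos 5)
Your proposal is correct and follows essentially the same route as the paper: replace each $F_i$ by $T\cdot F_i$, cancel the $T^{-1}T$ pairs on internal edges (the same equivariance behind the Holant theorem, cf.\ \autoref{fig:holant}), observe the transformations survive only on dangling edges so the new signature is $T\cdot K$, and extend linearly to quantum gadgets. Your explicit handling of the reverse inclusion via $T^{-1}$ is a small tidy addition the paper leaves implicit, but the argument is the same.
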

\begin{proof}
    Let $\vk$ be an $\fc$-gadget with signature $K$ and consider $\vk_{\fc\to T\fc}$. 
    The $T$ transformations cancel on every internal edge of $\vk_{\fc\to T\fc}$, 
    (recall \autoref{fig:holant} -- in other words, covariant/contravariant edge
    contractions are $\gl_q$-equivariant), and only survive on the dangling edges.
    Therefore $\vk_{\fc\to T\fc}$ has signature $T \cdot K$. The extension to quantum
    gadgets follows from the linearity of $T$.
\end{proof}
Specializing to $0$-ary gadgets in $\prop{\fc}$ -- that is, (quantum) Bi-Holant $\fc$-grids --
\autoref{prop:respect} says that $\fc$ and $T\fc$ are Bi-Holant indistinguishable, an
extension of \autoref{thm:holant} to Bi-Holant.

\section{The Approximate Converse}
Let $\kk = \cc$. In this section we prove our first main theorem, \autoref{thm:main1}. 
For $H \subset \gl_q$, define 
$\tc^H = \{F \in \tc \mid T \cdot F = F \text{ for every } T \in H\}$ to be
the set of tensors invariant under $H$.
The following restatement of the Tensor First Fundamental Theorem for
$\gl_q$, originally due to Weyl \cite{weyl} (see also \cite[Theorem 5.3.1]{symmetry}), says
that the only tensors invariant under all of $\gl_q$ are the signatures of quantum gadgets
composed only of wires.
\begin{theorem}[Tensor First Fundamental Theorem for $\gl_q$]
    $\tc^{\gl_q} = \prop{\varnothing}$.
    \label{thm:tensor_fft}
\end{theorem}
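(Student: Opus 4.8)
\section*{Proof Proposal}

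The plan is to prove the two inclusions $\prop{\varnothing} \subseteq \tc^{\gl_q}$ and $\tc^{\gl_q} \subseteq \prop{\varnothing}$ separately. The first inclusion is the easy direction and is essentially \autoref{prop:respect}: a gadget built only from wires has a signature $K$ that is a (quantum) contraction of copies of $I = (=_2)^{1,1}$, and since each wire's signature is $\gl_q$-invariant (the $T$ and $T^{-1}$ on its two ends cancel) and contraction is $\gl_q$-equivariant, $K$ is fixed by every $T \in \gl_q$. More directly, by \autoref{prop:respect} with $\fc = \varnothing$ we get $T\prop{\varnothing} = \prop{T\varnothing} = \prop{\varnothing}$ for every $T$, but one must additionally check that $T$ acts \emph{trivially} on each element of $\prop{\varnothing}$, not merely permutes the set; this follows because a wireless-only gadget signature is a fixed tensor expression in $I$'s whose value does not change under the cancellation.

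For the reverse inclusion, which is the substantive direction, I would cite the classical Tensor First Fundamental Theorem for $\gl_q$ (Weyl, or \cite[Theorem 5.3.1]{symmetry}): the space of $\gl_q$-invariant tensors in $\leftindex_\ell{\tc}_r$ is zero unless $\ell = r$, in which case it is spanned by the ``permutation'' tensors $P_\pi$ for $\pi \in S_\ell$, where $P_\pi$ sends $e_{i_1} \otimes \cdots \otimes e_{i_\ell} \otimes e_{j_1}^* \otimes \cdots \otimes e_{j_\ell}^*$ to $\prod_{k} \delta_{i_k, j_{\pi(k)}}$ — equivalently $P_\pi = \sum_{i_1,\ldots,i_\ell} (e_{i_1}\otimes\cdots\otimes e_{i_\ell}) \otimes (e_{i_{\pi(1)}}^* \otimes \cdots \otimes e_{i_{\pi(\ell)}}^*)$. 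The key step is then to observe that each such $P_\pi$ is exactly the signature of the $(\ell,\ell)$-$\varnothing$-gadget consisting of $\ell$ wires, where wire $k$ connects the $k$-th left-facing dangling end to the $\pi(k)$-th right-facing dangling end: its signature is nonzero precisely when the value on left end $k$ equals the value on right end $\pi(k)$ for all $k$, which is $P_\pi$. Since $\prop{\varnothing}$ is closed under $\kk$-linear combinations (it is the span of quantum $\varnothing$-gadget signatures), every invariant tensor — being a linear combination of the $P_\pi$ in the balanced grade $\leftindex_\ell{\tc}_\ell$ — lies in $\prop{\varnothing}$; and in non-balanced grades both sides are $\{0\}$ (a single vertexless loop contributes a scalar $q$, handled separately in grade $(0,0)$, where $\tc^{\gl_q} = \kk = \prop{\varnothing}$).

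I expect the main obstacle to be purely bookkeeping: matching the combinatorial description of ``all quantum gadgets made only of wires'' with the linear span of the $P_\pi$. One must check that a disjoint union of wires realizes exactly a single $P_\pi$ (not a more complicated tensor), that allowing vertexless loops only multiplies by powers of $q$ and so does not enlarge the span beyond what the constant $\kk$ already gives in grade $(0,0)$, and that a gadget whose left and right dangling-end counts differ cannot be built from wires at all (each wire has one contravariant and one covariant dangling end, so $\ell = r$ is forced) — consistent with the FFT's vanishing in unbalanced grades. No genuinely hard analytic or algebraic input is needed beyond invoking Weyl's theorem; the content of the statement is the translation of that theorem into the Holant/gadget language, so I would keep the proof short and point to the FFT as a black box.
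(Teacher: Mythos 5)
Your proposal is correct and takes essentially the same route as the paper, which states the result as a restatement of Weyl's classical Tensor First Fundamental Theorem (citing \cite{weyl} and \cite[Theorem 5.3.1]{symmetry}) without spelling out the translation. Your elaboration — that the permutation tensors $P_\pi$ spanning the invariants in the balanced grade $\leftindex_\ell{\tc}_\ell$ are exactly the signatures of $\ell$-wire gadgets, that unbalanced grades vanish on both sides because each wire has one contravariant and one covariant end, and that closed components (vertexless loops) contribute only scalar factors absorbed into coefficients — is precisely what the paper's citation is leaving implicit, so it supplies the bookkeeping the paper omits. One small slip worth fixing: your two descriptions of $P_\pi$ (the ``sends $\cdots$ to $\prod_k \delta_{i_k, j_{\pi(k)}}$'' formula and the explicit sum $\sum_i (e_{i_1}\otimes\cdots) \otimes (e_{i_{\pi(1)}}^*\otimes\cdots)$) are off from each other by $\pi \leftrightarrow \pi^{-1}$, but since $S_\ell$ is closed under inversion this does not affect the span and hence does not affect the argument.
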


\begin{definition}[$\gl_q \fc$, $\overline{\gl_q \fc}$]
    \label{def:orbit_closure}
    The $\gl_q$-\emph{orbit} $\gl_q \fc$ of a finite $\fc \subset \tc$ is 
    $\{T\fc \mid T \in \gl_q\}$. If $\fc = \{F_1,\ldots,F_m\}$ with 
    $F_i \in \leftindex_{\ell_i}{\tc}_{r_i}$, then view $\fc$ as an element of the finite-dimensional
    $\cc$-vector space $V := \bigoplus_{i=1}^m \leftindex_{\ell_i}{\tc}_{r_i}$. Then $\gl_q \fc \subset V$, so define the
    $\gl_q$-\emph{orbit closure} $\overline{\gl_q \fc}$ of $\fc$ as the closure of
    $\gl_q \fc$ in the standard Euclidean topology. 
    Equivalently $\gc \in V$ is in
    $\overline{\gl_q \fc}$ if, for every $\epsilon > 0$, there is a $T_\epsilon \in \gl_q$
    such that $\|T_\epsilon \fc - \gc\| < \epsilon$ (using the standard Euclidean norm
    on $V$).
\end{definition}

\begin{definition}[{$\cc[\xc]$}]
    Let $\xc$ be a finite set of domain-$q$ variable-valued signatures. 
    For every $X \in \xc$ of arity $n$ and $\va \in [q]^n$ we introduce a variable $x_{\va}$. Define $\cc[\xc]$ to be the ring of polynomials 
    $ \cc[\{x_{\va}: X \in\xc, \va \in [q]^n\}]$.
\end{definition}
Equivalently, $\cc[\xc] \cong \cc[V]$ is the coordinate ring of the vector space $V$ from
\autoref{def:orbit_closure} (where $\xc$ is bijective with $\fc$).
For variable-valued $\xc$ and $\xc$-grid $\Omega$,
$\biholant_{\xc}(\Omega)$ is a polynomial in the entries of $\xc$.
Evaluating this polynomial at $\fc$ for $\cc$-valued $\fc$ bijective with $\xc$ 
(by substituting $F_{\va}$ for $x_{\va}$ with $\fc \ni F \leftrightsquigarrow X \in \xc$) 
yields $\holant_{\fc}(\Omega) \in \cc$. \autoref{fig:poly_example} shows an example
on the Boolean domain with
$\xc = \{X,Y\}$ for binary covariant $X$ and unary contravariant $Y$.
\begin{figure}[ht!]
    \centering
    \begin{tikzpicture}
    \GraphInit[vstyle=Classic]
    \SetUpEdge[style=-]
    \SetVertexMath
    \tikzset{VertexStyle/.style = {shape=circle, fill=black, minimum size=5pt, inner sep=1pt, draw}}

    \Vertex[x=0,y=0,L=X,Lpos=180]{x}
    \Vertex[x=1,y=0.5,L=Y,Lpos=0]{p1}
    \Vertex[x=1,y=-0.5,L=Y,Lpos=0]{p2}
    \Edge(x)(p1)
    \Edge(x)(p2)

    \node at (-2,0) {\large $\Omega:$};

\end{tikzpicture}
    \caption{
    $\holant(\Omega) = x_{00} y_{0}^2 + x_{01} y_{0} y_{1} + 
    x_{10} y_{1} y_{0} + x_{11} y_{1}^2$, 
    with the four monomials corresponding to the edge assignments $00, 01, 10, 11$,
    respectively.}
    \label{fig:poly_example}
\end{figure}
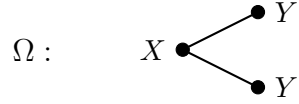

Define an action of $\gl_q$ on $\cc[\xc]$ as follows.
For $T \in \gl_q$ and $p \in \cc[\xc]$, construct $Tp \in \cc[\xc]$ 
by substituting every variable $x_{\va}$ with the $\va$-entry of $T^{-1} \cdot X$. Equivalently,
\begin{equation}
    (Tp)(\fc) = p(T^{-1} \fc)
    \label{eq:tpoly}
\end{equation}
for $\fc\subset \tc(\cc^q)$ bijective with $\xc$. Then define
\[
    \cc[\xc]^{\gl_q} := \{p \in \cc[\xc] \mid Tp = p \text{ for every }
    T \in \gl_q\}
\]
to be the set of polynomials invariant under this action.
The following theorem from geometric invariant theory, stated in
this form in \cite[Theorem 2.3]{derksen2022polystability}, \cite[Corollary 2.3.8]{derksen}, 
shows that the $\gl_q$-orbit closures of $\fc$ and $\gc$ intersect if and only if $\fc$ and
$\gc$ are indistinguishable under all $\gl_q$-invariant polynomials.
\begin{theorem}[Mumford, Fogarty, and Kirwan \cite{mumford}]
    Let $\fc,\gc \subset \tc(\cc^q)$ be bijective with $\xc$. 
    Then $\overline{\gl_q \fc} \cap \overline{\gl_q \gc} \neq \varnothing$
    if and only if $p(\fc) = p(\gc)$ for every $p \in \cc[\xc]^{\gl_q}$.
    \label{thm:mumford}
\end{theorem}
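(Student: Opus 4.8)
The plan is to obtain this as the instance, for the group $H=\gl_q$ acting on $V=\bigoplus_{i=1}^m \leftindex_{\ell_i}{\tc}_{r_i}$ from \autoref{def:orbit_closure}, of the classical geometric-invariant-theory principle that, for a \emph{linearly reductive} group acting linearly on a finite-dimensional $\cc$-vector space, the invariant polynomials are constant on orbit closures and separate disjoint closed orbits. In characteristic $0$ the group $\gl_q$ is linearly reductive (it is isogenous to $\cc^\times\times\mathrm{SL}_q$, both factors reductive), so every rational $\gl_q$-module is completely reducible and there is a Reynolds operator $R\colon\cc[V]\to\cc[V]^{\gl_q}$, i.e.\ a $\cc[V]^{\gl_q}$-linear projection fixing $\cc[V]^{\gl_q}$ pointwise. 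One direction is immediate: each $p\in\cc[\xc]^{\gl_q}$ is constant on every $\gl_q$-orbit, hence on its Euclidean closure (which, an orbit being constructible, coincides with the Zariski closure) by continuity, so $u\in\overline{\gl_q \fc}\cap\overline{\gl_q \gc}$ forces $p(\fc)=p(u)=p(\gc)$.

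For the converse I would proceed in three steps. \textbf{Step 1 (separation by invariants):} if $Z_1,Z_2\subseteq V$ are disjoint, closed, $\gl_q$-stable subsets, some $f\in\cc[V]^{\gl_q}$ has $f|_{Z_1}\equiv 0$ and $f|_{Z_2}\equiv 1$. By the Nullstellensatz write $1=g_1+g_2$ with $g_i$ in the vanishing ideal $I(Z_i)$; since $Z_i$ is $\gl_q$-stable, $I(Z_i)$ is a $\gl_q$-submodule, so $R(I(Z_i))\subseteq I(Z_i)\cap\cc[V]^{\gl_q}$; applying $R$ to $1=g_1+g_2$ gives $1=R(g_1)+R(g_2)$, and $f:=R(g_1)$ works. \textbf{Step 2 (a closed orbit in each closure):} every orbit closure $\overline{\gl_q v}$ contains a closed orbit. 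Take an orbit $O\subseteq\overline{\gl_q v}$ of minimal dimension among orbits contained in the closure; its boundary $\overline{O}\setminus O$ is closed and $\gl_q$-stable, hence a union of orbits each of strictly smaller dimension than $O$ (an orbit is open in its closure, with complement a union of lower-dimensional orbits), so minimality forces $\overline{O}\setminus O=\varnothing$. \textbf{Step 3 (assembly):} pick closed orbits $O_{\fc}\subseteq\overline{\gl_q \fc}$ and $O_{\gc}\subseteq\overline{\gl_q \gc}$. Every invariant is constant on each orbit closure, so the hypothesis gives $p|_{O_{\fc}}=p(\fc)=p(\gc)=p|_{O_{\gc}}$ for all $p\in\cc[V]^{\gl_q}$. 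If $O_{\fc}\cap O_{\gc}=\varnothing$, Step 1 yields an invariant separating these two disjoint closed $\gl_q$-stable sets, contradicting that all invariants agree on them. Hence $O_{\fc}\cap O_{\gc}\neq\varnothing$; orbits that intersect coincide, so $O_{\fc}=O_{\gc}\subseteq\overline{\gl_q \fc}\cap\overline{\gl_q \gc}$, which is therefore nonempty.

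The content is concentrated in two classical inputs. The first is linear reductivity of $\gl_q$ — the existence of the Reynolds operator — which is exactly where characteristic $0$ is used. The second, more geometric, is the fact invoked in Step 2 that an algebraic-group orbit is open in its closure with boundary a union of strictly-lower-dimensional orbits, which is what makes ``minimal-dimension orbit in the closure'' well defined and closed; I expect this to be the main obstacle in a fully self-contained account. A tidier packaging, granting Hilbert's finiteness theorem, is to form the categorical quotient $\pi\colon V\to V/\!\!/\gl_q:=\mathrm{Spec}\,\cc[V]^{\gl_q}$, whose closed points biject with closed orbits and whose coordinates generate $\cc[V]^{\gl_q}$; then $\overline{\gl_q \fc}\cap\overline{\gl_q \gc}\neq\varnothing\iff\pi(\fc)=\pi(\gc)\iff p(\fc)=p(\gc)$ for every $p\in\cc[\xc]^{\gl_q}$, with Steps 1 and 2 absorbed into the standard facts that $\pi$ is surjective and separates disjoint closed invariant sets.
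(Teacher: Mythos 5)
The paper does not prove this theorem; it is quoted from the literature (Mumford--Fogarty--Kirwan, with the precise form attributed to Derksen's texts), so there is no in-paper argument to compare against. Your proof is the standard GIT separation argument and is correct as written: linear reductivity of $\gl_q$ gives the Reynolds operator $R$; the Nullstellensatz on $I(Z_1)+I(Z_2)=\cc[V]$ plus $R$ yields the invariant separating disjoint closed $\gl_q$-stable sets; each orbit closure contains a unique closed orbit (minimal-dimension orbit, boundary being lower-dimensional), and two closed orbits on which all invariants agree must intersect, hence coincide. You also correctly flag and dispatch the one point on which the paper's definitions differ slightly from the textbook GIT setting --- the paper uses Euclidean closure over $\cc$, whereas GIT works with Zariski closure --- by noting that an algebraic-group orbit is constructible and hence has the same Euclidean and Zariski closure over $\cc$; the paper itself acknowledges this distinction when it remarks that over other fields one must pass to the Zariski closure. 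The only content you defer to standard references (existence of $R$; an orbit is open in its closure with lower-dimensional boundary) is exactly what any proof at this level would defer, so I see no gap.
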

More generally, \autoref{thm:mumford} applies to any
reductive algebraic group in place of $\gl_q$ acting on any vector space $V$ over an 
algebraically closed field (although for fields other than $\cc$ we must define 
$\overline{\gl_q \fc}$ as the Zariski, instead of Euclidean, closure).
Accompanying \autoref{thm:mumford} is a result of Hilbert 
(see \cite{derksen2001polynomial}), which implies that it suffices to check 
finitely many (with the exact number depending on the arity profile of $\xc$)
polynomial invariants to ensure orbit closure intersection.
\begin{theorem}[Hilbert \cite{hilbert}]
    \label{thm:hilbert}
    The $\cc$-algebra $\cc[\xc]^{\gl_q}$ is finitely generated.
\end{theorem}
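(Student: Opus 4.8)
The plan is to run the classical Reynolds-operator argument, using the Hilbert basis theorem as the Noetherian input. The one structural fact we need is that $\gl_q$ is linearly reductive over a field of characteristic $0$: every rational $\gl_q$-representation decomposes as a direct sum of irreducibles. Each graded piece $\cc[\xc]_d$ is a finite-dimensional rational $\gl_q$-module under the action in \eqref{eq:tpoly}, so it splits as (trivial part) $\oplus$ (sum of nontrivial irreducibles). Projecting onto the trivial part in each degree gives a $\gl_q$-equivariant, degree-preserving linear map $R : \cc[\xc] \to \cc[\xc]^{\gl_q}$, the \emph{Reynolds operator}, which restricts to the identity on $\cc[\xc]^{\gl_q}$. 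Its crucial feature is $\cc[\xc]^{\gl_q}$-linearity: for $h \in \cc[\xc]^{\gl_q}$ and $g \in \cc[\xc]$ one has $R(hg) = h\,R(g)$, since multiplication by an invariant $h$ is a $\gl_q$-equivariant endomorphism of $\cc[\xc]$ and hence commutes with the isotypic projection $R$.

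Next, let $I \subseteq \cc[\xc]$ be the ideal generated by all homogeneous invariants of strictly positive degree. Because $\cc[\xc]$ is Noetherian (Hilbert basis theorem) and $I$ is a graded ideal generated by invariants, $I$ is generated by finitely many homogeneous invariants $f_1, \dots, f_k$ of positive degree. The claim is that $\cc[\xc]^{\gl_q} = \cc[f_1, \dots, f_k]$; the inclusion $\supseteq$ is immediate, so it remains to show every invariant lies in $\cc[f_1,\dots,f_k]$. Since $\cc[\xc]^{\gl_q}$ is graded, it suffices to treat a homogeneous invariant $f$, and we induct on $\deg f$. The case $\deg f = 0$ is trivial. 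For $\deg f = d > 0$ we have $f \in I$, so $f = \sum_i g_i f_i$ with each $g_i \in \cc[\xc]$ homogeneous of degree $d - \deg f_i < d$. Applying $R$, using $R(f) = f$ and $\cc[\xc]^{\gl_q}$-linearity, yields $f = \sum_i R(g_i)\,f_i$, where each $R(g_i)$ is a homogeneous invariant of degree $< d$. By the inductive hypothesis $R(g_i) \in \cc[f_1,\dots,f_k]$, hence $f \in \cc[f_1,\dots,f_k]$, completing the induction.

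The main obstacle is the very first step: the existence of the Reynolds operator, equivalently the complete reducibility of rational $\gl_q$-modules in characteristic $0$. Over $\cc$ this can be established analytically by averaging against the Haar measure of the maximal compact subgroup $U(q)$ (Weyl's unitarian trick), which makes $R$ completely explicit; alternatively one invokes the algebraic fact — already in the ambient framework of this paper, which cites Weyl's invariant-theoretic work — that $\gl_q$ is linearly reductive. Everything after that is the bookkeeping above. I would also remark that this theorem, combined with \autoref{thm:mumford}, is exactly what makes orbit-closure intersection checkable by finitely many polynomial invariants, which is why \autoref{thm:hilbert} is quoted here.
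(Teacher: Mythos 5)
The paper does not prove this theorem; it is stated as a classical result and attributed to Hilbert via a citation, with \cite{derksen2001polynomial} mentioned nearby for degree bounds. Your proof is the standard Reynolds-operator argument for finite generation of invariants of a linearly reductive group, which is exactly the canonical proof the citation points to, and it is correct: the equivariance/$\cc[\xc]^{\gl_q}$-linearity of $R$, the Noetherian extraction of finitely many homogeneous invariant generators of the Hilbert ideal $I$, and the induction on degree after applying $R$ are all in order. One small point worth making explicit (you gesture at it but compress it): Noetherianity gives that $I$ is finitely generated, and since $I$ is generated \emph{as an ideal} by the set of positive-degree homogeneous invariants, a finite generating set can be taken from among those invariants by expanding each Noetherian generator as a $\cc[\xc]$-linear combination of finitely many of them. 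With that unpacked, the argument is complete and matches the classical source the paper relies on.
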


To convert the condition in \autoref{thm:mumford} from polynomial indistinguishability to
Bi-Holant indistinguishability, we apply the following minor generalization of Weyl's Polynomial 
First Fundamental Theorem for $\gl_q$ \cite{weyl,symmetry} more suited to our purpose.
The proof applies an argument similar to \cite[Theorem 4.23 and Lemma 4.26]{lysikov}.
\begin{theorem}
    \label{thm:poly_fft}
    For variable-valued signature set $\xc = \{X_1,\ldots,X_m\}$ on domain $[q]$,
    \[
        \cc[\xc]^{\gl_q} = 
        \spn\{\biholant_{\xc}(\Omega): \mathcal{X}\text{-grid } \Omega\}
    \]
\end{theorem}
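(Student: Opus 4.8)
The plan is to establish the two inclusions separately. The inclusion $\spn\{\biholant_{\xc}(\Omega)\} \subseteq \cc[\xc]^{\gl_q}$ is the easy direction and is essentially a restatement of the Bi-Holant version of the Holant theorem: for any $\xc$-grid $\Omega$, \autoref{prop:respect} (applied with $\fc = \xc$ variable-valued, i.e.\ working in the coordinate ring) says $\biholant_{\xc}(\Omega)$ is unchanged when $\xc$ is replaced by $T\xc$, which by \eqref{eq:tpoly} means exactly that $\biholant_{\xc}(\Omega)$ is a $\gl_q$-invariant polynomial. Since $\cc[\xc]^{\gl_q}$ is a linear subspace, the span of all such Holant polynomials lies inside it.

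For the reverse inclusion $\cc[\xc]^{\gl_q} \subseteq \spn\{\biholant_{\xc}(\Omega)\}$, I would use the classical Polynomial First Fundamental Theorem for $\gl_q$ (Weyl; see \cite{symmetry}), which describes $\gl_q$-invariants of a collection of tensors: every invariant is a linear combination of \emph{complete contractions} that pair up each covariant index with a contravariant index (across the various tensor factors, with repetition allowed) and sum over the shared index in $[q]$. The key observation is that such a complete contraction pattern on a multiset of copies of the $X_i$'s is \emph{precisely} the data of an $\xc$-grid $\Omega$: each vertex is a copy of some $X_i$, each edge of $\Omega$ is one of the index-pairings, and $\biholant_{\xc}(\Omega) = \sum_{\sigma: E \to [q]} \prod_v X_v(\sigma(\delta(v)))$ is exactly the value of that complete contraction as a polynomial in the entries of $\xc$. (Vertexless loops handle the degenerate contraction of a covariant with a contravariant index within a single trivial factor, contributing the factor $q$.) Thus the FFT generating set for $\cc[\xc]^{\gl_q}$ is literally $\{\biholant_{\xc}(\Omega)\}$, and taking spans gives the inclusion. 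I expect to follow the structure of \cite[Theorem 4.23 and Lemma 4.26]{lysikov}: one first reduces, via polarization/restitution (characteristic $0$), to \emph{multilinear} invariants in the tensor slots, then invokes Schur--Weyl duality to conclude that multilinear $\gl_q$-invariants of tensor powers are spanned by contraction diagrams, and finally translates a contraction diagram into an $\xc$-grid.

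The main obstacle is bookkeeping rather than a deep idea: one must carefully match the combinatorics of index contractions — which covariant slot of which copy of $X_i$ is glued to which contravariant slot of which copy of $X_j$ — with the combinatorics of edges in a signature grid, respecting the ordering of inputs at each vertex and the arity bipartition of each $X_i$ into its $\ell_i$ contravariant and $r_i$ covariant inputs; one also needs the FFT in the slightly nonstandard multi-tensor form (several tensors of mixed variance, rather than a single vector-space representation), which is why a small generalization of Weyl's statement is needed and why the argument is routed through polarization to reduce to the multilinear case where Schur--Weyl applies cleanly. Degenerate contractions (an index of a factor contracted with another index of the \emph{same} factor, and trivial $\leftindex_0{\tc}_0 = \cc$ factors) correspond to loops and vertexless loops in $\Omega$ and must be included so that the grid language captures \emph{all} contraction diagrams, not just those arising from simple graphs.
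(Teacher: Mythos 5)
Your proposal is correct and takes essentially the same route as the paper: the easy inclusion via \eqref{eq:tpoly} and the Bi-Holant theorem (\autoref{prop:respect}), and the hard inclusion by reducing in characteristic $0$ (polarization/multihomogeneous components, which is exactly the paper's identification of each component with a symmetric tensor $A_{\vd}$) to the tensor-level statement that $\gl_q$-invariant mixed tensors are spanned by complete contraction patterns, which are then read off as $\xc$-grids (loops and vertexless loops included). One caution: the mixed-tensor ``polynomial FFT'' you cite as classical at the outset is essentially the statement being proved, so the argument must indeed go through the polarization-plus-Schur--Weyl step you sketch (equivalently, the paper's \autoref{thm:tensor_fft}) rather than a bare citation.
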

\begin{proof}
    The $\supset$ inclusion follows from \eqref{eq:tpoly},
    the Bi-Holant theorem (\autoref{prop:respect}), and the fact that two polynomials which 
    take the same value on every point must be identical. 

    For the $\subset$ inclusion, let $p \in \cc[\xc]^{\gl_q}$. Split $p$ into a sum 
    $p = \sum_{d_1,\ldots,d_m \geq 0} p_{\vd}$
    of multihomogeneous polynomials, where $d_i$ is the total degree of
    the entries of $X_i$ in $p_{\vd}$ (and only finitely many
    $p_{\vd}$ are nonzero). Since the action of $\gl_q$ replaces each variable
    $x_{i,\va}$ with a linear polynomial in the entries of the same signature $X_i$,
    it preserves the multihomogeneous degree of each $p_{\vd}$. 
    Therefore each $p_{\vd} \in 
    \cc[\xc]^{\gl_q}$, and it suffices to find an $\Omega$ such that
    $\biholant_{\xc}(\Omega) = p_{\vd}$. Let $X_i$ have left-arity $\ell_i$ and right-arity
    $r_i$. Each $p_{\vd}$ is a linear functional on the space
    \[
        \bigotimes_{i=1}^m \sym^{d_i} \big(\leftindex_{\ell_i}{\tc}_{r_i}\big) =
        \bigotimes_{i=1}^m \sym^{d_i}\big((\cc^q)^{\otimes \ell_i}
        \otimes ((\cc^q)^*)^{\otimes r_i} \big)
    \]
    (where $\sym^n(V)$ denotes the space of symmetric tensors in $V^{\otimes n}$) so
    we can, after normalizing by $\left(\prod_{i} d_i! \right)^{-1}$,
    identify $p_{\vd}$ with a tensor
    \begin{equation}
        A_{\vd} \in \bigotimes_{i=1}^m \sym^{d_i}\big(((\cc^q)^*)^{\otimes \ell_i}
        \otimes (\cc^q)^{\otimes r_i} \big)
        \subset ((\cc^q)^*)^{\otimes \sum_i \ell_i d_i} \otimes 
        (\cc^q)^{\otimes \sum_i r_i d_i}.
        \label{eq:factor_reorder}
    \end{equation}
    For example,
    if $q=4$, $\xc = \{X,Y,Z\}$, $(\ell_1,\ell_2,\ell_3) = (0,3,1)$, $(r_1,r_2,r_3) = (2,0,1)$, and
    $p_{1,2,1} = x_{34} y_{013} y_{444} z_{23} 
    = x_{34} y_{444} y_{013} z_{23}$, then
    \begin{align*}
        A_{1,2,1} = \frac{1}{2} 
        &\big((e_3 \otimes e_4) \otimes (e_0^* \otimes e_1^* \otimes e_3^*) \otimes
            (e_4^* \otimes e_4^* \otimes e_4^*) \otimes (e_2^* \otimes e_3) \\
        & + (e_3 \otimes e_4) \otimes (e_4^* \otimes e_4^* \otimes e_4^*) \otimes
             (e_0^* \otimes e_1^* \otimes e_3^*) \otimes (e_2^* \otimes e_3)
        \big).
    \end{align*}
    Now, viewing $\bigotimes_i (X_i)^{\otimes d_i}$
    as a signature with left arity $\sum_i \ell_i d_i$ and right arity 
    $\sum_i r_i d_i$, reconstruct
    \begin{equation}
        p_{\vd} = \Big\langle A_{\vd},~
        \bigotimes_i X_i^{\otimes d_i} \Big\rangle.
        \label{eq:reconstruct}
    \end{equation}
    Furthermore, for any $T \in \gl_q$,
    \[
        Tp_{\vd} = p_{\vd}(T^{-1}\xc) 
        = \Big\langle A_{\vd},~
        \bigotimes_i (T^{-1} \cdot X_i)^{\otimes d_i} \Big\rangle
        = \Big\langle T \cdot A_{\vd},~
            \bigotimes_i (X_i)^{\otimes d_i} \Big\rangle.
    \]
    so the map $p_{\vd} \mapsto A_{\vd}$ is $\gl_q$-equivariant.
    With $p_{\vd} \in \cc[\xc]^{\gl_q}$, it follows that     
    $A_{\vd} \in \tc(\cc^q)^{\gl_q}$ 
    (up to the reordering of factors in \eqref{eq:factor_reorder}, which doesn't affect 
    this invariance), so, by 
    \autoref{thm:tensor_fft}, $A_{\vd} \in \prop{\varnothing}$ is the signature of
    a wire gadget. Now \eqref{eq:reconstruct} says that $p_{\vd}$ is a full
    contraction consisting only of wires and signatures in $\xc$, which is
    $\biholant_{\xc}(\Omega)$ for some $\xc$-grid $\Omega$.
\end{proof}

\begin{theorem}[first main theorem]
    Finite $\fc,\gc \subset \tc(\cc^q)$ are Bi-Holant-indistinguishable if and only if 
    $\overline{\gl_q \fc} \cap \overline{\gl_q \gc} \neq \varnothing$.
    \label{thm:main1}
\end{theorem}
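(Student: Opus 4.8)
The plan is to obtain \autoref{thm:main1} by composing the two substantial results already in place: Mumford's criterion (\autoref{thm:mumford}), which says that $\overline{\gl_q\fc}\cap\overline{\gl_q\gc}\neq\varnothing$ exactly when $p(\fc)=p(\gc)$ for every $p\in\cc[\xc]^{\gl_q}$, and the Bi-Holant realization of the invariant ring (\autoref{thm:poly_fft}), which says that $\cc[\xc]^{\gl_q}$ is spanned by the polynomials $\biholant_{\xc}(\Omega)$. Because \autoref{thm:poly_fft} already carries the conceptual weight, what remains is a translation between two vocabularies for the same condition.

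First I would set up the correspondence. Fix a variable-valued signature set $\xc$ in arity-preserving bijection with $\fc$ and with $\gc$ (the composite bijection being the $\leftrightsquigarrow$ implicit in ``Bi-Holant-indistinguishable''), so that $\fc$ and $\gc$ are the evaluations of $\xc$ at two points of $V=\bigoplus_i \leftindex_{\ell_i}{\tc}_{r_i}$. For any Bi-Holant $\xc$-grid $\Omega$, the polynomial $\biholant_{\xc}(\Omega)\in\cc[\xc]$ satisfies $\biholant_{\xc}(\Omega)(\fc)=\biholant_{\fc}(\Omega)$ and $\biholant_{\xc}(\Omega)(\gc)=\biholant_{\gc}(\Omega_{\fc\to\gc})$, and every Bi-Holant $\fc$-grid arises this way. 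Hence $\fc$ and $\gc$ are Bi-Holant-indistinguishable if and only if $\biholant_{\xc}(\Omega)(\fc)=\biholant_{\xc}(\Omega)(\gc)$ for every Bi-Holant $\xc$-grid $\Omega$.

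With this dictionary both implications are short. If $\fc$ and $\gc$ are Bi-Holant-indistinguishable, then by the $\subset$ inclusion of \autoref{thm:poly_fft} every $p\in\cc[\xc]^{\gl_q}$ is a $\cc$-linear combination of polynomials $\biholant_{\xc}(\Omega)$, each of which agrees on $\fc$ and $\gc$ by the previous paragraph; so $p(\fc)=p(\gc)$ for all invariant $p$, and \autoref{thm:mumford} gives $\overline{\gl_q\fc}\cap\overline{\gl_q\gc}\neq\varnothing$. Conversely, if the orbit closures meet, \autoref{thm:mumford} gives $p(\fc)=p(\gc)$ for every $p\in\cc[\xc]^{\gl_q}$; specializing to $p=\biholant_{\xc}(\Omega)$, which lies in $\cc[\xc]^{\gl_q}$ by the $\supset$ inclusion of \autoref{thm:poly_fft} (equivalently, by the Bi-Holant theorem, \autoref{prop:respect}, which makes $\biholant_{\xc}(\Omega)$ constant along $\gl_q$-orbits), shows $\fc$ and $\gc$ are Bi-Holant-indistinguishable. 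Alternatively the converse follows directly from continuity: a point $\mathcal{H}\in\overline{\gl_q\fc}\cap\overline{\gl_q\gc}$ is a limit of $T_n\fc$ and of $S_n\gc$, so, as $\biholant_{\xc}(\Omega)$ is a continuous orbit-constant function, $\biholant_{\xc}(\Omega)(\fc)=\biholant_{\xc}(\Omega)(\mathcal{H})=\biholant_{\xc}(\Omega)(\gc)$.

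At this stage there is essentially nothing left to do beyond matching definitions, so the genuine difficulty of \autoref{thm:main1} sits entirely inside \autoref{thm:poly_fft} --- concretely, in the appeal to the Tensor First Fundamental Theorem (\autoref{thm:tensor_fft}) to force the invariant tensor $A_{\vd}$ attached to a multihomogeneous invariant to be a gadget built from wires alone. I would close by recording two immediate riders: combining with Hilbert's finiteness theorem (\autoref{thm:hilbert}) shows it suffices to check finitely many Bi-Holant grids (so Bi-Holant-indistinguishability is decidable), and via \autoref{prop:shapeless} and the equivalence of $\biholant_{\fc\cup\fc'}$ with $\holant_{\fc|\fc'}$ the theorem specializes to the bipartite $\fc|\fc'$ statement of the introduction.
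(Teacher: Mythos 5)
Your proposal is correct and matches the paper's proof: the forward direction is exactly the composition of Mumford's criterion (\autoref{thm:mumford}) with the realization of $\cc[\xc]^{\gl_q}$ by Bi-Holant grids (\autoref{thm:poly_fft}), and your continuity-plus-orbit-constancy argument for the converse is the paper's appeal to \autoref{prop:respect} together with polynomiality of $\biholant_{\fc}(\Omega)$ in $\fc$. The extra riders (decidability via \autoref{thm:hilbert}, specialization to the bipartite setting) are likewise as in the paper.
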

\begin{proof}
    The $(\Rightarrow)$ direction follows from \autoref{thm:mumford} and 
    \autoref{thm:poly_fft}. $(\Leftarrow)$ follows from the Bi-Holant Theorem
    (\autoref{prop:respect}) and the fact that $\biholant_{\fc}(\Omega)$ is
    a polynomial, hence continuous,
    function in $\fc$.
\end{proof}

Combining \autoref{thm:main1}, \autoref{thm:poly_fft}, and \autoref{thm:hilbert} shows that 
\begin{corollary}
    \label{cor:decidable}
    The problem of determining whether any two finite $\fc,\gc \subset \tc(\cc^q)$ are
    Bi-Holant-indistinguishable is decidable.
\end{corollary}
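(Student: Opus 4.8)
The plan is to turn the infinite family of Bi-Holant identities defining Bi-Holant-indistinguishability into a single finite, effectively checkable list of polynomial identities in the entries of $\fc$ and $\gc$: \autoref{thm:main1} reduces the question to orbit-closure intersection, \autoref{thm:mumford} further reduces it to indistinguishability over the invariant ring $\cc[\xc]^{\gl_q}$, and \autoref{thm:poly_fft} together with an effective form of \autoref{thm:hilbert} replaces ``all invariants'' by finitely many explicitly enumerable Bi-Holant polynomials.

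Concretely, I would first fix $\xc$ bijective with $\fc$ (and $\gc$). By \autoref{thm:hilbert} in its effective form --- using the degree bounds of \cite{derksen2001polynomial}, which depend only on $q$ and the arity profile of $\xc$ --- there is an explicit integer $D$ such that $\cc[\xc]^{\gl_q}$ is generated as a $\cc$-algebra by its elements of degree at most $D$. Next I would note that the proof of \autoref{thm:poly_fft} in fact exhibits each multihomogeneous component $p_\vd$ of an invariant $p$ as $\biholant_\xc(\Omega)$ for an $\xc$-grid $\Omega$ with exactly $\sum_i d_i = \deg p_\vd$ vertices; hence the degree-$\le D$ part of $\cc[\xc]^{\gl_q}$ is spanned by $\{\biholant_\xc(\Omega) : \Omega \text{ an } \xc\text{-grid with at most } D \text{ vertices}\}$, which is a \emph{finite} set $\Omega_1,\ldots,\Omega_N$ --- there are only finitely many assignments of signatures in $\xc$ to at most $D$ vertices and finitely many ways to match up their bounded collection of covariant/contravariant legs, while vertexless loops merely contribute scalar factors of $q$ and can be discarded. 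Enumerating these grids gives $\cc[\xc]^{\gl_q} = \cc[\biholant_\xc(\Omega_1),\ldots,\biholant_\xc(\Omega_N)]$.

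I would then chain the equivalences: $\fc$ and $\gc$ are Bi-Holant-indistinguishable $\iff$ $\overline{\gl_q\fc}\cap\overline{\gl_q\gc}\neq\varnothing$ (\autoref{thm:main1}) $\iff$ $p(\fc)=p(\gc)$ for every $p\in\cc[\xc]^{\gl_q}$ (\autoref{thm:mumford}) $\iff$ $\biholant_\fc(\Omega_i)=\biholant_\gc(\Omega_i)$ for $i=1,\ldots,N$, the last step because the $\biholant_\xc(\Omega_i)$ generate the invariant ring, so agreement on them forces agreement on every polynomial in them. Finally, each $\biholant_\fc(\Omega_i)$ is a fixed polynomial expression in the entries of $\fc$, so --- in the input model where the entries of $\fc,\gc$ lie in a field with decidable equality, e.g. $\overline{\mathbb{Q}}$ with entries given by minimal polynomials and isolating data --- all $2N$ values are computable and the $N$ equalities are decidable.

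The main obstacle is the gap between \emph{finite generation} (\autoref{thm:hilbert}) and an \emph{effective} stopping criterion: bare Hilbert finiteness does not by itself tell an algorithm when enough Bi-Holant polynomials have been checked, so one must invoke the explicit degree bounds for invariants of reductive group actions to produce the finite list $\Omega_1,\ldots,\Omega_N$ concretely. A secondary, routine point is pinning down the input model, since ``decidable'' is only meaningful once a computable presentation of the complex entries is fixed; this does not affect the structure of the argument.
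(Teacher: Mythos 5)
Your proposal is correct and follows the same route the paper takes: chain \autoref{thm:main1}, \autoref{thm:mumford}, and \autoref{thm:poly_fft} to reduce Bi-Holant-indistinguishability to equality on the invariant ring, then use \autoref{thm:hilbert} with the effective degree bounds of \cite{derksen2001polynomial} to reduce to finitely many enumerable grid identities. You have simply written out explicitly the steps that the paper's one-line remark ("Combining Theorem 3.7, Theorem 3.6, and Theorem 3.4 shows that...") leaves implicit, including the correct observation that bare Hilbert finiteness must be replaced by an effective bound and that a computable input model must be fixed --- both of which the paper also addresses, the former in the discussion immediately following the corollary.
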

There are algorithms for computing the finite generating set of $\cc[\xc]^{\gl_q}$
guaranteed by \autoref{thm:hilbert} \cite{derksen, derksen1999computation}, and there are
upper bounds on the largest degree of any such generator \cite{derksen2001polynomial}. 
However, in general these upper bounds are exponential in the size of $\xc$ (i.e. the size
of the signature sets in question) and in certain cases there are exponential lower bounds
-- see e.g. \cite[Proposition 4.15]{minimal_canonical}.

Say $\fc \subset \tc(\cc^q)$ is \emph{Bi-Holant-vanishing}
if it is Bi-Holant-indistinguishable from the set of all-0
signatures. By \autoref{prop:shapeless}, this notion captures both
bipartite and general Holant vanishing.
\begin{corollary}
    \label{cor:vanishing}
    Finite $\fc \subset \tc(\cc^q)$ is Bi-Holant-vanishing if and only if
    $0 \in \overline{\gl_q \fc}$.
\end{corollary}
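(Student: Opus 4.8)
The plan is to derive this directly from the first main theorem, \autoref{thm:main1}, by taking $\gc$ to be the all-$0$ signature set corresponding (under the arity-preserving bijection) to $\fc$. First I would note that $\fc$ being Bi-Holant-vanishing is exactly the statement that $\fc$ is Bi-Holant-indistinguishable from this $\gc = \mathbf{0}$: for every $\fc$-grid $\Omega$ we have $\biholant_{\fc}(\Omega) = \biholant_{\gc}(\Omega_{\fc\to\gc}) = 0$, since any signature grid using only all-$0$ tensors contracts to $0$ (every term in the Holant sum contains a zero factor, as each grid vertex has positive degree so every summand invokes at least one all-$0$ signature). Hence by \autoref{thm:main1}, $\fc$ is Bi-Holant-vanishing iff $\overline{\gl_q \fc} \cap \overline{\gl_q \gc} \neq \varnothing$.

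It then remains to identify $\overline{\gl_q \gc} \cap \overline{\gl_q \fc} \neq \varnothing$ with the condition $0 \in \overline{\gl_q \fc}$. The key observation is that $\mathbf{0}$ is a fixed point of the $\gl_q$-action: for any $T \in \gl_q$ and any all-$0$ tensor $F \in \leftindex_\ell{\tc}_r$, we have $T \cdot F = T^{\otimes \ell} F (T^{-1})^{\otimes r} = 0$. Therefore $\gl_q \gc = \{\mathbf{0}\}$, a single point, which is already closed, so $\overline{\gl_q \gc} = \{\mathbf{0}\}$. Consequently $\overline{\gl_q \gc} \cap \overline{\gl_q \fc} \neq \varnothing$ holds if and only if $\mathbf{0} \in \overline{\gl_q \fc}$. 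Viewing $\fc$ as an element of the ambient vector space $V = \bigoplus_i \leftindex_{\ell_i}{\tc}_{r_i}$ as in \autoref{def:orbit_closure}, the point $\mathbf{0}$ is precisely the zero vector $0 \in V$, so this reads $0 \in \overline{\gl_q \fc}$, which is the claimed statement.

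There is no real obstacle here; the only point requiring a line of care is the bookkeeping that the all-$0$ set $\gc$ is the correct one — i.e.\ that it is bijective with $\fc$ in the arity-preserving sense required by the definition of Bi-Holant-indistinguishability, and that "indistinguishable from $0$" in \autoref{cor:vanishing} refers to this set (which is exactly how Bi-Holant-vanishing was defined just above the corollary statement, "the appropriate all-$0$ set"). One might also remark, for completeness, that via \autoref{prop:shapeless} this single notion subsumes both the bipartite Holant-vanishing of $\fc|\fc'$ and ordinary Holant-vanishing, as already noted in the text preceding the corollary; no additional argument is needed for that since \autoref{prop:shapeless} gives an equivalence of the corresponding Holant problems and hence of the vanishing conditions.
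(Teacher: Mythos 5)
Your proposal is correct and matches the argument the paper intends (the corollary is stated as an immediate consequence of \autoref{thm:main1}): take $\gc$ to be the corresponding all-$0$ set, observe that its $\gl_q$-orbit closure is the single point $0$, and apply the orbit-closure-intersection criterion. The only cosmetic imprecision is the parenthetical claim that every grid evaluates to $0$ under the all-$0$ set -- a grid consisting solely of vertexless loops evaluates to a power of $q$ -- but since such grids contain no signatures they are trivially indistinguishable under any substitution, so this does not affect the argument.
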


\section{Quantum-Nonvanishing Wheeled PROPs}
Given $(\ell_1,r_1)$-gadget $\vk_1$ and $(\ell_2,r_2)$-gadget $\vk_2$, construct a
$(\ell_1+\ell_2,r_1+r_2)$-gadget $\vk_1 \otimes \vk_2$ as the disjoint union of $\vk_1$
and $\vk_2$, placing $\vk_1$ above $\vk_2$ (so all $\vk_1$ dangling edges precede all
$\vk_2$ dangling edges in the left and right order). This operation extends bilinearly
to quantum gadgets and induces the tensor product on the underlying signatures.

\label{sec:prop}
\begin{definition}[{\cite[Definition 2.1]{prop}}]
    \label{def:prop}
    A \emph{pre-wheeled PROP} is a bigraded $\kk$-vector space 
    $\mathfrak{R} = \bigoplus_{\ell,r \geq 0} \rprop{\ell}{r}$  together with
    \begin{itemize}
        \item a special element $1_{\mathfrak{R}} \in \rprop{0}{0}$,
        \item a special element $I_{\mathfrak{R}} \in \rprop{1}{1}$,
        \item a bilinear map $\otimes: \rprop{\ell_1}{r_1} \times \rprop{\ell_2}{r_2}
            \to \rprop{\ell_1+\ell_2}{r_1+r_2}$, and
        \item a linear map $\leftindex_i{\partial}_j: \rprop{\ell}{r} \to \rprop{\ell-1}{r-1}$
            for every $1 \leq i \leq \ell$ and $1 \leq j \leq r$.
    \end{itemize}
\end{definition}
The mixed tensor algebra $\tc$ is a pre-wheeled PROP, where 
$1_{\tc} = 1_{\kk}$, $I_{\tc} = I$ (the identity map), 
$\otimes$ is the usual tensor
product, and $\leftindex_i{\partial}_j$ contracts the $i$th contravariant input with the $j$th
covariant input. For any $\fc$, the space $\mathfrak{Q}_{\fc}$ of quantum-$\fc$-gadgets
(the formal direct sums of the diagrams
themselves) is also a pre-wheeled PROP, where $\qprop{\ell}{r}$ is the space of
$(\ell,r)$-quantum-$\fc$-gadgets, $1_{\mathfrak{Q}_{\fc}}$ is the empty gadget, 
$I_{\mathfrak{Q}_{\fc}}$
is the wire gadget, $\otimes$ is gadget tensor product, and $\leftindex_i{\partial}_j$ is
the operation of connecting the $i$th left input and $j$th right input.
In fact, 
$\mathfrak{Q}_{\fc}$ is (isomorphic to) the \emph{free wheeled PROP} generated by $\fc$
\cite[Definition 2.16]{prop}. A \emph{wheeled PROP} is a pre-wheeled PROP which is the image
of a free wheeled PROP under a
pre-wheeled PROP homomorphism (a linear map respecting the bigrading and the four
elements/operations listed in \autoref{def:prop}) \cite[Definition 2.17]{prop}. 
Therefore $\prop{\fc} \subset \tc$ is a wheeled PROP, as
it is the image of the free wheeled PROP $\mathfrak{Q}_{\fc}$ under the pre-wheeled
PROP homomorphism mapping a quantum-$\fc$-gadget to its signature.
Specifically, $\prop{\fc}$ is a sub-wheeled PROP of $\tc$ (which is the image of
the free wheeled PROP $\mathfrak{Q}_{\tc}$ under the same signature-evaluation map),
and every sub-wheeled PROP of $\tc$ is $\prop{\fc}$ for some $\fc \subset \tc$.

\begin{definition}[$\fc$-nonvanishing, Quantum-nonvanishing]
    Say $K \in \dprop{\fc}{\ell}{r}$ is $\fc$-\emph{nonvanishing} if
    it satisfies any of the following equivalent conditions.
    \begin{enumerate}[label=(\arabic*)]
        \item There is a $\widehat{K} \in \dprop{\fc}{r}{\ell}$ such that 
        $\langle K,\widehat{K} \rangle \neq 0$, or
        \item there is an $\prop{\fc}$-grid $\Omega$ containing $K$ such that 
            $\holant(\Omega) \neq 0$, or
        \item there is an $\fc \cup \{K\}$-grid $\Omega$ containing $K$ such that
            $\holant(\Omega) \neq 0$.
    \end{enumerate}

    Then say $\fc \subset \tc$ is $(\ell,r)$-\emph{quantum-nonvanishing} if every 
    nonzero $K \in \dprop{\fc}{\ell}{r}$ is $\fc$-nonvanishing
    (equivalently, the bilinear form
    $\langle\cdot,\cdot\rangle$ is nondegenerate on $\dprop{\fc}{\ell}{r}$), and $\fc$ is
    \emph{quantum-nonvanishing} if it is $(\ell,r)$-quantum-nonvanishing for every $(\ell,r)$.
\end{definition}
\begin{proof}
$(1) \implies (2)$ because $\langle K,\widehat{K} \rangle$ is the Holant value of an
$\prop{\fc}$-grid containing $K$, and $(2) \implies (1)$ because, given $\Omega$, let
$\widehat{K}$ be the signature of the $\prop{\fc}$-gadget formed by removing a vertex assigned
$K$ from $\Omega$, leaving its formerly incident edges dangling.
$(3) \implies (2)$ because every $\fc \cup \{K\}$-grid is an $\prop{\fc}$-grid, and
$(2) \implies (3)$ because
expanding as quantum-$\fc$-gadgets the other signatures in the $\prop{\fc}$ grid $\Omega$
containing $K$ yields a quantum $\fc \cup \{K\}$-grid with each term containing $K$,
at least one of which has nonzero Holant value.
\end{proof}

The following theorem of
Derksen and Makam states that, if $\fc$ is quantum-nonvanishing, then there is a subgroup
$\stab(\prop{\fc}) \subset \gl_q$ such that every tensor in $\tc$ invariant under
the action of $\stab(\prop{\fc})$ is realizable as a quantum-$\fc$-gadget signature
(Derksen and Makam use the term ``simple'' instead of ``quantum-nonvanishing''). The theorem generalizes the theorem of Schrijver \cite{schrijver_tensor_2008,regts_rank_2012} used
to prove \autoref{thm:orthogonal}, and is the same type of result
(in the sense of characterizing quantum gadget signatures as invariant tensors) as the
Tannaka-Krien duality used by Man\v{c}inska and Roberson \cite{planar} and Cai and Young
\cite{cai_planar_2023} to prove their results on planar indistinguishability and quantum
isomorphism.
\begin{theorem}[{\cite[Theorem 6.2, Proposition 6.5, Corollary 6.6]{prop}}]
    A signature set $\fc$ is quantum-nonvanishing 
    if and only if $\prop{\fc} = \tc^{\stab(\prop{\fc})}$ for some 
    reductive subgroup $\stab(\prop{\fc}) \subset \gl_q$.
    Furthermore, if these conditions hold, then $\prop{\fc}$ is finitely generated.
    \label{thm:duality}
\end{theorem}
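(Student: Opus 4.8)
This statement is a ``first fundamental theorem'' for the pointwise stabilizer $H := \stab(\prop{\fc}) = \{T \in \gl_q \mid T \cdot F = F \text{ for all } F \in \fc\}$; it is the cited theorem of Derksen and Makam, and it generalizes \autoref{thm:tensor_fft}, which is the extreme case $\fc = \varnothing$ with $H = \gl_q$. Note first that, by the $\gl_q$-equivariance of gadget construction (\autoref{prop:respect}), a $T$ fixing every $F \in \fc$ fixes every quantum-$\fc$-gadget signature, so $H$ is a well-defined Zariski-closed subgroup of $\gl_q$ and $\prop{\fc} \subseteq \tc^{H}$ always. I would take this $H$ as the group in the theorem.

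For the $(\Leftarrow)$ direction I would argue directly. If $\prop{\fc} = \tc^{H}$ with $H$ reductive, then $\dprop{\fc}{\ell}{r} = (\leftindex_{\ell}{\tc}_{r})^{H}$ for all $\ell, r$. The standard pairing $\langle \cdot, \cdot \rangle$ is $\gl_q$-invariant (the transformations cancel on contracted indices, exactly as in \autoref{fig:holant}), so it presents $\leftindex_{r}{\tc}_{\ell}$ as the dual $H$-module of $W := \leftindex_{\ell}{\tc}_{r}$. Since $H$ is reductive and the characteristic is $0$, $W$ is semisimple, so passing to isotypic components identifies $(W^{*})^{H}$ with $(W^{H})^{*}$; hence $\langle \cdot, \cdot \rangle$ restricts to a perfect pairing of $\dprop{\fc}{\ell}{r}$ with $\dprop{\fc}{r}{\ell}$ for every $\ell, r$, which is precisely the statement that $\fc$ is quantum-nonvanishing.

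The substance is the $(\Rightarrow)$ direction: assuming $\fc$ quantum-nonvanishing, one must show $\tc^{H} \subseteq \prop{\fc}$ and that $H$ is reductive. I would organize $\prop{\fc}$ as the rigid symmetric monoidal category $\mathcal{C}$ associated to the wheeled PROP, whose morphism spaces are the graded pieces $\dprop{\fc}{\ell}{r}$, with composition and monoidal product given by wire contraction and juxtaposition of gadgets, so that the endomorphism algebra of the $n$-th tensor power of the generating object is $\dprop{\fc}{n}{n}$ and the categorical trace pairing on morphism spaces is exactly $\langle \cdot, \cdot \rangle$. Quantum-nonvanishing says every such pairing is nondegenerate; in particular the trace form $(a,b) \mapsto \tr_{(\kk^q)^{\otimes n}}(ab)$ is nondegenerate on the finite-dimensional algebra $\dprop{\fc}{n}{n} \subseteq \mathrm{End}_{\kk}((\kk^q)^{\otimes n})$, and since nilpotent operators have trace $0$ the Jacobson radical lies in the kernel of this form and so vanishes: each $\dprop{\fc}{n}{n}$ is semisimple. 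Because $H$ fixes all gadget signatures, $\dprop{\fc}{n}{n} \subseteq \mathrm{End}_{H}((\kk^q)^{\otimes n})$ always; the plan is to promote this to equality by the double-commutant theorem, which reduces to identifying the commutant of the semisimple algebra $\dprop{\fc}{n}{n}$ with the image of the group algebra $\kk[H]$. That identification is a Tannakian reconstruction: $H$ is recovered as the group of $\gl_q$-automorphisms of the tautological embedding $\mathcal{C} \hookrightarrow \mathrm{Rep}(H)$ -- any automorphism of this embedding is a $T \in \gl_q$ fixing all of $\prop{\fc}$, hence lies in $H$ -- after which the double commutant yields $\mathrm{End}_{H}((\kk^q)^{\otimes n}) = \dprop{\fc}{n}{n}$ for all $n$, and rigidity (the duals in $\mathcal{C}$ supplied by the covariant/contravariant grading let one bend every morphism space to an endomorphism space) upgrades these to $(\leftindex_{\ell}{\tc}_{r})^{H} = \dprop{\fc}{\ell}{r}$ for all $\ell, r$, that is, $\tc^{H} = \prop{\fc}$. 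The same equalities make every $\leftindex_{\ell}{\tc}_{r}$ a semisimple $H$-module, which is reductivity of $H$. Finally, with $H$ reductive, Hilbert's finiteness theorem (the principle behind \autoref{thm:hilbert}) makes the invariant ring of $H$ on each fixed $\leftindex_{\ell}{\tc}_{r}$ finitely generated, and a characteristic-$0$ polarization--restitution argument in the style of Weyl promotes finitely many such generators to a finite generating set of the wheeled PROP $\prop{\fc} = \tc^{H}$.

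The main obstacle is the reconstruction step above: converting the purely combinatorial nondegeneracy hypothesis into the representation-theoretic facts that $\mathrm{End}_{H}$ of the tensor powers is no larger than $\dprop{\fc}{n}{n}$ and that $H$ is reductive. These are intertwined -- one cannot establish reductivity of $H$ without the nondegeneracy -- and this is exactly where the quantum-nonvanishing hypothesis does its work. For the details I would follow Derksen and Makam, whose argument refines to the mixed-tensor, wheeled-PROP setting the edge-coloring-model / partition-function duality of Schrijver and Regts that underlies \autoref{thm:orthogonal}.
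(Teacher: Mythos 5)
This statement is not proved in the paper at all: it is imported verbatim as a black-box citation of Derksen and Makam (\cite[Theorem 6.2, Proposition 6.5, Corollary 6.6]{prop}), so there is no in-paper argument to compare yours against. Your write-up is consistent with that treatment, since you too ultimately defer the substantive direction to the same reference. On its own merits, your sketch is reasonable: the identification of $\stab(\prop{\fc})$ with the pointwise stabilizer of $\fc$ (via \autoref{prop:respect}) is right; the $(\Leftarrow)$ direction is a correct and complete argument (reductivity plus characteristic $0$ gives semisimplicity, so the $\gl_q$-invariant pairing restricts to a perfect pairing between $(W)^{H}$ and $(W^{*})^{H}$, i.e.\ quantum-nonvanishing); and the observation that nondegeneracy of the trace form kills the Jacobson radical of $\dprop{\fc}{n}{n}$ is the standard correct first step toward the hard direction. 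But the genuinely hard content of $(\Rightarrow)$ -- promoting $\dprop{\fc}{n}{n}\subseteq \mathrm{End}_{H}((\kk^{q})^{\otimes n})$ to equality, establishing reductivity of $H$ (these two are indeed intertwined), and then passing from endomorphism spaces to all mixed grades $\ell\neq r$ (your ``rigidity/bending'' step, which needs the wheeled contractions plus a padding argument you don't spell out) -- is only signposted, not proved, and you say explicitly that you would follow Derksen and Makam for it. So the proposal should be read as a plausible outline of the cited proof rather than a self-contained one; since the paper itself uses the theorem purely as an external result, that is an acceptable match, but be aware you have not actually supplied the argument that quantum-nonvanishing forces $\tc^{H}=\prop{\fc}$, nor the finite-generation claim beyond a sketch.
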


In \autoref{sec:bipartite}, we use \autoref{thm:duality} to prove the following main theorem.
\begin{theorem}[second main theorem]
    \label{thm:main2}
    If $\fc|\fc'$ and $\gc|\gc'$ are quantum-nonvanishing, then $\fc|\fc'$ and $\gc|\gc'$ are
    Holant-indistinguishable if and only if there
    is a $T \in \gl_q$ such that $T(\fc|\fc') = \gc|\gc'$.
\end{theorem}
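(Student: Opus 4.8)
The ``if'' direction is the Holant theorem (\autoref{thm:holant}), so assume $\fc|\fc'$ and $\gc|\gc'$ are Holant-indistinguishable and quantum-nonvanishing. Write $\fc$ for the disjoint union $\fc\sqcup\fc'$ of contravariant and covariant signatures (so $\holant_{\fc|\fc'}=\biholant_{\fc}$ and ``$\fc|\fc'$ quantum-nonvanishing'' means $\fc$ is quantum-nonvanishing), and similarly $\gc$. I would first promote Holant-indistinguishability to a statement about the wheeled PROPs $\prop{\fc},\prop{\gc}$. Using the identity $\langle K,K'\rangle=\holant(\Omega)$ for the grid $\Omega$ joining corresponding gadgets, together with quantum-nonvanishing (which makes $\langle\cdot,\cdot\rangle$ nondegenerate on every grade $\dprop{\gc}{\ell}{r}$), the assignment $\vk\mapsto\vk_{\fc\to\gc}$ on quantum $\fc$-gadgets descends to a well-defined bijection $\Phi\colon\prop{\fc}\to\prop{\gc}$ on signatures; it is a wheeled-PROP isomorphism (it respects $\otimes$, $1$, $I$, and the contractions $\leftindex_i{\partial}_j$ at the gadget level, hence on signatures), preserves the bilinear form, fixes $\prop{\varnothing}=\tc^{\gl_q}$ pointwise, and matches $\fc\ni F\leftrightsquigarrow G\in\gc$. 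Since $\fc$ generates $\prop{\fc}$ as a wheeled PROP, it now suffices to find $T\in\gl_q$ with $T\cdot F=\Phi(F)$ for all $F\in\fc$ (equivalently, to \emph{realize} $\Phi$ by $T$): then $T(\fc|\fc')=\gc|\gc'$. Throughout I may replace $\gc$ by $S\gc$ for any $S\in\gl_q$, which replaces $\Phi$ by $(S\cdot)\circ\Phi$ --- still a form-preserving wheeled-PROP isomorphism fixing $\tc^{\gl_q}$, by \autoref{prop:respect} --- so I am free to perform holographic transformations on $\gc$.

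Next, invoke \autoref{thm:duality}: quantum-nonvanishing gives reductive subgroups $H=\stab(\prop{\fc})$ and $H'=\stab(\prop{\gc})$ of $\gl_q$ with $\prop{\fc}=\tc^{H}$, $\prop{\gc}=\tc^{H'}$ (and one checks $H=\stab(\fc)$, $H'=\stab(\gc)$ via the generation argument above). Because $H$ is reductive, $\cc^q$ is a semisimple $H$-module, so $\dprop{\fc}{1}{1}=\mathrm{End}_H(\cc^q)$ is a semisimple unital subalgebra of $\cc^{q\times q}$; moreover $\Phi$ restricts to a unital algebra isomorphism $\dprop{\fc}{1}{1}\to\dprop{\gc}{1}{1}$ (composition of $(1,1)$-tensors is matrix product, a PROP operation), and it is trace-preserving since $\tr(K)=\langle K,I\rangle$ while $\Phi$ fixes $I$ and $\langle\cdot,\cdot\rangle$. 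A Skolem--Noether-type argument shows that a trace-preserving isomorphism between semisimple subalgebras of $\cc^{q\times q}$ is realized by conjugation; performing that holographic transformation on $\gc$, I may assume $\dprop{\fc}{1}{1}=\dprop{\gc}{1}{1}=:\ac$ and $\Phi|_{\ac}=\mathrm{id}$. Carried out in grade $(1,1)$ alone this already proves \autoref{thm:binary}: there $\dprop{\fc}{1}{1}$ is the unital algebra generated by $\fc$, equal traces of corresponding words together with nondegeneracy of the trace form make $F\mapsto G$ a well-defined trace-preserving isomorphism of these semisimple algebras, and the realizing conjugation is the asserted simultaneous similarity.

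Now I would induct on $q$ using the idempotents of $\ac$. If $\ac=\cc I$, then $\cc^q$ is irreducible for both $H$ and $H'$, which is the base case. Otherwise $\ac$ has an idempotent $e\neq0,I$, equivalently a decomposition $\cc^q=U\oplus U'$ into nonzero proper subspaces that are simultaneously $H$- and $H'$-submodules, with $e=\Phi(e)$ the projection onto $U$ along $U'$. ``Localizing'' a gadget by wrapping each dangling leg in a copy of $e$ produces a sub-wheeled-PROP over $U$; write $\fc|_U$ for the resulting signatures (each leg of each $F\in\fc$ wrapped in $e$), and likewise $\fc|_{U'},\gc|_U,\gc|_{U'}$. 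Using nondegeneracy of $\langle\cdot,\cdot\rangle$ one checks that $\fc|_U$ is quantum-nonvanishing over $U$, that $\Phi$ commutes with localization (as $\Phi(e)=e$), and that $\fc|_U,\gc|_U$ remain Holant-indistinguishable over $U$ (a localized grid is an ordinary $\prop{\fc}$-grid over $\cc^q$, since $e^2=e$ absorbs the doubled copies on edges). As $\dim U,\dim U'<q$, induction yields $T_1\in\gl(U)$ and $T_2\in\gl(U')$ realizing $\Phi$ on the two corners; replacing $\gc$ by $(T_1\oplus T_2)\gc$ (which commutes with $e$, preserving the split), I arrange that $\Phi$ acts as the identity on every tensor supported entirely on $U$ or entirely on $U'$.

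The remaining step --- and, I expect, the principal obstacle --- is to match the \emph{mixed} blocks: with respect to $\cc^q=U\oplus U'$ each grade $\leftindex_{\ell}{\tc}_{r}$ splits into $2^{\ell+r}$ blocks according to which summand each leg lies in, and the above controls only the two ``constant'' blocks of each signature. Here I would press quantum-nonvanishing harder, using the cross-hom spaces $e\,\ac\,(I-e)=\mathrm{Hom}_H(U',U)$ and their $\gc$-counterparts, together with nondegeneracy of $\langle\cdot,\cdot\rangle$ in the mixed grades, to pin down the relative scaling between the $U$- and $U'$-parts: any residual discrepancy after a further block-diagonal holographic transformation (necessarily preserving the earlier reductions) ought to manifest as a nonzero quantum-$\fc$-gadget signature pairing to zero against all of $\prop{\fc}$, contradicting quantum-nonvanishing. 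Carefully chaining such refinements across the full isotypic decomposition of $\cc^q$ so that they are realized by a single $T$ is the technical heart of the argument; the irreducible base case ($\ac=\cc I$) is its other delicate point, which I would approach through the (at most one-dimensional) space of $H$-invariant bilinear forms on $\cc^q$ --- reducing, when $\cc^q\cong(\cc^q)^*$ as an $H$-module, to an orthogonal- or symplectic-type rigidity in the spirit of \autoref{thm:orthogonal}, and otherwise extracting $T$ from a higher-arity invariant tensor in $\prop{\fc}$. This is presumably the content of \autoref{sec:bipartite}.
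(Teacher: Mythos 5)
Your opening moves are fine: covanishing (\autoref{prop:covanishing}) does make the correspondence $\Phi$ well defined, and your route to \autoref{thm:binary} --- nondegeneracy of the trace form forces $\dprop{\fc}{1}{1}$ to be a semisimple subalgebra of $\cc^{q\times q}$, and a trace-preserving unital isomorphism of such subalgebras is implemented by conjugation since equal traces of corresponding words make the two $q$-dimensional modules isomorphic --- is a legitimate and arguably cleaner alternative to the paper's Kaplansky/Jordan-normal-form induction (\autoref{lem:subdomain}). But as a proof of \autoref{thm:main2} the proposal has a genuine gap, and it is exactly the part you defer. Your irreducible base case $\ac=\cc I$ is not a base case you know how to do: applying \autoref{thm:duality} to $\prop{\fc|\fc'}$ and $\prop{\gc|\gc'}$ \emph{separately} yields two reductive groups $H,H'$ but no relation whatsoever between their actions on $\cc^q$, and nothing in your sketch produces an intertwiner; ``extracting $T$ from a higher-arity invariant tensor'' is a hope, not an argument, and an $H$-invariant bilinear form need not exist (nor, when it does, does its existence alone show $\Phi$ is induced by a single matrix). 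The paper's essential device is precisely designed to fill this hole: it applies \autoref{thm:duality} not to each side but to the doubled-domain PROP $\prop{\prop{\fc|\fc'}\oplus\prop{\gc|\gc'}}$, which \autoref{lem:oplus} shows is quantum-nonvanishing; indistinguishability then forces a stabilizer element with a nonzero off-diagonal block (\autoref{lem:nonconstructive}), which after normalization is the (possibly rank-deficient) intertwiner $I_Z^\shortuparrow$ of \autoref{lem:intertwine}. Without some such mechanism your base case --- which is morally the whole theorem --- is untouched.

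The second admitted gap, matching the mixed blocks and making the corner transformations coherent, is also where most of the paper's work lives, so it cannot be waved through. After you replace $\gc$ by $(T_1\oplus T_2)\gc$ you cannot simply assert that the earlier normalizations ($\dprop{\fc}{1}{1}=\dprop{\gc}{1}{1}$ and $\Phi|_{\ac}=\mathrm{id}$) survive, and the corner maps are only determined up to symmetries of each block, so choosing them coherently across all isotypic pieces is nontrivial; your claim that any residual discrepancy ``ought to manifest'' as a violation of quantum-nonvanishing is stated, not proved. In the paper this is the content of the skew-block bookkeeping (\autoref{cl:k_blocks}, \autoref{cl:skew}), the induction of \autoref{cl:delta} that refines the subdomain via \autoref{lem:subdomain} while restricting to $(Z,\overline{Z})$-triangular transformations so that previously gained structure is preserved, and the final quantum-nonvanishing argument forcing $F=F|_Z^\shortuparrow$ and $G'=G'|_Z^\shortuparrow$ so that $\fc|\fc'=\gc|\gc'$. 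So: correct easy direction, a sound and genuinely different treatment of the $(1,1)$ matrix case, but for \autoref{thm:main2} itself the proposal is an outline whose two hard steps are missing, and the route chosen (one-sided duality) lacks the intertwining mechanism the paper builds in \autoref{lem:oplus}--\autoref{lem:intertwine}.
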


\autoref{thm:main2} implies that any $\fc|\fc'$ and $\gc|\gc'$ serving as a counterexample
to the converse of the Holant theorem cannot both be quantum-nonvanishing. 
In \autoref{ex:counter2}, $\fc|\fc'$ is quantum-vanishing. To see this, consider the quantum
$\fc|\fc'$-gadget $4 \vk_1 - \vk_2$ shown in \autoref{fig:counter}. 
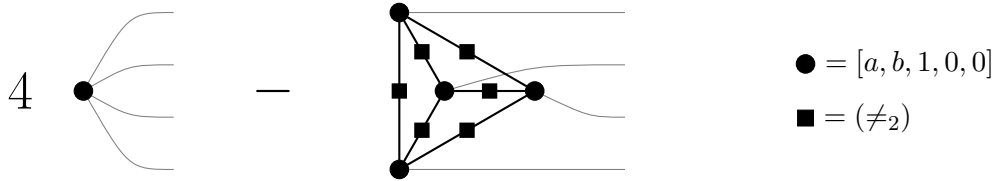
\begin{figure}[ht!]
    \centering
    \begin{tikzpicture}[scale=1.2]
    \GraphInit[vstyle=Classic]
    \SetUpEdge[style=-]
    \SetVertexMath
    \tikzset{VertexStyle/.style = {shape=circle, fill=black, minimum size=7pt, inner sep=1pt, draw}}

    % length of dangling edges
    \def\wlen{1}
    \def\ygap{0.87}

    \foreach \ii in {-\ygap,-\ygap/3,\ygap/3,\ygap} {
        \draw[thin, color=gray] (0,0) .. controls +(\wlen/2,\ii) .. (\wlen,\ii);
    }
    \Vertex[x=0,y=0,NoLabel]{a}

    \begin{scope}[xshift=4cm]
        \draw[thin, color=gray] (-0.5,\ygap) -- (1+\wlen, \ygap);
        \draw[thin, color=gray] (-0.5,-\ygap) -- (1+\wlen, -\ygap);
        \draw[thin, color=gray] (0,0) .. controls +(1,\ygap/3) .. (1+\wlen, \ygap/3);
        \draw[thin, color=gray] (1,0) .. controls +(0.6,-\ygap/3) .. (1+\wlen, -\ygap/3);
        \Vertex[x=0,y=0,NoLabel]{c}
        \Vertex[x=-0.5,y=\ygap,NoLabel]{u}
        \Vertex[x=-0.5,y=-\ygap,NoLabel]{v}
        \Vertex[x=1,y=0,NoLabel]{w}
        \foreach \uu/\vv in {c/w,c/u,c/v,u/v,u/w,v/w} {
            \draw[thick] (\uu) -- (\vv) node[draw, rectangle, fill=black, minimum size = 5pt, inner sep = 1pt, pos=0.5] {};
        }
    \end{scope}

    \node at (-0.7,0) {\huge 4};
    \node at (2.1,0) {\huge $-$};

    \Vertex[x=8,y=0.3,L={= [a,b,1,0,0]}]{ax8}
    \tikzset{VertexStyle/.style = {shape=rectangle, fill=black, minimum size=6pt, inner sep=1pt, draw}}
    \Vertex[x=8,y=-0.3,L={= (\neq_2)}]{ax7}

\end{tikzpicture}
    \caption{A quantum gadget $4 \vk_1 - \vk_2$ with signature $K = [4a-p_1(a,b), 4b-p_2(b),0,0,0]$}
    \label{fig:counter}
\end{figure}
Reasoning as in \autoref{ex:counter2}, the symmetric gadget $\vk_2$ has
signature $[p_1(a,b),p_2(b),4,0,0]$ for polynomials $p_1$ in $a$ and $b$ and $p_2$ in $b$,
so the signature of $4 \vk_1 - \vk_2$ is $K := [4a-p_1(a,b), 4b-p_2(b),0,0,0] \in 
\dprop{\fc|\fc'}{0}{4}$. But, in any $(\neq_2) \mid F',K$-grid $\Omega$ 
containing $K$,
every nonzero assignment is forced to assign $K$ strictly fewer 1s than 0s, so must assign
strictly more 1s than 0s to another $[a,b,1,0,0]$ or $K$, which then evaluates to 0. 
Therefore $K$ is $\fc|\fc'$-vanishing (if $a,b$ are such that $K=0$,
\autoref{thm:main2} asserts that some nonzero quantum gadget must be $\fc|\fc'$-vanishing).

Observe that the $\fc|\fc'$-vanishing $K$ corresponds to $0 \in \prop{\gc|\gc'}$. 
This motivates the following.
\begin{definition}
    \label{def:covanishing}
    $\fc$ and $\gc$ are \emph{covanishing} if, for every 
    $\prop{\fc} \ni F \leftrightsquigarrow G \in \prop{\gc}$, $F = 0 \iff G = 0$.
\end{definition}
By \autoref{prop:respect}, if $\fc$ and $\gc$ are not covanishing, then there is no
$T \in \gl_q$ transforming $\fc$ to $\gc$ (such a $T$ would map a nonzero signature
to 0), giving an alternate explanation for \autoref{ex:counter2}.

The covanishing property generalizes indistinguishability in the following sense.
\begin{proposition}
    $\fc$ and $\gc$ are $(0,0)$-covanishing iff $\fc$ and $\gc$ are
    Bi-Holant-indistinguishable.
    \label{prop:zerozero}
\end{proposition}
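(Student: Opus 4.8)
The plan is to unwind both sides of the claimed equivalence to the same statement about quantum grids with no dangling edges. Recall that $\prop{\fc}$ and $\prop{\gc}$ are in natural bijection via the $(\cdot)_{\fc\to\gc}$ operation, which sends the signature $K$ of a quantum-$\fc$-gadget $\vk$ to the signature $\widetilde K$ of $\vk_{\fc\to\gc}$. Restricting this bijection to the $(0,0)$-grade, $\dprop{\fc}{0}{0}$ consists exactly of the signatures $\sum_i c_i\,\holant_{\fc}(\Omega_i) \in \kk$ of quantum $\fc$-grids (linear combinations of Bi-Holant values of $\fc$-grids), and the corresponding element of $\dprop{\gc}{0}{0}$ is $\sum_i c_i\,\holant_{\gc}((\Omega_i)_{\fc\to\gc})$. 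So $(0,0)$-covanishing says precisely: for every finite linear combination of $\fc$-grids, $\sum_i c_i\,\biholant_{\fc}(\Omega_i) = 0$ if and only if $\sum_i c_i\,\biholant_{\gc}((\Omega_i)_{\fc\to\gc}) = 0$.

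First I would prove the ($\Leftarrow$) direction. Suppose $\fc$ and $\gc$ are Bi-Holant-indistinguishable, so $\biholant_{\fc}(\Omega) = \biholant_{\gc}(\Omega_{\fc\to\gc})$ for every $\fc$-grid $\Omega$. Then for any quantum $\fc$-grid $\vk = \sum_i c_i \Omega_i$, linearity of the signature map gives $\sum_i c_i\,\biholant_{\fc}(\Omega_i) = \sum_i c_i\,\biholant_{\gc}((\Omega_i)_{\fc\to\gc})$, i.e. the corresponding element $\widetilde K \in \dprop{\gc}{0}{0}$ equals $K$ as a scalar. In particular $K = 0 \iff \widetilde K = 0$, which is exactly $(0,0)$-covanishing.

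For the ($\Rightarrow$) direction, assume $\fc$ and $\gc$ are $(0,0)$-covanishing but not Bi-Holant-indistinguishable; then there is a single $\fc$-grid $\Omega_0$ with $\biholant_{\fc}(\Omega_0) \neq \biholant_{\gc}((\Omega_0)_{\fc\to\gc})$. The idea is to build a quantum $\fc$-grid whose signature is $0$ but whose $\gc$-image has nonzero signature, contradicting covanishing. Take $\vk := \Omega_0 - \biholant_{\fc}(\Omega_0)\cdot\varnothing$, where $\varnothing$ is the empty $\fc$-grid (signature $1_{\tc} = 1$); then $K = \biholant_{\fc}(\Omega_0) - \biholant_{\fc}(\Omega_0) = 0$, while $\widetilde K = \biholant_{\gc}((\Omega_0)_{\fc\to\gc}) - \biholant_{\fc}(\Omega_0) \neq 0$. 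This contradicts $(0,0)$-covanishing, completing the proof.

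The only mild subtlety to get right is the bookkeeping of the $(\cdot)_{\fc\to\gc}$ bijection on the $(0,0)$-grade — in particular that the empty grid, and the absorption of component-free scalar factors into coefficients, behave consistently on both sides — but this is routine from the definitions of $\qk_{\fc}$, $\prop{\fc}$, and $(\cdot)_{\fc\to\gc}$ given in the excerpt. I do not expect any genuine obstacle; the content of the proposition is essentially that the covanishing condition, restricted to $0$-ary gadgets, is a coordinate-free repackaging of indistinguishability, and the proof is a direct translation in both directions.
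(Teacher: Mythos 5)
Your proof is correct and follows essentially the same approach as the paper's: both directions are handled by the same direct translation, and the $(\Rightarrow)$ direction is done by contraposition, subtracting from $\Omega_0$ a scalar multiple of a $(0,0)$-gadget whose Holant value is insensitive to the $\fc\to\gc$ replacement. The only difference is cosmetic: the paper uses the vertexless loop $\bigcirc$ (value $q$, divided out) where you use the empty grid $\varnothing$ (value $1$), which serve exactly the same role.
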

\begin{proof}
    We have $\dprop{\fc}{0}{0} = \spn\{\biholant_{\fc}(\Omega): \fc\text{-grid } \Omega\} \subset
    \kk$. So if $\fc$ and $\gc$ are indistinguishable, then every scalar in 
    $\dprop{\fc}{0}{0}$ equals the corresponding scalar in $\dprop{\gc}{0}{0}$, 
    hence $\fc$ and $\gc$ are $(0,0)$-covanishing. Conversely,
    suppose there is an $\fc$-grid $\Omega$ such that $\biholant_{\fc}(\Omega) \neq
    \biholant_{\gc}(\Omega_{\fc\to\gc})$. In both $\prop{\fc}$ and $\prop{\gc}$, the vertexless
    loop $\bigcirc$ has Holant value $q \in \kk$. Therefore
    \[
        0 = \biholant_{\fc}\left(\Omega - \frac{\biholant_{\fc}(\Omega)}{q} \cdot \bigcirc\right)
        \leftrightsquigarrow
        \biholant_{\gc}\left(\Omega_{\fc\to\gc} - \frac{\biholant_{\fc}(\Omega)}{q} \cdot \bigcirc\right)
        \neq 0,
    \]
    so $\fc$ and $\gc$ are not $(0,0)$-covanishing.
\end{proof}
\begin{proposition}
    If $\fc$ and $\gc$ are Bi-Holant-indistinguishable and $(\ell,r)$-quantum-nonvanishing,
    then $\fc$ and $\gc$ are $(\ell,r)$-covanishing.
    \label{prop:covanishing}
\end{proposition}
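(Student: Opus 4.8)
The plan is a direct chase through the definitions, whose only non-formal input is condition~(1) of $(\ell,r)$-quantum-nonvanishing. First I would unwind \autoref{def:covanishing} in the shape-$(\ell,r)$ case: what must be shown is that for every quantum-$\fc$-gadget $\vk$ of shape $(\ell,r)$, with signature $K \in \dprop{\fc}{\ell}{r}$, the signature $\widetilde K \in \dprop{\gc}{\ell}{r}$ of $\vk_{\fc\to\gc}$ satisfies $K = 0$ if and only if $\widetilde K = 0$. Since Bi-Holant-indistinguishability is a symmetric relation, both $\fc$ and $\gc$ are $(\ell,r)$-quantum-nonvanishing by hypothesis, and $(\cdot)_{\gc\to\fc}$ inverts $(\cdot)_{\fc\to\gc}$ on quantum gadgets, it is enough to prove the single implication $K \neq 0 \implies \widetilde K \neq 0$; the reverse implication then follows by running the same argument with $\fc$ and $\gc$ interchanged, applied to $\vk_{\fc\to\gc}$ in place of $\vk$.

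For that implication, assume $K \neq 0$. Because $\fc$ is $(\ell,r)$-quantum-nonvanishing, condition~(1) furnishes a $\widehat K \in \dprop{\fc}{r}{\ell}$ with $\langle K, \widehat K\rangle \neq 0$, realized as the signature of some quantum-$\fc$-gadget $\vk'$. Next I would build the $0$-ary quantum-$\fc$-gadget $\Omega$ obtained by joining $\vk$ and $\vk'$ along all of their dangling edges (right ends of $\vk$ to left ends of $\vk'$ and vice versa); this is a quantum $\fc$-grid, and by the (bilinearly extended) identification of the form $\langle\cdot,\cdot\rangle$ with a Holant value recalled in Section~2, $\biholant_\fc(\Omega) = \langle K, \widehat K\rangle \neq 0$. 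Since $(\cdot)_{\fc\to\gc}$ acts vertex by vertex and hence commutes with this joining, $\Omega_{\fc\to\gc}$ is precisely the join of $\vk_{\fc\to\gc}$ and $\vk'_{\fc\to\gc}$, so $\biholant_\gc(\Omega_{\fc\to\gc})$ is the value of $\langle\cdot,\cdot\rangle$ on the signatures of $\vk_{\fc\to\gc}$ and $\vk'_{\fc\to\gc}$, the first of which is $\widetilde K$. Finally, Bi-Holant-indistinguishability of $\fc$ and $\gc$ --- extended from grids to quantum grids by linearity --- equates $\biholant_\gc(\Omega_{\fc\to\gc})$ with $\biholant_\fc(\Omega) = \langle K, \widehat K\rangle \neq 0$; since the first argument of a nonzero pairing cannot be $0$, this gives $\widetilde K \neq 0$.

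I do not expect a real obstacle: the statement is essentially a bookkeeping lemma connecting \autoref{prop:zerozero} (the $(0,0)$ case) to higher shapes. The one point to keep straight is that the correspondence $\leftrightsquigarrow$ between $\prop{\fc}$ and $\prop{\gc}$ is defined at the level of quantum gadgets rather than of signatures --- two quantum-$\fc$-gadgets with the same $\fc$-signature may have different $\gc$-signatures --- so \autoref{def:covanishing} and the conclusion must be read with this multiset viewpoint, which the argument above respects by always working with a fixed $\vk$ and its designated image $\vk_{\fc\to\gc}$. The only other thing to verify is the routine fact, used above, that indistinguishability for grids automatically extends to quantum grids, which is immediate from linearity of $\holant$ in the quantum-gadget coefficients and of the substitution $(\cdot)_{\fc\to\gc}$.
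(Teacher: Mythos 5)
Your argument is correct and is essentially the contrapositive of the paper's: the paper assumes a failure of covanishing (a nonzero $K \leftrightsquigarrow 0$) and deduces that $K$ is $\fc$-vanishing via condition~(3) of the definition, whereas you assume $K \neq 0$ and invoke condition~(1) to produce a witness $\widehat K$, then transport the nonzero pairing through indistinguishability. Both routes rest on the same equivalence of conditions (1)--(3) proved after the definition of quantum-nonvanishing, so this is the same proof seen from opposite ends.
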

\begin{proof}
    Assume $\fc$ and $\gc$ are not $(\ell,r)$-covanishing, so WLOG there is a
    $\dprop{\fc}{\ell}{r} \ni K \leftrightsquigarrow 0 \in \dprop{\gc}{\ell}{r}$
    with $K \neq 0$. By indistinguishability,
    every $\fc\cup\{K\}$-grid $\Omega$ containing $K$ satisfies 
    $\biholant(\Omega)
    = \biholant(\Omega_{\fc\cup\{K\}\to\gc\cup\{0\}}) = 0$ because 
    $\Omega_{\fc\cup\{K\}\to\gc\cup\{0\}}$ contains 0. 
    Therefore $K$ is $\fc$-vanishing, so $\fc$ is $(\ell,r)$-quantum-vanishing.
\end{proof}

% This, along with Propositions
% \ref{prop:zerozero} and \ref{prop:covanishing}, raises the question of whether
% covanishing of $\fc$ and $\gc$, rather than indistinguishability and 
% separate quantum-nonvanishing of $\fc$ and 
% $\gc$, is the right condition for Theorems \autoref{thm:main2} and \autoref{thm:binary} --
% that is, there is a $T \in \gl_q$ transforming $\fc$ to $\gc$ \emph{if and only if} $\fc$ and
% $\gc$ are covanishing (such a biconditional does not hold for the current statements because
% quantum-nonvanishing is a property of $\fc$ and $\gc$ individually --
% consider e.g. any isomorphic quantum-vanishing $\fc$ and $\gc$.
% Similarly, the converse of \autoref{prop:covanishing} does not hold). However,
% at least in the context of \autoref{thm:binary}, covanishing is not sufficient to obtain a
% transformation from $\fc$ to $\gc$. Let $\fc = \{F\}$ and $\gc = \{G\}$, where 
% $F, G \in \kk^{q \times q}$ have different Jordan normal forms (so are not similar), but
% identical minimal polynomials. Every connected $\fc$ 

\section{The Conditional Converse}
\label{sec:conditional}
In this section, we prove our second main theorem \autoref{thm:main2},
as well as \autoref{thm:binary}, a similar result for sets of matrices 
$\fc,\gc \subset \kk^q \otimes (\kk^q)^*$. Both proofs make heavy use of the
\emph{subdomain restriction} constructions of the following definition.
\begin{definition}[$F|_X$, $\prop{\fc}_X$]
    For $F \in \tc(\kk^q)$ and $X \subset [q]$, define $F|_X \in \tc(\kk^X)$ to be the
    subsignature of $F$ induced by $X$. For $\fc \subset \tc(\kk^q)$, define
    $\prop{\fc}_X := \{F|_X: F \in \prop{\fc}\} \subset \tc(\kk^X)$, a set on domain 
    $X$ bijective with $\prop{\fc}$.
\end{definition}
Note that, while $\prop{\fc}_X \subset \prop{\prop{\fc}_X}$,
we may not have $\prop{\fc}_X \supset \prop{\prop{\fc}_X}$ (unless $I_X^\shortuparrow
\in \prop{\fc}$ -- see \autoref{prop:uparrow}). For example, if $\prop{\fc}$ contains the 
$(X,\overline{X})$-block matrix $\left[\begin{smallmatrix} A & B \\ C & D \end{smallmatrix}
\right]$, then $A \in \prop{\fc}_X$, so $A^2 \in \prop{\prop{\fc}_X}$, but we may not be able to obtain
$A^2$ as the $X$-block of a matrix in $\prop{\fc}$.

\begin{definition}[$(\cdot)^{\shortuparrow Z}$]
    \label{def:shortuparrowow}
    Let $X \subset Z$ and $F \in \tc(\kk^X)$. Define $F^{\shortuparrow Z} \in \tc(\kk^Z)$ by
    \[
        F^{\shortuparrow Z}(\vx) = \begin{cases} F(\vx) & \vx \in X^n \\ 0 & \text{otherwise}
        \end{cases}
        \text{ for } \vx \in Z^n.
    \]
    That is, $F^{\shortuparrow Z}$ expands the domain of $F$ to all of $Z$ by padding with zeros. 
    We frequently write simply $F^\shortuparrow$ when the ambient domain $Z$ is clear from context.
\end{definition}

The next three results show the utility of realizing the \emph{subdomain restrictor} 
$I_X^\shortuparrow = \left[\begin{smallmatrix} I_X & 0 \\ 0 & 0 \end{smallmatrix}\right]$,
which acts like an edge ($I$) on inputs from $X$ and zeroes out the other subdomains.
\begin{proposition}
    \label{prop:uparrow}
    Suppose $\prop{\fc} \ni I_X^\shortuparrow \leftrightsquigarrow I_X^\shortuparrow \in \prop{\gc}$.
    Then, for any $\prop{\prop{\fc}_X} \ni F \leftrightsquigarrow G \in \prop{\prop{\gc}_X}$, we have
    $\prop{\fc} \ni F^\shortuparrow \leftrightsquigarrow G^\shortuparrow \in \prop{\gc}$.
    Therefore $\prop{\fc}_X = \prop{\prop{\fc}_X}$.
\end{proposition}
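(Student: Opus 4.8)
The plan is to exhibit a ``lift'' operation sending quantum-$\prop{\fc}_X$-gadgets to quantum-$\fc$-gadgets that realizes $(\cdot)^\shortuparrow$ on signatures and respects the $\fc\to\gc$ correspondence. First I would record the algebraic identity underlying everything: $I_X^\shortuparrow$ is the diagonal $0/1$ matrix projecting onto the $X$-coordinates, and for any $H \in \leftindex_\ell{\tc}_r$, composing a copy of $I_X^\shortuparrow$ into each of the $\ell+r$ input slots of $H$ yields exactly $(H|_X)^\shortuparrow$ (it zeroes every entry of $H$ having an index outside $X$); since $I_X^\shortuparrow$ is idempotent and symmetric, the orientation in which it is attached to a slot is immaterial. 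Fix a quantum-$\fc$-gadget $\vj$ with signature $I_X^\shortuparrow$ such that $\vj_{\fc\to\gc}$ also has signature $I_X^\shortuparrow$ --- this is precisely the hypothesis $I_X^\shortuparrow \leftrightsquigarrow I_X^\shortuparrow$. Then, given $H \in \prop{\fc}$ realized by a quantum-$\fc$-gadget $\vk_H$, attaching one copy of $\vj$ to each dangling edge of $\vk_H$ gives a quantum-$\fc$-gadget with signature $(H|_X)^\shortuparrow$, whence $(H|_X)^\shortuparrow \in \prop{\fc}$; and applying $(\cdot)_{\fc\to\gc}$ to this gadget produces one with signature $(H'|_X)^\shortuparrow$, where $\prop{\fc}\ni H \leftrightsquigarrow H' \in \prop{\gc}$, so that $(H|_X)^\shortuparrow \leftrightsquigarrow (H'|_X)^\shortuparrow$.

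Now let $F \in \prop{\prop{\fc}_X}$ and pick a quantum-$\prop{\fc}_X$-gadget $\vk$ with signature $F$; by the way the $\prop{\fc}_X \leftrightsquigarrow \prop{\gc}_X$ correspondence is defined, $\vk_{\prop{\fc}_X\to\prop{\gc}_X}$ has signature $G$. Build $\widehat{\vk}$ from $\vk$ by keeping its combinatorial structure while replacing, term by term, each vertex signature $H|_X$ (and each bare wire, whose signature is $I_X$) by the corresponding $\fc$-gadget realizing $(H|_X)^\shortuparrow$ (resp.\ $\vj$, realizing $I_X^\shortuparrow$) from the previous paragraph; this is a quantum-$\fc$-gadget. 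Every vertex signature occurring in $\widehat{\vk}$ is supported on $X$-indexed entries, so when the dangling edges of $\widehat{\vk}$ are fixed to some $\va$, the Holant sum gets a nonzero contribution only from assignments sending every dangling edge \emph{and} every internal edge into $X$; hence the signature of $\widehat{\vk}$ is supported on $X^{\ell+r}$ and there equals the corresponding domain-$X$ contraction, i.e.\ it is $F^\shortuparrow$. Thus $F^\shortuparrow \in \prop{\fc}$. Since each building block of $\widehat{\vk}$ transforms correctly under $(\cdot)_{\fc\to\gc}$, $\widehat{\vk}_{\fc\to\gc}$ is the analogous lift of $\vk_{\prop{\fc}_X\to\prop{\gc}_X}$, and the same support argument shows its signature is $G^\shortuparrow$; so $\widehat{\vk}$ witnesses $F^\shortuparrow \leftrightsquigarrow G^\shortuparrow$. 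Finally $\prop{\fc}_X = \prop{\prop{\fc}_X}$: the inclusion $\subseteq$ is immediate since each $H|_X$ is a single-vertex $\prop{\fc}_X$-gadget, and conversely any $F \in \prop{\prop{\fc}_X}$ obeys $F = (F^\shortuparrow)|_X \in \prop{\fc}_X$.

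The one point requiring care --- and the main, if mild, obstacle --- is that wires in the domain-$X$ gadget must be lifted using the supplied gadget $\vj$ for $I_X^\shortuparrow$, \emph{not} a bare domain-$[q]$ wire (whose signature would be $I_q \neq I_X^\shortuparrow$): this is simultaneously what makes the support argument close and what keeps the lift compatible with $(\cdot)_{\fc\to\gc}$. Everything else is bookkeeping, using only that $\prop{\fc}$ is a wheeled PROP, hence closed under gadget construction. Abstractly, $F \mapsto F^\shortuparrow$ is a map $\tc(\kk^X) \to \tc(\kk^q)$ that preserves $\otimes$ and input contractions and sends $I_X$ to $I_X^\shortuparrow$, and the hypothesis guarantees it carries the wheeled PROP generated by $\prop{\fc}_X$ into $\prop{\fc}$.
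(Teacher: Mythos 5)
Your proposal is correct and is essentially the paper's own argument: the paper likewise lifts the quantum-$\prop{\fc}_X$-gadget for $F$ to a quantum-$\fc$-gadget by restoring the full-domain signatures and splicing $I_X^\shortuparrow$ (which by hypothesis corresponds to $I_X^\shortuparrow$ on the $\gc$ side) into every dangling and internal edge, so that all edges are forced into $X$ and the signature becomes $F^\shortuparrow$, with compatibility under $(\cdot)_{\fc\to\gc}$ and the identity $F^\shortuparrow|_X = F$ giving the final equality. Your two-step bookkeeping (first lifting each vertex via $(H|_X)^\shortuparrow$, then replacing bare wires by the gadget for $I_X^\shortuparrow$) differs only cosmetically from the paper's single edge-insertion step.
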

\begin{proof}
    By definition, $F$ is the signature of a quantum-$\prop{\fc}_X$-gadget $\vk$ and $G$ is
    the signature of $\vk_{\prop{\fc}_X\to\prop{\gc}_X}$.
    Construct a quantum-$\prop{\fc}$-gadget $\vk^\shortuparrow$ as
    follows. Start with $\vk_{\prop{\fc}_X\to\prop{\fc}}$, constructed by
    replacing each $S|_X \in \prop{\fc}_X$ in $\vk$ with the corresponding $S \in \prop{\fc}$.
    Then replace each dangling and internal edge
    -- which when viewed alone is a $(1,1)$ wire gadget with signature $I$ --
    with $I_X^\shortuparrow \in \dprop{\fc}{1}{1}$. This has the effect of forcing all edges
    in $\vk^\shortuparrow$, including dangling edges, to take values in $X$, so 
    the signature of $\vk^\shortuparrow$ is $F^\shortuparrow$. Similarly,
    the signature of $(\vk_{\prop{\fc}_X\to\prop{\gc}_X})^\shortuparrow$ is 
    $G^\shortuparrow$, and, since
    $(\vk_{\prop{\fc}_X\to\prop{\gc}_X})^\shortuparrow = 
    (\vk^\shortuparrow)_{\prop{\fc}\to\prop{\gc}}$, we have
    $F^\shortuparrow \leftrightsquigarrow G^\shortuparrow$. See \autoref{fig:uparrow_insert}.

    The second claim follows from the first and the fact that $F^\shortuparrow|_X = F$.
\end{proof}

\begin{figure}[ht!]
    \centering
    \begin{tikzpicture}[scale=1]
    \GraphInit[vstyle=Classic]
    \SetUpEdge[style=-]
    \SetVertexMath

    \def\xx{1.2}
    \def\xsh{6}
    \def\wlen{0.8}
    \def\wwlen{1.5}

    \node at (\xx,1.8) {$\vk$};
    \node at (\xx+\xsh,1.8) {$\vk^\shortuparrow$};

    \draw[thin, color=gray] (0,0) -- (-\wlen,0);
    \draw[thin, color=gray] (0,1) -- (-\wlen,1);
    \draw[thin, color=gray] (2*\xx,1) -- (2*\xx+\wlen,1);

    \tikzset{VertexStyle/.style = {shape=circle, minimum size=7pt, inner sep=0pt,fill = black, draw}}
    \Vertex[x=0,y=1,L={(S_1)|_X},Lpos=90]{sx1}
    \Vertex[x=2*\xx,y=1,L={(S_2)|_X},Lpos=90]{sx2}
    \Vertex[x=0,y=0,L={(S_3)|_X},Lpos=270]{sx3}
    \Vertex[x=\xx,y=0,L={(S_4)|_X},Lpos=270]{sx4}
    \Vertex[x=2*\xx,y=0,L={(S_5)|_X},Lpos=270]{sx5}

    \Edges(sx1,sx2,sx4,sx3,sx1,sx4,sx5)

    \begin{scope}[xshift=\xsh*28.5]
        \draw[thin, color=gray] (0,0) -- (-\wwlen,0);
        \draw[thin, color=gray] (0,1) -- (-\wwlen,1);
        \draw[thin, color=gray] (2*\xx,1) -- (2*\xx+\wwlen,1);

        % gadget equality vertices
        \Vertex[x=0,y=1,L={S_1},Lpos=90]{s1}
        \Vertex[x=2*\xx,y=1,L={S_2},Lpos=90]{s2}
        \Vertex[x=0,y=0,L={S_3},Lpos=270]{s3}
        \Vertex[x=\xx,y=0,L={S_4},Lpos=270]{s4}
        \Vertex[x=2*\xx,y=0,L={S_5},Lpos=270]{s5}

        % constraint vertices
        \tikzset{VertexStyle/.style = {shape=rectangle, fill=black, minimum size=5pt, inner sep=1pt, draw}}
        \Vertex[x=\xx,y=1,NoLabel]{ax1S}
        \Vertex[x=0,y=0.5,NoLabel]{ax2S}
        \Vertex[x=\xx/2,y=0.5,NoLabel]{ax3S}
        \Vertex[x=3*\xx/2,y=0.5,NoLabel]{ax4S}
        \Vertex[x=\xx/2,y=0,NoLabel]{ax5S}
        \Vertex[x=3*\xx/2,y=0,NoLabel]{ax6S}
        % on dangling edges
        \Vertex[x=-\wwlen/2,y=1,NoLabel]{d1}
        \Vertex[x=2*\xx+\wwlen/2,y=1,NoLabel]{d2}
        \Vertex[x=-\wwlen/2,y=0,NoLabel]{d3}

        % legend
        \Vertex[x=4*\xx,y=0.5,L={=I_X^\shortuparrow}]{ax7S}
    \end{scope}

    \Edges(s1,s2,s4,s3,s1,s4,s5)
    \Edge(d1)(s1)
    \Edge(d2)(s2)
    \Edge(d3)(s3)
\end{tikzpicture}
    \caption{The construction in Propositions \ref{prop:uparrow} and 
    \ref{prop:subdomain_indist}.}
    \label{fig:uparrow_insert}
\end{figure}
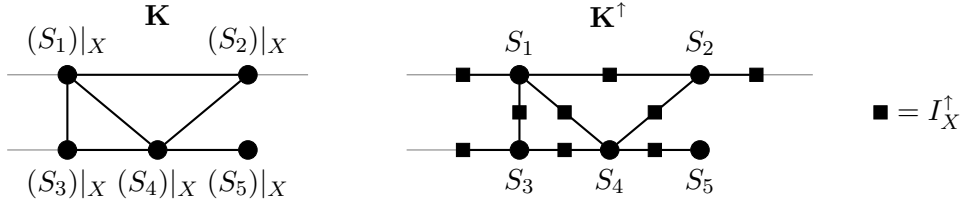

\begin{proposition}[{Bi-Holant version of \cite[Lemma 4.2]{orthogonal}}]
    \label{prop:subdomain_indist}
    If $\fc$ and $\gc$ are Bi-Holant-indistinguishable and $\prop{\fc} \ni I_X^\shortuparrow 
    \leftrightsquigarrow I_X^\shortuparrow \in
    \prop{\gc}$, then $\prop{\fc}_X$ and $\prop{\gc}_X$ are Bi-Holant-indistinguishable.
\end{proposition}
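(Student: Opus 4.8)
The plan is to reduce an arbitrary $\prop{\fc}_X$-grid to a $\prop{\fc}$-grid with the same Holant value, apply the hypothesis that $\fc$ and $\gc$ are Bi-Holant-indistinguishable (equivalently, that $\prop{\fc}$ and $\prop{\gc}$ are Bi-Holant-indistinguishable, via the bijection $\leftrightsquigarrow$ induced by $(\cdot)_{\fc\to\gc}$), and then check that the resulting $\prop{\gc}$-grid likewise corresponds to the image $\prop{\gc}_X$-grid. First I would take an arbitrary $\prop{\fc}_X$-grid $\Omega$, i.e. a Bi-Holant grid whose vertices are assigned signatures $(S_1)|_X, \ldots, (S_k)|_X$ for $S_i \in \prop{\fc}$. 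Replace each vertex signature $(S_i)|_X$ with the corresponding $S_i \in \prop{\fc}$, and then — exactly as in the construction of $\vk^\shortuparrow$ in the proof of \autoref{prop:uparrow} (see \autoref{fig:uparrow_insert}) — insert a copy of $I_X^\shortuparrow \in \dprop{\fc}{1}{1}$ on every edge of $\Omega$. Call the resulting $\prop{\fc}$-grid $\Omega^\shortuparrow$. Since $I_X^\shortuparrow$ acts as the identity on inputs in $X$ and kills any assignment touching $\overline X$, every nonzero edge assignment of $\Omega^\shortuparrow$ assigns all edges values in $X$, and on those assignments each $S_i$ agrees with $(S_i)|_X$; inserting $I_X^\shortuparrow$ on an edge does not otherwise change the summand. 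Hence $\biholant(\Omega^\shortuparrow) = \biholant(\Omega)$.

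Next I would apply Bi-Holant-indistinguishability of $\prop{\fc}$ and $\prop{\gc}$ (which holds because $\fc$ and $\gc$ are Bi-Holant-indistinguishable, as noted just after the definition of $(\cdot)_{\fc\to\gc}$): $\biholant(\Omega^\shortuparrow) = \biholant((\Omega^\shortuparrow)_{\prop{\fc}\to\prop{\gc}})$. It remains to identify $(\Omega^\shortuparrow)_{\prop{\fc}\to\prop{\gc}}$ with the lift of the image grid $\Omega_{\prop{\fc}_X\to\prop{\gc}_X}$. This is where the hypothesis $\prop{\fc} \ni I_X^\shortuparrow \leftrightsquigarrow I_X^\shortuparrow \in \prop{\gc}$ is used: under $(\cdot)_{\prop{\fc}\to\prop{\gc}}$ the signature $S_i$ maps to its counterpart $\widetilde{S_i} \in \prop{\gc}$, and the inserted $I_X^\shortuparrow$ maps to $I_X^\shortuparrow$ again. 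So $(\Omega^\shortuparrow)_{\prop{\fc}\to\prop{\gc}}$ is exactly the grid obtained from $\Omega_{\prop{\fc}_X\to\prop{\gc}_X}$ (the grid with vertices $(\widetilde{S_i})|_X \in \prop{\gc}_X$) by the same edge-insertion construction, so by the same argument as before $\biholant((\Omega^\shortuparrow)_{\prop{\fc}\to\prop{\gc}}) = \biholant(\Omega_{\prop{\fc}_X\to\prop{\gc}_X})$. Chaining the three equalities gives $\biholant(\Omega) = \biholant(\Omega_{\prop{\fc}_X\to\prop{\gc}_X})$, which is precisely Bi-Holant-indistinguishability of $\prop{\fc}_X$ and $\prop{\gc}_X$; restricting attention to grids over $\fc|_X$ and $\gc|_X$ (or invoking \autoref{prop:shapeless}) gives the statement for $\prop{\fc}_X$ and $\prop{\gc}_X$ as stated.

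I expect the only subtle point to be bookkeeping: making sure that the bijection $\leftrightsquigarrow$ between $\prop{\fc}_X$ and $\prop{\gc}_X$ that appears in the conclusion is the one induced by $(\cdot)_{\fc\to\gc}$ (restricted to $X$), and that it is compatible with the edge-insertion lift — i.e. that $(\vk_{\prop{\fc}_X\to\prop{\gc}_X})^\shortuparrow = (\vk^\shortuparrow)_{\prop{\fc}\to\prop{\gc}}$ as quantum gadgets, exactly as in \autoref{prop:uparrow}. There is no real analytic or algebraic obstacle here; the content is entirely the observation that $I_X^\shortuparrow$ is a "domain restrictor" available in both $\prop{\fc}$ and $\prop{\gc}$ and that it is fixed by $\leftrightsquigarrow$, which lets all edges of the lifted grid be forced into $X$ simultaneously.
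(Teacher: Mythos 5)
Your proposal matches the paper's argument exactly: both treat $\Omega$ as a $(0,0)$-$\prop{\fc}_X$-gadget, lift to $\Omega^\shortuparrow$ by the construction of \autoref{prop:uparrow} (inserting $I_X^\shortuparrow$ on every edge), apply Bi-Holant-indistinguishability of $\prop{\fc}$ and $\prop{\gc}$, and then identify $(\Omega^\shortuparrow)_{\prop{\fc}\to\prop{\gc}}$ with the lift of $\Omega_{\prop{\fc}_X\to\prop{\gc}_X}$ using $I_X^\shortuparrow \leftrightsquigarrow I_X^\shortuparrow$. Your writeup is just a more detailed version of the paper's three-equality chain.
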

\begin{proof}
    Let $\Omega$ be an $\prop{\fc}_X$-grid. Viewing $\Omega$ as a $(0,0)$-$\prop{\fc}_X$-gadget, construct
    the quantum $\fc$-grid $\Omega^\shortuparrow$ as in the proof of \autoref{prop:uparrow}.
    Applying similar reasoning, we obtain
    \[
        \biholant_{\prop{\fc}_X}(\Omega) = \biholant_{\prop{\fc}}\big(\Omega^\shortuparrow\big)
        = \biholant_{\prop{\gc}}\big(\Omega^\shortuparrow_{\prop{\fc}\to\prop{\gc}}\big)
        = \biholant_{\prop{\gc}_X}(\Omega_{\prop{\fc}_X\to\prop{\gc}_X}). \qedhere
    \]
\end{proof}

\begin{proposition}
    \label{prop:subdomain_nondegen}
    If $\fc$ is $(\ell,r)$-quantum-nonvanishing and $I_X^\shortuparrow \in \prop{\fc}$,
    then $\prop{\fc}_X$ is $(\ell,r)$-quantum-nonvanishing.
\end{proposition}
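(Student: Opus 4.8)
The plan is to show that the subdomain restriction map interacts well with the bilinear form $\langle\cdot,\cdot\rangle$ when $I_X^\shortuparrow \in \prop{\fc}$. Concretely, suppose $K \in \dprop{\prop{\fc}_X}{\ell}{r}$ is nonzero; I want to produce a $\widehat{K} \in \dprop{\prop{\fc}_X}{r}{\ell}$ with $\langle K,\widehat{K}\rangle \neq 0$. By \autoref{prop:uparrow}, since $I_X^\shortuparrow \in \prop{\fc}$ we have $\prop{\fc}_X = \prop{\prop{\fc}_X}$, and moreover every $F \in \prop{\fc}_X$ lifts to $F^\shortuparrow \in \prop{\fc}$. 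The key observation is that lifting respects the bilinear form up to nothing at all: for $F \in \leftindex_\ell{\tc}_r(\kk^X)$ and $F' \in \leftindex_r{\tc}_\ell(\kk^X)$, we have $\langle F^\shortuparrow, F'^\shortuparrow \rangle = \langle F, F' \rangle$, because the contraction defining the left-hand side sums over $[q]$-valued indices on each contracted pair, but every term with an index outside $X$ contributes $0$ (as $F^\shortuparrow$ vanishes there), so the sum collapses to the $X$-valued terms, which is exactly $\langle F,F'\rangle$.

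With this in hand the argument is short. First I would lift $K$ to $K^\shortuparrow \in \dprop{\fc}{\ell}{r}$. Since the restriction map $(\cdot)|_X$ is linear and $K^\shortuparrow|_X = K \neq 0$, we have $K^\shortuparrow \neq 0$. By $(\ell,r)$-quantum-nonvanishing of $\fc$, there is some $\widehat{L} \in \dprop{\fc}{r}{\ell}$ with $\langle K^\shortuparrow, \widehat{L}\rangle \neq 0$. Now I would like to replace $\widehat{L}$ by a lifted gadget from $\prop{\fc}_X$; the trick is to use the subdomain restrictor again. Consider $\widehat{L}|_X \in \dprop{\prop{\fc}_X}{r}{\ell}$ and its lift $(\widehat{L}|_X)^\shortuparrow \in \dprop{\fc}{r}{\ell}$. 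The claim is that $\langle K^\shortuparrow, \widehat{L}\rangle = \langle K^\shortuparrow, (\widehat{L}|_X)^\shortuparrow \rangle$: indeed $K^\shortuparrow$ forces all contracted indices to lie in $X$, so only the $X$-block of $\widehat{L}$ participates in the contraction, which is precisely what $(\widehat{L}|_X)^\shortuparrow$ retains. Hence $\langle K^\shortuparrow, (\widehat{L}|_X)^\shortuparrow\rangle \neq 0$, and by the form-preservation identity above this equals $\langle K, \widehat{L}|_X \rangle$. Setting $\widehat{K} := \widehat{L}|_X \in \dprop{\prop{\fc}_X}{r}{\ell}$ gives $\langle K, \widehat{K}\rangle \neq 0$, so $K$ is $\prop{\fc}_X$-nonvanishing. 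Since $K$ was an arbitrary nonzero element of $\dprop{\prop{\fc}_X}{\ell}{r} = \dprop{\fc}{\ell}{r}|_X$-space, $\prop{\fc}_X$ is $(\ell,r)$-quantum-nonvanishing.

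The main obstacle, and the step I would write most carefully, is the bookkeeping around which space a gadget lives in versus which space its signature lives in — in particular, confirming that a quantum-$\prop{\fc}_X$-gadget signature always arises as the $X$-restriction of a quantum-$\fc$-gadget signature (this is exactly \autoref{prop:uparrow}, so it is available), and that conversely the $X$-restriction of any quantum-$\fc$-gadget signature is a quantum-$\prop{\fc}_X$-gadget signature (immediate from the definition of $\prop{\fc}_X$). The other point requiring care is the identity $\langle F^\shortuparrow, F'^\shortuparrow\rangle = \langle F,F'\rangle$ and its ``one-sided'' variant $\langle K^\shortuparrow, \widehat{L}\rangle = \langle K^\shortuparrow, (\widehat{L}|_X)^\shortuparrow\rangle$; both are elementary once one unwinds the definition of the contraction $\langle\cdot,\cdot\rangle$ as a sum over $[q]^{\ell+r}$, but they are the crux of why the restrictor $I_X^\shortuparrow$ suffices. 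A slicker phrasing avoiding the second identity: realize $\langle K,\widehat K\rangle$ directly as the Holant value of an $\prop{\fc}_X$-grid $\Omega$, lift it to $\Omega^\shortuparrow$ as in \autoref{prop:subdomain_indist}, note $\biholant_{\prop{\fc}_X}(\Omega) = \biholant_{\prop{\fc}}(\Omega^\shortuparrow)$, and observe that $\Omega^\shortuparrow$ is an $\prop{\fc}$-grid containing $K^\shortuparrow \neq 0$, so if it had Holant value $0$ for every choice of the complementary gadget we would contradict $(\ell,r)$-quantum-nonvanishing of $\fc$.
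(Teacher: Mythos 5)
Your proposal is correct and follows essentially the same route as the paper's proof: lift the nonzero $K$ to $K^\shortuparrow \in \dprop{\fc}{\ell}{r}$ via \autoref{prop:uparrow}, apply $(\ell,r)$-quantum-nonvanishing of $\fc$ to get a witness, and use the fact that $K^\shortuparrow$ is supported only on $X$ to collapse the pairing to $\langle K, \widehat{L}|_X\rangle$, with $\widehat{L}|_X \in \prop{\fc}_X$ by definition. The intermediate lift $(\widehat{L}|_X)^\shortuparrow$ is an unnecessary but harmless detour; otherwise the argument matches the paper's.
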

\begin{proof}
    Let $F \in \dprop{\prop{\fc}_X}{\ell}{r}$. By \autoref{prop:uparrow}, $F^\shortuparrow \in 
    \dprop{\fc}{\ell}{r}$.
    Since $\fc$ is $(\ell,r)$-quantum-nonvanishing, there is a $\widehat{F^{\shortuparrow}} \in 
    \dprop{\fc}{r}{\ell}$ 
    such that $\langle F^{\shortuparrow}, \widehat{F^{\shortuparrow}} \rangle \neq 0$. 
    But $F^{\shortuparrow}$ is only supported on $X$, so
    \[
        0 \neq \langle F^{\shortuparrow}, \widehat{F^{\shortuparrow}} \rangle
        = \langle F^{\shortuparrow}|_X, \widehat{F^{\shortuparrow}}|_X \rangle
        = \langle F, \widehat{F^{\shortuparrow}}|_X \rangle.
    \]
    Thus $\widehat{F^{\shortuparrow}}|_X \in \prop{\fc}_X$ witnesses that $F$ is
    $\prop{\fc}_X$-nonvanishing, so $\prop{\fc}_X$ is $(\ell,r)$-quantum-nonvanishing.
\end{proof}

\subsection{Simultaneous Similarity}
\label{sec:binary}
In this subsection, we consider $\fc \subset \kk^q \otimes (\kk^q)^*$, a set of
\emph{mixed binary} signatures with one left and one right input. Thinking of $\fc$
as generators of a wheeled PROP, we always assume $I \in \fc$. We also view
$\fc$ as a set of matrices in $\kk^{q \times q}$, and for $T \in \gl_q$,
$T \fc = \{TFT^{-1} \mid F \in \fc\}$ is simultaneous conjugation of the
matrices in $\fc$ by $T$. 

\begin{definition}[$\Gamma_{\fc}$]
    Let $\Gamma_{\fc}$ be the set of all finite products of matrices in $\fc$.
\end{definition}
Every Bi-Holant $\fc$-grid is a disjoint union of cycles, each of which defines a 
word $w \in \Gamma_{\fc}$ and has value $\tr(w)$. Note that bipartiteness prevents 
transposing matrices in $\fc$ when constructing $w$ (this would require connecting two left
or two right edges), as is allowed in non-bipartite Holant for a set of binary signatures.
If transpose is allowed and $\kk = \rr$ or $\kk = \cc$, then the 
indistinguishability relation is always simultaneous similarity by an 
real or complex orthogonal matrix, respectively 
\cite[Corollary 2.3, Theorem 2.4]{jing_unitary_2015}, \cite[Corollary 5.4]{orthogonal}. 
Instead, the only conclusion we can immediately draw from indistinguishability in the 
bipartite setting is that every $\fc \ni F \leftrightsquigarrow G \in \gc$ have the same
multiset of eigenvalues, as, by a standard argument using Newton's identities for symmetric
polynomials, this is 
equivalent to $\tr(F^k) = \tr(G^k)$ for every $k \geq 0$. Two matrices
are similar if and only if they have the same Jordan normal form, so any $\fc = \{F\}$ and 
$\gc = \{G\}$ for $F$ and $G$ with identical spectrum but different Jordan normal forms
provide a counterexample to the converse of the Bi-Holant theorem.
If $F$ is not diagonalizable, then put $F$ in Jordan normal
form and write $F = \widetilde{F} + N$ for diagonal $\widetilde{F}$ and nilpotent $N$. 
We first make the following well-known observation.
Since they have the same multiset of eigenvalues, $F$ and $\widetilde{F}$ are Bi-Holant 
indistinguishable.
Therefore, if $\kk = \cc$, their $\gl_q$-orbit closures intersect by \autoref{thm:main1}. 
Indeed, the
invertible matrices $\text{diag}(\epsilon^q,\epsilon,\ldots,1)$ transform $F$ arbitrarily 
close to $\widetilde{F}$ as $\epsilon \to 0$ -- e.g.
\[
    \begin{bmatrix} \epsilon^2 & 0 & 0 \\ 0 & \epsilon & 0 \\ 0 & 0 & 1 \end{bmatrix}
    \begin{bmatrix} \lambda & 1 & 0 \\ 0 & \lambda & 1 \\ 0 & 0 & \lambda \end{bmatrix}
    \begin{bmatrix} \epsilon^{-2} & 0 & 0 \\ 0 & \epsilon^{-1} & 0 \\ 0 & 0 & 1 \end{bmatrix}
    = \begin{bmatrix} \lambda & \epsilon & 0 \\ 0 & \lambda & \epsilon \\ 0 & 0 & \lambda \end{bmatrix}
    \xrightarrow[\epsilon \to 0]{}
    \begin{bmatrix} \lambda & 0 & 0 \\ 0 & \lambda & 0 \\ 0 & 0 & \lambda \end{bmatrix}.
\]
Second, the minimal polynomial $p$ of $\widetilde{F}$ divides but does not
equal the minimal polynomial of $F$, so $p(F) \neq 0 = p(\widetilde{F})$. Since
$\prop{F} \ni p(F) \leftrightsquigarrow p(\widetilde{F}) = 0 \in \langle \widetilde{F} \rangle$,
it follows from indistinguishability (as in the
proof of \autoref{prop:covanishing}) that $p(F)$ is $\{F\}$-vanishing, so $\{F\}$ is
$(1,1)$-quantum-vanishing. 
Thus any $(1,1)$-quantum-nonvanishing $\{F\}$ and $\{G\}$ are diagonalizable, so, for any such pair,
indistinguishability does imply similarity.
\autoref{thm:binary} below generalizes this statement to simultaneous similarity.
Note that $(1,1)$-quantum-nonvanishing does not necessarily imply
full quantum-nonvanishing at all arities.
So, instead of \autoref{thm:duality}, our proof uses the following
theorem of Kaplansky (see also \cite[Theorem 2.1]{kaplansky_revisited}).
Say that $F \in \kk^{q \times q}$ has \emph{singleton spectrum} if $F$ has (up to 
multiplicity) only one distinct eigenvalue.
\begin{theorem}[Kaplansky \cite{kaplansky}]
    Suppose $\ac \subset \kk^{q \times q}$ is closed under matrix product and
    every $A \in \ac$ has singleton spectrum. 
    Then $\ac$ is simultaneously triangularizable under some $T \in \gl_q$.
\end{theorem}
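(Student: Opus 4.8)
The plan is to follow the standard template for simultaneous-triangularizability theorems: reduce, by induction on $q$, to a \emph{reducibility} statement, and prove reducibility with Burnside's theorem. Write $\lambda_A$ for the unique eigenvalue of $A \in \ac$, so $\tr(A) = q\lambda_A$. The key step is the claim that $A \mapsto \lambda_A$ is multiplicative on $\ac$, i.e.\ $\lambda_{AB} = \lambda_A\lambda_B$ for all $A, B \in \ac$. If $\lambda_A = 0$ or $\lambda_B = 0$ this is immediate: $\lambda_{AB}^q = \det(AB) = \det(A)\det(B) = \lambda_A^q\lambda_B^q = 0$, so $\lambda_{AB} = 0$. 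Otherwise write $A = \lambda_A I + N$, $B = \lambda_B I + M$ with $N, M$ nilpotent. Expanding $B^k = \sum_{j\ge 0}\binom{k}{j}\lambda_B^{k-j}M^j$ and using $\tr(N) = 0$ and $\tr(M^j) = 0$ for $j \ge 1$ (nilpotency), one computes $\tr(AB^k) = q\lambda_A\lambda_B^k + \lambda_B^k\,p(k)$, where $p(k) := \sum_{j\ge 1}\binom{k}{j}\lambda_B^{-j}\tr(NM^j)$ is a polynomial in $k$ with $p(0) = 0$. Since $AB^k \in \ac$ has singleton spectrum and $\det(AB^k) = (\lambda_A\lambda_B^k)^q$, we get $\lambda_{AB^k} = \zeta_k\lambda_A\lambda_B^k$ for some $q$-th root of unity $\zeta_k$, hence $\tr(AB^k) = q\zeta_k\lambda_A\lambda_B^k$. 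Comparing the two expressions gives $p(k) = q\lambda_A(\zeta_k - 1)$ for all $k \ge 1$. The right-hand side takes only finitely many values, so some value is attained for infinitely many $k$; a polynomial attaining one value at infinitely many points is constant, so $p \equiv p(0) = 0$. By linear independence of $\binom{k}{1}, \binom{k}{2}, \dots$ (distinct degrees), every $\tr(NM^j) = 0$; in particular $\tr(NM) = 0$, so $\tr(AB) = q\lambda_A\lambda_B + \lambda_A\tr(M) + \lambda_B\tr(N) + \tr(NM) = q\lambda_A\lambda_B$, proving the claim.

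Next, reducibility: I claim $\ac$ cannot act irreducibly on $\kk^q$ when $q \ge 2$. The linear span $\mathbb A = \spn(\ac)$ is a subalgebra of $\kk^{q\times q}$ (as $\ac$ is closed under products), with the same invariant subspaces as $\ac$; so if $\ac$ is irreducible then so is $\mathbb A$, and Burnside's theorem (together with the simplicity of $\kk^{q\times q}$, to handle the possibly non-unital case) forces $\mathbb A = \kk^{q\times q}$. The linear functional $\ell := \tfrac1q\tr$ agrees with $\lambda$ on $\ac$, and extending the multiplicativity of $\lambda$ bilinearly shows $\ell(XY) = \ell(X)\ell(Y)$ for all $X, Y \in \spn(\ac) = \kk^{q\times q}$. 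But $\tfrac1q\tr$ is not multiplicative on $\kk^{q\times q}$ for $q \ge 2$: for the matrix units $E_{12}, E_{21}$ we get $\ell(E_{12}E_{21}) = \ell(E_{11}) = \tfrac1q \ne 0 = \ell(E_{12})\ell(E_{21})$, using $\operatorname{char}\kk = 0$. This contradiction shows every $\ac$ satisfying the hypotheses with $q \ge 2$ has a nontrivial invariant subspace.

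The theorem now follows by induction on $q$; the base case $q = 1$ is trivial. For $q \ge 2$, reducibility gives a nontrivial invariant subspace $W$; choosing a complement yields $P \in \gl_q$ such that every $PAP^{-1}$ ($A \in \ac$) is block upper triangular with diagonal blocks $A_{11}$ on $W$ and $A_{22}$ on $\kk^q/W$. Block-triangular multiplication makes $A \mapsto A_{11}$ and $A \mapsto A_{22}$ semigroup homomorphisms, and since $\operatorname{spec}(A) = \operatorname{spec}(A_{11}) \sqcup \operatorname{spec}(A_{22})$ as multisets, each block again has singleton spectrum. By the inductive hypothesis there are $S_1, S_2$ simultaneously triangularizing $\{A_{11} : A \in \ac\}$ and $\{A_{22} : A \in \ac\}$; then $T = (S_1 \oplus S_2)P$ simultaneously triangularizes $\ac$, since conjugating a block upper triangular matrix by $S_1 \oplus S_2$ triangularizes each diagonal block and hence the whole matrix.

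The main obstacle is the multiplicativity claim. Determinants alone pin $\lambda_{AB}$ down only to a $q$-th root of unity times $\lambda_A\lambda_B$; the extra ingredient needed to remove that ambiguity is the trace expansion of $AB^k$ combined with the observation that a polynomial in $k$ cannot agree with a finitely-valued function at infinitely many integers unless it is constant. Once multiplicativity is in hand, the Burnside step and the block-triangular induction are routine.
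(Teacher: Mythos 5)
The paper cites this as Kaplansky's theorem and does not supply its own proof (it refers to \cite{kaplansky} and to the simplified treatment in \cite{kaplansky_revisited}), so there is no internal proof to compare against; your argument has to stand on its own, and it does.

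Your proof is correct. The core lemma — that $A\mapsto\lambda_A$ is multiplicative on $\ac$ — is established cleanly: writing $A=\lambda_A I+N$, $B=\lambda_B I+M$ with $N,M$ nilpotent, the trace expansion of $AB^k$ gives $\tr(AB^k)=q\lambda_A\lambda_B^k+\lambda_B^k p(k)$ with $p$ a polynomial vanishing at $0$, while the determinant computation and singleton spectrum force $\tr(AB^k)=q\zeta_k\lambda_A\lambda_B^k$ for $q$-th roots of unity $\zeta_k$; since $p(k)=q\lambda_A(\zeta_k-1)$ then ranges over a finite set for infinitely many $k$, $p$ is constant, hence zero. (One small streamlining: once $p\equiv 0$ you have $\tr(AB)=q\lambda_A\lambda_B$ directly from $k=1$, so the final re-expansion of $\tr(AB)$ via $\tr(NM)=0$ is redundant.) The Burnside step is handled correctly, including the non-unital subtlety: $\spn(\ac)$ is a two-sided ideal in its unitization, so irreducibility and simplicity of $\kk^{q\times q}$ force $\spn(\ac)\in\{0,\kk^{q\times q}\}$, and the multiplicative functional $\tfrac1q\tr$ fails on matrix units in characteristic $0$. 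The block-triangular induction is standard and correctly verifies that the diagonal blocks again form product-closed sets with singleton spectrum. The one implicit case you might flag explicitly is $\ac=\{0\}$ (or $\spn(\ac)=0$), where the Burnside dichotomy gives the zero algebra rather than $\kk^{q\times q}$; but that case is trivially triangularizable, so nothing is lost. Overall this is a self-contained and essentially optimal route: reducibility via a multiplicative spectral functional plus Burnside, then induction — the same skeleton as the literature proofs, with a nicely elementary device (the $AB^k$ expansion and the finitely-valued-polynomial observation) replacing the determinant-only argument that would leave a root-of-unity ambiguity.
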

Say $\fc \subset \tc(\kk^q)$ is $(1,1)$-\emph{trivial} if $\dprop{\fc}{1}{1} \subset \spn(I)$ 
(i.e. is as small as possible, as the wire gadget is always present).
\begin{corollary}
    Let $\fc \subset \kk^{q \times q}$ be $(1,1)$-quantum-nonvanishing. If every 
    $F \in \dprop{\fc}{1}{1}$ has singleton spectrum, then $\fc$ is $(1,1)$-trivial.
    \label{cor:kaplansky}
\end{corollary}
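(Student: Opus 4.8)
The plan is to apply Kaplansky's theorem to the algebra $\ac := \dprop{\fc}{1}{1}$, viewed as a set of $q \times q$ matrices. First I need to check that $\ac$ is closed under matrix product. This is exactly the statement that $\dprop{\fc}{1}{1}$ is closed under the wheeled-PROP composition operation $\leftindex_1{\partial}_1(\,\cdot\, \otimes \,\cdot\,)$, which connects the right input of one $(1,1)$-quantum gadget to the left input of another: the composite is again the signature of a quantum-$\fc$-gadget, and as noted in the excerpt this matrix composition is precisely the product of the two matrices. So $\ac$ is a (unital, since $I \in \ac$ always) subalgebra of $\kk^{q\times q}$. By hypothesis every $A \in \ac$ has singleton spectrum, so Kaplansky's theorem gives a $T \in \gl_q$ simultaneously upper-triangularizing all of $\ac$; WLOG (conjugating by $T$) assume every matrix in $\ac$ is already upper triangular with a single eigenvalue on its diagonal.

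Next I would use $(1,1)$-quantum-nonvanishing to rule out any strictly-upper-triangular part. Suppose $A \in \ac$ is not a scalar multiple of $I$. Write $A = \lambda I + N$ with $N$ strictly upper triangular and nonzero (if $A$ is nilpotent then $\lambda = 0$). Then $N = A - \lambda I \in \ac$ as well, since $\ac$ is a subspace containing $I$. Now $N$ is a nonzero element of $\dprop{\fc}{1}{1}$, so by $(1,1)$-quantum-nonvanishing there must be some $\widehat N \in \dprop{\fc}{1}{1}$ with $\langle N, \widehat N\rangle \neq 0$, i.e. $\tr(N \widehat N) \neq 0$. But $\widehat N \in \ac$ is upper triangular (after our conjugation), and $N$ is strictly upper triangular, so $N\widehat N$ is strictly upper triangular, hence has trace $0$ — a contradiction. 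Therefore every $A \in \ac$ is a scalar multiple of $I$, i.e. $\ac = \dprop{\fc}{1}{1} \subset \spn(I)$, which is exactly the assertion that $\fc$ is $(1,1)$-trivial.

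The only subtle point — and the one I'd expect to need a sentence of care — is the closure of $\dprop{\fc}{1}{1}$ under matrix product, since in general $\prop{\fc}_X$-type subdomain restrictions are \emph{not} closed under composition (as the excerpt warns). Here, however, we are not restricting to a subdomain: $\dprop{\fc}{1}{1}$ is a graded piece of the genuine wheeled PROP $\prop{\fc}$, and wheeled PROPs are by definition closed under the contraction operations $\leftindex_i{\partial}_j$ and under $\otimes$, so $\leftindex_1{\partial}_1(K \otimes K')$ lands back in $\dprop{\fc}{1}{1}$ for $K, K' \in \dprop{\fc}{1}{1}$. One should also note Kaplansky's theorem is stated for sets closed under product, not necessarily unital or linear subspaces, so it applies directly to $\ac = \dprop{\fc}{1}{1}$ as given. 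Everything else is the routine triangular-matrix trace computation above.
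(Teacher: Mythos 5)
Your proof is correct and follows essentially the same route as the paper's: apply Kaplansky's theorem to $\dprop{\fc}{1}{1}$ (a product-closed subalgebra) to simultaneously upper-triangularize, then use the strictly-upper-triangular trace argument $\tr(N\widehat{N}) = 0$ to contradict $(1,1)$-quantum-nonvanishing. If anything your version is slightly tighter — the paper only tests generators $F \in \fc$ against $\spn(I)$ while you test arbitrary $A \in \dprop{\fc}{1}{1}$, which addresses $(1,1)$-triviality directly without the (easy but unstated) observation that $(1,1)$-gadgets built from scalar matrices are scalar.
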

\begin{proof}
    Applying Kaplansky's theorem to $\dprop{\fc}{1}{1}$, 
    we may transform $\fc$ so that every matrix in $\dprop{\fc}{1}{1}$ is
    upper triangular, with constant diagonal. This does not change whether $\fc$ is 
    quantum-vanishing. Suppose $\fc \ni F \not\in \spn(I)$, with constant $\lambda$ on the 
    diagonal. Then $F - \lambda I\in \prop{\fc}$ is nonzero and strictly upper triangular, 
    so $(F-\lambda I)F'$ is strictly upper triangular for every $F' \in \dprop{\fc}{1}{1}$.
    But every connected $\prop{\fc}$-grid containing $F - \lambda I$ is a cycle formed by a 
    contraction between $F - \lambda I$ and a path with signature $F' \in \dprop{\fc}{1}{1}$,
    with Holant value $\tr((F- \lambda I)F') = 0$. 
    Therefore $F - \lambda I$ is $\fc$-vanishing, contradicting $(1,1)$-quantum-nonvanishing.
\end{proof}

Any $\fc$ failing the condition of Kaplansky's theorem satisfies the condition of the 
following domain separation lemma, which we will apply similarly to the 
Vandermonde-interpolation-based \cite[Proposition 4.1]{orthogonal}.
\begin{lemma}
    \label{lem:subdomain}
    Let $\fc, \gc \subset \tc(\kk^q)$ be Bi-Holant-indistinguishable and $(1,1)$-quantum-nonvanishing. 
    Either $\fc$ and $\gc$ are $(1,1)$-trivial and
    $\dprop{\fc}{1}{1} = \dprop{\gc}{1}{1}$, or there is a nontrivial partition
    $(X,\overline{X})$ of $[q]$ and $T, U \in \gl_q$ such that 
    $\prop{T \fc} \ni I_X^\shortuparrow,I_{\overline{X}}^\shortuparrow \leftrightsquigarrow 
    I_X^\shortuparrow, I_{\overline{X}}^\shortuparrow \in \prop{U \gc}$.

    Furthermore, suppose there are $\dprop{\fc}{1}{1} \ni F \leftrightsquigarrow G \in 
    \dprop{\gc}{1}{1}$ that do not have singleton spectrum and
    have block forms $\left[\begin{smallmatrix} * & 0 \\ * & 0\end{smallmatrix}\right]$ and
    $\left[\begin{smallmatrix} * & * \\ 0 & 0\end{smallmatrix}\right]$ respectively, with
    the first block indexed by $\Delta \subset [q]$. Then we may choose, under the same blocks,
    $T = \left[\begin{smallmatrix} * & 0 \\ * & *\end{smallmatrix}\right]$ and 
    $U = \left[\begin{smallmatrix} * & * \\ 0 & *\end{smallmatrix}\right]$ so that
    $X = [x] \subset \Delta$.
\end{lemma}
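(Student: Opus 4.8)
The strategy is to split into cases according to whether $\dprop{\fc}{1}{1}$ (equivalently $\dprop{\gc}{1}{1}$, by indistinguishability) consists entirely of matrices with singleton spectrum. If it does, then \autoref{cor:kaplansky} applies to both $\fc$ and $\gc$, giving $(1,1)$-triviality, i.e.\ $\dprop{\fc}{1}{1} = \dprop{\gc}{1}{1} = \spn(I)$, which is the first alternative. Otherwise, there exist corresponding $F \leftrightsquigarrow G$ in $\dprop{\fc}{1}{1}$ and $\dprop{\gc}{1}{1}$ that do not have singleton spectrum; since $F$ and $G$ are $\prop{\fc}$- and $\prop{\gc}$-gadget signatures that are Bi-Holant-indistinguishable (hence $\tr(F^k)=\tr(G^k)$ for all $k$), they share a spectrum with at least two distinct eigenvalues $\lambda \neq \mu$. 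The plan is to produce, by polynomial interpolation on $F$ and on $G$ separately, the spectral projector onto a sum of eigenspaces; restricting attention to the eigenvalue $\lambda$, there is a polynomial $\pi$ with $\pi(F) = P_\lambda(F)$ the (non-orthogonal) spectral idempotent of $F$ for $\lambda$, and likewise $\pi(G) = P_\lambda(G)$ for $G$, both of which lie in $\dprop{\fc}{1}{1}$ and $\dprop{\gc}{1}{1}$ respectively and correspond to each other under $\leftrightsquigarrow$.

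Next I would diagonalize the idempotent structure. An idempotent $P = \pi(F)$ is conjugate to $I_Y^\shortuparrow$ for $Y = [\dim \mathrm{im}\,P]$ by some $T_0 \in \gl_q$; applying $T_0$ to $\fc$ (and correspondingly some $U_0$ to $\gc$, chosen to bring $\pi(G)$ to $I_Y^\shortuparrow$ as well, using that $\pi(F)$ and $\pi(G)$ have equal rank because $\tr(\pi(F))=\tr(\pi(G))$ follows from the trace identities) gives $\prop{T_0 \fc} \ni I_Y^\shortuparrow \leftrightsquigarrow I_Y^\shortuparrow \in \prop{U_0 \gc}$. Then $I_{\overline Y}^\shortuparrow = I - I_Y^\shortuparrow$ is also realized on both sides, so we get both subdomain restrictors, with the nontrivial partition $(Y,\overline Y)$ — this is the second alternative. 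The key technical tool throughout is \autoref{prop:subdomain_indist} and \autoref{prop:subdomain_nondegen}, which ensure the restricted sets remain indistinguishable and quantum-nonvanishing, so the construction can be iterated if needed; but for this lemma a single split suffices.

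For the ``furthermore'' clause, the extra hypothesis is that we are handed specific corresponding $F \leftrightsquigarrow G$ with prescribed block shapes: $F = \left[\begin{smallmatrix} * & 0 \\ * & 0\end{smallmatrix}\right]$ and $G = \left[\begin{smallmatrix} * & * \\ 0 & 0\end{smallmatrix}\right]$ relative to a partition indexed by $\Delta$. Here I would not use spectral projectors of $F$ and $G$ themselves (their spectra need not be related to $\Delta$ in a useful way), but rather observe that $\mathrm{im}\,F \subseteq \kk^\Delta$ while $\ker G \supseteq \kk^{\overline\Delta}$, i.e.\ the column space of $F$ sits inside the first block and the row space of $G$ sits inside the first block. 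The idea is to choose $T$ to be block-lower-triangular $\left[\begin{smallmatrix} * & 0 \\ * & *\end{smallmatrix}\right]$ — which preserves $\kk^\Delta$ as an invariant (image) subspace — and $U$ block-upper-triangular $\left[\begin{smallmatrix} * & * \\ 0 & *\end{smallmatrix}\right]$ — which preserves $\kk^{\overline\Delta}$ as a coinvariant subspace — so that after transformation the image of $T F T^{-1}$ still lies in $\kk^{\Delta}$ and the cokernel of $U G U^{-1}$ still contains $\kk^{\overline\Delta}$. Within this constrained family of transformations I then run the rank/idempotent argument using a rank-revealing polynomial of $F$ (resp.\ $G$) that, combined with the block structure, produces an idempotent supported inside $\Delta$; its image can be moved to $[x] \subseteq \Delta$ by the $\Delta$-block of $T$ alone (and the corresponding adjustment inside the $\Delta$-block of $U$), keeping the triangular shapes. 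The main obstacle I anticipate is verifying that the interpolation and rank-matching arguments (that $\pi(F)$ and $\pi(G)$ have the same rank, and that the realized idempotent genuinely lands in the $\Delta$-block) go through using only the trace/Holant identities rather than an a priori transformation — this is exactly where quantum-nonvanishing is used, to guarantee that the idempotents so constructed are nonzero and behave as honest subdomain restrictors on both sides.
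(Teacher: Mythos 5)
Your handling of the main dichotomy is essentially sound, and it takes a mildly different route from the paper. The paper first uses $(1,1)$-covanishing (\autoref{prop:covanishing}) to show the non-singleton-spectrum pair $F \leftrightsquigarrow G$ share a minimal polynomial, then passes to Jordan form and builds $I_X^\shortuparrow$ from the minimal polynomial of $(F-\lambda I)^r|_X$; you instead realize the spectral idempotent of the generalized $\lambda$-eigenspace as a polynomial in $F$ (resp.\ $G$) and conjugate it to $I_Y^\shortuparrow$ on each side, using $\tr(\pi(F))=\tr(\pi(G))$ to match ranks. This can be made to work and even bypasses the covanishing step, but you assert without justification the one point that makes $\pi(F) \leftrightsquigarrow \pi(G)$ meaningful: a \emph{single} polynomial $\pi$ must compute the $\lambda$-idempotent for both $F$ and $G$, even though their Jordan structures may differ. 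You need, e.g., the Hermite interpolant with $\pi \equiv 1 \pmod{(x-\lambda)^q}$ and $\pi \equiv 0 \pmod{(x-\mu)^q}$ for every other common eigenvalue $\mu$ (the common spectrum coming from the trace identities). With that supplied, the second alternative follows; the first alternative is fine (though to match how the lemma is used later you should also note, as the paper does, that corresponding scalar multiples of $I$ are equal by comparing traces).

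The genuine gap is the ``furthermore'' clause, which is precisely the part invoked in the proof of \autoref{thm:main2}. First, your stated subspace facts are wrong: $F=\left[\begin{smallmatrix} * & 0 \\ * & 0\end{smallmatrix}\right]$ gives $\kk^{\overline{\Delta}} \subseteq \ker F$ (its image need \emph{not} lie in $\kk^{\Delta}$), while $G=\left[\begin{smallmatrix} * & * \\ 0 & 0\end{smallmatrix}\right]$ gives $\operatorname{im} G \subseteq \kk^{\Delta}$ (not $\kk^{\overline{\Delta}} \subseteq \ker G$); likewise a $(\Delta,\overline{\Delta})$-lower-triangular $T$ preserves $\kk^{\overline{\Delta}}$, not $\kk^{\Delta}$. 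Second, you explicitly decline to use the spectral projectors of $F$ and $G$ here on the grounds that their spectra are not usefully related to $\Delta$ --- but the block hypotheses relate them exactly: $0$ is a common eigenvalue, the generalized $0$-eigenspace of $F$ contains $\kk^{\overline{\Delta}}$, and transposedly for $G$. That is what the paper exploits: it reruns the main construction with $\lambda = 0$ and observes that the basis change separating the generalized eigenspaces of $F$ can be chosen with final columns $e_{|\Delta|+1},\ldots,e_q$, which simultaneously makes $T$ lower-triangular, $U$ upper-triangular (by transposition), and forces $X=[x]\subseteq\Delta$ since the nonzero-eigenvalue part has dimension at most $|\Delta|$. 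Your substitute --- ``a rank-revealing polynomial \ldots\ produces an idempotent supported inside $\Delta$,'' movable by the $\Delta$-blocks alone --- is not an argument: the polynomial is never exhibited, it is never shown that the same idempotent is realized on both sides or that it is supported in $\Delta$ after only triangular transformations, and you yourself flag exactly these verifications as unresolved. As written, the furthermore clause is unproved.
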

\begin{proof}
    If $\fc$ and $\gc$ are $(1,1)$-trivial, then 
    for every $\dprop{\fc}{1}{1} \ni F = \lambda_F I \leftrightsquigarrow  \lambda_G I = G \in \dprop{\gc}{1}{1}$, we have $q \lambda_F = \tr(F) = \tr(G) = q \lambda_G$, hence
    $\lambda_F = \lambda_G$, so $F = G$. Thus $\dprop{\fc}{1}{1} = \dprop{\gc}{1}{1}$.

    Otherwise, \autoref{cor:kaplansky} asserts that
    there are $\dprop{\fc}{1}{1} \ni F \leftrightsquigarrow G \in \dprop{\gc}{1}{1}$ such that one 
    of $F$ or $G$ does not have singleton spectrum.
    By indistinguishability, $\tr(F^k) = \tr(G^k)$ for every $k \geq 0$. Thus
    $F$ and $G$ have the same multiset of eigenvalues.
    In particular, $F$ and $G$ share some eigenvalue $\lambda$ with the same (algebraic)
    multiplicity. We claim that $F$ and $G$ must have the same minimal
    polynomial. Otherwise, suppose WLOG that the minimal polynomial of $F$ does not
    divide the minimal polynomial $p_G$ of $G$. By \autoref{prop:covanishing}, $\fc$ and
    $\gc$ are $(1,1)$-covanishing, but
    $p_G(F) \neq 0 = p_G(G)$ and $\dprop{\fc}{1}{1} \ni p_G(F) \leftrightsquigarrow p_G(G) 
    \in \dprop{\gc}{1}{1}$, a contradiction.

    Choose $T$ and $U$ to be the bases under which $F$ and $G$ are in Jordan normal form,
    respectively. Then, since $\lambda$ has the same multiplicity in $F$ and $G$, we can
    define $X \subset [q]$ such that $F|_{\overline{X}}$ and $G|_{\overline{X}}$ are 
    the union of the $\lambda$-blocks of $F$ and $G$, respectively. Since $F$ and $G$ do
    not have singleton spectrum,
    $X \subset [q]$ is nontrivial. Then choose sufficiently large $r$ such that 
    \begin{equation}
        (F - \lambda I)^r|_{\overline{X}} = (G - \lambda I)^r|_{\overline{X}} = 0. 
        \label{eq:power_r}
    \end{equation}
    Then $\dprop{\fc}{1}{1} \ni (F - \lambda I)^r \leftrightsquigarrow (G - \lambda I)^r
    \in \dprop{\gc}{1}{1}$ are both supported only on $X$, so it follows as above from
    $(1,1)$-covanishing that
    $(F - \lambda I)^r|_{X}$ and $(G - \lambda I)^r|_{X} \in \kk^{X \times X}$
    have the same minimal polynomial $p$. Furthermore, $(F - \lambda I)^r|_{X}$ and 
    $(G - \lambda I)^r|_{X}$ have no $0$-eigenvalues, so $p$ has a nonzero constant 
    term $cI_X$.
    Expanding $p-cI_X$ removes all instances of $I_X$, so we can view $p-cI_X$ as a polynomial
    on full $q \times q$ matrices. Now, by \eqref{eq:power_r},
    \[
        \dprop{\fc}{1}{1} \ni I_X^\shortuparrow = -\frac{1}{c} (p-cI_X)\big((F - \lambda I)^r\big) \leftrightsquigarrow
        -\frac{1}{c} (p-cI_X)\big((G - \lambda I)^r\big) = I_X^\shortuparrow \in 
        \dprop{\gc}{1}{1}
    \]
    and $\dprop{\fc}{1}{1} \ni I_{\overline{X}}^\shortuparrow = I - I_X^\shortuparrow \leftrightsquigarrow I - I_X^\shortuparrow = I_{\overline{X}}^{\shortuparrow} \in \dprop{\gc}{1}{1}$.

    For the second claim, it suffices to show that $F = 
    \left[\begin{smallmatrix} * & 0 \\ * & 0\end{smallmatrix}\right]$ can be put in Jordan
    normal form by $T$ of the form 
    $\left[\begin{smallmatrix} * & 0 \\ * & *\end{smallmatrix}\right]$; the claim about
    $G$ and $U$ follows by transposed reasoning.
    Note that $\{e_{|\Delta|+1}, \ldots, e_q\}$ is a set of linearly independent 0-eigenvectors
    of $F$. We can always choose a $T$ whose final $q-|\Delta|$ columns are
    $\{e_{|\Delta|+1}, \ldots, e_q\}$, giving $T$ the desired block form and, with $\lambda = 0$
    in the proof above, ensuring that $X = [x] \subset \Delta$.
\end{proof}

For a word $w \in \Gamma_{\fc}$, construct $w_{\fc\to\gc} \in \Gamma_{\gc}$ 
by replacing every character $F \in \fc$ in $w$ by the corresponding $G \in \gc$. We now obtain our characterization of simultaneously similarity
of quantum-nonvanishing sets of matrices.
\begin{theorem}
    \label{thm:binary}
    Let $\fc, \gc \subset \kk^{q \times q}$ be $(1,1)$-quantum-nonvanishing.
    Then $\tr(w) = \tr(w_{\fc\to\gc})$ for every word $w \in \Gamma_{\fc}$ if and only if
    there is a $T \in \gl_q$ such that $T F T^{-1} = G$ for every 
    $\fc \ni F \leftrightsquigarrow G \in \gc$.
\end{theorem}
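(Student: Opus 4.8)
The plan is to prove $(\Leftarrow)$ by the multiplicativity of conjugation and $(\Rightarrow)$ by induction on the domain size $q$, with \autoref{lem:subdomain} as the inductive engine. For $(\Leftarrow)$, if $TFT^{-1} = G$ for every $\fc \ni F \leftrightsquigarrow G \in \gc$, then any word $w = F_1 \cdots F_k \in \Gamma_{\fc}$ has $w_{\fc\to\gc} = G_1 \cdots G_k = T w T^{-1}$, so $\tr(w_{\fc\to\gc}) = \tr(w)$. For $(\Rightarrow)$, I would first observe that the hypothesis is exactly Bi-Holant-indistinguishability of $\fc$ and $\gc$: a Bi-Holant $\fc$-grid is $2$-regular, hence a disjoint union of directed cycles (plus vertexless loops, which contribute the common factor $q$ per loop), and each cycle contributes the trace of the word it reads off. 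So I may assume $\fc$ and $\gc$ are Bi-Holant-indistinguishable and $(1,1)$-quantum-nonvanishing, hence $(1,1)$-covanishing by \autoref{prop:covanishing}, and induct on $q$; the base case $q = 1$ is immediate, since $\tr(F) = \tr(G)$ forces $F = G$.

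For the inductive step, apply \autoref{lem:subdomain}. In its first alternative, $\dprop{\fc}{1}{1} = \dprop{\gc}{1}{1}$ with corresponding elements equal; since $\fc \subset \dprop{\fc}{1}{1}$ this gives $\fc = \gc$ and $T = I$. In the second alternative, after conjugating $\fc$ by $T_0$ and $\gc$ by $U_0$ -- which preserves Bi-Holant-indistinguishability by \autoref{thm:holant} and preserves $(1,1)$-quantum-nonvanishing because $\langle\cdot,\cdot\rangle = \tr(\cdot\,\cdot)$ is invariant under the action of \autoref{def:action} -- I may assume $I_X^\shortuparrow, I_{\overline X}^\shortuparrow \in \prop{\fc} \cap \prop{\gc}$ (as corresponding signatures) for a nontrivial partition $(X, \overline X)$ of $[q]$. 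Then Propositions \ref{prop:uparrow}, \ref{prop:subdomain_indist}, and \ref{prop:subdomain_nondegen} make $\prop{\fc}_X \leftrightsquigarrow \prop{\gc}_X$ (and likewise on $\overline X$) Bi-Holant-indistinguishable and $(1,1)$-quantum-nonvanishing on a strictly smaller domain, so the induction hypothesis supplies $S_X \in \gl_X$ and $S_{\overline X} \in \gl_{\overline X}$; applying the block-diagonal $S_X \oplus S_{\overline X}$ to $\fc$ alone (it fixes $I_X^\shortuparrow$) makes the $X$- and $\overline X$-diagonal blocks of corresponding signatures of $\fc$ and $\gc$ coincide.

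The main obstacle is the remaining \emph{off-diagonal} blocks: matching the diagonal blocks does not affect them, and they restrict to $0$ on both $X$ and $\overline X$, so no subdomain restriction reaches them. My plan to kill them is to iterate \autoref{lem:subdomain}, using its ``Furthermore'' clause -- whose triangular choice of $T, U$ is designed for exactly this -- to maintain throughout the recursion the invariant that $\fc$-signatures are block-upper-triangular and $\gc$-signatures block-lower-triangular with respect to the common partition into subdomains. Once that holds, for corresponding $F \leftrightsquigarrow G$ the gadget signature $I_X^\shortuparrow F I_{\overline X}^\shortuparrow \in \prop{\fc}$ (the $(X,\overline X)$-block of $F$, padded with zeros) corresponds to $I_X^\shortuparrow G I_{\overline X}^\shortuparrow = 0$, and symmetrically for the $(\overline X, X)$-block, so $(1,1)$-covanishing forces both off-diagonal blocks to vanish and hence $F = G$. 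The routine parts are the propagation of Bi-Holant-indistinguishability, $(1,1)$-quantum-nonvanishing, and $I_X^\shortuparrow$-membership through the subdomain operations of Propositions \ref{prop:uparrow}--\ref{prop:subdomain_nondegen}, and the termination of the recursion: each split strictly shrinks a subdomain, and by \autoref{cor:kaplansky} a $(1,1)$-quantum-nonvanishing subdomain is either $(1,1)$-trivial -- handled by the first alternative -- or admits a further split (which ultimately reflects the semisimplicity of $\dprop{\fc}{1}{1}$ forced by the nondegeneracy of the trace form). I expect the delicate part to be precisely the bookkeeping that keeps $\fc$ and $\gc$ in these opposite block-triangular forms while recursing -- choosing the Jordan bases at each split so that the newly exposed subdomain sits correctly inside the current triangular structure, which is what the ``Furthermore'' clause is meant to provide.
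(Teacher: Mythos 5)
Your setup is fine: reducing the trace condition to Bi-Holant-indistinguishability, the easy direction, the base case, and the use of \autoref{lem:subdomain} plus Propositions \ref{prop:uparrow}--\ref{prop:subdomain_nondegen} to isolate subdomains all match the paper's machinery. But the core of your inductive step --- establishing and maintaining that $\fc$-signatures are block-upper-triangular and $\gc$-signatures block-lower-triangular, so that covanishing kills off-diagonal blocks --- has a genuine gap, and it is the central difficulty of the theorem.

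First, the ``Furthermore'' clause of \autoref{lem:subdomain} cannot be used to establish such a structure. Its hypothesis requires specific block forms for a \emph{pair} $F \leftrightsquigarrow G$, namely $\left[\begin{smallmatrix} * & 0 \\ * & 0\end{smallmatrix}\right]$ and $\left[\begin{smallmatrix} * & * \\ 0 & 0\end{smallmatrix}\right]$; in \autoref{thm:main2} these are delivered by \autoref{lem:intertwine}, which exploits the purely co/contravariant shape of the generators and has no analogue for sets of $q\times q$ matrices. And its conclusion is that the Jordan-basis change $T$ (resp.\ $U$) can be chosen lower- (resp.\ upper-) triangular --- it says nothing about $\fc$ or $\gc$ themselves becoming triangular after conjugation. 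Indeed, no such triangularization is available: after the partition $(X_1,\dots,X_m)$ stabilizes, each $\prop{\fc}_{X_i}$ is $(1,1)$-trivial, which fixes the diagonal blocks $F_{i,i} = \lambda_i I_{X_i}$, but the off-diagonal blocks $F_{i,j}$ are generically nonzero and cannot be simultaneously annihilated by conjugation; if they could, the family would be block-upper-triangularizable, which is exactly what Kaplansky's theorem and \autoref{cor:kaplansky} rule out outside the trivial case. Second, even granting opposite triangular forms, $(1,1)$-covanishing only gives $F_{i,j} = 0 \iff G_{i,j} = 0$; it does not force the blocks to vanish when they are nonzero, and the correct conclusion of the theorem is not that they vanish but that a $T$ with nontrivial diagonal blocks $T_i$ satisfies $T_i F_{i,j} T_j^{-1} = G_{i,j}$.

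The paper's proof takes a different route at exactly this point: it shows (via $(1,1)$-triviality and quantum-nonvanishing) that each nonzero $F_{i,j}$ is invertible with $F_{i,j}\widehat{F_{j,i}} = \lambda I$, identifies $F_{i,j}\widetilde{F}_{j,i} = G_{i,j}\widetilde{G}_{j,i}$ as an element of the already-matched $(1,1)$-trivial subalgebra $\prop{\fc}_{X_i}$, and then \emph{constructs} $T = \bigoplus_k T_k$ explicitly from the reference blocks $F^{(k)}_{k,p}, G^{(k)}_{k,p}$ and their quantum duals, verifying $T_i F_{i,j} T_j^{-1} = G_{i,j}$ by a case analysis. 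That construction of the $T_i$ --- rather than any triangularization --- is the missing idea in your plan.
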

\begin{proof}
    We only need $(\Longrightarrow)$. The assumption is equivalent to
    Bi-Holant-indistinguishability of $\fc$ and $\gc$.
    So, unless we have $\fc \subset \dprop{\fc}{1}{1} = \dprop{\gc}{1}{1} \supset \gc$ 
    and are already done,
    \autoref{lem:subdomain} gives a nontrivial partition $(X, \overline{X})$
    of $[q]$ such that, after suitable transformations,
    $\prop{\fc} \ni I_X^\shortuparrow,I_{\overline{X}}^\shortuparrow 
    \leftrightsquigarrow I_X^\shortuparrow, I_{\overline{X}}^\shortuparrow \in \prop{\gc}$. 

    In general, suppose $\fc \ni I^\shortuparrow_{X_1}, \ldots I^\shortuparrow_{X_s}
    \leftrightsquigarrow I_{X_1}^\shortuparrow, \ldots I_{X_s}^\shortuparrow \in \gc$ 
    for a partition $(X_1,\ldots,X_s)$ of $[q]$. 
    We will show that every subdomain is either
    $(1,1)$-trivial or can be further decomposed into smaller subdomains.
    By Propositions \ref{prop:subdomain_indist}
    and \ref{prop:subdomain_nondegen}, each $\prop{\fc}_{X_i}$ and $\prop{\gc}_{X_i}$ are 
    Bi-Holant-indistinguishable and $(1,1)$-quantum-nonvanishing.
    If any $\prop{\fc}_{X_i}$ and $\prop{\gc}_{X_i}$ are $(1,1)$-nontrivial, then by \autoref{lem:subdomain}
    there are $T,U \in \gl(\kk^{X_i})$ and nontrivial $Y_i \subset X_i$ such that 
    \[
        \prop{T \prop{\fc}_{X_i}} \ni I_{Y_i}^{\shortuparrow X_i}, 
        I_{X_i \setminus Y_i}^{\shortuparrow X_i} \leftrightsquigarrow
        I_{Y_i}^{\shortuparrow X_i}, I_{X_i \setminus Y_i}^{\shortuparrow X_i} \in \prop{U \prop{\gc}_{X_i}}.
    \]
    Define $T^{\shortuparrow} := I_{X_1} \oplus \ldots \oplus I_{X_{i-1}} 
    \oplus T \oplus I_{X_{i+1}} \oplus \ldots \oplus I_{X_s} \in \gl_q$ and replace $\fc$
    with $T^{\shortuparrow} \fc$. This replaces $\prop{\fc}_{X_i}$ with 
    $\prop{T^\shortuparrow \fc}_{X_i} = T \prop{\fc}_{X_i}$ (by \autoref{prop:respect})
    while preserving $I_{X_1}^\shortuparrow,\ldots,I_{X_s}^\shortuparrow$.
    Now $I_{Y_i}^{\shortuparrow X_i} \in \prop{\prop{\fc}_{X_i}}$ and we still have
    $I_{X_i} \in \prop{\fc}$, so \autoref{prop:uparrow} gives
    $I_{Y_i}^\shortuparrow = (I_{Y_i}^{\shortuparrow X_i})^{\shortuparrow} \in \prop{\fc}$.
    Similarly, $I_{X_i\setminus Y_i}^\shortuparrow \in \prop{\fc}$ and, after transforming 
    $\gc$ by $U^{\shortuparrow}$, we obtain 
    $I_{Y_i}^\shortuparrow, I_{X_i\setminus Y_i}^\shortuparrow
    \in \prop{\gc}$, so we have refined the
    partition of $[q]$ to $(X_1,\ldots,Y_i,X_i\setminus Y_i,\ldots,X_s)$.

    Let this process stabilize at a maximal partition $(X_1,\ldots,X_m)$ of $[q]$.
    At this point, \autoref{lem:subdomain} asserts that every
    $\prop{F}_{X_i}$ and $\prop{G}_{X_i}$ are $(1,1)$-trivial and satisfy
    $\dprop{\prop{\fc}_{X_i}}{1}{1} =  \dprop{\prop{\gc}_{X_i}}{1}{1}$. We proceed to inductively transform
    $\fc$ into $\gc$. Suppose that $\bdprop{\prop{\fc}_{X_1 \cup \ldots \cup X_{p-1}}}{1}{1} 
    =  \bdprop{\prop{\gc}_{X_1 \cup \ldots \cup X_{p-1}}}{1}{1}$. Use 
    $I_{X_1 \cup \ldots \cup X_p}^\shortuparrow = \sum_{i=1}^p I_{X_i}^\shortuparrow$
    to isolate $\prop{\fc}_{X_1 \cup \ldots \cup X_{p}}$ and $\prop{\gc}_{X_1 \cup \ldots \cup X_{p}}$;
    by \autoref{prop:subdomain_indist}
    and \autoref{prop:subdomain_nondegen}, $\prop{\fc}_{X_1 \cup \ldots \cup X_{p}}$ and
    $\prop{\gc}_{X_1 \cup \ldots \cup X_{p}}$ are Bi-Holant-indistinguishable and $(1,1)$-quantum-nonvanishing.
    Every $\bdprop{\prop{\fc}_{X_1 \cup \ldots \cup X_p}}{1}{1} \ni F \leftrightsquigarrow G \in 
    \bdprop{\prop{\gc}_{X_1 \cup \ldots \cup X_p}}{1}{1}$ are of the form
    \begin{equation}
        \left[
        \begin{array}{cccc|c}
            \lambda_1 I_{X_1} & F_{1,2} & \hdots & F_{1,p-1} & F_{1,p} \\
            F_{2,1} & \lambda_2 I_{X_2} & \hdots & F_{2,p-1} & F_{2,p} \\
            \vdots & \vdots & \ddots & \vdots & \vdots \\
            F_{p-1,1} & F_{p-1,2} & \hdots & \lambda_{p-1} I_{X_{p-1}} & F_{p-1,p} \\
            \hline
            F_{p,1} & F_{p,2} & \hdots & F_{p,p-1} & \lambda_p I_{X_p}
        \end{array}
        \right]
        \leftrightsquigarrow
        \left[
        \begin{array}{cccc|c}
            \lambda_1 I_{X_1} & F_{1,2} & \hdots & F_{1,p-1} & G_{1,p} \\
            F_{2,1} & \lambda_2 I_{X_2} & \hdots & F_{2,p-1} & G_{2,p} \\
            \vdots & \vdots & \ddots & \vdots & \vdots \\
            F_{p-1,1} & F_{p-1,2} & \hdots & \lambda_{p-1} I_{X_{p-1}} & G_{p-1,p} \\
            \hline
            G_{p,1} & G_{p,2} & \hdots & G_{p,p-1} & \lambda_p I_{X_p}
        \end{array}
        \right]
        \label{eq:block_form}
    \end{equation} 
    where $F_{i,j} := F|_{X_i,X_j}$ and $G_{i,j} := G|_{X_i,X_j}$. 
    Extending \autoref{def:shortuparrowow}, for $i,j \in [m]$ with $i \neq j$, let
    \[
        F_{i,j}^{\shortuparrow} = I_{X_i}^\shortuparrow F I_{X_j}^\shortuparrow \in \dprop{\fc}{1}{1}
    \]
    be the matrix with $F_{i,j}$ in the $(X_i,X_j)$ block and 0 in the other blocks. 
    Since $\fc$ and $\gc$ are $(1,1)$-covanishing by \autoref{prop:covanishing} and
    $F_{i,j}^{\shortuparrow}  \leftrightsquigarrow G_{i,j}^{\shortuparrow}$, we have
    \begin{equation}
        F_{i,j} = 0 \iff G_{i,j} = 0.
        \label{eq:block_covanish}
    \end{equation}
    If any $F_{i,j} \neq 0$ then, by $(1,1)$-quantum-nonvanishing of $\fc$, there is a 
    $\widehat{F_{i,j}^{\shortuparrow}} =: 
    \big(\widehat{F_{k,\ell}}\big)_{k,\ell \in [p]} \in \dprop{\fc}{1}{1}$ such that 
    \begin{align*}
        &0 \neq \Big\langle F_{i,j}^{\shortuparrow}, \widehat{F_{i,j}^{\shortuparrow}} \Big\rangle
        = \tr\Big(F_{i,j}^{\shortuparrow} \widehat{F_{i,j}^{\shortuparrow}}\Big) \\
        &= \tr\left(
        \begin{bmatrix}
            0 & \hdots & 0 & \hdots\hdots & 0 \\
            \vspace{-0.25cm}
            \vdots & & \vdots & & \vdots \\
            \vdots & & \vdots & & \vdots \\
            0 & \hdots & F_{i,j} & \hdots\hdots & 0 \\
            \vdots &  & \vdots & & \vdots \\
            0 & \hdots & 0 & \hdots\hdots & 0
        \end{bmatrix}
        \begin{bmatrix}
            * & \hdots\hdots & * & \hdots & * \\
            \vdots & & \vdots & & \vdots \\
            * & \hdots\hdots & \widehat{F_{j,i}} & \hdots & * \\
            \vspace{-0.25cm}
            \vdots &  & \vdots & & \vdots \\
            \vdots & & \vdots & & \vdots \\
            * & \hdots\hdots & * & \hdots & *
        \end{bmatrix}       
         \right)
        = \tr\Big(F_{i,j} \widehat{F_{j,i}}\Big).
        \numberthis\label{eq:block_i}
    \end{align*}
    Note that $F_{i,j} \widehat{F_{j,i}}$ is the
    $(X_i,X_i)$-block of $F_{i,j}^{\shortuparrow} \widehat{F_{i,j}^{\shortuparrow}} \in 
    \prop{\fc}$
    in \eqref{eq:block_i}, so $F_{i,j} \widehat{F_{j,i}} \in \prop{\fc}_{X_i}$. But $\prop{\fc}_{X_i}$ is
    $(1,1)$-trivial, so $F_{i,j} \widehat{F_{j,i}} = \lambda_{F,i,j} I_{X_i}$ for some 
    $\lambda_{F,i,j} \neq 0$. We simultaneously have
    \begin{align*}
        &0 \neq 
        \tr\Big( F_{i,j}^{\shortuparrow} \widehat{F_{i,j}^{\shortuparrow}} \Big)
        = \tr\Big( \widehat{F_{i,j}^{\shortuparrow}} F_{i,j}^{\shortuparrow}\Big) \\
        &= \tr\left(
        \begin{bmatrix}
            * & \hdots\hdots & * & \hdots & * \\
            \vdots & & \vdots & & \vdots \\
            * & \hdots\hdots & \widehat{F_{j,i}} & \hdots & * \\
            \vspace{-0.25cm}
            \vdots &  & \vdots & & \vdots \\
            \vdots & & \vdots & & \vdots \\
            * & \hdots\hdots & * & \hdots & *
        \end{bmatrix}
        \begin{bmatrix}
            0 & \hdots & 0 & \hdots\hdots & 0 \\
            \vspace{-0.25cm}
            \vdots & & \vdots & & \vdots \\
            \vdots & & \vdots & & \vdots \\
            0 & \hdots & F_{i,j} & \hdots\hdots & 0 \\
            \vdots &  & \vdots & & \vdots \\
            0 & \hdots & 0 & \hdots\hdots & 0
        \end{bmatrix}
        \right)
        = \tr\Big(\widehat{F_{j,i}} F_{i,j}\Big),
    \end{align*}
    and the $(1,1)$-triviality of $\fc_{X_j}$ gives 
    $\widehat{F_{j,i}} F_{i,j}= \lambda_{F,i,j}' I_{X_j}$ for $\lambda_{F,i,j}' \neq 0$. Hence
    $F_{i,j}$ and $\widehat{F_{j,i}}$ are both left and right-invertible, so must be square,
    giving $|X_i| = |X_j|$ and $\lambda_{F,i,j} = \lambda_{F,i,j}'$. On the $\gc$ side, 
    \eqref{eq:block_covanish} gives $G_{i,j} \neq 0$ as well, so
    let $\widehat{F_{i,j}^{\shortuparrow}} \leftrightsquigarrow
    \widehat{G_{i,j}^{\shortuparrow}} = (\widehat{G_{k,\ell}})_{k,\ell \in [p]}$. 
    In general, if $F, \widetilde{F} \in \dprop{\fc}{1}{1}$, then $F_{i,j} \widetilde{F}_{j,k}$ is the
    $(X_i,X_k)$-block of $F_{i,j}^\shortuparrow \widetilde{F}_{j,k}^\shortuparrow \in 
    \dprop{\fc}{1}{1}$. In particular, $F_{i,j} \widetilde{F}_{j,i} \in 
    \dprop{\prop{\fc}_{X_i}}{1}{1}$. Then, since
    $\dprop{\prop{\fc}_{X_i}}{1}{1} = \dprop{\prop{\gc}_{X_i}}{1}{1}$, 
    \begin{equation}
        F_{i,j} \widetilde{F}_{j,i} = G_{i,j} \widetilde{G}_{j,i} \text{ for every }
        \dprop{\fc}{1}{1} \ni F, \widetilde{F} \leftrightsquigarrow G, \widetilde{G}
        \in \dprop{\gc}{1}{1}.
        \label{eq:i_block_equal}
    \end{equation}
    In particular,
    \begin{equation}
        F_{i,j} \widehat{F_{j,i}} = \widehat{F_{j,i}} F_{i,j} =
        G_{i,j} \widehat{G_{j,i}} = \widehat{G_{j,i}} G_{i,j} = \lambda_{F,i,j} I_{|X_i|}.
        \label{eq:fij}
    \end{equation}

    For $k < p$, fix $F^{(k)} \in \dprop{\fc}{1}{1}$ with $F^{(k)}_{k,p} \neq 0$, if any such $F^{(k)}$ 
    exists, and let $F^{(k)} \leftrightsquigarrow G^{(k)}$. Define
    \begin{equation}
        \label{eq:tp}
        T_p = I_{X_p} \text{ and } T_k = \begin{cases} 
            \lambda_{F^{(k)},k,p}^{-1} G^{(k)}_{k,p} \widehat{F^{(k)}_{p,k}}
            & \exists F' \in \dprop{\fc}{1}{1} \text{ such that } F'_{k,p} \neq 0 \\
            I_{X_k} & \text{otherwise}
        \end{cases}
        \in \kk^{X_k \times X_k}
    \end{equation}
    and $T := \bigoplus_{k=1}^p T_k$. 
    By \eqref{eq:fij}, $T$ is invertible and $T^{-1} = \bigoplus_{k=1}^p T_k^{-1}$, where
    \begin{equation}
        \label{eq:tp_inv}
        T_p^{-1} = I_{X_p} \text{ and }
        T_k^{-1} = \begin{cases} 
            \lambda_{F^{(k)},k,p}^{-1} F^{(k)}_{k,p} \widehat{G^{(k)}_{p,k}}
            &\exists F' \in \dprop{\fc}{1}{1} \text{ such that } F'_{k,p} \neq 0 \\
            I_{X_k} & \text{otherwise.}
        \end{cases}
    \end{equation}
    We claim that $T F T^{-1} = G$ for every $\bdprop{\prop{\fc}_{X_1 \cup \ldots \cup X_p}}{1}{1} 
    \ni F \leftrightsquigarrow
    G \in \bdprop{\prop{\gc}_{X_1 \cup \ldots \cup X_p}}{1}{1}$. This is equivalent to
    $T_i F_{i,j} T_j^{-1} = G_{i,j}$
    for arbitrary $i,j \leq p$. If $F_{i,j} = 0$ then $G_{i,j} = 0$ by 
    \eqref{eq:block_covanish} and we are done. Otherwise, we consider several cases.
    \begin{enumerate}
        \item If $i=j$, then $F_{i,i} = G_{i,i} = \lambda_i$ by \eqref{eq:block_form}, so
            $T_i F_{i,i} T_i^{-1} = G_{i,i}$.
        \item If $i \neq p = j$, then $F_{i,j} = F_{i,p} \neq 0$ implies that 
            $T_i = \lambda_{F^{(i)},i,p}^{-1} G^{(i)}_{i,p} \widehat{F^{(i)}_{p,i}}$, so, applying \eqref{eq:i_block_equal} followed by
            \eqref{eq:fij},
            \[
                T_i F_{i,p} T_p^{-1}
                = \lambda_{F^{(i)},i,p}^{-1} {G^{(i)}_{i,p}} \widehat{F^{(i)}_{p,i}} F_{i,p} I_{X_p}
                = \lambda_{F^{(i)},i,p}^{-1} {G^{(i)}_{i,p}} \widehat{G^{(i)}_{p,i}} G_{i,p}
                = G_{i,p}.
            \]
        \item If $i = p \neq j$, then $F_{i,j} = F_{p,j} \neq 0$ implies that
            $\widehat{F_{j,p}} \neq 0$ by \eqref{eq:fij}, so
            $T_j^{-1} = \lambda_{F^{(j)},j,p}^{-1} F^{(j)}_{j,p} \widehat{G^{(j)}_{p,j}}$ and,
            applying \eqref{eq:i_block_equal} followed by \eqref{eq:fij},
            \[
                T_p F_{p,j} T_j^{-1}
                = \lambda_{F^{(j)},j,p}^{-1} I_{X_p} F_{p,j} F^{(j)}_{j,p} \widehat{G^{(j)}_{p,j}}
                = \lambda_{F^{(j)},j,p}^{-1} G_{p,j} G^{(j)}_{j,p} \widehat{G^{(j)}_{p,j}}
                = G_{p,j}.
            \]
        \item If $i,j,p$ are all distinct, then $F_{i,j} = G_{i,j}$ by induction, so if 
            $T_i = T_j^{-1} = I$ then we are done. Otherwise, there are two possibilities.
            \begin{enumerate}
                \item If $T_i \neq I$ then by \eqref{eq:tp} there is a $F'_{i,p} \neq 0$.
                Now $\widehat{F_{j,i}}$ and $F'_{i,p}$ are 
                invertible by \eqref{eq:fij}, so $\widehat{F_{j,i}} F'_{i,p} \neq 0$
                is the $(X_j,X_p)$-block of
                $\widehat{F_{j,i}}^\shortuparrow (F'_{i,p})^\shortuparrow$. Thus
                $T_j^{-1} = \lambda_{F^{(j)},j,p} F^{(j)}_{j,p} \widehat{G^{(j)}_{p,j}}$ by \eqref{eq:tp_inv}.

                \item If $T_j^{-1} \neq I$ then by \eqref{eq:tp_inv}
                    there is a $F'_{j,p} \neq 0$. Now $F_{i,j}$
                    and $F'_{j,p}$ are invertible by \eqref{eq:fij}, so 
                    $F_{i,j} F'_{j,p} \neq 0$ is the $(X_i,X_p)$-block of
                    $F_{i,j}^\shortuparrow (F'_{j,p})^\shortuparrow$. Thus
                    $T_i = \lambda_{F^{(i)},i,p} G^{(i)}_{i,p} \widehat{F^{(i)}_{p,i}}$ by \eqref{eq:tp}.
            \end{enumerate}
            In either case, both $T_i$ and $T_j^{-1}$ fall into the respective first cases in
            \eqref{eq:tp} and \eqref{eq:tp_inv}. Therefore, by reasoning similar to
            \eqref{eq:i_block_equal}, followed by \eqref{eq:fij},
            \[
                T_i F_{i,j} T_j^{-1} 
                = \lambda_{F^{(i)},i,p}^{-1} \lambda_{F^{(j)},j,p}^{-1}
                G^{(i)}_{i,p} \widehat{F^{(i)}_{p,i}} F_{i,j} F^{(j)}_{j,p} \widehat{G^{(j)}_{p,j}}
                = \lambda_{F^{(i)},i,p}^{-1} \lambda_{F^{(j)},j,p}^{-1}
                G^{(i)}_{i,p} \widehat{G^{(i)}_{p,i}} G_{i,j} G^{(j)}_{j,p} \widehat{G^{(j)}_{p,j}}
                = G_{i,j}.
            \]
    \end{enumerate}
    Now, after transforming $\fc$ by $T \oplus I_{X_{p+1}} \oplus \ldots \oplus I_{X_m}$, we
    have $\bdprop{\prop{\fc}_{X_1 \cup \ldots \cup X_p}}{1}{1} = 
    \bdprop{\prop{\gc}_{X_1 \cup \ldots \cup X_p}}{1}{1}$. After similar transforms at each level
    of the induction, we obtain 
    $\fc \subset \dprop{\fc}{1}{1} = \dprop{\gc}{1}{1} \supset \gc$.
\end{proof}

\subsection{The Bipartite Case}
\label{sec:bipartite}
To prove \autoref{thm:main2}, our second main result, we need the following construction,
frequently employed in the study of counting indistinguishability 
\cite{lovasz,young2022equality,cai_planar_2023,orthogonal}.
\begin{definition}[$\oplus$]
    Let $\fc$ and $\gc$ be sets on domains $V(\fc)$ and $V(\gc)$, respectively. Define
    a set $\fc \oplus \gc = \{F \oplus G \mid \fc \ni F \leftrightsquigarrow G \in \gc\}$ on
    domain $V(\fc) \sqcup V(\gc)$ and bijective with $\fc$ and $\gc$, where
    \[
        (F \oplus G)(\va) = \begin{cases} F(\va) & \va \in V(\fc)^n \\ G(\va) 
        & \va \in V(\gc)^n \\ 0 & \text{otherwise} \end{cases}
    \]
    for $n$-ary $F$ and $G$ and $\va \in (V(\fc) \sqcup V(\gc))^n$.
\end{definition}
Providing any input from $V(\fc)$ to a connected $\prop{\fc} \oplus \prop{\gc}$-gadget forces
all edges in the gadget take values in $V(\fc)$ (all other edge assignments give 0).
Note the difference between $\prop{\fc} \oplus \prop{\gc}$ and $\prop{\fc \oplus \gc}$.
Every signature in $\prop{\fc} \oplus \prop{\gc}$, such as 
$(F_1 \otimes F_2) \oplus (G_1 \otimes G_2)$, is zero on mixed inputs from $\fc$ and $\gc$.
On the other hand,
$(F_1 \oplus G_1) \otimes (F_2 \oplus G_2) \in \prop{\fc \oplus \gc}$, being
disconnected, could be nonzero on
inputs from $V(\fc)$ to the first factor and $V(\gc)$ to the second and vice-versa. 

For
$K \in \prop{\prop{\fc} \oplus \prop{\gc}}$, use $K|_{\fc}$ as shorthand for $K|_{V(\fc)}$. 
\begin{proposition}
    \label{prop:oplus}
    If $K \in \prop{\prop{\fc} \oplus \prop{\gc}}$, then
    $\prop{\fc} \ni K|_{\fc} \leftrightsquigarrow K|_{\gc} \in \prop{\gc}$.
\end{proposition}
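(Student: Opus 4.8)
The plan is to show that, thanks to the block structure of $\oplus$, restriction to the subdomain $V(\fc)$ commutes with quantum gadget construction. First, write $K$ as the signature of a quantum-$(\prop{\fc}\oplus\prop{\gc})$-gadget $\vk$; by linearity we may assume $\vk$ is a single $(\prop{\fc}\oplus\prop{\gc})$-gadget, each of whose vertices is assigned a generator $S\oplus S'$ with $\prop{\fc}\ni S\leftrightsquigarrow S'\in\prop{\gc}$. Each such generator, being an element of $\prop{\fc}\oplus\prop{\gc}$, comes with a chosen quantum-$\fc$-gadget $\vs$ of signature $S$ such that $\vs_{\fc\to\gc}$ has signature $S'$. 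I would then form $\vk_{\fc}$, the quantum-$\fc$-gadget obtained from $\vk$ by replacing each vertex assigned $S\oplus S'$ with the corresponding $\vs$ (glued along the incident edges, dangling ends ordered consistently), and similarly $\vk_{\gc}$ using $\vs_{\fc\to\gc}$. Since $\vk_{\gc}=(\vk_{\fc})_{\fc\to\gc}$ by construction, their signatures satisfy $\prop{\fc}\ni K_{\fc}\leftrightsquigarrow K_{\gc}\in\prop{\gc}$.

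It then remains to identify $K_{\fc}$ with $K|_{\fc}$ (and, symmetrically, $K_{\gc}$ with $K|_{\gc}$). Fix dangling-edge values $\va\in V(\fc)^{n}$ and consider the Holant sum computing $K(\va)$. A generator $S\oplus S'$ is nonzero only on inputs lying entirely in $V(\fc)$ or entirely in $V(\gc)$, so any vertex incident to an edge already carrying a $V(\fc)$-value either contributes $0$ or has all its incident edges valued in $V(\fc)$. Because every connected component of $\vk$ carries at least one dangling edge — by the convention that quantum-gadget terms have no dangling-edge-free components — this forces every internal edge of $\vk$ into $V(\fc)$, and on such assignments each $S\oplus S'$ acts as $S$. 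Hence $K(\va)$ equals the Holant value, on domain $V(\fc)$ with the same dangling values $\va$, of the $\prop{\fc}$-gadget obtained from $\vk$ by replacing each $S\oplus S'$ with $S$; expanding the $\prop{\fc}$-generators into $\fc$-gadgets, this is exactly $K_{\fc}(\va)$. Since $K|_{\fc}$ is by definition $K$ restricted to inputs in $V(\fc)$, we get $K|_{\fc}=K_{\fc}\in\prop{\fc}$, and together with the previous paragraph, $\prop{\fc}\ni K|_{\fc}\leftrightsquigarrow K|_{\gc}\in\prop{\gc}$.

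The only delicate point is the forcing step: one must verify that the ``no mixed support'' property of $F\oplus G$ genuinely propagates $V(\fc)$-valuedness across an entire connected component once a single dangling edge is pinned to $V(\fc)$, and that this survives the term-by-term linear expansion of $\vk$ — which is precisely where the convention excluding closed components from quantum-gadget terms is used. (This is also exactly why the argument does not contradict the failure of $\prop{\fc}_X\supset\prop{\prop{\fc}_X}$ in general: there the off-diagonal blocks of a matrix in $\prop{\fc}$ spoil this propagation, whereas $\oplus$ zeroes them out.) Everything else is routine bookkeeping on how the nested constructions $\oplus$, $\prop{\fc}\oplus\prop{\gc}$, and $\prop{\prop{\fc}\oplus\prop{\gc}}$ interact with the operation $(\cdot)_{\fc\to\gc}$.
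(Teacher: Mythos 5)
Your proof is correct and matches the paper's argument: both observe that pinning the dangling edges to $V(\fc)$ forces every internal edge into $V(\fc)$ (using the block structure of $\oplus$ together with the convention that no term of a quantum gadget has a closed component), identify $K|_{\fc}$ as the signature of the gadget with each $S\oplus S'$ replaced by $S$, and conclude via the definition of $\leftrightsquigarrow$. The only cosmetic difference is that you expand each $S\in\prop{\fc}$ into an underlying quantum-$\fc$-gadget, whereas the paper stops at the $\prop{\fc}$-gadget $\vk_{\prop{\fc}\oplus\prop{\gc}\to\prop{\fc}}$ and invokes closure of $\prop{\fc}$ under quantum gadget construction.
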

\begin{proof}
    By definition, $K$ is the signature of some quantum $\prop{\fc} \oplus \prop{\gc}$-gadget 
    $\vk$ with no connected components without a dangling edge. 
    To construct $K|_{\fc}$, restrict all inputs to $\vk$ to $V(\fc)$. As discussed above,
    this restricts all edges of all gadgets composing $\vk$ to $V(\fc)$.
    Thus $K|_{\fc}$ is the signature of $\vk_{\prop{\fc} \oplus \prop{\gc} \to \prop{\fc}}$.
    Similarly, $\vk_{\prop{\fc} \oplus \prop{\gc} \to \prop{\gc}}$ has signature $K|_{\gc}$,
    and the result follows.
\end{proof}

\begin{proposition}
    \label{prop:oplus2}
    Assume $\fc$ and $\gc$ are Bi-Holant-indistinguishable and let
    $\prop{\fc} \ni F \leftrightsquigarrow G \in \prop{\gc}$ and 
    $K \in \prop{\prop{\fc} \oplus \prop{\gc}}$. Then
    \[
        \langle K, F \oplus G \rangle 
        = \langle K|_{\fc}, F \rangle + \langle K|_{\gc}, G \rangle
        = 2 \langle K|_{\fc}, F \rangle.
    \]
\end{proposition}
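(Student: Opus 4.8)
The plan is to recognize $\langle K, F\oplus G\rangle$ as the Bi-Holant value of a single \emph{connected} grid over the signature set $\prop{\fc}\oplus\prop{\gc}$, and then split that value according to whether an edge-assignment takes its values in $V(\fc)$ or in $V(\gc)$.

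First I would realize both sides of the pairing by gadgets. Write $K$ as the signature of a quantum $(\prop{\fc}\oplus\prop{\gc})$-gadget $\vk$ — recalling we may assume no term of $\vk$ has a connected component without a dangling edge — and realize $F\oplus G$ by a single vertex $v'$ (noting $F\oplus G\in\prop{\fc}\oplus\prop{\gc}$). By the definition of the bilinear form, $\langle K, F\oplus G\rangle = \biholant_{\prop{\fc}\oplus\prop{\gc}}(\Omega)$, where $\Omega$ is obtained, bilinearly over the terms of $\vk$, by attaching every dangling edge of $\vk$ to the matching input of $v'$. The key structural point is that each term of $\Omega$ is connected: in a term of $\vk$ every connected component has a dangling edge, and every dangling edge becomes incident to $v'$.

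Next I would carry out the side-split. Every signature of $\prop{\fc}\oplus\prop{\gc}$ vanishes on inputs mixing $V(\fc)$ and $V(\gc)$, so in a nonzero summand of $\sum_\sigma \prod_v S_v(\sigma(\delta(v)))$ each vertex sees edges of only one side; since $\Omega$ is connected this forces $\sigma$ to be valued entirely in $V(\fc)$ or entirely in $V(\gc)$. Summing over the assignments of the first kind yields $\biholant_{\prop{\fc}}$ of the grid obtained from $\Omega$ by replacing every $S\oplus S'$ with $S$, which is $\vk_{\prop{\fc}\oplus\prop{\gc}\to\prop{\fc}}$ (of signature $K|_{\fc}$, by \autoref{prop:oplus}) contracted with the vertex $F$, i.e.\ $\langle K|_{\fc}, F\rangle$. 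Summing over the second kind yields $\langle K|_{\gc}, G\rangle$. This gives the first asserted equality.

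Finally, for $\langle K|_{\fc}, F\rangle = \langle K|_{\gc}, G\rangle$ I would use that Bi-Holant-indistinguishability of $\fc$ and $\gc$ transfers to $\prop{\fc}$ and $\prop{\gc}$, together with $K|_{\fc}\leftrightsquigarrow K|_{\gc}$ (\autoref{prop:oplus}) and $F\leftrightsquigarrow G$: the $\prop{\fc}$-grid computing $\langle K|_{\fc}, F\rangle$ is sent by the $\prop{\fc}\to\prop{\gc}$ substitution to the $\prop{\gc}$-grid computing $\langle K|_{\gc}, G\rangle$, so the two Holant values coincide, and adding the two equal summands produces $2\langle K|_{\fc}, F\rangle$. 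I expect the delicate part to be essentially bookkeeping: justifying connectedness of $\Omega$ after the bilinear extension over the terms of $\vk$, and checking term by term that the monochromatic restriction of $\Omega$ is exactly the grid produced by the $(\cdot)_{\prop{\fc}\oplus\prop{\gc}\to\prop{\fc}}$ (resp.\ $\to\prop{\gc}$) substitution contracted against $F$ (resp.\ $G$).
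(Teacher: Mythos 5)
Your proof is correct, and the essential idea — split the pairing $\langle K, F\oplus G\rangle$ according to whether the common inputs lie in $V(\fc)$ or in $V(\gc)$, using that mixed inputs give zero — is the same as the paper's. However, you do more work than the paper does for the first equality. The paper does not realize $K$ as a gadget $\vk$, and does not invoke connectedness of the resulting grid. It only uses that the \emph{single vertex} $F\oplus G$ vanishes on mixed inputs: since the dangling inputs of $K$ are exactly the inputs of $F\oplus G$ in $\langle K, F\oplus G\rangle = \sum_{\va} K(\va)\,(F\oplus G)(\va)$, any nonzero term has $\va\in V(\fc)^n$ or $\va\in V(\gc)^n$, and then $K(\va)=K|_{\fc}(\va)$ or $K(\va)=K|_{\gc}(\va)$ by the very definition of the subdomain restriction $(\cdot)|_X$. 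That already gives $\langle K, F\oplus G\rangle = \langle K|_{\fc},F\rangle + \langle K|_{\gc},G\rangle$ with no need to argue that the \emph{internal} edges of the gadget for $K$ are also monochromatic; that stronger fact (and the assumption that every component of a term of $\vk$ has a dangling edge) is the content of \autoref{prop:oplus}, which you are already citing. In effect you re-derive \autoref{prop:oplus} inside your proof when you could have used it as a black box only for the second equality, as the paper does. Your second-equality argument matches the paper's.
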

\begin{proof}
    In each nonzero term of $\langle K, F \oplus G \rangle$, either all
    inputs to both $K$ and $F \oplus G$ are from $V(\fc)$, 
    or all inputs to both $K$ are $F \oplus G$ are from $V(\gc)$, giving the first
    equality. The second equality follows from indistinguishability and \autoref{prop:oplus}.
\end{proof}

\begin{lemma}
    \label{lem:oplus}
    Assume $\fc$ and $\gc$ are Bi-Holant-indistinguishable. Then
    $\prop{\fc} \oplus \prop{\gc}$ is quantum-nonvanishing if and only if $\fc$ and $\gc$ are 
    both quantum-nonvanishing.
\end{lemma}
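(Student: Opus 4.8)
The plan is to prove the two implications separately, using \autoref{prop:oplus} and \autoref{prop:oplus2}. For the forward direction, assume $\prop{\fc}\oplus\prop{\gc}$ is quantum-nonvanishing; since the hypothesis is symmetric in $\fc$ and $\gc$ up to relabeling the domain $V(\fc)\sqcup V(\gc)$, it suffices to show $\fc$ is quantum-nonvanishing. Given a nonzero $F\in\dprop{\fc}{\ell}{r}$ with correspondent $G\in\dprop{\gc}{\ell}{r}$, the signature $F\oplus G\in\dprop{\prop{\fc}\oplus\prop{\gc}}{\ell}{r}$ is nonzero since $(F\oplus G)|_\fc=F$, so there is a $K\in\dprop{\prop{\fc}\oplus\prop{\gc}}{r}{\ell}$ with $\langle K,F\oplus G\rangle\neq 0$. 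By \autoref{prop:oplus2} this equals $2\langle K|_\fc,F\rangle$, and by \autoref{prop:oplus} we have $K|_\fc\in\dprop{\fc}{r}{\ell}$; hence $F$ is $\fc$-nonvanishing, and since $(\ell,r)$ was arbitrary, $\fc$ is quantum-nonvanishing.

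For the reverse direction, assume $\fc$ and $\gc$ are both quantum-nonvanishing and fix a nonzero $K\in\dprop{\prop{\fc}\oplus\prop{\gc}}{\ell}{r}$; write $V=V(\fc)\sqcup V(\gc)$. First I would set up the block decomposition $\leftindex_\ell{\tc}_r(\kk^V)=\bigoplus_S\leftindex_\ell{\tc}_r^S$, indexed by \emph{side patterns} $S$ that assign each of the $\ell+r$ slots to $V(\fc)$ or to $V(\gc)$, and write $K=\sum_S K^S$. Two observations drive everything: (a) a wire has signature $I_V=I_{V(\fc)}^\shortuparrow+I_{V(\gc)}^\shortuparrow$, which is side-preserving, so the standard bilinear form -- and more generally any contraction along wires -- is block-diagonal, $\langle K,K'\rangle=\sum_S\langle K^S,(K')^{S^\vee}\rangle$ for the matching pattern $S^\vee$; and (b) since every generator of $\prop{\fc}\oplus\prop{\gc}$ vanishes on inputs mixing $V(\fc)$ and $V(\gc)$, in any $\prop{\fc}\oplus\prop{\gc}$-gadget each connected component is forced entirely onto one side of $V$, so $K^S\in\dprop{\fc}{k_1}{k_2}^\shortuparrow\otimes\dprop{\gc}{\ell-k_1}{r-k_2}^\shortuparrow$, where $k_1$ and $k_2$ count the $V(\fc)$-slots of $S$ among the contravariant and the covariant slots.

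If the all-$V(\fc)$ block $K|_\fc$ is nonzero, then quantum-nonvanishing of $\fc$ supplies $\widehat F\in\dprop{\fc}{r}{\ell}$ with $\langle K|_\fc,\widehat F\rangle\neq 0$, and with $\widehat G$ the correspondent of $\widehat F$, \autoref{prop:oplus2} gives $\langle K,\widehat F\oplus\widehat G\rangle=2\langle K|_\fc,\widehat F\rangle\neq 0$, so $K$ is $\prop{\fc}\oplus\prop{\gc}$-nonvanishing; the case $K|_\gc\neq 0$ is symmetric. The remaining case is $K|_\fc=K|_\gc=0$, so that every nonzero block of $K$ is \emph{mixed}. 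Here the key move is to choose $S_0$ with $K^{S_0}\neq 0$ \emph{maximizing the number of $V(\fc)$-slots}; let $P$ and $Q$ be its $V(\fc)$- and $V(\gc)$-slots and $k_1,k_2$ the counts for $S_0$. Write $K^{S_0}=\sum_i A_i\otimes B_i$ (factors interleaved according to $S_0$) with the $A_i\in\dprop{\fc}{k_1}{k_2}$ linearly independent and $B_1\neq 0$, use quantum-nonvanishing of $\gc$ to pick $\widehat B_1\in\dprop{\gc}{r-k_2}{\ell-k_1}$ with $\langle B_1,\widehat B_1\rangle\neq 0$, and lift $\widehat B_1$ to $\widehat B_1^\oplus\in\prop{\fc}\oplus\prop{\gc}$ -- by replacing each signature in a realizing quantum $\gc$-gadget with its $\fc\oplus\gc$-correspondent, so that $\widehat B_1^\oplus|_\gc=\widehat B_1$. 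Contract $K$ with $\widehat B_1^\oplus$ along the slots $Q$ to form $K'\in\dprop{\prop{\fc}\oplus\prop{\gc}}{k_1}{k_2}$. Expanding the contraction blockwise using (a): block $S_0$ contributes $\sum_i\langle B_i,\widehat B_1\rangle A_i$ to the all-$V(\fc)$ block of $K'$, which is nonzero because the $A_i$ are independent and $\langle B_1,\widehat B_1\rangle\neq 0$; every other block of $K$ contributing there must have strictly more $V(\fc)$-slots than $S_0$, hence vanishes by maximality (or is the zero block $K|_\fc$). Thus $K'|_\fc\neq 0$, so the previous case applied to $K'$ yields $\widehat{K'}$ with $\langle K',\widehat{K'}\rangle\neq 0$, and reversing the contraction along $Q$ gives $\langle K,\widehat{K'}\otimes\widehat B_1^\oplus\rangle=\langle K',\widehat{K'}\rangle\neq 0$, so $K$ is $\prop{\fc}\oplus\prop{\gc}$-nonvanishing.

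I expect the mixed-block case to be the only genuine obstacle: pairing $K$ against a lift of a test tensor supported on its $V(\gc)$-slots creates cross-terms from the other mixed blocks, which could cancel the desired term, and choosing $S_0$ with the maximal number of $V(\fc)$-slots is precisely what kills all of those cross-terms. Everything else is routine -- verifying (a) and (b), checking that $\widehat B_1^\oplus$ restricts to $\widehat B_1$ on $V(\gc)$, and two bookkeeping uses of \autoref{prop:oplus2}.
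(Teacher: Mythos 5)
Your forward direction coincides exactly with the paper's. Your reverse direction reaches the same conclusion by a genuinely different route, and I believe it is correct. The paper picks any nontrivial partition $X_1 \sqcup X_2$ of the slots with $K|_{X_1\gets V(\fc),X_2\gets V(\gc)}\neq 0$, decomposes the realizing quantum gadget $\vk$ into a part $\mathbf{M}$ (terms with a path between $X_1$- and $X_2$-slots, which must vanish on mixed blocks) plus terms $\vj_i$ that split into two disconnected pieces, and then tests $K$ against the structured witness $(\widehat{F_1}\oplus\widehat{G_1})\otimes(\widehat{F_2}\oplus\widehat{G_2})$. The point of that witness is that its $\oplus\otimes\oplus$ shape collapses the pairing to only four block contributions, of which the two pure ones vanish by hypothesis and the two mixed ones are handled by the same linear-independence plus quantum-nonvanishing-of-$\gc$-then-$\fc$ argument you use. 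Your proof instead chooses a side pattern $S_0$ with $K^{S_0}\neq 0$ maximizing the number of $V(\fc)$-slots, contracts $K$ along the $V(\gc)$-slots $Q$ against a lift $\widehat{B_1}^\oplus$ of a $\gc$-test, and observes that maximality of $S_0$ (together with $K|_\fc=0$) kills every other block that could contribute to $(K')|_\fc$, reducing to the pure case already settled. So the paper kills cross-terms through the structure of the \emph{witness}, while you kill them through the structure of $K$ via the maximality choice; the paper is a one-step direct computation, yours a two-step reduction. Both ultimately exploit the same two facts -- that connected components of a $\prop{\fc}\oplus\prop{\gc}$-gadget are forced onto one side, giving a tensor decomposition $\fc$-part $\otimes$ $\gc$-part of each block, and that contractions are side-preserving -- and both apply quantum-nonvanishing of $\gc$ and then $\fc$ in the same order. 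Your version has the mild advantage of not needing to introduce the $\mathbf{M}$/$\vj_i$ gadget decomposition explicitly, at the cost of the extra bookkeeping involved in the side-pattern block decomposition; conceptually they buy roughly the same thing.
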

\begin{proof}
    $(\Rightarrow)$: We will show that
    $\fc$ is quantum-nonvanishing; the proof for $\gc$ is similar. Let $F \in \prop{\fc}$ be nonzero,
    and $F \leftrightsquigarrow G$. Since $F \oplus G \in \prop{\fc} \oplus \prop{\gc}$, 
    the quantum-nonvanishing of $\prop{\fc} \oplus \prop{\gc}$ guarantees the existence of a
    $K \in \prop{\prop{\fc} \oplus \prop{\gc}}$ such that, by \autoref{prop:oplus2},
    $0 \neq \langle K, F \oplus G\rangle = 2 \langle K|_{\fc}, F \rangle$.
    \autoref{prop:oplus} asserts that $K|_{\fc} \in \prop{\fc}$, so $K|_{\fc}$ witnesses that
    $F$ is $\fc$-nonvanishing.

    $(\Leftarrow)$: Assume $\fc$ and $\gc$ are quantum-nonvanishing, and let 
    $0 \neq K \in \prop{\prop{\fc} \oplus \prop{\gc}}$ be the signature of a quantum 
    $\prop{\fc} \oplus \prop{\gc}$-gadget $\vk$.
    First suppose that $K|_{\fc} \neq 0$. By \autoref{prop:oplus}, 
    $K|_{\fc} \in \prop{\fc}$, so
    by the quantum-nonvanishing of $\fc$ there is a $\widehat{F} \in \prop{\fc}$ such that 
    $\langle K|_{\fc}, \widehat{F} \rangle \neq 0$. Then, letting 
    $\widehat{F} \leftrightsquigarrow \widehat{G}$, \autoref{prop:oplus2} gives
    $\langle K, \widehat{F} \oplus \widehat{G} \rangle = 
    2 \langle K|_{\fc}, \widehat{F} \rangle \neq 0$,
    so $\widehat{F} \oplus \widehat{G} \in \prop{\fc} \oplus \prop{\gc}$ witnesses that
    $K$ is $\prop{\fc} \oplus \prop{\gc}$-nonvanishing.

    If $K|_{\fc} = 0$ then $K|_{\gc} = 0$ as well by Propositions \ref{prop:oplus} and 
    \ref{prop:covanishing}.
    Since $K \neq 0$, there is a nontrivial partition of the inputs of $K$ into 
    $X_1 \sqcup X_2$ such that the block $K|_{X_1 \gets V(\fc), X_2 \gets V(\gc)}$ of $K$ 
    (in which inputs in $X_1$ are restricted to $V(\fc)$ and inputs in $X_2$ are
    restricted to $V(\gc)$) is nonzero. Let $\vk = \mathbf{M} + \sum_{i=1}^j c_i \vj_i$,
    where each $\vj_i$ is a $\prop{\fc} \oplus \prop{\gc}$-gadget composed of two components 
    $\vj_{i,1}$ and $\vj_{i,2}$, not necessarily themselves connected but disconnected from 
    each other, such that the dangling edges
    of $\vj_i$ indexed by $X_1$ (resp. $X_2$) are incident to $\vj_{i,1}$ (resp. $\vj_{i,2}$),
    and $\mathbf{M}$
    is the quantum-$\prop{\fc} \oplus \prop{\gc}$-gadget composed of all terms of $\vk$ in
    which there is a path between some input indexed by $X_1$ and some input indexed by
    $X_2$. Hence the signature $M$ of $\mathbf{M}$ satisfies
    \begin{equation}
        \label{eq:jm_blocks}
        M|_{X_1 \gets V(\fc), X_2 \gets V(\gc)} = M|_{X_1 \gets V(\gc), X_2 \gets V(\fc)} = 0.
    \end{equation}
    By reordering the left dangling edges and right dangling edges of $\vk$, 
    which does not change whether $K$ is
    $\prop{\fc} \oplus \prop{\gc}$-nonvanishing, we may assume 
    $\vj_i = \vj_{i,1} \otimes \vj_{i,2}$, so their signatures satisfy
    \begin{equation}
        J_i|_{X_1 \gets V(\fc), X_2 \gets V(\gc)} = J_{i,1}|_{\fc} \otimes J_{i,2}|_{\gc}
        \text{ and }
        J_i|_{X_1 \gets V(\gc), X_2 \gets V(\fc)} = J_{i,1}|_{\gc} \otimes J_{i,2}|_{\fc}.
        \label{eq:ji_decomp}
    \end{equation}

    \begin{figure}[ht!]
        \centering
        \begin{tikzpicture}[scale=.39]
\GraphInit[vstyle=Classic]
\SetUpEdge[style=-]

\def\wlen{1}
\def\wgap{0.4}
\def\la{3}
\def\lb{1}
\def\ra{3.5}
\def\rb{2.5}
\def\rc{\lb}
\def\boxd{\lb}
\def\boxy{4}
\def\boxx{3}
\def\boxgap{0.5}
\def\xsh{6.5}
\def\xxsh{12}

\draw[thin, color=gray] (-\wlen,\la) -- (0,\la);
\draw[thin, color=gray] (-\wlen,\lb) -- (0,\lb);
\draw[thin, color=gray] (\boxx,\ra) -- (\boxx+\wlen,\ra);
\draw[thin, color=gray] (\boxx,\rb) -- (\boxx+\wlen,\rb);
\draw[thin, color=gray] (\boxx,\rc) -- (\boxx+\wlen,\rc);
\filldraw[color=black!70, fill=blue!8] (0,0) rectangle (\boxx,\boxy);
\node at (\boxx/2,\boxy/2) {$M$};
\node at (\boxx+1.8,\boxy/2) {$+$};
\node at (\boxx+1.8,0) {\small $\textcolor{red}{\vk}$};
\draw[thick,color=red] (-\boxgap,-\boxgap) rectangle (\boxx+\xsh+\boxgap,\boxy+\boxgap);

\begin{scope}[xshift=\xsh cm]
    \draw[thin, color=gray] (-\wlen,\la) -- (0,\la);
    \draw[thin, color=gray] (-\wlen,\lb) -- (0,\lb);
    \draw[thin, color=gray] (\boxx,\ra) -- (\boxx+\wlen,\ra);
    \draw[thin, color=gray] (\boxx,\rb) -- (\boxx+\wlen,\rb);
    \draw[thin, color=gray] (\boxx,\rc) -- (\boxx+\wlen,\rc);
    \filldraw[color=black!70, fill=blue!8] (0,0) rectangle (\boxx,\boxy/2-\boxgap/2);
    \node at (\boxx/2,\boxy/4-\boxgap/4) {$J_{1,2}$};
    \filldraw[color=black!70, fill=blue!8] (0,\boxy/2+\boxgap/2) rectangle (\boxx,\boxy);
    \node at (\boxx/2,\boxy/2 + \boxy/4+\boxgap/4) {$J_{1,1}$};
\end{scope}

\begin{scope}[xshift=\xxsh cm]
    \draw[thin, color=gray] (-\wlen,\ra) .. controls +(0.5,0) .. (0,\ra/2+\rb/2);
    \draw[thin, color=gray] (-\wlen,\rb) .. controls +(0.5,0) .. (0,\ra/2+\rb/2);
    \draw[thin, color=gray] (0,\ra/2+\rb/2) -- (\wlen,\ra/2+\rb/2);
    \draw[thin, color=gray] (-\wlen,\rc) -- (\wlen,\rc);
    \node[draw, fill=black, regular polygon, regular polygon sides=4, minimum size = 7pt, inner sep = 1pt] at (0,\rc) {};
    \node at (\wlen*0.7,\rc-0.8) {\scriptsize $\widehat{F_2} \oplus \widehat{G_2}$};
    \node[draw, fill=black, regular polygon, regular polygon sides=5, minimum size = 7pt, inner sep = 1pt] at (0,\ra/2+\rb/2) {};
    \node at (\wlen,\ra/2+\rb/2-0.8) {\scriptsize $\widehat{F_1} \oplus \widehat{G_1}$};
\end{scope}

% wraparound edges
\draw[thin, color=gray] 
(-\wlen-\wgap,\lb) .. controls 
+(-1.5,0) and +(-3,0) ..
(0,-1) -- (\xxsh,-1) .. controls
+(3,0) and +(1.5,0) ..
(\xxsh+\wlen+\wgap,\lb);

\draw[thin, color=gray] 
(-\wlen-\wgap,\la) .. controls 
+(-1.5,0) and +(-3,0) ..
(0,\boxy+1) -- (\xxsh,\boxy+1) .. controls
+(3,0) and +(1.5,0) ..
(\xxsh+\wlen+\wgap,\la);

\begin{scope}[xshift=19 cm]
    \node at (-3.4,\la/2+\lb/2) {$=$};
    \draw[thin, color=gray] (-\wlen,\la) -- (0,\la);
    \draw[thin, color=gray] (-\wlen,\lb) -- (0,\lb);
    \draw[thin, color=gray] (\boxx,\ra) -- (\boxx+\wlen,\ra);
    \draw[thin, color=gray] (\boxx,\rb) -- (\boxx+\wlen,\rb);
    \draw[thin, color=gray] (\boxx,\rc) -- (\boxx+\wlen,\rc);
    \filldraw[color=black!70, fill=blue!8] (0,0) rectangle (\boxx,\boxy/2-\boxgap/2);
    \node at (\boxx/2,\boxy/4-\boxgap/4) {$J_{1,2}|_{\gc}$};
    \filldraw[color=black!70, fill=blue!8] (0,\boxy/2+\boxgap/2) rectangle (\boxx,\boxy);
    \node at (\boxx/2,\boxy/2 + \boxy/4+\boxgap/4) {$J_{1,1}|_{\fc}$};

    \begin{scope}[xshift=5.4 cm]
    \draw[thin, color=gray] (-\wlen,\ra) .. controls +(0.5,0) .. (0,\ra/2+\rb/2);
    \draw[thin, color=gray] (-\wlen,\rb) .. controls +(0.5,0) .. (0,\ra/2+\rb/2);
    \draw[thin, color=gray] (0,\ra/2+\rb/2) -- (\wlen,\ra/2+\rb/2);
    \draw[thin, color=gray] (-\wlen,\rc) -- (\wlen,\rc);
    \node[draw, fill=black, circle, minimum size = 5pt, inner sep = 1pt] at (0,\rc) {};
    \node at (0,\rc-0.8) {\scriptsize $\widehat{G_2}$};
    \node[draw, fill=black, circle, minimum size = 5pt, inner sep = 1pt] at (0,\ra/2+\rb/2) {};
    \node at (0,\ra/2+\rb/2-0.8) {\scriptsize $\widehat{F_1}$};
    \end{scope}

    % wraparound edges
    \draw[thin, color=gray] 
    (-\wlen-\wgap,\lb) .. controls 
    +(-1,0) and +(-3,0) ..
    (0,-0.6) -- (5.4,-0.6) .. controls
    +(3,0) and +(1,0) ..
    (\boxx + 3*\wlen + 2*\wgap,\lb);

    \draw[thin, color=gray] 
    (-\wlen-\wgap,\la) .. controls 
    +(-1,0) and +(-3,0) ..
    (0,\boxy+0.6) -- (5.4,\boxy+0.6) .. controls
    +(3,0) and +(1,0) ..
    (\boxx + 3*\wlen + 2*\wgap,\la);
\end{scope}

\begin{scope}[xshift=30 cm]
    \node at (-2.8,\la/2+\lb/2) {$+$};
    \draw[thin, color=gray] (-\wlen,\la) -- (0,\la);
    \draw[thin, color=gray] (-\wlen,\lb) -- (0,\lb);
    \draw[thin, color=gray] (\boxx,\ra) -- (\boxx+\wlen,\ra);
    \draw[thin, color=gray] (\boxx,\rb) -- (\boxx+\wlen,\rb);
    \draw[thin, color=gray] (\boxx,\rc) -- (\boxx+\wlen,\rc);
    \filldraw[color=black!70, fill=blue!8] (0,0) rectangle (\boxx,\boxy/2-\boxgap/2);
    \node at (\boxx/2,\boxy/4-\boxgap/4) {$J_{1,2}|_{\fc}$};
    \filldraw[color=black!70, fill=blue!8] (0,\boxy/2+\boxgap/2) rectangle (\boxx,\boxy);
    \node at (\boxx/2,\boxy/2 + \boxy/4+\boxgap/4) {$J_{1,1}|_{\gc}$};

    \begin{scope}[xshift=5.4 cm]
    \draw[thin, color=gray] (-\wlen,\ra) .. controls +(0.5,0) .. (0,\ra/2+\rb/2);
    \draw[thin, color=gray] (-\wlen,\rb) .. controls +(0.5,0) .. (0,\ra/2+\rb/2);
    \draw[thin, color=gray] (0,\ra/2+\rb/2) -- (\wlen,\ra/2+\rb/2);
    \draw[thin, color=gray] (-\wlen,\rc) -- (\wlen,\rc);
    \node[draw, fill=black, circle, minimum size = 5pt, inner sep = 1pt] at (0,\rc) {};
    \node at (0,\rc-0.8) {\scriptsize $\widehat{F_2}$};
    \node[draw, fill=black, circle, minimum size = 5pt, inner sep = 1pt] at (0,\ra/2+\rb/2) {};
    \node at (0,\ra/2+\rb/2-0.8) {\scriptsize $\widehat{G_1}$};
    \end{scope}

    % wraparound edges
    \draw[thin, color=gray] 
    (-\wlen-\wgap,\lb) .. controls 
    +(-1,0) and +(-3,0) ..
    (0,-0.6) -- (5.4,-0.6) .. controls
    +(3,0) and +(1,0) ..
    (\boxx + 3*\wlen + 2*\wgap,\lb);

    \draw[thin, color=gray] 
    (-\wlen-\wgap,\la) .. controls 
    +(-1,0) and +(-3,0) ..
    (0,\boxy+0.6) -- (5.4,\boxy+0.6) .. controls
    +(3,0) and +(1,0) ..
    (\boxx + 3*\wlen + 2*\wgap,\la);
\end{scope}
\end{tikzpicture}
        \caption{Illustrating \eqref{eq:tensor_inner_prod} for $\vk = \mathbf{M} + 
        \vj_{1,1} \otimes \vj_{1,2}$.}
        \label{fig:disconnected}
    \end{figure}
    For any $\prop{\fc} \ni \widehat{F_1}, \widehat{F_2} \leftrightsquigarrow \widehat{G_1}, 
    \widehat{G_2} \in \prop{\gc}$ of appropriate shape (see \autoref{fig:disconnected}), 
    reasoning similar to \autoref{prop:oplus2}, with the assumption that
    $K|_{\fc} = K|_{\gc} = 0$ and \eqref{eq:jm_blocks} and \eqref{eq:ji_decomp}, gives
    \begin{align*}
        &\langle K, (\widehat{F_1} \oplus \widehat{G_1}) \otimes (\widehat{F_2} \oplus \widehat{G_2}) \rangle \\
        &\quad = \langle K|_{\fc}, \widehat{F_1} \otimes \widehat{F_2} \rangle
        + \langle K|_{\gc}, \widehat{G_1} \otimes \widehat{G_2} \rangle \\
        &\qquad+ \langle K|_{X_1 \gets V(\fc), X_2 \gets V(\gc)}, 
                \widehat{F_1} \otimes \widehat{G_2} \rangle
        + \langle K|_{X_1 \gets V(\gc), X_2 \gets V(\fc)}, 
                \widehat{G_1} \otimes \widehat{F_2} \rangle \\
        &\quad= \sum_{i=1}^j c_i \langle J_{i,1}|_{\fc}, \widehat{F_1} \rangle
        \langle J_{i,2}|_{\gc}, \widehat{G_2} \rangle
        + \sum_{i=1}^j c_i \langle J_{i,1}|_{\gc}, \widehat{G_1} \rangle
        \langle J_{i,2}|_{\fc}, \widehat{F_2} \rangle \\
        &\quad= 2 \left\langle \sum_{i=1}^j c_i
            J_{i,1}|_{\fc} \otimes J_{i,2}|_{\gc},~
            \widehat{F_1} \otimes \widehat{G_2}\right\rangle.
            \numberthis\label{eq:tensor_inner_prod}
    \end{align*}
    Each $J_{i,2}|_{\gc} \in \prop{\gc}$, which is closed under linear combinations, so we 
    may successively eliminate any
    $J_{i,1}|_{\fc}$ which is linearly dependent on the other $J_{i',1}|_{\fc}$ to obtain
    \begin{equation}
        0 \neq K|_{X_1 \gets V(\fc), X_2 \gets V(\gc)}
        = \sum_{i=1}^j c_i J_{i,1}|_{\fc} \otimes J_{i,2}|_{\gc}
        = \sum_{i=1}^{j'} c'_i E_i \otimes H_i
        \label{eq:lin_ind}
    \end{equation}
    for $H_1, \ldots H_{j'} \in \prop{\gc}$ and linearly independent 
    $E_1,\ldots,E_{j'} \in \prop{\fc}$. Substituting into \eqref{eq:tensor_inner_prod} gives
    \begin{equation}
        \langle K, (\widehat{F_1} \oplus \widehat{G_1}) \otimes (\widehat{F_2} \oplus \widehat{G_2}) \rangle
        = 2 \left\langle \sum_{i=1}^{j'} c'_i E_i \otimes H_i, ~\widehat{F_1} \otimes 
        \widehat{G_2} \right\rangle
        = \left\langle 2\sum_{i=1}^{j'} c'_i \langle H_i, \widehat{G_2} \rangle E_i,
        ~\widehat{F_1} \right\rangle.
        \label{eq:kfg_inner_prod}
    \end{equation}
    Some $c_i' H_i \neq 0$ by \eqref{eq:lin_ind}, so
    quantum-nonvanishing of $\gc$ gives a $\widehat{G_2}$ such that
    $c'_i \langle H_i, \widehat{G_2} \rangle \neq 0$. Hence, by linear independence,
    $0 \neq 2\sum_{i=1}^{j'} c'_i \langle H_i, \widehat{G_2} \rangle E_i \in \prop{\fc}$,
    so by \eqref{eq:kfg_inner_prod} and quantum-nonvanishing of $\fc$,
    there is an $\widehat{F_1}$ such that $\langle K, (\widehat{F_1} \oplus \widehat{G_1}) 
    \otimes (\widehat{F_2} \oplus \widehat{G_2}) \rangle \neq 0$.
    This $(\widehat{F_1} \oplus \widehat{G_1}) \otimes (\widehat{F_2} \oplus \widehat{G_2}) \in \prop{\prop{\fc} \oplus \prop{\gc}}$ witnesses that $K$ is
    $\prop{\fc} \oplus \prop{\gc}$-nonvanishing.
\end{proof}

Next we have the following analogue of \cite[Lemma 3.2]{orthogonal}, with a similar proof.
\begin{lemma}
    \label{lem:nonconstructive}
    If $\fc$ and $\gc$ are Bi-Holant-indistinguishable and quantum-nonvanishing, then there exists an
    $H \in \stab(\prop{\prop{\fc} \oplus \prop{\gc}})$ with $H|_{\fc,\gc} \neq 0$ or
    $H|_{\gc,\fc} \neq 0$.
\end{lemma}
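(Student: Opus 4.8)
The plan is to prove this by contradiction: I would assume that \emph{every} $H \in \stab(\prop{\prop{\fc}\oplus\prop{\gc}})$ satisfies $H|_{\fc,\gc} = H|_{\gc,\fc} = 0$ --- that is, that the stabilizer is block-diagonal with respect to the partition $V(\fc)\sqcup V(\gc)$ of the domain --- and show that this forces a nonzero signature to correspond under $\leftrightsquigarrow$ to $0$, contradicting that $\fc$ and $\gc$ are covanishing. Intuitively, a block-diagonal stabilizer would mean the combined set $\prop{\fc}\oplus\prop{\gc}$ cannot ``detect'' the identification $\leftrightsquigarrow$ linking its two halves, so it could reconstruct each half in isolation, and restricting such a reconstruction of the projection $I_{V(\fc)}^\shortuparrow$ exhibits $I\leftrightsquigarrow 0$.

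Concretely, I would first assemble the machinery already in place. Since $\fc$ and $\gc$ are Bi-Holant-indistinguishable and quantum-nonvanishing, \autoref{lem:oplus} shows $\prop{\fc}\oplus\prop{\gc}$ is quantum-nonvanishing, so \autoref{thm:duality} produces a reductive subgroup $\stab := \stab(\prop{\prop{\fc}\oplus\prop{\gc}}) \subset \gl(\kk^{V(\fc)\sqcup V(\gc)})$ with $\prop{\prop{\fc}\oplus\prop{\gc}} = \tc^{\stab}$; moreover \autoref{prop:covanishing}, applied at grade $(1,1)$, says $\prop{\fc}$ and $\prop{\gc}$ are $(1,1)$-covanishing. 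Now suppose for contradiction that every $H \in \stab$ is block-diagonal, $H = H_{\fc}\oplus H_{\gc}$ in the blocks indexed by $V(\fc)$ and $V(\gc)$. Then the subdomain restrictor $I_{V(\fc)}^\shortuparrow = I_{V(\fc)}\oplus 0$ satisfies $H\cdot I_{V(\fc)}^\shortuparrow = H I_{V(\fc)}^\shortuparrow H^{-1} = (H_\fc I_{V(\fc)} H_\fc^{-1})\oplus 0 = I_{V(\fc)}^\shortuparrow$ for every $H \in \stab$, so $I_{V(\fc)}^\shortuparrow \in \tc^{\stab} = \prop{\prop{\fc}\oplus\prop{\gc}}$. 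By \autoref{prop:oplus}, then, $\prop{\fc}\ni I_{V(\fc)}^\shortuparrow|_{\fc}\leftrightsquigarrow I_{V(\fc)}^\shortuparrow|_{\gc}\in\prop{\gc}$; but $I_{V(\fc)}^\shortuparrow|_{\fc} = I \neq 0$ while $I_{V(\fc)}^\shortuparrow|_{\gc} = 0$, contradicting $(1,1)$-covanishing. Hence some $H \in \stab$ must have a nonzero off-diagonal block, which is exactly the claimed conclusion.

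I expect this argument to be quite short once \autoref{lem:oplus} and \autoref{thm:duality} are granted; the only step that requires any thought is the observation that block-diagonality of $\stab$ makes the projection $I_{V(\fc)}^\shortuparrow$ an invariant tensor and hence a quantum $(\prop{\fc}\oplus\prop{\gc})$-gadget signature --- which is precisely the point where covanishing is violated. There is no genuine obstacle internal to this lemma: the real work (that $\prop{\fc}\oplus\prop{\gc}$ is quantum-nonvanishing, which is what even lets us apply \autoref{thm:duality}) has already been absorbed into \autoref{lem:oplus}, and everything else here is a short invariant-theory bookkeeping argument.
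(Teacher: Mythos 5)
Your proof is correct and follows essentially the same strategy as the paper's: assume the stabilizer is block-diagonal, exhibit an invariant tensor that is block-diagonal with genuinely different $\fc$- and $\gc$-blocks, apply \autoref{thm:duality} and \autoref{prop:oplus}, and derive a contradiction. The only (minor) difference is the choice of witness and how the contradiction is reached: you use $I_{V(\fc)}\oplus 0$ and invoke covanishing (\autoref{prop:covanishing}), whereas the paper uses $I_{\fc}\oplus 2I_{\gc}$ and derives the contradiction more directly by comparing the traces $\tr(I_{\fc}) = q \neq 2q = \tr(2I_{\gc})$, which violates indistinguishability without going through the covanishing lemma. Both are sound and rest on the same ingredients (\autoref{lem:oplus}, \autoref{thm:duality}, \autoref{prop:oplus}); the paper's trace comparison is perhaps a touch more elementary, while your covanishing version is marginally more in the spirit of the vanishing-centric framework. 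Either way this is the same proof up to cosmetic choices.
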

\begin{proof}
    First observe that \autoref{thm:duality} applies to the wheeled PROP 
    $\prop{\prop{\fc} \oplus \prop{\gc}}$, which is quantum-nonvanishing by \autoref{lem:oplus}
    (hence the claimed $\stab(\prop{\prop{\fc} \oplus \prop{\gc}})$ exists). Assume that
    every $H \in \stab(\prop{\prop{\fc} \oplus \prop{\gc}})$ satisfies 
    $H|_{\fc,\gc} = H|_{\gc,\fc} = 0$ (i.e. is block-diagonal). Then 
    \[
        I_{\fc} \oplus 2I_{\gc} = \begin{bmatrix} I & 0 \\ 0 & 2I \end{bmatrix} 
        \in \tc(\kk^{2q})
    \]
    satisfies $H(I_{\fc} \oplus 2I_{\gc}) H^{-1} = I_{\fc} \oplus 2I_{\gc}$ 
    for every $H \in 
    \stab(\prop{\prop{\fc} \oplus \prop{\gc}})$, so, by \autoref{thm:duality},
    $I_{\fc} \oplus 2I_{\gc} \in \prop{\prop{\fc} \oplus \prop{\gc}}$. But
    \autoref{prop:oplus} gives
    \[
        \prop{\fc} \ni I_{\fc} = (I_{\fc} \oplus 2I_{\gc})|_{\fc}
        \leftrightsquigarrow (I_{\fc} \oplus 2I_{\gc})|_{\gc} = 2I_{\gc} \in 
        \prop{\gc},
    \]
    violating indistinguishability, as $\tr(I_{\fc}) = q \neq 2q = \tr(2I_{\gc})$.
\end{proof}

\begin{lemma}
    \label{lem:intertwine}
    If $\fc|\fc'$ and $\gc|\gc'$ are Bi-Holant-indistinguishable and quantum-nonvanishing, then there 
    exist $\varnothing \neq Z \subset [q]$ and
    $T_1,T_2 \in \gl_q$ such that, after transforming $\fc|\fc'$ by $T_1$ and
    $\gc|\gc'$ by $T_2$, 
    every $\dprop{\fc|\fc'}{n}{0} \ni F \leftrightsquigarrow G \in \dprop{\gc|\gc'}{n}{0}$
    and $\dprop{\fc|\fc'}{0}{n} \ni F' \leftrightsquigarrow G' \in \dprop{\gc|\gc'}{0}{n}$ 
    satisfy
    \begin{equation}
        \label{eq:ix_intertwine}
        (I_Z^\shortuparrow)^{\otimes n} F = G \text{ and } F' = G' (I_Z^\shortuparrow)^{\otimes n}.
    \end{equation}
\end{lemma}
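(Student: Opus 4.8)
The plan is to apply the Derksen--Makam duality to the combined wheeled PROP $\mathcal{P} := \prop{\prop{\fc|\fc'} \oplus \prop{\gc|\gc'}}$ and to read $Z$, $T_1$, $T_2$ off the structure of its stabilizer. I first record two preliminaries. Since $\fc|\fc'$ and $\gc|\gc'$ are Bi-Holant-indistinguishable and quantum-nonvanishing, \autoref{prop:covanishing} makes the correspondence $\leftrightsquigarrow$ a linear bijection $\dprop{\fc|\fc'}{\ell}{r} \to \dprop{\gc|\gc'}{\ell}{r}$ on each graded piece. And by \autoref{lem:oplus}, $\mathcal{P}$ is quantum-nonvanishing, so \autoref{thm:duality} supplies a reductive $\Sigma := \stab(\mathcal{P}) \subseteq \gl_{2q}$ with $\mathcal{P} = \tc(\kk^{2q})^{\Sigma}$, and \autoref{lem:nonconstructive} supplies an $H \in \Sigma$ whose block linking the two halves (the $(\fc,\gc)$- or $(\gc,\fc)$-block) is nonzero. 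Because $\prop{\fc|\fc'}\oplus\prop{\gc|\gc'} \subseteq \mathcal{P}$, every tensor $F\oplus G$ and $F'\oplus G'$ attached to a corresponding pair is $\Sigma$-fixed; and, by \autoref{prop:respect}, transforming $\fc|\fc'$ by $T_1$ and $\gc|\gc'$ by $T_2$ replaces $\prop{\fc|\fc'}\oplus\prop{\gc|\gc'}$ by $(T_1\oplus T_2)(\prop{\fc|\fc'}\oplus\prop{\gc|\gc'})$ and hence conjugates $\Sigma$ by $T_1\oplus T_2$, so we have a block-diagonal group's worth of freedom to normalize $H$.

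Next I turn $\Sigma$-invariance into intertwining identities. Writing $H = \begin{bmatrix}A & B\\ C & D\end{bmatrix}$ in $q\times q$ blocks and applying the two coordinate projections of $\kk^{2q}$ to $H^{\otimes n}(F\oplus G) = F\oplus G$ and, dually, to $(F'\oplus G')(H^{-1})^{\otimes n} = F'\oplus G'$, I obtain, for every corresponding contravariant $F\leftrightsquigarrow G$ and covariant $F'\leftrightsquigarrow G'$ of every arity $n$,
\[
    A^{\otimes n}F + B^{\otimes n}G = F,\quad C^{\otimes n}F + D^{\otimes n}G = G,\quad
    F'A^{\otimes n} + G'C^{\otimes n} = F',\quad F'B^{\otimes n} + G'D^{\otimes n} = G'.
\]
The second and third identities already show that the conclusion $(I_Z^\shortuparrow)^{\otimes n}F = G$ and $F' = G'(I_Z^\shortuparrow)^{\otimes n}$ follows once $H$ is normalized so that: $C$ acts as $I_Z^\shortuparrow$ on every contravariant $\fc|\fc'$-gadget signature and on every covariant $\gc|\gc'$-gadget signature; $A$ annihilates every covariant $\fc|\fc'$-gadget signature; and $D$ annihilates every contravariant $\gc|\gc'$-gadget signature.

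Achieving this normalization is the heart of the proof. The nonzero off-block of $H$ means the $V(\fc)\oplus V(\gc)$ splitting of $\kk^{2q}$ is not $\Sigma$-stable (otherwise every element of $\Sigma$ would be block-diagonal, contradicting \autoref{lem:nonconstructive}); reductivity of $\Sigma$ then makes $\kk^{2q}$ a semisimple $\Sigma$-module in which some irreducible constituent sits diagonally, projecting nontrivially into both halves. I would carry this common constituent onto $\kk^{[z]}$ inside both halves by a block-diagonal change of basis $T_1\oplus T_2$ and normalize the intertwiner to the identity on it, setting $Z := [z]$ (nonempty as the off-block is nonzero); then I would feed indistinguishability back in through the $\Sigma$-invariance of the tensors $F\oplus G$, $F'\oplus G'$ together with the identification $\mathcal{P} = \tc^{\Sigma}$ to verify that, after this change of basis, the diagonal blocks $A$, $D$ of the relevant $\Sigma$-element do annihilate the invariant-tensor subspaces named above. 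The main obstacle is precisely extracting a single $Z$ and a single pair $T_1, T_2$ that works at every arity $n$ simultaneously; this is where the rigidity of $\mathcal{P}$ being the \emph{entire} invariant ring $\tc^{\Sigma}$ --- not merely the span of the $F\oplus G$ --- must be used, in the spirit of how \autoref{lem:subdomain} leverages covanishing and minimal polynomials in the $(1,1)$ matrix setting.
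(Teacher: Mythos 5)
Your setup matches the paper's: both invoke \autoref{lem:oplus}, \autoref{thm:duality}, and \autoref{lem:nonconstructive} to get an $H \in \stab(\prop{\prop{\fc|\fc'} \oplus \prop{\gc|\gc'}})$ with a nonzero off-diagonal block, and both exploit the freedom to conjugate by a block-diagonal $T_1 \oplus T_2$. But your proof has a genuine gap exactly where you flag it. Your block equations come from applying invariance to the \emph{purely} contravariant/covariant tensors $F \oplus G$ and $F' \oplus G'$, where invariance reads $H^{\otimes n}(F\oplus G)=F\oplus G$; extracting \eqref{eq:ix_intertwine} from these forces you to normalize $H$ so that simultaneously $C$ acts as $I_Z^\shortuparrow$, $D^{\otimes n}$ annihilates every contravariant $G$ in $\prop{\gc|\gc'}$, and (dually) $B$ annihilates the covariant side, \emph{for every arity $n$ at once}. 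You give no argument that such a normalization exists — the semisimple-decomposition sketch produces at best a common constituent and an intertwiner, not the annihilation conditions on the diagonal blocks — and indeed there is no reason a generic stabilizer element with nonzero corner block admits such a form. This is not a detail to be "verified"; it is the entire content of the lemma, and your proposal leaves it unproven.

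The paper's proof avoids these conditions entirely by a trick you are missing: only the corner block of $H$ is ever normalized (choose $T_1,T_2$ with $T_2 H_{\gc,\fc} T_1^{-1} = I_Z^\shortuparrow$, which is just rank normal form, with $|Z| = \mathrm{rank}(H_{\gc,\fc})$), and then the invariance is applied not to $F \oplus G$ but to the \emph{mixed} tensor $(F \otimes I) \oplus (G \otimes I) \in \dprop{\prop{\fc|\fc'}\oplus\prop{\gc|\gc'}}{n+1}{1}$, obtained by tensoring with a wire. For a mixed tensor, invariance under $\widetilde H = (T_1\oplus T_2)H(T_1\oplus T_2)^{-1}$ is the intertwining equation $\widetilde H^{\otimes n+1}\big((F\otimes I)\oplus(G\otimes I)\big) = \big((F\otimes I)\oplus(G\otimes I)\big)\widetilde H$, and reading off only its bottom-left block gives
\begin{equation*}
    \big((I_Z^\shortuparrow)^{\otimes n} F\big)\otimes I_Z^\shortuparrow \;=\; G \otimes I_Z^\shortuparrow,
\end{equation*}
hence $(I_Z^\shortuparrow)^{\otimes n}F = G$, with the blocks $A$, $B$, $D$ never entering; the covariant identity follows symmetrically from $(F'\otimes I)\oplus(G'\otimes I)$. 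So the missing idea is to convert the fixed-point equation into a two-sided intertwining equation by padding with the wire, which isolates the corner block and removes any need for the normalization your approach depends on.
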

\begin{proof}
    \autoref{lem:nonconstructive} gives an
    $H \in \stab(\prop{\prop{\fc|\fc'} \oplus \prop{\gc|\gc'}})$ with, WLOG, $H|_{\gc,\fc}
    \neq 0$. Choose $T_1,T_2 \in \gl_q$ so that $T_2 H_{\gc,\fc} T_1^{-1} = 
    I_Z^\shortuparrow \in \kk^{q \times q}$ 
    for some $Z \subset [q]$ with $|Z| = \text{rank}(H_{\gc,\fc}) > 0$. Transform
    $\fc|\fc'$ by $T_1$ and $\gc|\gc'$ by $T_2$. By \autoref{prop:respect}, this transforms
    $\prop{\prop{\fc|\fc'} \oplus \prop{\gc|\gc'}}$ to
    \[
        \prop{(T_1 \oplus T_2) \big(\prop{\fc|\fc'} \oplus \prop{\gc|\gc'}\big)}
        = (T_1 \oplus T_2) \prop{\prop{\fc|\fc'} \oplus \prop{\gc|\gc'}}.
        \label{eq:oplus_transform}
    \]
    By \autoref{thm:duality}, $H$ satisfied $H \cdot K = K$ for every 
    $K \in \prop{\prop{\fc|\fc'} \oplus \prop{\gc|\gc'}}$. Hence
    \[
        \widetilde{H} := (T_1 \oplus T_2) H (T_1 \oplus T_2)^{-1} =
        \begin{bmatrix}T_1 & 0 \\ 0 & T_2 \end{bmatrix}
        \begin{bmatrix} * & * \\ H_{\gc,\fc} & *\end{bmatrix}
        \begin{bmatrix}T_1^{-1} & 0 \\ 0 & T_2^{-1} \end{bmatrix}
        = \begin{bmatrix} * & * \\ I_Z^\shortuparrow & * \end{bmatrix}
    \]
    stabilizes every signature in $\prop{\prop{\fc|\fc'} \oplus \prop{\gc|\gc'}}$ after the
    transformation by $(T_1 \oplus T_2)$.

    Let $\fc \ni F \leftrightsquigarrow G \in \gc$ have
    arity $n$. Then $(F \otimes I) \oplus (G \otimes I) \in 
    \dprop{\prop{\fc|\fc'} \oplus \prop{\gc|\gc'}}{n+1}1$, so 
    $\widetilde{H}^{\otimes n+1} ((F \otimes I) \oplus (G \otimes I)) = 
    \big((F \otimes I) \oplus (G \otimes I)\big)
    \widetilde{H}$, which in $(V(\fc),V(\gc))$-block matrix form (see e.g. \cite[Appendix A]{orthogonal}) is
    \begin{equation}
        \begin{bmatrix} 
            * & * & \hdots & * \\
            \vdots & \vdots & \iddots & \vdots \\
            * & * & \hdots & *\\
            (I_Z^\shortuparrow)^{\otimes n+1} & * & \hdots & *
        \end{bmatrix}
        \begin{bmatrix} F \otimes I & 0 \\ 0 & 0 \\ \vdots & \vdots \\ 0 & 0 \\ 0 & G \otimes I
        \end{bmatrix} =
            \begin{bmatrix} F \otimes I & 0 \\ 0 & 0 \\ \vdots & \vdots \\ 0 & 0 \\ 0 & G \otimes I
        \end{bmatrix} \begin{bmatrix} * & * \\ I_Z^\shortuparrow & * \end{bmatrix}.
        \label{eq:block_matrix}
    \end{equation}
    The bottom left block of \eqref{eq:block_matrix} gives
    \begin{equation}
        \big((I_Z^\shortuparrow)^{\otimes n} F\big) \otimes I_Z^\shortuparrow
        = (I_Z^\shortuparrow)^{\otimes n+1} (F \otimes I)
        = (G \otimes I) I_Z^\shortuparrow = G \otimes I_Z^\shortuparrow
        \label{eq:tensor_i}
    \end{equation}
    (see \autoref{fig:tensor_i}), which implies that $(I_Z^\shortuparrow)^{\otimes n} F = G$. 
    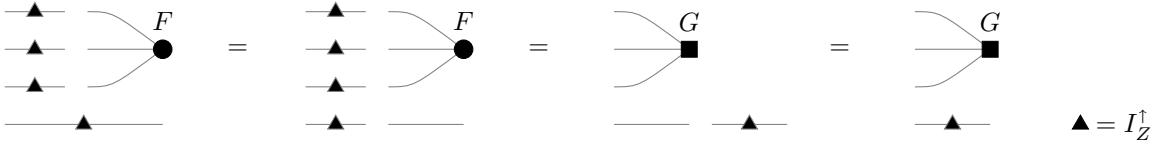
\begin{figure}[ht!]
        \centering
        \begin{tikzpicture}[scale=1]
\tikzstyle{every node}=[font=\small]
\GraphInit[vstyle=Classic]
\SetUpEdge[style=-]
\SetVertexMath

% arm y coordinates
\def\ysh{0.5}
\def\wlen{1}
\def\hwlen{0.8}
\def\wgap{0.3}

\foreach \y in {-1,0,1} {
    \draw[thin, color=gray] (0-\wlen, \y*\ysh) 
    .. controls (0-\wlen+\wlen/3,\y*\ysh) .. (0,0);
    \draw[thin, color=gray] (0-\hwlen-\wlen-\wgap,\y*\ysh) -- (0-\wlen-\wgap,\y*\ysh)
    node[draw, fill=black, regular polygon, regular polygon sides=3, minimum size = 7pt, inner sep = 0pt, pos=0.5] {};
}

\draw[thin, color=gray] (-\hwlen-\wlen-\wgap,-2*\ysh) -- (0,-2*\ysh)
node[draw, fill=black, regular polygon, regular polygon sides=3, minimum size = 7pt, inner sep = 0pt, pos=0.5] {};

\tikzset{VertexStyle/.style = {shape=circle, fill=black, minimum size=7pt, inner sep=1pt, draw}}
\Vertex[x=0,y=0,L=F,Lpos=90]{f}
\node at (1,0) {$=$};

\begin{scope}[xshift=4cm]
    \foreach \y in {-1,0,1} {
        \draw[thin, color=gray] (0-\wlen, \y*\ysh) 
        .. controls (0-\wlen+\wlen/3,\y*\ysh) .. (0,0);
        \draw[thin, color=gray] (0-\hwlen-\wlen-\wgap,\y*\ysh) -- (0-\wlen-\wgap,\y*\ysh)
        node[draw, fill=black, regular polygon, regular polygon sides=3, minimum size = 7pt, inner sep = 0pt, pos=0.5] {};
    }

    \draw[thin, color=gray] (-\hwlen-\wlen-\wgap,-2*\ysh) -- (-\wlen-\wgap,-2*\ysh)
    node[draw, fill=black, regular polygon, regular polygon sides=3, minimum size = 7pt, inner sep = 0pt, pos=0.5] {};
    \draw[thin, color=gray] (0,-2*\ysh) -- (-\wlen,-2*\ysh);
    \Vertex[x=0,y=0,L=F,Lpos=90]{f2}
    \node at (1,0) {$=$};
\end{scope}

\tikzset{VertexStyle/.style = {shape=rectangle, fill=black, minimum size=6pt, inner sep=1pt, draw}}
\begin{scope}[xshift=7cm]
    \foreach \y in {-1,0,1} {
        \draw[thin, color=gray] (0-\wlen, \y*\ysh) 
        .. controls (0-\wlen+\wlen/3,\y*\ysh) .. (0,0);
    }

    \draw[thin, color=gray] (0,-2*\ysh) -- (-\wlen,-2*\ysh);
    \draw[thin, color=gray] (\wgap,-2*\ysh) -- (\wgap+\wlen,-2*\ysh)
    node[draw, fill=black, regular polygon, regular polygon sides=3, minimum size = 7pt, inner sep = 0pt, pos=0.5] {};

    \Vertex[x=0,y=0,L=G,Lpos=90]{g2}
    \node at (1+\wlen,0) {$=$};
\end{scope}

\begin{scope}[xshift=11cm]
    \foreach \y in {-1,0,1} {
        \draw[thin, color=gray] (0-\wlen, \y*\ysh) 
        .. controls (0-\wlen+\wlen/3,\y*\ysh) .. (0,0);
    }
    \draw[thin, color=gray] (0,-2*\ysh) -- (-\wlen,-2*\ysh)
    node[draw, fill=black, regular polygon, regular polygon sides=3, minimum size = 7pt, inner sep = 0pt, pos=0.5] {};
    \Vertex[x=0,y=0,L=G,Lpos=90]{g}

    % legend
    \tikzset{VertexStyle/.style = {shape=regular polygon, regular polygon sides=3, fill=black, minimum size=7pt, inner sep=1pt, draw}}
    \Vertex[x=1.2,y=-2*\ysh,L={=I_Z^\shortuparrow}]{ax7S}

\end{scope}
\end{tikzpicture}
        \caption{Illustrating \eqref{eq:tensor_i} for $n=3$.}
        \label{fig:tensor_i}
    \end{figure}

    Similarly, if $\fc' \ni F' \leftrightsquigarrow G' \in \gc'$, then
    $(F' \otimes I) \oplus (G' \otimes I) \in 
    \dprop{\prop{\fc|\fc'} \oplus \prop{\gc|\gc'}}{1}{n+1}$, so 
    $\widetilde{H} \left((F' \otimes I) \oplus (G' \otimes I)\right) = 
    \big((F' \otimes I) \oplus (G' \otimes I)\big) \widetilde{H}^{\otimes n+1}$, or equivalently
    \[
        \begin{bmatrix} * & * \\ I_Z^\shortuparrow & * \end{bmatrix}
        \begin{bmatrix} F' \otimes I & 0 & \hdots & 0 & 0 \\ 0 & 0 & \hdots & 0 & G' \otimes I
        \end{bmatrix} 
        =  
        \begin{bmatrix} F' \otimes I & 0 & \hdots & 0 & 0 \\ 0 & 0 & \hdots & 0 & G' \otimes I
        \end{bmatrix} 
        \begin{bmatrix} 
            * & * & \hdots & * \\
            \vdots & \vdots & \iddots & \vdots \\
            * & * & \hdots & *\\
            (I_Z^\shortuparrow)^{\otimes n+1} & * & \hdots & *
        \end{bmatrix},
        \label{eq:block_matrix2}
    \]
    and the bottom left block of \eqref{eq:block_matrix2} gives
    \[
        F' \otimes I_Z^\shortuparrow = I_Z^\shortuparrow (F' \otimes I)
        = (G' \otimes I) (I_Z^\shortuparrow)^{\otimes n+1}
        = \big(G' (I_Z^\shortuparrow)^{\otimes n}\big) \otimes I_Z^\shortuparrow,
    \]
    and it follows that $F' = G' (I_Z^\shortuparrow)^{\otimes n}$.
\end{proof}

If $Z = [q]$ in \autoref{lem:intertwine}, then, since 
$\fc \subset \dprop{\fc|\fc'}{n}{0}$
and $\fc' \subset \dprop{\fc|\fc'}{0}{n}$, we already have $T_1 (\fc|\fc') = 
T_2 (\gc|\gc')$ by \eqref{eq:ix_intertwine}, hence $T_2^{-1} T_1 (\fc|\fc')
= (\gc|\gc')$. Otherwise, we must diverge from the proof strategy of 
\cite{orthogonal}. The natural continuation along those lines would
be to use \autoref{lem:intertwine} to add $I_Z^\uparrow$ to
to $\fc$ and $\gc$ while preserving indistinguishability, then split into subdomains and
apply induction. However, we cannot guarantee that these subdomains are
quantum-nonvanishing. Instead, we use \autoref{lem:intertwine}
to heavily restrict the form of $\fc|\fc'$ and $\gc|\gc'$, 
then use \autoref{lem:subdomain} to either
split into subdomains or place further restrictions on $\dprop{\fc|\fc'}{1}{1}$ and 
$\dprop{\gc|\gc'}{1}{1}$.

\begin{proof}[Proof of \autoref{thm:main2}]
    \autoref{lem:intertwine} gives $\varnothing \neq Z \subset [q]$ and $T_1,T_2$ such 
    that, after replacing $\fc|\fc'$ with $T_1(\fc|\fc')$ and $\gc|\gc'$ with 
    $T_2 (\gc|\gc')$ (which preserves indistinguishability,
    quantum-nonvanishing and $\gl_q$-orbits), \eqref{eq:ix_intertwine} is satisfied.
    As mentioned in the previous paragraph, if $Z = [q]$ then we are done. Otherwise,
    \eqref{eq:ix_intertwine} is equivalent to the statement that
    every $F' \in \dprop{\fc|\fc'}{0}{n}$ and $G \in \dprop{\gc|\gc'}{n}{0}$
    are supported only on $Z$, and furthermore $G|_Z = F|_Z$ for $F \leftrightsquigarrow G$
    and $F'|_Z = G'|_Z$ for $F' \leftrightsquigarrow G'$.
    Or, assuming WLOG that $Z = [z] \subset [q]$, every
    $\dprop{\fc|\fc'}{n}{0} \ni F \leftrightsquigarrow G \in \dprop{\gc|\gc'}{n}{0}$ 
    and $\dprop{\fc|\fc'}{0}{n} \ni F' \leftrightsquigarrow G' \in \dprop{\gc|\gc'}{0}{n}$ 
    have $(Z,\overline{Z})$-block form (with $\overline{Z} := [q] \setminus Z$)
    \begin{equation}
        \label{eq:block_summary}
        \begin{aligned}[c]
            F = \begin{bmatrix} F|_Z \\ * \\ \vdots \\ *\end{bmatrix},
            G = \begin{bmatrix} F|_Z \\ 0 \\ \vdots \\ 0\end{bmatrix},
        \end{aligned}
        \quad
        \begin{aligned}[c]
            &F' = \begin{bmatrix} G'|_Z & 0 & \hdots & 0\end{bmatrix}, \\
            &G' = \begin{bmatrix} G'|_Z & * & \hdots & *\end{bmatrix}.
        \end{aligned}
    \end{equation}

    All generators (signatures in $\fc|\fc'$ and $\gc|\gc'$) are purely 
    covariant or contravariant, so are subject to \eqref{eq:block_summary}. Say that
    $\fc|\fc'$ and $\gc|\gc'$ have \emph{skew blocks} if the purely covariant/contravariant
    signatures in $\prop{\fc|\fc'}$ and $\prop{\gc|\gc'}$ have zero blocks matching 
    \eqref{eq:block_summary}. We will use quantum-nonvanishing to force the $*$ blocks
    in \eqref{eq:block_summary} to be 0, at which point $\fc|\fc' = \gc|\gc'$.
    \renewcommand{\qedsymbol}{$\blacksquare$}
    \begin{claim}
        \label{cl:k_blocks}
        Let $\vk$ be a nontrivial (not just a wire) $(1,1)$-$\fc|\fc'$-gadget with signature
        $K$ and let $\widetilde{K}$ be the signature of $\vk_{\fc|\fc'\to\gc|\gc'}$. 
        If $\fc|\fc'$ and $\gc|\gc'$ have skew blocks, then
        \begin{equation}
            \label{eq:k_blocks}
            K = \begin{bmatrix} K|_Z & 0 \\ * & 0\end{bmatrix} \text{ and }
            \widetilde{K} = \begin{bmatrix} \widetilde{K}|_Z & * \\ 0 & 0\end{bmatrix}.
        \end{equation}
    \end{claim}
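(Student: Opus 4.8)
The plan is to read off everything from the support structure that \autoref{lem:intertwine} has already forced: the skew-blocks hypothesis is exactly the statement that every purely covariant signature in $\prop{\fc|\fc'}$ and every purely contravariant signature in $\prop{\gc|\gc'}$ is supported only on $Z$ (these are the zero blocks appearing in \eqref{eq:block_summary}). So the claim should follow by tracking which dangling edges of a $(1,1)$-gadget are forced into $Z$.

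First I would handle $K$. Since $\vk$ is nontrivial, its right dangling end is not simply the right end of a pass-through wire, so the right dangling edge is a genuine covariant input to some vertex $u$ of $\vk$; as $\vk$ is bipartite between $\fc$ and $\fc'$ and a covariant input can only be an input of a covariant signature, $F_u \in \fc'$. By skew blocks, $F_u$ is supported on $Z$, i.e. $F_u(\va)=0$ unless every coordinate of $\va$ lies in $Z$. Fixing the right dangling edge to any $b \in \overline{Z}$, every term of $K(a,b) = \sum_\sigma \prod_w F_w(\sigma(\delta w))$ contains the factor $F_u(\ldots,b,\ldots)=0$, so $K(a,b)=0$ for all $a$. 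Hence the two blocks of $K$ in the columns indexed by $\overline{Z}$ vanish; the $Z\times Z$ block is $K|_Z$ by the definition of subsignature; and the $\overline{Z}\times Z$ block is unconstrained (it need not vanish, since the left dangling edge is a contravariant input to an $\fc$-vertex, whose signature carries nonzero $\overline Z$-entries by \eqref{eq:block_summary}). This gives the stated form of $K$.

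The $\widetilde{K}$ case is the mirror image. In $\vk_{\fc|\fc'\to\gc|\gc'}$ the left dangling edge is a contravariant input, hence an input to a $\gc$-vertex $v$; by skew blocks $G_v\in\gc$ is supported on $Z$, so fixing the left dangling edge to any $a\in\overline{Z}$ kills every term of the Holant sum and yields $\widetilde K(a,b)=0$. Thus the two blocks of $\widetilde K$ in the rows indexed by $\overline{Z}$ vanish, the $Z\times Z$ block is $\widetilde K|_Z$, and the $Z\times\overline{Z}$ block is unconstrained because the right dangling edge now feeds a $\gc'$-vertex, whose signature need not be supported on $Z$. This gives the stated form of $\widetilde K$.

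I do not expect a real obstacle here: the claim is a direct structural consequence of \autoref{lem:intertwine}. The only points needing care are (i) confirming that ``nontrivial'' rules out the degenerate case in which $\vk$'s two dangling ends are joined by a single edge (so the right dangling edge genuinely meets an $\fc'$-vertex, and likewise for $\widetilde K$), and (ii) keeping straight the asymmetry — it is the covariant side of $K$ and the contravariant side of $\widetilde K$ that collapse onto $Z$, because by \eqref{eq:block_summary} it is the covariant signatures of $\prop{\fc|\fc'}$ and the contravariant signatures of $\prop{\gc|\gc'}$ that are supported on $Z$.
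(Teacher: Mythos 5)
Your proof is correct and follows essentially the same route as the paper: since the gadget is nontrivial, its right dangling edge is an input to an $\fc'$-vertex whose signature is supported only on $Z$ by skew blocks (killing the $\overline{Z}$-columns of $K$), and after the $\fc|\fc'\to\gc|\gc'$ replacement the left dangling edge is an input to a $\gc$-vertex supported only on $Z$ (killing the $\overline{Z}$-rows of $\widetilde{K}$). The care point you flag about the dangling ends genuinely meeting vertices is exactly the role "nontrivial" plays in the paper's argument as well.
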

    \begin{proof}
        Since $\vk$ is nontrivial, it must contain at least one signature in both $\fc$ and
        $\fc'$ to preserve covariant/contravariant balance.
        The right input to $\vk$ is incident to an $F' \in \fc'$, which by 
        \eqref{eq:block_summary} is only supported on $Z$. Similarly, the left input to 
        $\vk_{\fc|\fc'\to\gc|\gc'}$ is incident to a $G \in \gc$, which is only supported
        on $Z$. This completes the proof of \autoref{cl:k_blocks}.
    \end{proof}

    Say that $T \in \gl_q$ is $(Z,\overline{Z})$\emph{-lower-triangular} if it has block form
    $T = \left[\begin{smallmatrix} T|_Z & 0 \\ T|_{\overline{Z},Z} & T|_{\overline{Z}}\end{smallmatrix}\right]$. 
    Define $(Z,\overline{Z})$\emph{-upper-triangular} similarly.
    \begin{claim}
        \label{cl:skew}
        If $\fc|\fc'$ and $\gc|\gc'$ have skew blocks and $T$ and $U$ are 
        $(Z,\overline{Z})$-lower- and upper-triangular, respectively, then $T(\fc|\fc')$ and 
        $U(\gc|\gc')$ have skew blocks and $(T \cdot K)|_Z = T|_Z \cdot K|_Z$ for every 
        purely covariant or contravariant $K \in \prop{\fc|\fc'} \cup \prop{\gc|\gc'}$.
    \end{claim}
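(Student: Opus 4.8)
The plan is to reduce the claim to two elementary facts about block-triangular matrices, after which everything follows by inspecting the $\gl_q$-action grade by grade. Write $\pi_Z\colon\kk^q\to\kk^Z$ for the coordinate projection and $\iota_Z\colon\kk^Z\to\kk^q$ for the inclusion associated to the partition $(Z,\overline Z)$, so that an arity-$n$ contravariant tensor $K$ is supported on $Z$ iff $K=\iota_Z^{\otimes n}(K|_Z)$, with $K|_Z=\pi_Z^{\otimes n}K$ in general, while a covariant $K$ is supported on $Z$ iff $K=(K|_Z)\,\pi_Z^{\otimes n}$, with $K|_Z=K\,\iota_Z^{\otimes n}$ in general. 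The two facts I would record first are: (1) a $(Z,\overline Z)$-lower-triangular $T$ satisfies $\pi_Z T=(T|_Z)\pi_Z$ and has $(Z,\overline Z)$-lower-triangular inverse with $(T^{-1})|_Z=(T|_Z)^{-1}$; and (2) a $(Z,\overline Z)$-upper-triangular $U$ satisfies $U\iota_Z=\iota_Z(U|_Z)$ and has $(Z,\overline Z)$-upper-triangular inverse with $(U^{-1})|_Z=(U|_Z)^{-1}$. Both are immediate from the formula for the inverse of a $2\times2$ block-triangular matrix.

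By \autoref{prop:respect}, transforming $\fc|\fc'$ by $T$ replaces $\prop{\fc|\fc'}$ by $T\prop{\fc|\fc'}$, and likewise on the $\gc$-side by $U$, so it suffices to check a purely contravariant or covariant $K$ of each side. For contravariant $K\in\prop{\fc|\fc'}$ (arity $n$), $T\cdot K=T^{\otimes n}K$, and fact (1) gives $\pi_Z^{\otimes n}T^{\otimes n}=(T|_Z)^{\otimes n}\pi_Z^{\otimes n}$, hence $(T\cdot K)|_Z=(T|_Z)\cdot(K|_Z)$ with no hypothesis on $K$ needed. For covariant $K\in\prop{\fc|\fc'}$, the skew-block hypothesis gives $K=(K|_Z)\pi_Z^{\otimes n}$, and since $T\cdot K=K(T^{-1})^{\otimes n}$, fact (1) applied to $T^{-1}$ yields $T\cdot K=(K|_Z)\big((T|_Z)^{-1}\big)^{\otimes n}\pi_Z^{\otimes n}$, again supported on $Z$ and restricting on $Z$ to $(T|_Z)\cdot(K|_Z)$. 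The $\gc$-side is the mirror image: a contravariant $K\in\prop{\gc|\gc'}$ is supported on $Z$ by hypothesis, so $K=\iota_Z^{\otimes n}(K|_Z)$ and $U\cdot K=U^{\otimes n}\iota_Z^{\otimes n}(K|_Z)=\iota_Z^{\otimes n}(U|_Z)^{\otimes n}(K|_Z)$ by fact (2), again supported on $Z$ with $Z$-restriction $(U|_Z)\cdot(K|_Z)$; a covariant $K\in\prop{\gc|\gc'}$ carries no support constraint, and $(U\cdot K)|_Z=K(U^{-1}\iota_Z)^{\otimes n}=(K|_Z)\big((U|_Z)^{-1}\big)^{\otimes n}=(U|_Z)\cdot(K|_Z)$, again by fact (2) applied to $U^{-1}$.

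The covariant-$\fc$ and contravariant-$\gc$ cases simultaneously show that the zero-block pattern of \eqref{eq:block_summary} is preserved (the $*$-blocks are unconstrained, so nothing need be verified there), which is exactly the assertion that $T(\fc|\fc')$ and $U(\gc|\gc')$ again have skew blocks. The one point I expect to need care with — the only, and a mild, obstacle — is pairing lower-triangularity with the $\fc$-side and upper-triangularity with the $\gc$-side: this is forced because $\pi_Z$ intertwines with a lower-triangular $T$ while only $\iota_Z$ intertwines with an upper-triangular $U$, matching the way \eqref{eq:block_summary} cuts the $\fc$-side generators down by $\pi_Z$ and recovers the $\gc$-side generators through $\iota_Z$. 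Beyond this bookkeeping the argument is routine.
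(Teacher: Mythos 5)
Your proof is correct, and it is essentially the same computation as the paper's: the paper writes out the $(Z,\overline Z)$-block matrix products for $T^{\otimes n}F$, $F'(T^{-1})^{\otimes n}$, $U^{\otimes n}G$, and $G'(U^{-1})^{\otimes n}$ explicitly, while you package the same zero-block bookkeeping into the intertwining identities $\pi_Z T = (T|_Z)\pi_Z$ and $U\iota_Z = \iota_Z(U|_Z)$ (together with triangularity of inverses), which is a cleaner way to say the same thing. Both routes rely on the same observations — lower-triangularity is what commutes past $\pi_Z$, upper-triangularity is what commutes past $\iota_Z$, and the skew-block shape of each side pairs exactly with the triangularity chosen for that side — and both correctly check all four covariant/contravariant, $\fc$/$\gc$ cases.
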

    \begin{proof}
        The transformations $T$ and $U$ act on every 
        $F \in \dprop{\fc|\fc'}{n}{0}$, $F' \in \dprop{\fc|\fc'}{0}{n}$, 
        $G \in \dprop{\gc|\gc'}{n}{0}$, and $G' \in \dprop{\gc|\gc'}{0}{n}$ as
        \begin{align*}
            &F \mapsto T^{\otimes n} F 
            = \begin{bmatrix} 
                (T|_Z)^{\otimes n} & 0 & \hdots & 0 \\
                * & * & \hdots & 0\\
                \vdots & \vdots & \ddots & \vdots \\
                * & * & \hdots & *
            \end{bmatrix}
            \begin{bmatrix} F|_Z \\ * \\ \vdots \\ * \end{bmatrix}
            = \begin{bmatrix} (T|_Z)^{\otimes n} F|_Z \\ * \\ \vdots \\ * \end{bmatrix}, \\
            &F' \mapsto F' (T^{-1})^{\otimes n}
            = \begin{bmatrix} F'|_Z & 0 & \hdots & 0 \end{bmatrix}
            \begin{bmatrix} 
                (T|_Z^{-1})^{\otimes n} & 0 & \hdots & 0 \\
                * & * & \hdots & 0\\
                \vdots & \vdots & \ddots & \vdots \\
                * & * & \hdots & *
            \end{bmatrix}
            = \begin{bmatrix} F'|_Z (T|_Z^{-1})^{\otimes n} & 0 & \hdots & 0 \end{bmatrix}, \\
            &G \mapsto U^{\otimes n} G
            = \begin{bmatrix} 
                (U|_Z)^{\otimes n} & * & \hdots & * \\
                0 & * & \hdots & *\\
                \vdots & \vdots & \ddots & \vdots \\
                0 & 0 & \hdots & *
            \end{bmatrix}
            \begin{bmatrix} G|_Z \\ 0 \\ \vdots \\ 0 \end{bmatrix}
            = \begin{bmatrix} (U|_Z)^{\otimes n} G|_Z \\ 0 \\ \vdots \\ 0 \end{bmatrix}, \\
            &G' \mapsto G' (U^{-1})^{\otimes n}
            = \begin{bmatrix} G'|_Z & * & \hdots & * \end{bmatrix}
            \begin{bmatrix} 
                (U|_Z^{-1})^{\otimes n} & * & \hdots & * \\
                0 & * & \hdots & *\\
                \vdots & \vdots & \ddots & \vdots \\
                0 & 0 & \hdots & *
            \end{bmatrix}
            = \begin{bmatrix} G'|_Z (U|_Z^{-1})^{\otimes n} & * & \hdots & * \end{bmatrix}.
            \qedhere
        \end{align*}
    \end{proof}

    \begin{claim}
        \label{cl:delta}
        Suppose $\fc|\fc'$ and $\gc|\gc'$ have skew blocks and
        $\prop{\fc|\fc'} \ni I_X^\shortuparrow
        \leftrightsquigarrow I_X^\shortuparrow \in
        \prop{\gc|\gc'}$, where $X = [x] \subset [z] = Z$ and $Z = X \cup \Delta$
        with $|\Delta| = \delta < |Z|$. Then there are
        $(Z,\overline{Z})$-lower-triangular $T_\delta \in \gl_q$ and 
        $(Z,\overline{Z})$-upper-triangular $U_\delta \in \gl_q$ such that, after transforming
        $\fc|\fc'$ by $T_\delta$ and $\gc|\gc'$ by $U_\delta$, $F|_Z^\shortuparrow \in
        \langle \fc|\fc' \rangle$ for every $F \in \fc$ and $G'|_Z^\shortuparrow \in \langle 
        \gc|\gc' \rangle$ for every $G' \in \gc'$.
    \end{claim}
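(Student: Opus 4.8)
The plan is to prove \autoref{cl:delta} by induction on $\delta$. The base case $\delta = 0$ is immediate: then $X = Z$, so $I_Z^\shortuparrow = I_X^\shortuparrow$ lies in $\prop{\fc|\fc'}$ and correspondingly in $\prop{\gc|\gc'}$, and we take $T_0 = U_0 = I$. For a contravariant $F \in \fc$ of arity $n$, attaching a copy of $I_Z^\shortuparrow$ to each input of $F$ produces a quantum $\fc|\fc'$-gadget with signature $(I_Z^\shortuparrow)^{\otimes n} F = F|_Z^\shortuparrow$, so $F|_Z^\shortuparrow \in \prop{\fc|\fc'}$; the covariant statement $G'|_Z^\shortuparrow \in \prop{\gc|\gc'}$ for $G' \in \gc'$ is symmetric (attach $I_Z^\shortuparrow$ to each input on the right). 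We may also assume $\overline{Z} \neq \varnothing$, since otherwise $F|_Z^\shortuparrow = F$ and the claim is trivial.

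For the inductive step the idea is to enlarge $X$ towards $Z$ one block at a time, then recurse. Since $I_{\overline X}^\shortuparrow = I - I_X^\shortuparrow$ also lies in $\prop{\fc|\fc'}$ (correspondingly in $\prop{\gc|\gc'}$), Propositions \ref{prop:subdomain_indist} and \ref{prop:subdomain_nondegen} show that $\prop{\fc|\fc'}_{\overline X}$ and $\prop{\gc|\gc'}_{\overline X}$, on the domain $\overline X = \Delta \sqcup \overline Z$, are again Bi-Holant-indistinguishable and quantum-nonvanishing, and restricting the block forms \eqref{eq:block_summary} shows they have skew blocks with respect to the partition $(\Delta,\overline Z)$. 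I then apply \autoref{lem:subdomain} to $\dprop{\prop{\fc|\fc'}_{\overline X}}{1}{1}$ and $\dprop{\prop{\gc|\gc'}_{\overline X}}{1}{1}$.

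If \autoref{lem:subdomain} returns a nontrivial partition, then by \autoref{cor:kaplansky} there is a non-singleton-spectrum nontrivial $(1,1)$-$\prop{\fc|\fc'}_{\overline X}$-gadget, which by the analogue of \autoref{cl:k_blocks} (first block now indexed by $\Delta$) has exactly the block form needed to invoke the ``Furthermore'' part of \autoref{lem:subdomain}. This yields $(\Delta,\overline Z)$-lower- and upper-triangular transforms of $\prop{\fc|\fc'}_{\overline X}$ and $\prop{\gc|\gc'}_{\overline X}$ producing $I_Y^\shortuparrow$ (and $I_{\overline X \setminus Y}^\shortuparrow$) in both, with $Y = [y] \subset \Delta$. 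The crucial bookkeeping point is that padding such a transform by $I_X$ gives an element of $\gl_q$ that is $(Z,\overline Z)$-triangular of the same type, fixes $I_X^\shortuparrow$, and preserves skew blocks by \autoref{cl:skew}; after applying it, \autoref{prop:uparrow} lifts $I_Y^\shortuparrow$ back to $\prop{\fc|\fc'}$, so $I_{X \cup Y}^\shortuparrow = I_X^\shortuparrow + I_Y^\shortuparrow \in \prop{\fc|\fc'}$ (correspondingly in $\prop{\gc|\gc'}$), with $X \cup Y = [x+y]$ a proper subset of $Z$ strictly larger than $X$. The inductive hypothesis now applies with parameter $\delta - y < \delta$, and composing all the triangular transforms (a product of $(Z,\overline Z)$-lower-triangular matrices is again $(Z,\overline Z)$-lower-triangular, and likewise for upper) yields the desired $T_\delta, U_\delta$.

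The remaining alternative --- \autoref{lem:subdomain} reporting that $\prop{\fc|\fc'}_{\overline X}$ and $\prop{\gc|\gc'}_{\overline X}$ are $(1,1)$-trivial with equal $(1,1)$-spaces --- is what I expect to be the main obstacle. Here $\dprop{\prop{\fc|\fc'}_{\overline X}}{1}{1}$ collapses to $\spn(I_{\overline X})$, which together with \autoref{cl:k_blocks} forces every nontrivial $(1,1)$-$\fc|\fc'$-gadget signature to vanish on the $(\Delta,\Delta)$ and $(\overline Z,\Delta)$ blocks. The task is to exploit this rigidity together with quantum-nonvanishing of $\fc|\fc'$ --- applied to the contravariant and covariant generators, in the spirit of the block-by-block argument in \autoref{lem:oplus} --- to show that the skew part of each generator is itself a quantum-gadget signature, i.e. that $F|_Z^\shortuparrow \in \prop{\fc|\fc'}$ and $G'|_Z^\shortuparrow \in \prop{\gc|\gc'}$ already hold, possibly after one further $(Z,\overline Z)$-triangular transform. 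Throughout, the most delicate part is keeping every transform $(Z,\overline Z)$-triangular and compatible with the established $I_X^\shortuparrow$ and skew-block structure, so that \autoref{cl:skew} and \autoref{prop:uparrow} remain applicable at each stage.
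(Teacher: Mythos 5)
Your skeleton (induction on $\delta$, restricting to $\overline{X}$ via $I_{\overline{X}}^\shortuparrow = I - I_X^\shortuparrow$, invoking the ``Furthermore'' clause of \autoref{lem:subdomain}, padding the resulting transforms by $I_X$, lifting $I_{X'}^{\shortuparrow \overline{X}}$ with \autoref{prop:uparrow}, and composing triangular transforms) matches the paper's argument, but two steps are genuinely missing. The larger one is your ``trivial'' branch, which you explicitly leave as an open task. The paper does not case-split on the outcome of \autoref{lem:subdomain} at all; it splits on whether some generator $F \in \fc$ or $G' \in \gc'$ has a nonzero entry with an input in $\Delta$. If no generator is supported on $\Delta$, then $F|_Z^\shortuparrow = F|_X^\shortuparrow = (I_X^\shortuparrow)^{\otimes n} F \in \prop{\fc|\fc'}$ (and likewise $G'|_Z^\shortuparrow = G'(I_X^\shortuparrow)^{\otimes n}$), so the claim holds with no further transformation -- no rigidity argument showing that the skew part of each generator is separately a quantum-gadget signature is needed. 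Your plan to prove $F|_Z^\shortuparrow \in \prop{\fc|\fc'}$ directly from $(1,1)$-triviality plus quantum-nonvanishing is precisely the part you have not supplied, and without the observation above there is no evident route to it.

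Second, in your nontrivial branch you assert that \autoref{cor:kaplansky} together with ``the analogue of \autoref{cl:k_blocks}'' produces a corresponding non-singleton-spectrum pair with the block forms $\left[\begin{smallmatrix} * & 0 \\ * & 0\end{smallmatrix}\right]$ and $\left[\begin{smallmatrix} * & * \\ 0 & 0\end{smallmatrix}\right]$ required by the ``Furthermore'' clause. That analogue is not available as stated: elements of $\dprop{\prop{\fc|\fc'}_{\overline{X}}}{1}{1}$ are built from restrictions of \emph{all} of $\prop{\fc|\fc'}$, including the wire and mixed-variance signatures, and such restrictions (e.g.\ $I_{\overline{X}}$ itself) do not have the skew form; an arbitrary non-singleton-spectrum element handed to you by Kaplansky's corollary therefore need not have the required zero blocks, and you give no argument that one with those blocks exists. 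The paper instead constructs the witness explicitly: take a generator $F$ supported on $\Delta$, form $F^\Delta = (I^{\otimes i-1} \otimes I_{\overline{X}}^\shortuparrow \otimes I^{\otimes n-i})F \neq 0$, use quantum-nonvanishing to obtain a grid with nonzero Holant containing it, and break the edge between $I_{\overline{X}}^\shortuparrow$ and the dual quantum gadget. The incidence structure of that cut (left end at $I_{\overline{X}}^\shortuparrow$, right end at a purely covariant quantum gadget supported on $Z$, plus \autoref{cl:k_blocks} applied on the $\gc|\gc'$ side) forces exactly the needed block forms, and $\tr(K|_\Delta) = \tr(K) \neq 0$ yields non-singleton spectrum, so the ``Furthermore'' clause applies with first block indexed by $\Delta$. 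This explicit construction is the missing ingredient in both of your branches: it both supplies the witness in the nontrivial case and shows that your trivial case can only occur when every generator avoids $\Delta$, where the conclusion is immediate.
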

    \begin{proof}
    We prove the claim by induction on $\delta$. If $\delta = 0$, then $X = Z$, so we 
    already have $F|_Z^\shortuparrow = (I_Z^\shortuparrow)^{\otimes n} F \in 
    \dprop{\fc|\fc'}{n}{0}$
    and $G'|_Z^\shortuparrow = G' (I_Z^\shortuparrow)^{\otimes n} \in \dprop{\gc|\gc'}{0}{n}$.

    Otherwise $\delta > 0$. First suppose every $F \in \fc$ and $G' \in \gc'$ is supported
    on only $\overline{\Delta} = X \cup \overline{Z}$ (that is, $F$ and $G'$ are 0 when given
    any input from $\Delta$). Then $F|_Z^\shortuparrow = 
    F|_X^\shortuparrow = (I_X^\shortuparrow)^{\otimes n} F \in \dprop{\fc|\fc'}{n}{0}$, 
    and similarly $G'|_Z^\shortuparrow \in \dprop{\gc|\gc'}{0}{n}$, so we are done. 
    Otherwise, there is an $F \in \fc$ or $G' \in \gc'$ that is supported on $\Delta$.
    We give a proof for the former case; the latter follows by transposed reasoning.
    There is an $\va \in [q]^n$ satisfying $F_{\va} \neq 0$ and $a_i \in \Delta$ for some $i$.
    Then, since $\Delta \subset \overline{X}$,
    \[
        0 \neq F^\Delta := (I^{\otimes i-1} \otimes I_{\overline{X}}^\shortuparrow \otimes 
        I^{\otimes n-i}) F =
        (I^{\otimes i-1} \otimes (I-I_{X}^\shortuparrow) \otimes I^{\otimes n-i})F
        \in \dprop{\fc|\fc'}{n}{0}.
    \]
    \begin{figure}[ht!]
        \centering
        \begin{tikzpicture}[scale=1]
% \tikzstyle{every node}=[font=\small]
\GraphInit[vstyle=Classic]
\SetUpEdge[style=-]
\SetVertexMath
\tikzset{VertexStyle/.style = {shape=circle, fill=black, minimum size=7pt, inner sep=1pt, draw}}

% arm y coordinates
\def\ysh{0.5}
\def\wlen{1.5}
\def\boxgap{0.2}
\def\boxx{1.5}

\draw[thick, color=black] (0,0) .. controls +(0,0.3) and +(\wlen/3,0) .. (-\wlen,\ysh)
node[draw, fill=black, regular polygon, regular polygon sides=3, minimum size = 8pt, inner sep = 0pt, pos=0.6, sloped=true] {};
\draw[thick, color=black] (0,0) .. controls +(0,-0.3) and +(\wlen/3,0) .. (-\wlen,-\ysh);
\draw[thick, color=black] (0,0) -- (-\wlen,0);
\node at (-\wlen/2+0.3,\ysh+\boxgap) {\small $I_{\overline{X}}^\uparrow$};

\Vertex[x=0,y=0,L=F]{f}

\filldraw[color=black!70, fill=blue!8] (-\wlen,\ysh+\boxgap) rectangle (-\wlen-\boxx,-\ysh-\boxgap);
\node at (-\wlen-\boxx/2,0) {$\mathbf{F}'$};

\node at (-\wlen-\boxx-1.5,0) {$\langle F',F^\Delta \rangle:$};

\begin{scope}[xshift=7cm]
\draw[thin, color=gray] (0,0) .. controls +(0,0.6) and +(\wlen,0) .. 
(-\wlen-\boxx-0.5,\ysh+3*\boxgap)
node[draw, fill=black, regular polygon, regular polygon sides=3, minimum size = 8pt, inner sep = 0pt, pos=0.8] {};
\draw[thin, color=gray] (-\wlen-0.5,\ysh) -- (0.5,\ysh);
\draw[thick, color=black] (0,0) .. controls +(0,-0.3) and +(\wlen/3,0) .. (-\wlen,-\ysh);
\draw[thick, color=black] (0,0) -- (-\wlen,0);
\node at (-\wlen-\boxx/2,\ysh+4*\boxgap) {\small $I_{\overline{X}}^\uparrow$};

\Vertex[x=0,y=0,L=F]{f}

\filldraw[color=black!70, fill=blue!8] (-\wlen,\ysh+\boxgap) rectangle (-\wlen-\boxx,-\ysh-\boxgap);
\node at (-\wlen-\boxx/2,0) {$\mathbf{F}'$};

\node at (-\wlen-\boxx-1,0) {$\vk:$};
\end{scope}

\end{tikzpicture}
        \caption{Breaking an edge of the grid $\langle F', F^{\Delta}\rangle$ to produce $\vk$, with $n=3$ and $i=1$.}
        \label{fig:break_edge}
    \end{figure}
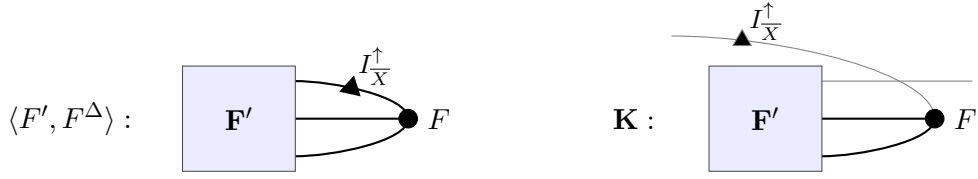
    Therefore, by the quantum-nonvanishing of $\fc|\fc'$, there is an
    quantum-$\fc|\fc'$-gadget $\mathbf{F}'$ with signature 
    $F' \in \dprop{\fc|\fc'}{0}{n}$ such that
    $\langle F', F^{\Delta}\rangle \neq 0$. 
    View $\langle F', F^\Delta\rangle$ as a $\prop{\fc|\fc'}$-grid composed of $F$, 
    $I_{\overline{X}}^\shortuparrow$, and $\mathbf{F}'$ (see \autoref{fig:break_edge}). 
    Breaking the edge between
    $I_{\overline{X}}^\shortuparrow$ and $\mathbf{F}'$ produces a
    $(1,1)$-quantum-$\fc|\fc'$-gadget $\vk$ with signature $K \in \dprop{\fc|\fc'}{1}{1}$
    such that $\tr(K) \neq 0$.
    The left input to $\vk$ is incident to
    $I_{\overline{X}}^\shortuparrow$ and the right input to $\vk$ is incident to 
    $\mathbf{F}'$, whose signature $F'$ is only supported on $Z$ by skew blocks. On the
    $\gc|\gc'$ side, every term of $\vk_{\fc|\fc'\to\gc|\gc'}$ is a nontrivial
    $\gc|\gc'$-gadget (it contains e.g. the generator $G \in \gc$ such that 
    $F \leftrightsquigarrow G$), so satisfies the condition of \autoref{cl:k_blocks}.
    Thus, letting $\widetilde{K}$ be the signature of
    $\vk_{\fc|\fc'\to\gc|\gc'}$ (so $K \leftrightsquigarrow \widetilde{K}$), 
    \[
        K = \left[\begin{array}{c:c|c} 
            0 & 0 & 0 \\ \hdashline K|_{\Delta,X} & K|_\Delta & 0 \\
            \hline && \\ K|_{\overline{Z}, X} & K|_{\overline{Z}, \Delta} &
            \phantom{---}0\phantom{---} \\ &\phantom{x}&
        \end{array}\right] \text{ and }
        \widetilde{K} = \left[\begin{array}{c:c|c} 
            \widetilde{K}|_X & \widetilde{K}|_{X,\Delta} & \widetilde{K}|_{X,\overline{Z}} \\
            \hdashline \widetilde{K}|_{\Delta,X} & \widetilde{K}|_\Delta & \widetilde{K}|_{\Delta,\overline{Z}} \\ 
            \hline && \\ 0 & 0 & 
            \phantom{---}0\phantom{---} \\ &\phantom{x}&
        \end{array}\right].
    \]
    Now $\tr(K) \neq 0$ implies that $\tr(K|_\Delta) \neq 0$ and
    \[
        \prop{\fc|\fc'}_{\overline{X}} \ni K|_{\overline{X}}
        = \left[\begin{array}{c|c} 
            K|_{\Delta} & \phantom{-} 0 \phantom{-} \\ 
            \hline K|_{\overline{Z}, \Delta} & \phantom{-} 0 \phantom{-}
        \end{array}\right]
        \leftrightsquigarrow
        \left[\begin{array}{c|c} 
            \widetilde{K}|_{\Delta} &  \widetilde{K}|_{\Delta,\overline{Z}} \\ 
            \hline 0 & 0
        \end{array}\right]
        = \widetilde{K}|_{\overline{X}} \in \prop{\gc|\gc'}_{\overline{X}}.
    \]
    By \autoref{prop:subdomain_indist}
    and \autoref{prop:subdomain_nondegen}, $\prop{\fc|\fc'}_{\overline{X}}$ and
    $\prop{\gc|\gc'}_{\overline{X}}$ are Bi-Holant-indistinguishable and quantum-nonvanishing.
    Hence $\tr(\widetilde{K}|_\Delta) = \tr(K|_\Delta) \neq 0$. Thus $K|_{\overline{X}}$
    and $\widetilde{K}|_{\overline{X}}$ do not have singleton spectrum, so, by
    \autoref{lem:subdomain}, there are 
    $T = \left[\begin{smallmatrix} 
    T|_\Delta & 0 \\ T|_{\overline{Z},\Delta} & T|_{\overline{Z}} \end{smallmatrix}\right] 
    \in \gl\big(\kk^{\overline{X}}\big)$
    and $U = \left[\begin{smallmatrix} 
    U|_\Delta & U|_{\Delta,\overline{Z}} \\ 0 & U|_{\overline{Z}} \end{smallmatrix}\right] 
    \in \gl\big(\kk^{\overline{X}}\big)$ 
    such that, after transforming $\prop{\fc|\fc'}_{\overline{X}}$ by $T$ and 
    $\prop{\gc|\gc'}_{\overline{X}}$ by $U$, we obtain $\prop{\prop{\fc|\fc'}_{\overline{X}}} \ni
    I_{X'}^{\shortuparrow \overline{X}} \leftrightsquigarrow I_{X'}^{\shortuparrow \overline{X}} \in
    \prop{\prop{\gc|\gc'}_{\overline{X}}}$ for some $\varnothing \neq X' \subset \Delta$. 
    Apply the $(Z,\overline{Z})$-lower-triangular and $(Z,\overline{Z})$-upper-triangular 
    transformations
    \[
        I_X \oplus T = \left[\begin{array}{c:c|c} 
            I_X & 0 & 0 \\ \hdashline 0 & T|_\Delta & 0 \\
            \hline && \\ 0 & T|_{\overline{Z},\Delta} & \phantom{--} T|_{\overline{Z}} 
            \phantom{--} \\ &\phantom{x}&
        \end{array}\right] \in \gl_q 
        \text{ and }
        I_X \oplus U = \left[\begin{array}{c:c|c} 
            I_X & 0 & 0 \\ \hdashline 0 & U|_\Delta & U|_{\Delta,\overline{Z}} \\
            \hline && \\ 0 & 0 & \phantom{--} U|_{\overline{Z}} 
            \phantom{--} \\ &\phantom{x}&
        \end{array}\right]
        \in \gl_q
    \]
    to $\fc|\fc'$ and $\gc|\gc'$, respectively.
    This preserves $I_X^\shortuparrow$ and $I_{\overline{X}}^\shortuparrow$ in 
    $\prop{\fc|\fc'}$ and $\prop{\gc|\gc'}$, preserves skew blocks by \autoref{cl:skew},
    and, by the above, realizes $\prop{\prop{\fc|\fc'}_{\overline{X}}} \ni
    I_{X'}^{\shortuparrow \overline{X}} \leftrightsquigarrow I_{X'}^{\shortuparrow \overline{X}} \in
    \prop{\prop{\gc|\gc'}_{\overline{X}}}$. Now, by \autoref{prop:uparrow},
    \[
        \prop{\fc|\fc'} \ni \big(I_{X'}^{\shortuparrow \overline{X}}\big)^\shortuparrow
        = I_{X'}^\shortuparrow \leftrightsquigarrow I_{X'}^\shortuparrow =
        \big(I_{X'}^{\shortuparrow \overline{X}}\big)^\shortuparrow \in \prop{\gc|\gc'}.
    \]
    Hence $\prop{\fc|\fc'} \ni I_{X \cup X'}^\shortuparrow = I_X^\shortuparrow + 
    I_{X'}^\shortuparrow \leftrightsquigarrow 
    I_{X \cup X'}^\shortuparrow \in \prop{\gc|\gc'}$. We have $Z = X \cup \Delta
    = (X \cup X') \cup \Delta'$, where $\Delta' = \Delta \setminus X'$. This 
    $\delta' := |\Delta'| < |\Delta| = \delta$, so, by induction (with $X := X \cup X'$),
    there exist $(Z,\overline{Z})$-lower- and upper-triangular transformations $T_{\delta'}$ 
    and $U_{\delta'}$ after which $\prop{\fc|\fc'}$ and $\prop{\gc|\gc'}$ contain the desired 
    $F|_Z^\shortuparrow$ and $G'|_Z^\shortuparrow$. In total, we have applied
    $T_{\delta} := T_{\delta'} \circ (I_X \oplus T)$ 
    and $U_{\delta} := U_{\delta'} \circ (I_X \oplus U)$, which, since both components are
    $(Z,\overline{Z})$-lower (resp. upper)-triangular, 
    are $(Z,\overline{Z})$-lower (resp. upper)-triangular.
    This completes the proof of \autoref{cl:delta}.
    \end{proof}

    Unless (by quantum-nonvanishing and covanishing) 
    $\fc|\fc'$ and $\gc|\gc'$ consist only of zero signatures, there is a nonzero
    $F' \in \fc'$, and there is an $\fc|\fc'$-grid $\Omega$ containing $F'$
    with $\holant(\Omega) \neq 0$. Breaking an edge of $\Omega$ yields a nontrivial
    binary $\fc|\fc'$-gadget
    $\vk$ whose signature $K \in \dprop{\fc|\fc'}{1}{1}$ satisfies $\tr(K) 
    = \holant(\Omega) \neq 0$. 
    By indistinguishability, the signature $\widetilde{K} \in \dprop{\gc|\gc'}{1}{1}$ of 
    $\vk_{\fc|\fc'\to\gc'|\gc'}$ has the same nonzero trace. \autoref{cl:k_blocks} asserts
    that $K$ and $\widetilde{K}$ have the form \eqref{eq:k_blocks}, so
    $\tr(K|_Z) = \tr(\widetilde{K}|_Z) \neq 0$. Therefore $K$ and $\widetilde{K}$ do not have 
    singleton spectrum, so, by \autoref{lem:subdomain}, we may transform $\fc$ by $T = 
    \left[\begin{smallmatrix} T|_Z & 0 \\ T|_{\overline{Z},Z} & T|_{\overline{Z}} 
    \end{smallmatrix}\right]$ and $\gc$ by
            $U = \left[\begin{smallmatrix} U|_Z & U|_{Z,\overline{Z}} \\ 0 & U|_{\overline{Z}} \end{smallmatrix}\right]$ to obtain
    $\prop{\fc|\fc'} \ni I_X^\shortuparrow \leftrightsquigarrow
    I_X^\shortuparrow  \in \prop{\gc|\gc'}$
    for some $\varnothing \neq X = [x] \subset Z$. By \autoref{cl:skew}, these transformations
    preserve skew blocks. Thus \autoref{cl:delta} applies and we obtain
    $T_\delta$ and $U_\delta$ under which $F|_Z^\shortuparrow \in
    \langle \fc|\fc' \rangle$ for every $F \in \fc$ and $G'|_Z^\shortuparrow \in \langle 
    \gc|\gc' \rangle$ for every $G' \in \gc'$. After the combined transformations 
    $T_\delta \circ T = \left[\begin{smallmatrix} (T_\delta T)|_Z & * \\ 0 & *\end{smallmatrix}\right]$ and 
    $U_\delta \circ U = \left[\begin{smallmatrix} (U_\delta U)|_Z & 0 \\ * & *\end{smallmatrix}\right]$,
    \eqref{eq:block_summary} becomes, by \autoref{cl:skew},
    \begin{equation}
        \label{eq:block_summary2}
        \begin{aligned}[c]
            F = \begin{bmatrix} ((T_\delta T)|_Z)^{\otimes n} \widetilde{F}|_Z \\ * \\ \vdots \\ *\end{bmatrix},
            G = \begin{bmatrix} ((U_\delta U)|_Z)^{\otimes n} \widetilde{F}|_Z \\ 0 \\ \vdots \\ 0\end{bmatrix},
        \end{aligned}
        \quad
        \begin{aligned}[c]
            &F' = \begin{bmatrix} \widetilde{G}'|_Z ((T_\delta T)|_Z^{-1})^{\otimes n} & 0 & \hdots & 0\end{bmatrix}, \\
            &G' = \begin{bmatrix} \widetilde{G}'|_Z ((U_\delta U)|_Z^{-1})^{\otimes n} & * & \hdots & *\end{bmatrix}.
        \end{aligned}
    \end{equation}
    for every $\dprop{\fc|\fc'}{n}{0} \ni F \leftrightsquigarrow G \in \dprop{\gc|\gc'}{n}{0}$
    and $\dprop{\fc|\fc'}{0}{n} \ni F' \leftrightsquigarrow G' \in \dprop{\gc|\gc'}{0}{n}$
    (where $\widetilde{F}$ and $\widetilde{G}'$ are the pre-transformation $F$ and $G'$).
    For $F \in \fc$, we now have $F - F|_Z^\shortuparrow \in \dprop{\fc|\fc'}{n}{0}$, and
    \eqref{eq:block_summary2} gives
    \[
        \langle (F-F|_Z^\shortuparrow), F'\rangle = 
        \langle (F-F|_Z^\shortuparrow)|_Z, F'|_Z \rangle
        = \langle 0, F'|_Z \rangle = 0
    \]
    for every $F' \in \dprop{\fc|\fc'}{0}{n}$, 
    so the quantum-nonvanishing of $\fc|\fc'$ implies that $F - F|_Z^\shortuparrow = 0$.
    Similarly, every $G' - G'|_Z^\shortuparrow = 0$. So \eqref{eq:block_summary2} is
    \[
        \begin{aligned}[c]
            F = \begin{bmatrix} ((T_\delta T)|_Z)^{\otimes n} \widetilde{F}|_Z \\ 0 \\ \vdots \\ 0\end{bmatrix},
            G = \begin{bmatrix} (U_\delta U)|_Z^{\otimes n} \widetilde{F}|_Z \\ 0 \\ \vdots \\ 0\end{bmatrix},
        \end{aligned} ~
        \begin{aligned}[c]
            &F' = \begin{bmatrix} \widetilde{F}'|_Z ((T_\delta T)|_Z^{-1})^{\otimes n} & 0 & \hdots & 0\end{bmatrix}, \\
            &G' = \begin{bmatrix} \widetilde{F}'|_Z ((U_\delta U)|_Z^{-1})^{\otimes n} & 0 & \hdots & 0\end{bmatrix}
        \end{aligned}
    \]
    for every $\fc \ni F \leftrightsquigarrow G \in \gc$ and
    $\fc' \ni F' \leftrightsquigarrow G' \in \gc'$. After a final transformation of
    $\fc|\fc'$ by $(T_\delta T)|_Z^{-1} \oplus I_{\overline{Z}} \in \gl_q$ and 
    $\gc|\gc'$ by $(U_\delta U)|_Z^{-1} \oplus I_{\overline{Z}} \in \gl_q$, we obtain
    $\fc|\fc' = \gc|\gc'$.
    \renewcommand{\qedsymbol}{$\Box$}
\end{proof}

We conclude this section by noting that \autoref{thm:duality} applies to any field $\kk$ of
characteristic 0. However, the
multitude of Jordan decompositions performed -- via \autoref{lem:subdomain} -- in the proof 
of \autoref{thm:main2} necessitate the extra assumption that $\kk$ is algebraically closed. 
Indeed, \autoref{thm:main2} does not hold without this assumption. For example, let 
$\kk = \rr$ and consider $\fc = (=_2|=_2)$ and $\gc = (-(=_2)|-(=_2))$. Every
$\gc$-grid must contain an equal number of covariant and contravariant $-(=_2)$ signatures,
hence an even number of total signatures. Therefore $\fc$ and $\gc$ are 
(Bi-)Holant-indistinguishable. Furthermore, if $K \in \dprop{\gc}{\ell}{r}$, then construct
$\pm K^\top \in \dprop{\gc}{r}{\ell}$ by connecting a left-facing $-(=_2)$ to every right 
input of $K$ and connecting a right-facing $-(=_2)$ to each left input of $K$ (this exchanges the
left and right inputs of $K$ while preserving the underlying signature up to a global $\pm$).
Now $\langle K, \pm K^\top \rangle \neq 0$ (as this is effectively a
contraction of $K$ with itself), so $K$ is $\gc$-nonvanishing. 
Thus $\gc$ and, similarly, $\fc$, are quantum-nonvanishing. \autoref{thm:main2} guarantees the
existence of a complex $T$ (in this case $T = iI$) transforming $\fc$ to $\gc$, but any such
$T$ must satisfy $TT^\top = T (=_2)^{1,1} T^\top = (-(=_2))^{1,1} = -I$, which is impossible 
for real-valued $T$.

% If $\kk$ does not have characteristic 0, then

\section{More Corollaries of the Main Theorems}
\label{sec:corollaries}
In this section, we exploit the expressive power of Holant and Bi-Holant to derive 
novel consequences of Theorems \ref{thm:main1} and \ref{thm:main2}. 
We begin with a complex generalization of \autoref{thm:orthogonal}, which does not hold as 
stated for complex-valued signatures -- for example, consider $\fc = \{0\}$ and the vanishing
set $\gc$ containing the single unary signature $[1,i]$. Say that 
$\fc \subset \tc(\cc^q)$ is \emph{conjugate-closed} if $F \in \fc \iff \overline{F} \in \fc$,
where $\overline{F}$ is the entrywise complex conjugate of $F$ (note that any real-valued
set is conjugate-closed). Young \cite[Section 6.1]{orthogonal} conjectured the following
extension of \autoref{thm:orthogonal}, which we
we confirm using \autoref{thm:main2}.
\begin{corollary}
    \label{cor:complex}
    Suppose $\fc,\gc \subset \tc(\cc^q)$ are conjugate-closed. Then $\fc$ and $\gc$
    are Holant-indistinguishable if and only if there is a complex orthogonal matrix $T$
    such that $T \fc = \gc$.
\end{corollary}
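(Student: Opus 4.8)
The plan is to move the problem into the bipartite setting of \autoref{thm:main2}, observe that conjugate-closedness automatically supplies that theorem's quantum-nonvanishing hypothesis, and then read off from \autoref{prop:orthogonal} that the transformation it produces must be orthogonal. Throughout I regard each signature of $\fc$ and $\gc$ as purely covariant (harmless, since Holant ignores the covariant/contravariant distinction), and I use freely that orthogonal transformations commute with reshaping tensors — the identity $(T^{\otimes 2}A)^{1,1} = TA^{1,1}T^{\top} = T \cdot A^{1,1}$ from around \autoref{prop:orthogonal} — so that a transformation recovered in the bipartite picture transfers back to the original signature shapes.

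The ``if'' direction is quick: if $T$ is complex orthogonal with $T\fc = \gc$, then $T \cdot (=_2) = (=_2)$ by \autoref{prop:orthogonal}, so $T$ carries $\{(=_2)\}|\fc$ to $\{(=_2)\}|\gc$; by \autoref{thm:holant} these bipartite sets are Bi-Holant-indistinguishable, and by \autoref{prop:shapeless} this is the same as $\fc$ and $\gc$ being Holant-indistinguishable. Conjugate-closedness is not needed here.

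For the converse, set $\fc|\fc' := \{(=_2)\}|(\fc \cup \{(=_2)\})$ — a contravariant $(=_2)$ on the left, a covariant $(=_2)$ together with the covariant signatures of $\fc$ on the right — and define $\gc|\gc'$ identically, with $(=_2) \leftrightsquigarrow (=_2)$ and $F \leftrightsquigarrow G$. By the $\holant_{=_2|\fc,=_2}$ form of \autoref{prop:shapeless}, $\fc|\fc'$ and $\gc|\gc'$ are Bi-Holant-indistinguishable. Since complex conjugation fixes $(=_2)$ and commutes with quantum-gadget construction, $\prop{\fc|\fc'}$ and $\prop{\gc|\gc'}$ are conjugate-closed. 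The crucial step is that $\fc|\fc'$ (and likewise $\gc|\gc'$) is quantum-nonvanishing. Because $(=_2)$ is present in both covariant and contravariant form, $\prop{\fc|\fc'}$ is closed under the reshaping operations of \autoref{prop:shapeless}: given a nonzero $K \in \dprop{\fc|\fc'}{\ell}{r}$, the signature $\overline{K}$ lies in $\dprop{\fc|\fc'}{\ell}{r}$, and reshaping it to a tensor $\overline{K}^{\top} \in \dprop{\fc|\fc'}{r}{\ell}$ (flip each contravariant dangling edge using a covariant $(=_2)$ and each covariant one using a contravariant $(=_2)$) gives
\[
    \big\langle K,\ \overline{K}^{\top} \big\rangle \;=\; \sum_{\va \in [q]^{\ell+r}} K(\va)\,\overline{K(\va)} \;=\; \sum_{\va \in [q]^{\ell+r}} |K(\va)|^2 \;>\; 0 .
\]
Thus $\overline{K}^{\top}$ witnesses that $K$ is $(\fc|\fc')$-nonvanishing, proving quantum-nonvanishing at every arity. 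Now \autoref{thm:main2} applies and yields $T \in \gl_q$ with $T(\fc|\fc') = \gc|\gc'$; since $T$ sends the contravariant $(=_2)$ of $\fc$ to that of $\gc$, \autoref{prop:orthogonal} shows $T$ is complex orthogonal, and since $T$ also sends each $F \in \fc$ to the corresponding $G \in \gc$, we obtain $T\fc = \gc$ in the original shapes by the reshaping remark.

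The main obstacle is precisely the quantum-nonvanishing claim: it is what turns the mere hypothesis ``conjugate-closed'' into something usable with \autoref{thm:main2}, and it works only because $(=_2)$ was placed on \emph{both} sides of the bipartition — which is why I pass through the $\holant_{=_2|\fc,=_2}$ form of \autoref{prop:shapeless} rather than $\holant_{=_2|\fc}$. Everything else is routine bookkeeping between the shapeless and bipartite pictures.
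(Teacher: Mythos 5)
Your proof is correct and follows essentially the same route as the paper: pass to the bipartite form $=_2|\fc,=_2$ via \autoref{prop:shapeless}, use conjugate-closedness together with the presence of $=_2$ on both sides to build from any nonzero quantum gadget signature $K$ a dual $\overline{K}^\top$ with $\langle K, \overline{K}^\top\rangle = \sum_{\va}|K(\va)|^2 > 0$ (the paper calls this $K^*$), apply \autoref{thm:main2}, and invoke \autoref{prop:orthogonal} to see the resulting $T$ is orthogonal. The only differences are cosmetic — you spell out the $\sum|K(\va)|^2$ computation and the reshaping remark more explicitly than the paper does.
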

\begin{proof}
    By \autoref{prop:shapeless}, $\fc$ and $\gc$ are Holant-indistinguishable if and only if
    $=_2|\fc , =_2$ and $=_2|\gc , =_2$
    are Holant-indistinguishable. Now the
    $(\Leftarrow)$ direction follows from \autoref{prop:orthogonal} and \autoref{thm:holant}.
    We will show that $=_2|\fc , =_2$ is quantum-nonvanishing (the $\gc$ argument
    is similar); then the $(\Rightarrow)$ direction follows from
    \autoref{prop:orthogonal} and \autoref{thm:main2} with $\kk = \cc$. Let 
    $0 \neq K \in \dprop{=_2|\fc , =_2}{\ell}{r}$. 
    By definition, $K = \sum_{i=1}^m c_i K_i$, where
    each $c_i \in \cc$ and each $K_i$ is the signature of a $(=_2|\fc , =_2)$-gadget 
    $\vk_i$. Since $\fc$ is conjugate closed, each entrywise conjugate $\overline{K_i}$ is
    the signature of the $(=_2|\fc , =_2)$-gadget constructed from $\vk_i$ with 
    replacing every $F \in \fc$ in $\vk_i$ by $\overline{F} \in \fc$. Therefore
    $\overline{K} = \sum_{i=1}^m \overline{c_i} \overline{K_i} \in \prop{=_2|\fc , =_2}$.
    Now construct the dual $K^* \in \dprop{=_2|\fc , =_2}{r}{\ell}$ 
    by connecting a
    left-facing $=_2$ to each right input of $\overline{K}$ and connecting a 
    right-facing $=_2$ to each left input of $\overline{K}$. Then $\langle K,
    K^* \rangle \neq 0$, so $K^*$ witnesses that $K$ is $(=_2|\fc , =_2)$-nonvanishing.
    See \autoref{fig:complex}.
\end{proof}

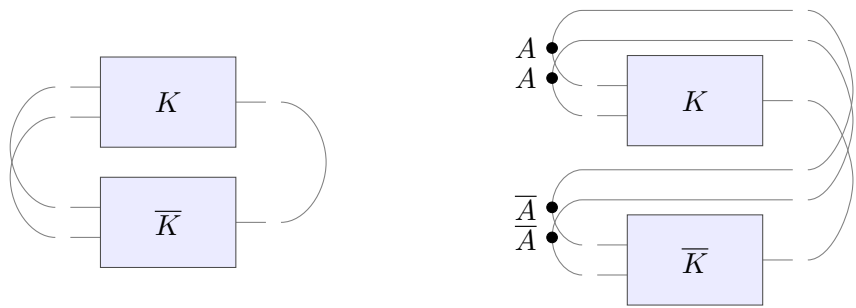
\begin{figure}[ht!]
    \begin{center}
    \begin{subfigure}{0.4\textwidth}
        \centering
        \begin{tikzpicture}[scale=.4]
\GraphInit[vstyle=Classic]
\SetUpEdge[style=-]
\SetVertexMath

\def\ysh{4}
\def\wlen{1}
\def\wgap{0.5}
\def\boxlen{4.5}

\draw[thin, color=gray] (-\wlen,1) -- (0,1);
\draw[thin, color=gray] (-\wlen,2) -- (0,2);
\draw[thin, color=gray] (\boxlen,1.5) -- (\boxlen+\wlen,1.5);
\filldraw[color=black!70, fill=blue!8] (0,0) rectangle (\boxlen,3);
\node at (\boxlen/2,1.5) {$\overline{K}$};

\begin{scope}[yshift=\ysh cm]
\draw[thin, color=gray] (-\wlen,1) -- (0,1);
\draw[thin, color=gray] (-\wlen,2) -- (0,2);
\draw[thin, color=gray] (\boxlen,1.5) -- (\boxlen+\wlen,1.5);
\filldraw[color=black!70, fill=blue!8] (0,0) rectangle (4.5,3);
\node at (\boxlen/2,1.5) {$K$};
\end{scope}

\foreach \yy in {1,2} {
\draw[thin, color=gray] (-\wlen-\wgap,\yy) .. controls
+(-0.8,0) and +(0,-1) ..
(-2.5*\wlen-\wgap,\yy+\ysh/2) .. controls
+(0,1) and +(-0.8,0) ..
(-\wlen-\wgap,\yy+\ysh);
}

\draw[thin, color=gray] (\boxlen+\wlen+\wgap,1.5) .. controls
+(0.8,0) and +(0,-1) ..
(\boxlen+2.5*\wlen+\wgap,1.5+\ysh/2) .. controls
+(0,1) and +(0.8,0) ..
(\boxlen+\wlen+\wgap,1.5+\ysh);

\end{tikzpicture}
        \vspace{0.5cm}
        \caption{The construction in \autoref{cor:complex}}
        \label{fig:complex}
    \end{subfigure}
    \begin{subfigure}{0.4\textwidth}
        \centering
        \begin{tikzpicture}[scale=.4]
\GraphInit[vstyle=Classic]
\SetUpEdge[style=-]
\SetVertexMath
\tikzset{VertexStyle/.style = {shape=circle, fill=black, minimum size=4pt, inner sep=1pt, draw}}

\def\ysh{5.3}
\def\wlen{1}
\def\wgap{0.5}
\def\boxlen{4.5}
\def\ay{2.5}

\draw[thin, color=gray] (-\wlen,1) -- (0,1);
\draw[thin, color=gray] (-\wlen,2) -- (0,2);
\draw[thin, color=gray] (\boxlen,1.5) -- (\boxlen+\wlen,1.5);
\filldraw[color=black!70, fill=blue!8] (0,0) rectangle (\boxlen,3);
\node at (\boxlen/2,1.5) {$\overline{K}$};

\foreach \yy in {1,2} {
\draw[thin, color=gray] (-\wlen-\wgap,\yy) .. controls
+(-0.5,0) and +(0,-0.7) ..
(-2*\wlen-\wgap,\yy+\ay/2) .. controls
+(0,0.7) and +(-0.5,0) ..
(-\wlen-\wgap,\yy+\ay) .. controls 
+(0.2,0) ..
(\boxlen+\wlen,\yy+\ay);
\Vertex[x=-2*\wlen-\wgap,y=\yy+\ay/2,L={\overline{A}},Lpos=180]{b\yy}
}

\begin{scope}[yshift=\ysh cm]
\draw[thin, color=gray] (-\wlen,1) -- (0,1);
\draw[thin, color=gray] (-\wlen,2) -- (0,2);
\draw[thin, color=gray] (\boxlen,1.5) -- (\boxlen+\wlen,1.5);
\filldraw[color=black!70, fill=blue!8] (0,0) rectangle (4.5,3);
\node at (\boxlen/2,1.5) {$K$};

\foreach \yy in {1,2} {
\draw[thin, color=gray] (-\wlen-\wgap,\yy) .. controls
+(-0.5,0) and +(0,-0.7) ..
(-2*\wlen-\wgap,\yy+\ay/2) .. controls
+(0,0.7) and +(-0.5,0) ..
(-\wlen-\wgap,\yy+\ay) .. controls 
+(0.2,0) ..
(\boxlen+\wlen,\yy+\ay);
\Vertex[x=-2*\wlen-\wgap,y=\yy+\ay/2,L={A},Lpos=180]{b\yy}
}

\end{scope}

\foreach \yy in {1+\ay,2+\ay,1.5} {
\draw[thin, color=gray] (\boxlen+\wlen+\wgap,\yy) .. controls
+(0.8,0) and +(0,-1) ..
(\boxlen+2.5*\wlen+\wgap,\yy+\ysh/2) .. controls
+(0,1) and +(0.8,0) ..
(\boxlen+\wlen+\wgap,\yy+\ysh);
}

\end{tikzpicture}
        \caption{The construction in \autoref{lem:nonsingular}}
        \label{fig:invertible}
    \end{subfigure}
    \end{center}
    \vspace{-0.4cm}
    \caption{Connecting quantum gadgets with their conjugates
    for quantum-nonvanishing.}
\end{figure}

\subsection{Bounded-Degree Graph Homomorphisms and \#CSP}
\label{sec:hom}
Graphs $F$ and $G$ are \emph{homomorphism indistinguishable} over
a graph class $\mathfrak{G}$ if $\hom(X,F) = \hom(X,G)$ for every
$X \in \mathfrak{G}$.
It follows from the discussion around \autoref{fig:hom} that
two graphs $F$ and $G$
are homomorphism-indistinguishable over graphs of maximum degree at most $d$ iff
$\eq_{\leq d}|\{A_F\}$ and $\eq_{\leq d}|\{A_G\}$ are Holant-indistinguishable.
More generally, for any $\fc$ consider $\csp(\fc) = \holant_{\eq\cup\fc}$ and 
$\csp^{(d)}(\fc) = \holant_{\eq_{\leq d}|\fc}$. The \emph{counting constraint 
satisfaction problem} \#CSP is a well-studied problem in counting complexity, itself the 
subject of broad dichotomy theorems 
\cite{bulatov_2013,dyer_richerby,cai-chen-lu,cai-chen-complexity}. 
In $\csp^{(d)}(\fc)$, every variable appears at most $d$ times
across all constraints \cite{bounded_csp}. In general, $\fc$
and $\gc$ are $\csp$-indistinguishable (i.e. $\eq\cup\fc$ and $\eq\cup\gc$ are 
Holant-indistinguishable) if and only if $\fc$ and $\gc$ are isomorphic \cite{young2022equality}. Putting 
$\fc = \{A_F\}$ and $\gc = \{A_G\}$, we recover the classical result of Lovász that
homomorphism-indistinguishability is equivalent to isomorphism \cite{lovasz_operations}.
This also follows from from \autoref{thm:orthogonal} and the fact that
$T$ is a permutation matrix if and only if $T \eq = \eq$ (viewing the equalities as 
contravariant) \cite{xia,orthogonal}. In fact, the following sharper characterization holds.
\begin{proposition}
    \label{prop:bounded}
    $T \in \gl_q$ is a permutation matrix if and only if 
    $T \{=_2,=_3\} = \{=_2,=_3\}$.
\end{proposition}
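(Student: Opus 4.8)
The plan is to reduce the set equality $T\{=_2,=_3\}=\{=_2,=_3\}$ to the two pointwise equalities $T\cdot(=_2)=(=_2)$ and $T\cdot(=_3)=(=_3)$; this reduction is valid because the $\gl_q$-action of \autoref{def:action} preserves arity and $=_2,=_3$ have distinct arities, so the set equality can hold only by matching arities. Here I view $=_2$ and $=_3$ as the contravariant ``diagonal'' tensors $=_n=\sum_{i=1}^q e_i^{\otimes n}$; writing $t_i:=Te_i$ for the columns of $T$ and using $T\cdot F=T^{\otimes n}F$ for contravariant $F$, the two equalities become $\sum_i t_i^{\otimes 2}=\sum_i e_i^{\otimes 2}$ and $\sum_i t_i^{\otimes 3}=\sum_i e_i^{\otimes 3}$.

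The forward direction is immediate: if $T$ is a permutation matrix with $Te_i=e_{\pi(i)}$, then $T\cdot(=_n)=\sum_i e_{\pi(i)}^{\otimes n}=\sum_j e_j^{\otimes n}=(=_n)$ for every $n$, in particular for $n=2,3$, so $T\{=_2,=_3\}=\{=_2,=_3\}$.

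For the converse I would proceed in three steps. First, invoke \autoref{prop:orthogonal}: $T\cdot(=_2)=(=_2)$ already forces $T$ to be orthogonal, so $t_i^\top t_j=\delta_{ij}$. Second, extract rank-one information from the cubic equation by contracting $\sum_i t_i^{\otimes 3}=\sum_i e_i^{\otimes 3}$ against $e_k^*$ in the third tensor slot, which gives, for each $k\in[q]$, the matrix identity $\sum_i (t_i)_k\,t_i t_i^\top = e_k e_k^\top$. Third, apply this identity to the vector $t_j$: orthonormality collapses the left-hand side to $(t_j)_k\,t_j$ and the right-hand side to $(e_k^\top t_j)\,e_k=(t_j)_k\,e_k$, so $(t_j)_k\,(t_j-e_k)=0$ for all $j,k$. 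Since $T\in\gl_q$, each column $t_j$ is nonzero, hence has some nonzero coordinate $k$ and therefore $t_j=e_k$; invertibility then makes $\{t_1,\dots,t_q\}$ a permutation of the standard basis, i.e. $T$ is a permutation matrix.

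The step I expect to be the crux is recognizing that the orthogonality coming for free from the $=_2$ constraint is precisely what trivializes the $=_3$ constraint: without it, the equation $\sum_i t_i^{\otimes 3}=\sum_i e_i^{\otimes 3}$ would require an appeal to uniqueness of the Waring/symmetric-rank decomposition of $\sum_i e_i^{\otimes 3}$ together with a separate argument excluding diagonal rescalings by cube roots of unity (the extra monomial automorphisms of the Fermat cubic $\sum_i x_i^3$), whereas with orthonormality of the $t_i$ in hand the whole thing reduces to the one-line contraction computation above. Everything else is routine, and the result also sharpens the known fact that $T\eq=\eq$ iff $T$ is a permutation matrix.
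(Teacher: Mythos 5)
Your proof is correct, but it takes a genuinely different route from the paper's. Both start by invoking \autoref{prop:orthogonal} to conclude $T$ is orthogonal from $T\cdot(=_2)=(=_2)$. From there the paper argues at the level of gadgets: by \autoref{prop:respect}, $T$ preserves the signature of every $(=_3|=_2)$-gadget, and every $=_n$ for $n\geq 1$ can be realized as such a gadget by chaining copies of $=_3$ together with covariant $=_2$; hence $T\eq=\eq$, which is a previously cited characterization of permutation matrices. Your argument is instead a direct linear-algebra computation that never passes through higher-arity equalities or appeals to the $T\eq=\eq$ characterization: contracting the $=_3$ constraint against $e_k^*$, applying the resulting rank-one identity to $t_j$, and using column orthonormality yields $(t_j)_k(t_j-e_k)=0$ for all $j,k$, forcing each column to be a standard basis vector. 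The paper's proof is shorter given the machinery (and makes the elegant ``$=_3$ generates $\eq$'' observation explicit), while yours is self-contained, elementary, and in fact gives an independent proof of the classical $T\eq=\eq$ characterization as a byproduct, since $\{=_2,=_3\}\subset\eq$. Your closing remark about Waring uniqueness and cube-root-of-unity monomial automorphisms correctly identifies why the $=_2$ constraint is load-bearing.
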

\begin{proof}
    Assume $T \{=_2,=_3\} = \{=_2,=_3\}$. By \autoref{prop:orthogonal}, $T$ is orthogonal and
    preserves the covariant $=_2$. Therefore, by \autoref{prop:respect},
    $T$ preserves the signature of every $(=_3|=_2)$-gadget. Every $=_n$ for $n \geq 4$
    is the signature of the $(=_3|=_2)$-gadget constructed by chaining together
    $n-2$ copies of $=_3$ using the covariant $=_2$ (and $=_1$ is realized by connecting two 
    inputs of a single $=_3$ with $=_2$). Therefore $T \eq = \eq$, so $T$ is a permutation 
    matrix.
\end{proof}
However, also recall the crucial fact that
$\holant_{\eq_{\leq d} \mid \{A_G\}}$, and, more generally, 
$\holant_{\eq_{\leq d} \mid \fc}$ are strictly bipartite problems, so \autoref{thm:orthogonal} does not apply. Instead, we
must apply the conditional \autoref{thm:main2} to obtain the following (and its extension
to $\csp^{(d)}$):
\begin{corollary}
    \label{cor:bounded}
    For $d \geq 3$,
    define $\mathfrak{N}_d$ to be the set of all graphs $G$ such that $\eq_{\leq d}|\{A_G\}$ 
    is quantum-nonvanishing. For any pair of graphs in $\mathfrak{N}_d$, 
    homomorphism-indistinguishability
    over graphs of maximum degree at most $d$ is equivalent to isomorphism.
\end{corollary}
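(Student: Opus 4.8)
The plan is to combine the graph-homomorphism-to-Holant dictionary developed around \autoref{fig:hom} with the second main theorem (\autoref{thm:main2}) and the permutation-matrix characterization in \autoref{prop:bounded}. First I would recall that, for graphs $F,G$ with adjacency matrices $A_F,A_G$, homomorphism-indistinguishability over graphs of maximum degree at most $d$ is, by the construction around \autoref{fig:hom}, exactly equivalent to Holant-indistinguishability of the bipartite signature sets $\eq_{\leq d}\mid\{A_F\}$ and $\eq_{\leq d}\mid\{A_G\}$. (Here the bijection $\leftrightsquigarrow$ pairs each $=_n$ with itself and $A_F$ with $A_G$.) One direction is immediate: if $F\cong G$ then the isomorphism permutation matrix $P$ satisfies $P A_F P^{-1}=A_G$ and $P\cdot(=_n)=(=_n)$ for every $n$ (permutation matrices fix all equality signatures), so $P(\eq_{\leq d}\mid\{A_F\})=\eq_{\leq d}\mid\{A_G\}$, and \autoref{thm:holant} gives Holant-indistinguishability, hence homomorphism-indistinguishability.

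For the converse, suppose $F,G\in\mathfrak{N}_d$ and they are homomorphism-indistinguishable over graphs of max degree $\le d$. Then $\eq_{\leq d}\mid\{A_F\}$ and $\eq_{\leq d}\mid\{A_G\}$ are Holant-indistinguishable, and by the definition of $\mathfrak{N}_d$ both are quantum-nonvanishing. Now \autoref{thm:main2} applies directly: there is a $T\in\gl_q$ with $T(\eq_{\leq d}\mid\{A_F\})=\eq_{\leq d}\mid\{A_G\}$. Since the bijection is arity-preserving and sends each $=_n$ ($n\le d$) to itself and $A_F$ to $A_G$, this means $T\cdot(=_n)=(=_n)$ for every $n\le d$ (in particular for $n=2$ and $n=3$, using $d\ge 3$) and $T A_F T^{-1}=A_G$. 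By \autoref{prop:bounded}, $T\{=_2,=_3\}=\{=_2,=_3\}$ forces $T$ to be a permutation matrix $P$. Then $P A_F P^{-1}=A_G$ says precisely that $P$ realizes a graph isomorphism $F\cong G$. This completes both directions.

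The one subtlety I expect to be the main obstacle is bookkeeping the shape/covariance conventions so that \autoref{prop:bounded} and \autoref{thm:main2} speak about the same action of $T$. In the bipartite problem $\holant_{\eq_{\leq d}\mid\{A_G\}}$ the equalities sit on the contravariant side and $A_G$ on the covariant side, so the transformation guaranteed by \autoref{thm:main2} acts as $T^{\otimes n}$ on the contravariant $=_n$ and as $A_G\mapsto A_G^{1,1}\mapsto T A_G^{1,1} T^{\top}$-type conjugation on the covariant $A_G$; one must check that "$T$ fixes the contravariant $=_2$ and $=_3$'' is the hypothesis \autoref{prop:bounded} needs (it is, by \autoref{prop:orthogonal} and the argument in \autoref{prop:bounded} itself), and that the resulting permutation $P$ acting by conjugation on $A_F$ genuinely gives $PA_FP^{-1}=A_G$ with $P$ a bona fide vertex permutation. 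Everything else is a routine assembly of results already in the paper; no new invariant-theoretic input is required beyond \autoref{thm:main2}.
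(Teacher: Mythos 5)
Your proposal is correct and matches the approach the paper intends: the paper does not write out a proof of Corollary~\ref{cor:bounded}, stating only that one ``must apply the conditional \autoref{thm:main2}'' after the preceding discussion of the hom-to-Holant translation and Proposition~\ref{prop:bounded}, and you have supplied exactly those details. Your bookkeeping of shapes is also right: $T\cdot(=_n)=(=_n)$ for contravariant $=_2$ forces $T$ orthogonal by Proposition~\ref{prop:orthogonal}, hence $T$ fixes covariant $=_2$ as well, so Proposition~\ref{prop:bounded} applies with $d\ge 3$; and for orthogonal $T$ the action on covariant binary $A_F$ gives $(T\cdot A_F)^{1,1}=T\,A_F^{1,1}\,T^{-1}$, so the resulting permutation matrix is a genuine graph isomorphism.
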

\begin{corollary}
    \label{cor:csp_bounded}
    For $d \geq 3$, if $\eq_{\leq d}|\fc$ and $\eq_{\leq d}|\gc$ are quantum-nonvanishing, then
    $\fc$ and $\gc$ are $\csp^{(d)}$-indistinguishable if and only if $\fc$ and $\gc$ are
    isomorphic.
\end{corollary}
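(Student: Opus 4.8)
The plan is to derive both directions from \autoref{thm:main2} and \autoref{prop:bounded}, treating the rest as bookkeeping about the arity-preserving correspondence $\leftrightsquigarrow$. Unwinding definitions, $\fc$ and $\gc$ are $\csp^{(d)}$-indistinguishable precisely when $\eq_{\leq d}|\fc$ and $\eq_{\leq d}|\gc$ are Holant-indistinguishable, where the bijection is the identity on $\eq_{\leq d}$ and the given bijection on $\fc$ and $\gc$.

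First I would handle the $(\Leftarrow)$ direction. If $\fc$ and $\gc$ are isomorphic, there is a permutation matrix $P \in \gl_q$ with $P \cdot F = G$ for every $\fc \ni F \leftrightsquigarrow G \in \gc$. A permutation matrix fixes every contravariant equality signature, so $P \cdot (=_n) = (=_n)$ for all $n$, and hence $P(\eq_{\leq d}|\fc) = \eq_{\leq d}|\gc$. By \autoref{thm:holant}, $\eq_{\leq d}|\fc$ and $\eq_{\leq d}|\gc$ are Holant-indistinguishable, i.e.\ $\fc$ and $\gc$ are $\csp^{(d)}$-indistinguishable.

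For the $(\Rightarrow)$ direction I would argue as follows. By hypothesis $\eq_{\leq d}|\fc$ and $\eq_{\leq d}|\gc$ are Holant-indistinguishable and quantum-nonvanishing, so \autoref{thm:main2} produces a $T \in \gl_q$ with $T(\eq_{\leq d}|\fc) = \eq_{\leq d}|\gc$. Since this equality of labeled sets respects $\leftrightsquigarrow$, and $=_n \leftrightsquigarrow =_n$ while $F \leftrightsquigarrow G$ for corresponding $F \in \fc$, $G \in \gc$, I obtain simultaneously $T \cdot (=_n) = (=_n)$ for every $n \leq d$ and $T \cdot F = G$ for every corresponding pair, i.e.\ $T\fc = \gc$. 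Now I invoke $d \geq 3$: the assignments $T \cdot (=_2) = (=_2)$ and $T \cdot (=_3) = (=_3)$ (forced by arity-preservation) give $T\{=_2,=_3\} = \{=_2,=_3\}$, so \autoref{prop:bounded} forces $T$ to be a permutation matrix. Together with $T\fc = \gc$, this exhibits an isomorphism between $\fc$ and $\gc$.

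There is no substantive obstacle here: the real content is contained in \autoref{thm:main2} (whose quantum-nonvanishing hypothesis is exactly what is assumed) and in \autoref{prop:bounded}. The only place $d \geq 3$ is used is in applying \autoref{prop:bounded}, which needs $=_3 \in \eq_{\leq d}$; for $d = 2$ the recovered $T$ would be constrained only to be orthogonal, consistent with the fact that graphs of maximum degree two detect only cycle statistics. I would close by noting that \autoref{cor:bounded} is the special case $\fc = \{A_F\}$, $\gc = \{A_G\}$: by the construction around \autoref{fig:hom}, $F$ and $G$ are homomorphism-indistinguishable over graphs of maximum degree at most $d$ if and only if $\{A_F\}$ and $\{A_G\}$ are $\csp^{(d)}$-indistinguishable; a permutation matrix conjugating $A_F$ to $A_G$ is exactly a graph isomorphism $F \cong G$; and $F, G \in \mathfrak{N}_d$ is precisely the hypothesis that $\eq_{\leq d}|\{A_F\}$ and $\eq_{\leq d}|\{A_G\}$ are quantum-nonvanishing.
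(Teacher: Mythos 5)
Your proof is correct and follows exactly the route the paper intends (the paper states this corollary without a written proof, as an immediate consequence of \autoref{thm:main2} together with \autoref{prop:bounded} and the permutation-matrix characterization of isomorphism). The only point worth noting is that \autoref{prop:bounded}'s proof uses a covariant $=_2$, while in $\eq_{\leq d}|\fc$ the equalities are contravariant; this is harmless since an orthogonal $T$ fixing the contravariant $=_2$ also fixes the covariant one, so your application is sound.
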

\autoref{cor:bounded} raises the interesting problem of characterizing when a graph is
in $\mathfrak{N}_d$. The next lemma, which generalizes the quantum-nonvanishing argument in 
\autoref{cor:complex}, implies that, if $A_G$ is invertible, then $G \in \mathfrak{N}_d$
for every $d \geq 2$. 
\begin{lemma}
    \label{lem:nonsingular}
    If $\fc \subset \tc(\cc^q)$ is conjugate-closed and satisfies 
    $(=_2) \in \dprop{\fc}{2}{0}$ and $A \in \dprop{\fc}{0}{2}$ (or vice-versa) for some
    $A$ whose matrix form $A^{1,1}$ is nonsingular, then $\fc$ is quantum-nonvanishing.
\end{lemma}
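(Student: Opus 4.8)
The plan is to reduce to the already-established situation of \autoref{cor:complex} by showing that, under the hypotheses, $\fc$ realizes \emph{both} a contravariant and a covariant $(=_2)$ as quantum-$\fc$-gadget signatures. Once both reshaping gadgets are available, the conjugate-closed argument of \autoref{cor:complex} applies essentially verbatim, with the nonsingular $A$ only needed to manufacture the missing $(=_2)$.

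First I would produce the missing $(=_2)$. Work in the case $(=_2) \in \dprop{\fc}{2}{0}$ and $A \in \dprop{\fc}{0}{2}$ with $A^{1,1}$ nonsingular; the other case is symmetric. Using the contravariant $(=_2)$ to reshape, bend one input of $A$ into a contravariant input to obtain $A^{1,1} \in \dprop{\fc}{1}{1}$. Since $\prop{\fc}$ is a wheeled PROP, $\dprop{\fc}{1}{1}$ is an algebra under matrix multiplication (composition of $(1,1)$-gadgets) containing $I$, and it is closed under scalar linear combinations; hence by Cayley--Hamilton $(A^{1,1})^{-1}$, being a polynomial in $A^{1,1}$ whose constant term is a nonzero multiple of $\det(A^{1,1})^{-1}$, also lies in $\dprop{\fc}{1}{1}$. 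Attaching $(A^{1,1})^{-1}$ to the still-covariant second input of $A$ yields a covariant binary gadget with matrix form $A^{1,1}(A^{1,1})^{-1} = I$, i.e.\ exactly $(=_2) \in \dprop{\fc}{0}{2}$.

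With both $(=_2) \in \dprop{\fc}{2}{0}$ and $(=_2) \in \dprop{\fc}{0}{2}$ in hand, I would finish as in \autoref{cor:complex}: given any nonzero $K \in \dprop{\fc}{\ell}{r}$, conjugate-closedness of $\fc$ gives the entrywise conjugate $\overline{K} \in \prop{\fc}$ (same gadget, each $F \in \fc$ replaced by $\overline{F} \in \fc$); attaching a covariant $(=_2)$ to each contravariant input of $\overline{K}$ and a contravariant $(=_2)$ to each covariant input produces $K^* \in \dprop{\fc}{r}{\ell}$ with $(K^*)_{\vb,\va} = \overline{K_{\va,\vb}}$, so that $\langle K, K^* \rangle = \sum_{\va,\vb} K_{\va,\vb}\,\overline{K_{\va,\vb}} = \|K\|^2 \neq 0$. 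Hence every nonzero $K \in \dprop{\fc}{\ell}{r}$ is $\fc$-nonvanishing for every $(\ell,r)$, so $\fc$ is quantum-nonvanishing.

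The only step needing care is the first one: verifying that the reshapings and the Cayley--Hamilton expression for $(A^{1,1})^{-1}$ are genuinely realized inside $\prop{\fc}$ — in particular that the $(1,1)$-graded piece is closed under matrix products, which is exactly the wheeled-PROP composition connecting the right input of one $(1,1)$-gadget to the left input of another. This is routine bookkeeping rather than a real obstacle; everything after the construction of the second $(=_2)$ is the proof of \autoref{cor:complex}.
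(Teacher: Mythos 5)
Your proof is correct, but it exploits the nonsingularity of $A^{1,1}$ differently from the paper. The paper never constructs $(A^{1,1})^{-1}$ or the covariant $(=_2)$ inside $\prop{\fc}$: given a nonzero $K \in \dprop{\fc}{\ell}{r}$ with $\ell > 0$, it simply caps the $\ell$ contravariant inputs of $K$ with copies of $A$ itself, using nonsingularity only through the injectivity of multiplication by $(A^{1,1})^{\otimes \ell}$ to conclude that the resulting fully covariant $K'$ is still nonzero; it then dualizes $K'$ via the contravariant $(=_2)$ and conjugation (as in \autoref{cor:complex}) and observes that the resulting nonzero grid \emph{contains} $K$, which is condition (2) of $\fc$-nonvanishing. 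Your route instead upgrades the hypothesis first: reshaping $A$ with the contravariant $(=_2)$ gives a matrix in $\dprop{\fc}{1}{1}$, and since that graded piece is a unital algebra (wire plus composition plus linear combinations), Cayley--Hamilton puts $(A^{1,1})^{-1}$, and hence the covariant $(=_2) = A^{1,1}(A^{1,1})^{-1}$, into $\prop{\fc}$; after that the conjugate-dual argument of \autoref{cor:complex} applies verbatim to every shape $(\ell,r)$ and even produces an explicit dual partner with $\langle K, K^*\rangle = \|K\|^2$ (condition (1)). Your version buys a genuinely stronger intermediate fact -- both orientations of $(=_2)$ lie in $\prop{\fc}$, so bipartiteness is circumvented outright -- at the modest cost of invoking Cayley--Hamilton and the algebra structure of $\dprop{\fc}{1}{1}$; the paper's version is more economical and avoids constructing inverses altogether. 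The bookkeeping you flag (which input of $A$ gets bent, transposes, dangling-edge orderings) does work out exactly as you expect, so there is no gap.
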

\begin{proof}
    Let $0 \neq K \in \dprop{\fc}{\ell}{r}$. If $\ell = 0$, then, as in \autoref{cor:complex},
    construct the dual $K^* \in \dprop{\fc}{r}{0}$ by connecting each right input of $K$ 
    with a copy of $(=_2) \in \dprop{\fc}{2}{0}$ and conjugating all coefficients and
    signatures composing $K$. Then $\langle K, K^*\rangle \neq 0$, so $K$ is
    $\fc$-nonvanishing. Otherwise, $(A^{1,1})^{\otimes \ell} K \neq 0$ by nonsingularity of 
    $A^{1,1}$, and therefore the signature $K' \in \dprop{\fc}{0}{\ell+r}$ formed by 
    connecting $\ell$ copies of $A$ with the $\ell$ left inputs of $K$ (equivalently, 
    connecting $\ell$ copies of right-facing $=_2$ with the $\ell$ left inputs of 
    $(A^{1,1})^{\otimes \ell} K$) is nonzero. Again, since $K'$ is now fully covariant, its
    dual $(K')^*$ is in $\dprop{\fc}{\ell+r}{0}$. 
    The $\prop{\fc}$-grid formed by contracting $K'$ and $(K')^*$ contains $K$ and has 
    nonzero value, so $K$ is $\fc$-nonvanishing. See \autoref{fig:invertible}.
\end{proof}
\begin{corollary}
    Let $\fc, \gc \subset \tc(\cc^q)$ be conjugate-closed. For $d \geq 3$,
    if there exist $A_1 \in \dprop{\eq_{\leq d}|\fc}{0}{2}$ 
    and $A_2 \in \dprop{\eq_{\leq d}|\gc}{0}{2}$ whose matrix forms are nonsingular, 
    then $\fc$ and $\gc$ are
    $\csp^{(d)}$-indistinguishable if and only if $\fc$ and $\gc$ are isomorphic.
\end{corollary}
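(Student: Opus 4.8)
The plan is to derive this corollary from \autoref{lem:nonsingular}, \autoref{thm:main2}, and \autoref{prop:bounded}, following the same pattern as \autoref{cor:bounded} and \autoref{cor:csp_bounded}, but with the nonsingularity hypotheses on $A_1$ and $A_2$ supplying the quantum-nonvanishing condition needed to invoke \autoref{thm:main2}. For the $(\Leftarrow)$ direction, suppose $\fc$ and $\gc$ are isomorphic via a permutation matrix $T \in \gl_q$ with $T\fc = \gc$. A permutation matrix fixes each contravariant equality signature and preserves arity, so $T\eq_{\leq d} = \eq_{\leq d}$ and hence $T(\eq_{\leq d}|\fc) = \eq_{\leq d}|\gc$; \autoref{thm:holant} then gives that $\eq_{\leq d}|\fc$ and $\eq_{\leq d}|\gc$ are Holant-indistinguishable, i.e., $\fc$ and $\gc$ are $\csp^{(d)}$-indistinguishable.

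For the $(\Rightarrow)$ direction I would first check that the signature set $\eq_{\leq d} \cup \fc$ underlying $\holant_{\eq_{\leq d}|\fc}$ satisfies the hypotheses of \autoref{lem:nonsingular}: it is conjugate-closed because the equality signatures are real-valued and $\fc$ is conjugate-closed by assumption; the contravariant $(=_2)$ lies in $\dprop{\eq_{\leq d}|\fc}{2}{0}$ since $d \geq 2$; and $A_1 \in \dprop{\eq_{\leq d}|\fc}{0}{2}$ has nonsingular matrix form by hypothesis. Thus \autoref{lem:nonsingular} shows $\eq_{\leq d}|\fc$ is quantum-nonvanishing, and the identical argument with $A_2$ shows $\eq_{\leq d}|\gc$ is quantum-nonvanishing. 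Now $\csp^{(d)}$-indistinguishability of $\fc$ and $\gc$ is by definition Holant-indistinguishability of $\eq_{\leq d}|\fc$ and $\eq_{\leq d}|\gc$, so \autoref{thm:main2} furnishes a $T \in \gl_q$ with $T(\eq_{\leq d}|\fc) = \eq_{\leq d}|\gc$. Since the arity-preserving bijection between these two bipartite sets matches $=_n$ with $=_n$, we have $T \cdot (=_2) = (=_2)$ and $T \cdot (=_3) = (=_3)$ (using $d \geq 3$), so $T\{=_2,=_3\} = \{=_2,=_3\}$ and \autoref{prop:bounded} forces $T$ to be a permutation matrix. Since also $T\fc = \gc$, the sets $\fc$ and $\gc$ are isomorphic.

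Each step is a direct invocation of an already-established result, so I do not anticipate a serious obstacle. The only points needing a little care are confirming that the bipartite generator set $\eq_{\leq d} \cup \fc$ genuinely meets the shape and conjugate-closedness requirements of \autoref{lem:nonsingular} --- in particular that $A_1$ and $A_2$ occupy the covariant role while $(=_2)$ occupies the contravariant role, matching the statement of that lemma --- and that the transformation produced by \autoref{thm:main2} must carry $=_n$ to $=_n$ (not merely somewhere inside $\eq_{\leq d}$), which is exactly what lets \autoref{prop:bounded} apply.
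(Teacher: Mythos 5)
Your proposal is correct and follows essentially the route the paper intends: the paper obtains this corollary by feeding \autoref{lem:nonsingular} (quantum-nonvanishing of $\eq_{\leq d}|\fc$ and $\eq_{\leq d}|\gc$) into \autoref{cor:csp_bounded}, which itself rests on \autoref{thm:main2} and \autoref{prop:bounded}, exactly the chain you unpack. Your care about the shapes in \autoref{lem:nonsingular} and about the transformation fixing each $=_n$ under the corresponding bijection is warranted and handled correctly.
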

Specializing to $\fc = \{A_F\}$ and $\gc = \{A_G\}$ for nonsingular $A_F$ and $A_G$, we obtain
\begin{corollary}
    Graphs $F$ and $G$ with nonsingular adjacency matrices are
    homomorphism-indistinguishable over graphs of maximum degree at most 3 if and only if
    they are isomorphic.
\end{corollary}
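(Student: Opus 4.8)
The plan is to read this as the preceding corollary specialized to $\fc = \{A_F\}$, $\gc = \{A_G\}$, $d = 3$, and to unwind the Holant encoding of bounded-degree homomorphism counting at both ends. First I would clear a trivial case: if $|V(F)| \neq |V(G)|$, the single-vertex graph $K_1$ (maximum degree $0 \leq 3$) already distinguishes them, since $\hom(K_1,F) = |V(F)| \neq |V(G)| = \hom(K_1,G)$, and $F \not\cong G$, so both sides of the stated equivalence are false. Hence assume $V(F) = V(G) = [q]$, and take $\kk = \cc$. The $(\Leftarrow)$ direction is then the classical fact (Lov\'asz) that isomorphic graphs receive equal homomorphism counts from every graph, in particular from every graph of maximum degree at most $3$.

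For $(\Rightarrow)$, assume $F$ and $G$ are homomorphism-indistinguishable over graphs of maximum degree at most $3$. By the bijection between such graphs $X$ and $\eq_{\leq 3}|\{A_G\}$-grids $\Omega_X$ (with $\holant(\Omega_X) = \hom(X,G)$) from \autoref{fig:hom} and the surrounding discussion, this hypothesis is precisely that $\eq_{\leq 3}|\{A_F\}$ and $\eq_{\leq 3}|\{A_G\}$ are Holant-indistinguishable, the correspondence being the identity on $\eq_{\leq 3}$ and pairing $A_F$ with $A_G$. As $A_F$ and $A_G$ are $\{0,1\}$-valued, hence real, these sets are conjugate-closed; moreover $(=_2) \in \dprop{\eq_{\leq 3}|\{A_F\}}{2}{0}$ (a contravariant generator, since $2 \leq 3$) and $A_F \in \dprop{\eq_{\leq 3}|\{A_F\}}{0}{2}$, whose matrix form $A_F^{1,1} = A_F$ is nonsingular by hypothesis. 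Thus \autoref{lem:nonsingular} makes $\eq_{\leq 3}|\{A_F\}$ quantum-nonvanishing, and likewise $\eq_{\leq 3}|\{A_G\}$; \autoref{thm:main2}, which holds over the algebraically closed field $\cc$, then yields a $T \in \gl_q$ with $T(\eq_{\leq 3}|\{A_F\}) = \eq_{\leq 3}|\{A_G\}$.

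It remains to recover a graph isomorphism from $T$. Since the $\gl_q$-action preserves arity and $=_1,=_2,=_3$ have pairwise distinct arities, $T$ must fix each of the three contravariant equality signatures; in particular $T\{=_2,=_3\} = \{=_2,=_3\}$, so $T$ is a permutation matrix by \autoref{prop:bounded}. Then $T \cdot A_F = A_G$ says $A_F$ and $A_G$ are conjugate by a permutation matrix, which, as both are symmetric $0/1$ matrices, is exactly $F \cong G$. (Equivalently, \autoref{lem:nonsingular} together with \autoref{cor:csp_bounded} gives that $\{A_F\}$ and $\{A_G\}$ are $\csp^{(3)}$-indistinguishable iff isomorphic, and one identifies $\csp^{(3)}$-indistinguishability of $\{A_F\},\{A_G\}$ with homomorphism-indistinguishability of $F,G$ over maximum-degree-$\leq 3$ graphs, and isomorphism of $\{A_F\},\{A_G\}$ as signature sets with $F \cong G$.)

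No deep obstacle arises: the only substantive input, \autoref{thm:main2}, is already established, and the rest is bookkeeping --- the vertex-count edge case, checking the (automatically satisfied) hypotheses of \autoref{lem:nonsingular} with $A_F$ itself playing the role of the nonsingular covariant binary signature, and translating between homomorphisms and $\eq_{\leq 3}|\{A_F\}$-grids and between permutation-conjugacy and graph isomorphism. The one point needing care is that identifying the two $\eq_{\leq 3}|\{A\}$-sets forces $T$ to fix the equality signatures, which is exactly the hypothesis under which \autoref{prop:bounded} identifies $T$ as a permutation.
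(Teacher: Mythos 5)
Your proposal is correct and follows exactly the paper's intended derivation: the paper obtains this corollary as a one-line specialization ($\fc=\{A_F\}$, $\gc=\{A_G\}$, $d=3$) of the preceding corollary, which rests on the same chain you use -- the hom-to-Holant encoding of \autoref{fig:hom}, quantum-nonvanishing via \autoref{lem:nonsingular}, \autoref{thm:main2}, and recovering a permutation via \autoref{prop:orthogonal}/\autoref{prop:bounded}. Your extra bookkeeping (the vertex-count edge case and spelling out why $T$ fixes the equalities and hence is a permutation) is just a more explicit rendering of what the paper leaves implicit.
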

 
We may also apply \autoref{thm:main1}, which applies to all
graphs $F$ and $G$, to
obtain the first characterization of homomorphism-indistinguishability over graphs of bounded degree.
\begin{corollary}
    \label{cor:bounded_orbit_cl}
    Graphs $F$ and $G$ on $q$ vertices are homomorphism-indistinguishable over all graphs of 
    maximum degree at most $d$ if and only if $\overline{\gl_q(\eq_{\leq d}|A_F)}
    \cap \overline{\gl_q(\eq_{\leq d}|A_G)} \neq \varnothing$.

    In particular, by \autoref{cor:decidable}, the problem of deciding whether two graphs
    are homomorphism-indistinguishable over all graphs of maximum degree at most $d$ is
    decidable.
\end{corollary}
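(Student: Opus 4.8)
The plan is to read off this corollary directly from \autoref{thm:main1} via the Holant encoding of bounded-degree graph homomorphism. First I would record the reformulation already noted in this section: by the bijection $X \mapsto \Omega_X$ between graphs of maximum degree at most $d$ and vertexless-loop-free $\eq_{\le d}|A_F$-grids (the construction around \autoref{fig:hom}), under which $\hom(X,F) = \holant_{\eq_{\le d}|A_F}(\Omega_X)$, the graphs $F$ and $G$ are homomorphism-indistinguishable over graphs of maximum degree at most $d$ if and only if $\eq_{\le d}|\{A_F\}$ and $\eq_{\le d}|\{A_G\}$ are Holant-indistinguishable.

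Next I would translate to the Bi-Holant setting. Because $A_F$ and $A_G$ are binary and $|V(F)| = |V(G)| = q$, the sets $\fc := \eq_{\le d} \cup \{A_F\}$ and $\gc := \eq_{\le d} \cup \{A_G\}$ — with each $=_n$ regarded as a contravariant tensor and each adjacency matrix as a covariant tensor — are finite (each contains $d+1$ signatures), lie in the same $\tc(\cc^q)$, and are bijective under $=_n \leftrightsquigarrow =_n$, $A_F \leftrightsquigarrow A_G$. Since $\biholant_{\fc}$ is equivalent to $\holant_{\eq_{\le d}|A_F}$ (and likewise for $\gc$), Holant-indistinguishability of $\eq_{\le d}|\{A_F\}$ and $\eq_{\le d}|\{A_G\}$ is exactly Bi-Holant-indistinguishability of $\fc$ and $\gc$.

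Then \autoref{thm:main1} applies verbatim: finite $\fc, \gc \subset \tc(\cc^q)$ are Bi-Holant-indistinguishable if and only if $\overline{\gl_q \fc} \cap \overline{\gl_q \gc} \neq \varnothing$. To finish I would only check the bookkeeping identifying $\gl_q \fc$, the orbit of \autoref{def:orbit_closure} formed inside $V = \bigoplus_i \leftindex_{\ell_i}{\tc}_{r_i}$, with the orbit $\gl_q(\eq_{\le d}|A_F)$ of the statement: by \autoref{def:action} the $\gl_q$-action carries each contravariant $=_n$ to $T^{\otimes n}(=_n)$ and the covariant $A_F$ to $A_F (T^{-1})^{\otimes 2}$, i.e. acts by simultaneous holographic transformation, and the Euclidean closures inside $V$ agree on both sides. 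Chaining the three equivalences proves the characterization, and applying \autoref{cor:decidable} to the finite sets $\fc, \gc$ yields the decidability claim.

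I do not expect a real obstacle: the whole argument is a translation. The two places needing care are (i) using the hypothesis that $F$ and $G$ have the \emph{same} vertex count $q$ — this is exactly what makes $\fc$ and $\gc$ bijective and hence legal inputs to \autoref{thm:main1} (and without it $F$ and $G$ are trivially distinguishable already by $\hom(K_1,\cdot)$) — and (ii) being explicit that the orbit closures are taken in the fixed finite-dimensional space $V$, so that ``the $\gl_q$-orbit closures intersect'' literally matches the notation $\overline{\gl_q(\eq_{\le d}|A_F)} \cap \overline{\gl_q(\eq_{\le d}|A_G)}$. Note that, in contrast to \autoref{cor:bounded} and \autoref{cor:csp_bounded}, no quantum-nonvanishing hypothesis and no lower bound on $d$ is required here, since the argument invokes \autoref{thm:main1} rather than \autoref{thm:main2}.
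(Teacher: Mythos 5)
Your proposal is correct and follows exactly the route the paper intends: encode homomorphisms from degree-$\le d$ graphs as $\holant_{\eq_{\le d}|A_F}$ (the Figure 2.1 construction), identify this with Bi-Holant-indistinguishability of the finite bijective sets $\eq_{\le d}\cup\{A_F\}$ and $\eq_{\le d}\cup\{A_G\}$ in $\tc(\cc^q)$, and then apply \autoref{thm:main1} for the orbit-closure characterization and \autoref{cor:decidable} for decidability. The paper treats this corollary as a direct specialization without further argument, so your write-up matches its proof in substance and level of detail.
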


The decidability result in \autoref{cor:bounded_orbit_cl} answers another open question, and
is interesting because homomorphism-indistinguishability over some graph classes (e.g. planar
graphs \cite{planar}) is known to be undecidable. While \autoref{thm:main1} and 
\autoref{cor:decidable}-- of which 
\autoref{cor:bounded_orbit_cl} is just a special case --
do not immediatly yield a polynomial-time algorithm for testing bounded-degree 
homomorphism-indistinguishability, it is possible that efficient algorithms -- and a
more specific, algebraic characterization than that given by \autoref{cor:bounded_orbit_cl}
-- exist for this case. 

\subsection{Indistinguishability, \textbf{TOCI}, and \textbf{GI}}
Lysikov and Walter \cite{lysikov} define the class \textbf{TOCI} of (problems reducible to) orbit closure intersection problems for actions of general linear groups on finite subsets of $\bigcup_{i=1}^m \tc(\cc^{q_i})$ (sets are allowed to contain
signatures with different domains). They show that $\textbf{GI} \subset \textbf{TOCI}$ by reducing isomorphism of $q$-vertex
graphs $F$ and $G$ to $\gl_q$-orbit-intersection of $(A_F,=_3|=_2)$ and
$(A_G,=_3|=_2)$ \cite[Lemma 5.26 and Proposition 5.28]{lysikov}. Our framework gives a short alternative proof of this reduction.
First, if $F \cong G$, then, since every permutation matrix preserves $\eq$, the
$\gl_q$-orbits of $(A_F,=_3|=_2)$ and $(A_G,=_3|=_2)$ intersect. Conversely,
the `easy' $(\Longleftarrow)$ direction of \autoref{thm:main1} asserts that $(A_F,=_3|=_2)$ and $(A_G,=_3|=_2)$
are Holant-indistinguishable. As in the proof of \autoref{prop:bounded}, contravariant
$=_3$ and covariant $=_2$ together construct all of $\eq$, so 
$(A_F|\eq)$ and $(A_G|\eq)$ are Holant-indistinguishable -- that is, $F$ and $G$ are
homomorphism-indistinguishable. Then, by Lovász's theorem, $F \cong G$.
Lysikov and Walter also show that the orbit closure intersection
problem for $\fc$ containing two contravariant ternary signatures and one
covariant binary signature, all on the same domain, is \textbf{TOCI}-complete \cite[Corollary 1.3]{lysikov}.
Combining these results with \autoref{thm:main1} gives the following.
\begin{corollary}
    The following problem is $\textbf{TOCI}$-complete: Given ternary $F_3,F_3',G_3,G_3'$ and
    binary $F_2,G_2$, decide whether $(F_3,F_3'\mid F_2)$ and $(G_3,G_3'\mid G_2)$ are Holant-indistinguishable.
\end{corollary}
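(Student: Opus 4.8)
The plan is to observe that, via the first main theorem, the stated decision problem \emph{is} --- up to a reduction that is the identity on the input data --- exactly the orbit-closure-intersection problem that Lysikov and Walter proved \textbf{TOCI}-complete, so both halves of completeness transfer at once.

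First I would unwind the definitions. An instance consists of contravariant ternary signatures $F_3,F_3',G_3,G_3'$ and covariant binary signatures $F_2,G_2$ on a common domain $[q]$. Put $\fc=\{F_3,F_3'\}$, $\fc'=\{F_2\}$, $\gc=\{G_3,G_3'\}$, $\gc'=\{G_2\}$, with the arity-preserving bijection $F_3\leftrightsquigarrow G_3$, $F_3'\leftrightsquigarrow G_3'$, $F_2\leftrightsquigarrow G_2$. Since $\biholant_{\fc\cup\fc'}$ is equivalent to $\holant_{\fc|\fc'}$, Holant-indistinguishability of $(F_3,F_3'\mid F_2)$ and $(G_3,G_3'\mid G_2)$ coincides with Bi-Holant-indistinguishability of the three-element sets $\{F_3,F_3',F_2\}$ and $\{G_3,G_3',G_2\}$, viewed as points of the fixed vector space $V=\leftindex_{3}{\tc}_{0}\oplus\leftindex_{3}{\tc}_{0}\oplus\leftindex_{0}{\tc}_{2}$. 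By \autoref{thm:main1}, this holds if and only if $\overline{\gl_q\{F_3,F_3',F_2\}}\cap\overline{\gl_q\{G_3,G_3',G_2\}}\neq\varnothing$. Thus the stated problem and this orbit-closure-intersection problem are one and the same, the translation in either direction being the identity on the list of signature entries and hence a trivial many-one reduction.

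Membership in \textbf{TOCI} is now immediate: orbit-closure intersection for a general linear group acting by the tensor action on a finite subset of $\tc(\cc^q)$ is exactly the type of problem in the definition of \textbf{TOCI} \cite{lysikov}, and the stated problem reduces to it as above (that it is at least decidable also follows from \autoref{cor:decidable}). For hardness, Lysikov and Walter \cite[Corollary 1.3]{lysikov} show that orbit-closure intersection for sets of two contravariant ternary signatures and one covariant binary signature, all on a single domain, is \textbf{TOCI}-hard; by the equivalence above this is literally the problem of deciding Holant-indistinguishability of $(F_3,F_3'\mid F_2)$ and $(G_3,G_3'\mid G_2)$, so that problem is \textbf{TOCI}-hard. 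Together with membership, this gives \textbf{TOCI}-completeness.

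There is no real obstacle: essentially all of the content sits in \autoref{thm:main1}. The one thing to verify carefully is that the two formulations are matched exactly --- the two ternaries must be contravariant and the binary covariant, matching the Lysikov and Walter hard instance; the acting group $\gl_q$ and its tensor action must agree on both sides; and the reductions must respect the bijection $\leftrightsquigarrow$ between corresponding signatures --- all of which hold by construction.
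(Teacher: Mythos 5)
Your proposal is correct and matches the paper's approach exactly: the paper's proof is the one-line remark that combining Lysikov and Walter's \textbf{TOCI}-completeness of orbit-closure intersection for two contravariant ternaries and one covariant binary with \autoref{thm:main1} (the equivalence of Bi-Holant-indistinguishability and orbit-closure intersection) yields the corollary, and you simply spell out what that combination means. Your careful check that the shapes (two contravariant ternaries, one covariant binary, common domain, $\gl_q$ acting by the tensor action) match on both sides is precisely the bookkeeping the paper leaves implicit.
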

\begin{corollary}
    The following problem is $\textbf{GI}$-hard: Given ternary $F_3,G_3$ and binary
    $F_2,F_2',G_2,G_2'$, decide whether  $(F_2,F_3\mid F_2')$ and $(G_2,G_3\mid G_2')$ 
    are Holant-indistinguishable.
\end{corollary}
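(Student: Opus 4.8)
The plan is to reduce graph isomorphism to the stated Holant-indistinguishability problem, mirroring the short proof of the Lysikov--Walter reduction given just above but at the level of indistinguishability rather than orbit closure intersection. Given graphs $F$ and $G$ --- which we may assume have the same number of vertices $q$, else output a trivially non-isomorphic instance --- I would map $F$ to the instance $(A_F, =_3 \mid =_2)$, that is, set $F_2 = A_F$, $F_3 = (=_3)$, $F_2' = (=_2)$, and similarly map $G$ to $(A_G, =_3 \mid =_2)$, using the diagonal correspondence $A_F \leftrightsquigarrow A_G$, $=_3 \leftrightsquigarrow =_3$, $=_2 \leftrightsquigarrow =_2$. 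This is computable in polynomial time and is of the required shape ($A_F$ and $=_2$ binary, $=_3$ ternary), so it remains to show $F \cong G$ iff $(A_F, =_3 \mid =_2)$ and $(A_G, =_3 \mid =_2)$ are Holant-indistinguishable.

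For the forward direction I would use the permutation matrix $\pi$ of an isomorphism $F \cong G$: then $\pi \cdot A_F = A_G$, $\pi$ fixes the contravariant $=_3$ since permutation matrices preserve $\eq$, and $\pi$ fixes the covariant $=_2$ since it is orthogonal (Proposition \ref{prop:orthogonal}). Hence $\pi(A_F, =_3 \mid =_2) = (A_G, =_3 \mid =_2)$, so the Holant Theorem (Theorem \ref{thm:holant}) gives Holant-indistinguishability. For the converse, I would argue exactly as in the preamble to this corollary: if the two instances are Holant-indistinguishable then, since chaining copies of the contravariant $=_3$ with the covariant $=_2$ realizes every $=_n$ (as in the proof of Proposition \ref{prop:bounded}) and the identical construction on each side keeps these equalities in correspondence, $(A_F \mid \eq)$ and $(A_G \mid \eq)$ are Holant-indistinguishable; by the bijection between left-side graphs $X$ and $\eq \mid A$-grids (the discussion around Figure \ref{fig:hom}) this says $\hom(X,F) = \hom(X,G)$ for all $X$, so $F$ and $G$ are homomorphism-indistinguishable, whence $F \cong G$ by Lov\'asz's theorem. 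Thus $\textbf{GI}$ reduces to the stated problem, which is therefore $\textbf{GI}$-hard.

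I do not anticipate a genuine obstacle: this corollary is the indistinguishability analogue of the preceding $\textbf{TOCI}$-completeness statement, with the easy $(\Leftarrow)$ direction of Theorem \ref{thm:main1} and Lov\'asz's theorem replacing the orbit-closure machinery. The only care needed is bookkeeping --- confirming the produced instance matches the template $(F_2, F_3 \mid F_2')$, and keeping the correspondence $\leftrightsquigarrow$ diagonal throughout, so that the equality gadgets built symmetrically on the two sides correspond and the final appeal to homomorphism-indistinguishability (hence to Lov\'asz) is legitimate.
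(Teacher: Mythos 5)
Your proposal is correct and matches the paper's intended argument: the paper establishes the same reduction $F \mapsto (A_F, =_3 \mid =_2)$ in the paragraph preceding the corollary, with the permutation matrix plus the Holant theorem for the forward direction and the $=_3$/covariant-$=_2$ construction of $\eq$ followed by Lov\'asz's theorem for the converse. The only difference is presentational: you invoke the Holant theorem directly rather than routing through orbit intersection, which is exactly how the indistinguishability version of the statement is meant to be read off.
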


\subsection*{Acknowledgements}
The second author thanks Tim Seppelt for helpful discussions on 
homomorphism indistinguishability.

\printbibliography

@INPROCEEDINGS{planar,
  author={Mančinska, Laura and Roberson, David E.},
  booktitle={2020 IEEE 61st Annual Symposium on Foundations of Computer Science (FOCS)},
  title={Quantum isomorphism is equivalent to equality of homomorphism counts from planar graphs},
  year={2020},
  pages={661-672},
  eprint={1910.06958},
  archivePrefix={arXiv},
  primaryClass={quant-ph},
  url={https://arxiv.org/abs/1910.06958}, 
  doi={10.1109/FOCS46700.2020.00067}}

@article{lovasz, 
    title = {The rank of connection matrices and the dimension of graph algebras}, 
    journal = {European Journal of Combinatorics},
    volume = {27},
    number = {6},
    pages = {962-970},
    year = {2006},
    author = {László Lovász}
}

@article{lovasz_operations, 
    title = {Operations with structures}, 
    journal = {Acta Mathematica Hungarica},
    volume = {18},
    number = {3-4},
    pages = {321-328},
    year = {1967},
    author = {László Lovász}
}

@book{symmetry, 
    title={Symmetry, Representations, and Invariants}, 
    series={Graduate Texts in Mathematics},
    publisher={Springer},
    address={New York},
    author={Goodman, Roe and Wallach, Nolan R.}, 
    year={2009},
    doi={https://doi.org/10.1007/978-0-387-79852-3}
}

@article{valiant,
    title = {Holographic algorithms},
    author = {Valiant, Leslie G.},
    journal = {SIAM Journal on Computing},
    year = {2008},
    number = {5},
    pages = {1565--1594}
}

@article{freedman_reflection,
 author = {Michael Freedman and László Lovász and Alexander Schrijver},
 journal = {Journal of the American Mathematical Society},
 number = {1},
 pages = {37--51},
 publisher = {American Mathematical Society},
 title = {Reflection Positivity, Rank Connectivity, and Homomorphism of Graphs},
 volume = {20},
 year = {2007}
}

@article{schrijver,
title = {Graph invariants in the spin model},
journal = {Journal of Combinatorial Theory, Series B},
volume = {99},
number = {2},
pages = {502-511},
year = {2009},
doi = {https://doi.org/10.1016/j.jctb.2008.10.003},
author = {Alexander Schrijver},
keywords = {Graph invariant, Spin model, Vertex model, Reflection positive, Partition function}
}

@article{cai-lovasz,
author = {Cai, Jin-Yi and Govorov, Artem},
title = {On a Theorem of Lov\'{a}sz That (., H) Determines the Isomorphism Type of H},
year = {2021},
issue_date = {June 2021},
publisher = {Association for Computing Machinery},
address = {New York, NY, USA},
volume = {13},
number = {2},
doi = {10.1145/3448641},
journal = {ACM Trans. Comput. Theory},
articleno = {11},
numpages = {25},
}

@misc{young2022equality,
      title={Equality on all \#CSP Instances Yields Constraint Function Isomorphism via Interpolation and Intertwiners}, 
      author={Ben Young},
      year={2022},
      eprint={2211.13688},
      archivePrefix={arXiv},
      primaryClass={cs.DM}
}

@inproceedings{xia,
  title={Holographic reduction: A domain changed application and its partial converse theorems},
  author={Xia, Mingji},
  booktitle={Automata, Languages and Programming: 37th International Colloquium, ICALP 2010, Bordeaux, France, July 6-10, 2010, Proceedings, Part I 37},
  pages={666--677},
  year={2010},
  organization={Springer}
}

@article{schrijver_tensor_2008,
	title = {Tensor subalgebras and first fundamental theorems in invariant theory},
	volume = {319},
	doi = {10.1016/j.jalgebra.2007.10.039},
	abstract = {Let V be an n-dimensional complex inner product space and let T:=T(V)⊗T(V∗) be the mixed tensor algebra over V. We characterize those subsets A of T for which there is a subgroup G of the unitary group U(n) such that A=TG. They are precisely the nondegenerate contraction-closed graded ∗-subalgebras of T. While the proof makes use of the First Fundamental Theorem for GL(n,C) (in the sense of Weyl), the characterization has as direct consequences First Fundamental Theorems for several subgroups of GL(n,C). Moreover, a Galois correspondence between linear algebraic ∗-subgroups of GL(n,C) and nondegenerate contraction-closed graded ∗-subalgebras of T is derived. We also consider some combinatorial applications, viz. to self-dual codes and to combinatorial parameters.},
	number = {3},
	journal = {Journal of Algebra},
	author = {Schrijver, Alexander},
	month = feb,
	year = {2008},
	keywords = {Invariant theory, First fundamental theorem, Self-dual code, Tensor subalgebra},
	pages = {1305--1319},
}

@article{szegedy_edge_2007,
	title = {Edge coloring models and reflection positivity},
	volume = {20},
	copyright = {https://www.ams.org/publications/copyright-and-permissions},
	doi = {10.1090/S0894-0347-07-00568-1},
	abstract = {Solving a conjecture of M. H. Freedman, L. Lovász and A. Schrijver, we prove that a graph parameter is edge reflection positive and multiplicative if and only if it can be represented by an edge coloring model.},
	language = {en},
	number = {4},
	journal = {Journal of the American Mathematical Society},
	author = {Szegedy, Balázs},
	month = may,
	year = {2007},
	pages = {969--988},
}

@article{regts_rank_2012,
	title = {The rank of edge connection matrices and the dimension of algebras of invariant tensors},
	volume = {33},
	doi = {10.1016/j.ejc.2012.01.014},
	abstract = {We characterize the rank of edge connection matrices of partition functions of real vertex models, as the dimension of the homogeneous components of the algebra of G-invariant tensors. Here G is the subgroup of the real orthogonal group that stabilizes the vertex model. This answers a question of Balázs Szegedy from 2007.},
	number = {6},
	journal = {European Journal of Combinatorics},
	author = {Regts, Guus},
	month = aug,
	year = {2012},
	pages = {1167--1173},
}

@article{cai_planar_2023,
  title={Planar \#CSP Equality Corresponds to Quantum Isomorphism -- A Holant Viewpoint},
  author={Cai, Jin-Yi and Young, Ben},
  journal={ACM Transactions on Computation Theory},
  year={2024},
  month={9},
  publisher={ACM New York, NY},
  volume={16},
  number={3},
  publisher = {Association for Computing Machinery},
  address = {New York, NY, USA},
  articleno={18}
}

@book{weyl,
	title = {The {Classical} {Groups}: {Their} {Invariants} and {Representations}},
	isbn = {978-0-691-05756-9},
	shorttitle = {The {Classical} {Groups}},
	publisher = {Princeton University Press},
	author = {Weyl, Hermann},
	year = {1966},
	doi = {10.2307/j.ctv3hh48t},
}

@InProceedings{dell,
  author =	{Dell, Holger and Grohe, Martin and Rattan, Gaurav},
  title =	{{Lov\'{a}sz Meets Weisfeiler and Leman}},
  booktitle =	{45th International Colloquium on Automata, Languages, and Programming (ICALP 2018)},
  pages =	{40:1--40:14},
  series =	{Leibniz International Proceedings in Informatics (LIPIcs)},
  year =	{2018},
  volume =	{107},
  editor =	{Chatzigiannakis, Ioannis and Kaklamanis, Christos and Marx, D\'{a}niel and Sannella, Donald},
  publisher =	{Schloss Dagstuhl -- Leibniz-Zentrum f{\"u}r Informatik},
  address =	{Dagstuhl, Germany},
  doi =		{10.4230/LIPIcs.ICALP.2018.40},
  annote =	{Keywords: graph isomorphism, graph homomorphism numbers, tree width}
}

@INPROCEEDINGS{shao,
  author={Shao, Shuai and Cai, Jin-Yi},
  booktitle={2020 IEEE 61st Annual Symposium on Foundations of Computer Science (FOCS)}, 
  title={A Dichotomy for Real Boolean Holant Problems}, 
  year={2020},
  volume={},
  number={},
  pages={1091-1102},
  eprint={2005.07906},
  archivePrefix={arXiv},
  primaryClass={cs.CC},
  url={https://arxiv.org/abs/2005.07906}, 
  doi={10.1109/FOCS46700.2020.00105}}

@article{cai_computational_2011,
author = {Cai, Jin-Yi and Lu, Pinyan and Xia, Mingji},
title = {Computational Complexity of Holant Problems},
journal = {SIAM Journal on Computing},
volume = {40},
number = {4},
pages = {1101-1132},
year = {2011},
doi = {10.1137/100814585}
}

@article{cai_complete_2016,
author = {Cai, Jin-Yi and Guo, Heng and Williams, Tyson},
title = {A Complete Dichotomy Rises from the Capture of Vanishing Signatures},
journal = {SIAM Journal on Computing},
volume = {45},
number = {5},
pages = {1671-1728},
year = {2016},
doi = {10.1137/15M1049798}
}

@article{lin_complexity_2018,
author = {Lin, Jiabao and Wang, H.},
title = {The Complexity of Boolean Holant Problems with Nonnegative Weights},
journal = {SIAM Journal on Computing},
volume = {47},
number = {3},
pages = {798-828},
year = {2018},
doi = {10.1137/17M113304X}
}

@article{huang_2016_dichotomy,
  title={A dichotomy for real weighted Holant problems},
  author={Huang, Sangxia and Lu, Pinyan},
  journal={Computational Complexity},
  volume={25},
  pages={255--304},
  year={2016},
  publisher={Springer}
}

@inproceedings{valiant_2006_accidental,
  title={Accidental algorthims},
  author={Valiant, Leslie G},
  booktitle={2006 47th Annual IEEE Symposium on Foundations of Computer Science (FOCS'06)},
  pages={509--517},
  year={2006},
  organization={IEEE}
}

@article{dvorak_recognizing_2010,
	title = {On recognizing graphs by numbers of homomorphisms},
	volume = {64},
	doi = {10.1002/jgt.20461},
	language = {en},
	number = {4},
	journal = {Journal of Graph Theory},
	author = {Dvořák, Zdeněk},
	year = {2010},
	keywords = {graph homomorphisms, graph reconstruction, number of homomorphisms},
	pages = {330--342},
}

@inproceedings{cai2015holant,
  title={A Holant dichotomy: is the FKT algorithm universal?},
  author={Cai, Jin-Yi and Fu, Zhiguo and Guo, Heng and Williams, Tyson},
  booktitle={2015 IEEE 56th Annual Symposium on Foundations of Computer Science},
  pages={1259--1276},
  year={2015},
  organization={IEEE}
}

@misc{lysikov,
      title={Complexity theory of orbit closure intersection for tensors: reductions, completeness, and graph isomorphism hardness}, 
      author={Vladimir Lysikov and Michael Walter},
      year={2024},
      eprint={2411.04639},
      archivePrefix={arXiv},
      primaryClass={cs.CC},
      url={https://arxiv.org/abs/2411.04639}, 
      note={To appear in FOCS'25}
}

@article{prop,
title = {Invariant Theory and wheeled PROPs},
journal = {Journal of Pure and Applied Algebra},
volume = {227},
number = {9},
pages = {107302},
year = {2023},
issn = {0022-4049},
doi = {https://doi.org/10.1016/j.jpaa.2022.107302},
author = {Harm Derksen and Visu Makam}
}

@INPROCEEDINGS{minimal_canonical,
author = { Acuaviva, Arturo and Makam, Visu and Nieuwboer, Harold and Perez-Garcia, David and Sittner, Friedrich and Walter, Michael and Witteveen, Freek },
booktitle = { 2023 IEEE 64th Annual Symposium on Foundations of Computer Science (FOCS) },
title = {{ The minimal canonical form of a tensor network }},
year = {2023},
volume = {},
ISSN = {},
pages = {328-362},
keywords = {Computer science;Geometry;Tensors;Quantum state;Optimization},
publisher = {IEEE Computer Society},
address = {Los Alamitos, CA, USA},
month =Nov,
eprint={2209.14358},
archivePrefix={arXiv},
primaryClass={quant-ph},
url={https://arxiv.org/abs/2209.14358}, 
}

@book{mumford, 
    address={Berlin, Heidelberg}, 
    title={Geometric Invariant Theory}, 
    volume={34}, 
    DOI={}, 
    publisher={Springer},
    author={Mumford, David and Fogarty, John and Kirwan, Frances}, 
    year={1994},
    series={Ergebnisse der Mathematik und ihrer Grenzgebiete},
    edition={3rd ed.},
    isbn={978-3-540-56963-3}
}

@book{derksen, 
    address={Berlin, Heidelberg}, 
    title={Computational Invariant Theory}, 
    DOI={https://doi.org/10.1007/978-3-662-48422-7}, 
    publisher={Springer},
    author={Derksen, Harm and Kemper, Gregor}, 
    year={2015},
    series={Encyclopaedia of Mathematical Sciences},
    edition={2nd ed.}
}

@article{derksen2022polystability,
  title={Polystability in positive characteristic and degree lower bounds for invariant rings},
  author={Derksen, Harm and Makam, Visu},
  journal={Journal of Combinatorial Algebra},
  volume={6},
  number={3},
  pages={353--405},
  year={2022}
}

@article{kaplansky_revisited,
title = {A theorem of Kaplansky revisited},
journal = {Linear Algebra and its Applications},
volume = {487},
pages = {268-275},
year = {2015},
issn = {0024-3795},
doi = {https://doi.org/10.1016/j.laa.2015.09.024},
author = {Heydar Radjavi and Bamdad R. Yahaghi},
}

@book{kaplansky,
  title={Fields and Rings},
  author={Kaplansky, Irving},
  year={1972},
  publisher={University of Chicago Press},
  edition={2nd ed.},
  address={Chicago},
  series={Chicago Lectures in Mathematics}
}

@InProceedings{orthogonal,
  author =	{Young, Ben},
  title =	{{The Converse of the Real Orthogonal Holant Theorem}},
  booktitle =	{52nd International Colloquium on Automata, Languages, and Programming (ICALP 2025)},
  pages =	{136:1--136:20},
  series =	{Leibniz International Proceedings in Informatics (LIPIcs)},
  year =	{2025},
  volume =	{334},
  publisher =	{Schloss Dagstuhl -- Leibniz-Zentrum f{\"u}r Informatik},
  address =	{Dagstuhl, Germany},
  doi =		{10.4230/LIPIcs.ICALP.2025.136},
  eprint={2409.06911},
  archivePrefix={arXiv},
  primaryClass={cs.DM},
  url={https://arxiv.org/abs/2409.06911}, 
}

@article{jing_unitary_2015,
title = {Unitary and orthogonal equivalence of sets of matrices},
journal = {Linear Algebra and its Applications},
volume = {481},
pages = {235-242},
year = {2015},
issn = {0024-3795},
doi = {https://doi.org/10.1016/j.laa.2015.04.036},
author = {Naihuan Jing},
}

@misc{oddomorphism,
      title={Oddomorphisms and homomorphism indistinguishability over graphs of bounded degree}, 
      author={David E. Roberson},
      year={2022},
      eprint={2206.10321},
      archivePrefix={arXiv},
      primaryClass={math.CO},
      url={https://arxiv.org/abs/2206.10321}, 
}

@article{hilbert,
author = {David Hilbert},
journal = {Mathematische Annalen},
pages = {473-534},
title = {Ueber dieTheorie der algebraischen Formen},
url = {http://eudml.org/doc/157506},
volume = {36},
year = {1890},
}

@article{derksen2001polynomial,
  title={Polynomial bounds for rings of invariants},
  author={Derksen, Harm},
  journal={Proceedings of the American Mathematical Society},
  volume={129},
  number={4},
  pages={955--963},
  year={2001}
}

@article{derksen1999computation,
  title={Computation of invariants for reductive groups},
  author={Derksen, Harm},
  journal={Advances in Mathematics},
  volume={141},
  number={2},
  pages={366--384},
  year={1999},
  publisher={Elsevier}
}

@article{bounded_csp,
author = {Cai, Jin-Yi and Szabo, Daniel P.},
title = {Bounded Degree Nonnegative Counting CSP},
year = {2024},
month = {March},
publisher = {Association for Computing Machinery},
address = {New York, NY, USA},
volume = {16},
number = {2},
issn = {1942-3454},
doi = {10.1145/3632184},
journal = {ACM Trans. Comput. Theory},
articleno = {5},
numpages = {18}
}

@article{bulatov_2013,
author = {Bulatov, Andrei A.},
title = {The Complexity of the Counting Constraint Satisfaction Problem},
year = {2013},
issue_date = {October 2013},
publisher = {Association for Computing Machinery},
address = {New York, NY, USA},
volume = {60},
number = {5},
doi = {10.1145/2528400},
journal = {J. ACM},
month = {oct},
articleno = {34},
numpages = {41},
keywords = {Constraint Satisfaction problem, homomorphism problem, dichotomy theorem, counting problems, complexity}
}

@article{dyer_richerby,
author = {Dyer, Martin and Richerby, David},
title = {An Effective Dichotomy for the Counting Constraint Satisfaction Problem},
journal = {SIAM Journal on Computing},
volume = {42},
number = {3},
pages = {1245-1274},
year = {2013},
doi = {10.1137/100811258},
}

@article{cai-chen-lu,
  author    = {Jin{-}Yi Cai and
               Xi Chen and
               Pinyan Lu},
  title     = {Nonnegative Weighted {\#}CSP: An Effective Complexity Dichotomy},
  journal   = {{SIAM} J. Comput.},
  volume    = {45},
  number    = {6},
  pages     = {2177--2198},
  year      = {2016},
  doi       = {10.1137/15M1032314},
}

@article{cai-chen-complexity,
author = {Cai, Jin-Yi and Chen, Xi},
title = {Complexity of Counting CSP with Complex Weights},
year = {2017},
issue_date = {June 2017},
publisher = {Association for Computing Machinery},
address = {New York, NY, USA},
volume = {64},
number = {3},
doi = {10.1145/2822891},
journal = {J. ACM},
month = {jun},
articleno = {19},
numpages = {39},
keywords = {counting problems, complexity dichotomy, Constraint satisfaction problem}
}

@inproceedings{cai2008holographic,
  title={Holographic algorithms by Fibonacci gates and holographic reductions for hardness},
  author={Cai, Jin-Yi and Lu, Pinyan and Xia, Mingji},
  booktitle={2008 49th Annual IEEE Symposium on Foundations of Computer Science},
  pages={644--653},
  year={2008},
  organization={IEEE}
}

@inproceedings{domain3,
  author       = {Jin{-}Yi Cai and
                  Jin Soo Ihm},
  editor       = {Keren Censor{-}Hillel and
                  Fabrizio Grandoni and
                  Jo{\"{e}}l Ouaknine and
                  Gabriele Puppis},
  title        = {Holant* Dichotomy on Domain Size 3: {A} Geometric Perspective},
  booktitle    = {52nd International Colloquium on Automata, Languages, and Programming,
                  {ICALP} 2025, July 8-11, 2025, Aarhus, Denmark},
  series       = {LIPIcs},
  volume       = {334},
  pages        = {148:1--148:18},
  publisher    = {Schloss Dagstuhl - Leibniz-Zentrum f{\"{u}}r Informatik},
  year         = {2025},
  doi          = {10.4230/LIPICS.ICALP.2025.148},
}

@inproceedings{cai2013dichotomy,
  title={Dichotomy for Holant problems with a function on domain size 3},
  author={Cai, Jin-Yi and Lu, Pinyan and Xia, Mingji},
  booktitle={Proceedings of the Twenty-Fourth Annual ACM-SIAM Symposium on Discrete Algorithms},
  pages={1278--1295},
  year={2013},
  organization={SIAM}
}

@article{art_to_science,
title = {Holographic algorithms: From art to science},
journal = {Journal of Computer and System Sciences},
volume = {77},
number = {1},
pages = {41-61},
year = {2011},
issn = {0022-0000},
doi = {https://doi.org/10.1016/j.jcss.2010.06.005},
author = {Jin-Yi Cai and Pinyan Lu},
keywords = {Holographic algorithm, Matchgate, Matchgate realizability},
}

@article{cai2007valiant,
  title={Valiant’s Holant theorem and matchgate tensors},
  author={Cai, Jin-Yi and Choudhary, Vinay},
  journal={Theoretical Computer Science},
  volume={384},
  number={1},
  pages={22--32},
  year={2007},
  publisher={Elsevier}
}

@article{dimensionality,
title = {Holographic algorithms: The power of dimensionality resolved},
journal = {Theoretical Computer Science},
volume = {410},
number = {18},
pages = {1618-1628},
year = {2009},
issn = {0304-3975},
doi = {https://doi.org/10.1016/j.tcs.2008.12.047},
author = {Jin-Yi Cai and Pinyan Lu},
keywords = {Holographic algorithms, Bases, Dimension},
}

@InProceedings{kar2025npa,
  author =	{Kar, Prem Nigam and Roberson, David E. and Seppelt, Tim and Zeman, Peter},
  title =	{{NPA Hierarchy for Quantum Isomorphism and Homomorphism Indistinguishability}},
  booktitle =	{52nd International Colloquium on Automata, Languages, and Programming (ICALP 2025)},
  pages =	{105:1--105:19},
  series =	{Leibniz International Proceedings in Informatics (LIPIcs)},
  ISBN =	{978-3-95977-372-0},
  ISSN =	{1868-8969},
  year =	{2025},
  volume =	{334},
  editor =	{Censor-Hillel, Keren and Grandoni, Fabrizio and Ouaknine, Jo\"{e}l and Puppis, Gabriele},
  publisher =	{Schloss Dagstuhl -- Leibniz-Zentrum f{\"u}r Informatik},
  address =	{Dagstuhl, Germany},
  doi =		{10.4230/LIPIcs.ICALP.2025.105},
  annote =	{Keywords: Quantum isomorphism, NPA hierarchy, homomorphism indistinguishability}
}

@article{lasserre,
    title      = {Lasserre Hierarchy for Graph Isomorphism and Homomorphism Indistinguishability},
    author     = {David E. Roberson and Tim Seppelt},
    url        = {https://theoretics.episciences.org/12321},
    doi        = {10.46298/theoretics.24.20},
    journal    = {TheoretiCS},
    issn       = {2751-4838},
    volume     = {Volume 3},
    eid        = 20,
    year       = {2024},
    month      = {Sep},
    keywords   = {Mathematics - Combinatorics, Computer Science - Computational Complexity, Computer Science - Discrete Mathematics, Computer Science - Logic in Computer Science, Mathematics - Optimization and Control},
}

@article{Grohe2025Homomorphism,
	author = {Grohe, Martin and Rattan, Gaurav and Seppelt, Tim},
	journal = {Advances in Combinatorics},
	doi = {10.19086/aic.2025.4},
	year = {2025},
	month = {Apr},
	title = {Homomorphism {Tensors} and {Linear} {Equations}},
}

@article{complexity_isomorphism1,
author = {Grochow, Joshua and Qiao, Youming},
title = {On the Complexity of Isomorphism Problems for Tensors, Groups, and Polynomials I: Tensor Isomorphism-Completeness},
journal = {SIAM Journal on Computing},
volume = {52},
number = {2},
pages = {568-617},
year = {2023},
doi = {10.1137/21M1441110},
}

@InProceedings{complexity_isomorphism3,
  author =	{Chen, Zhili and Grochow, Joshua A. and Qiao, Youming and Tang, Gang and Zhang, Chuanqi},
  title =	{{On the Complexity of Isomorphism Problems for Tensors, Groups, and Polynomials III: Actions by Classical Groups}},
  booktitle =	{15th Innovations in Theoretical Computer Science Conference (ITCS 2024)},
  pages =	{31:1--31:23},
  series =	{Leibniz International Proceedings in Informatics (LIPIcs)},
  ISBN =	{978-3-95977-309-6},
  ISSN =	{1868-8969},
  year =	{2024},
  volume =	{287},
  editor =	{Guruswami, Venkatesan},
  publisher =	{Schloss Dagstuhl -- Leibniz-Zentrum f{\"u}r Informatik},
  address =	{Dagstuhl, Germany},
  doi =		{10.4230/LIPIcs.ITCS.2024.31},
}

@inproceedings{complexity_isomorphism4,
author = {Grochow, Joshua A. and Qiao, Youming},
title = {On the Complexity of Isomorphism Problems for Tensors, Groups, and Polynomials IV: Linear-Length Reductions and Their Applications},
year = {2025},
isbn = {9798400715105},
publisher = {Association for Computing Machinery},
address = {New York, NY, USA},
doi = {10.1145/3717823.3718282},
booktitle = {Proceedings of the 57th Annual ACM Symposium on Theory of Computing},
pages = {766–776},
numpages = {11},
keywords = {algebra isomorphism, completeness, graph isomorphism, polynomial isomorphism, tensor isomorphism},
location = {Prague, Czechia},
series = {STOC '25}
}

@inproceedings{orbit_closure_matrix,
title={On the orbit closure intersection problems for matrix tuples under conjugation and left-right actions},
author={Ivanyos, G{\'a}bor and Qiao, Youming},
booktitle={Proceedings of the 2023 Annual ACM-SIAM Symposium on Discrete Algorithms (SODA)},
pages={4115--4126},
year={2023},
organization={SIAM}
}

@article{derksen2020algorithms,
  title={Algorithms for orbit closure separation for invariants and semi-invariants of matrices},
  author={Derksen, Harm and Makam, Visu},
  journal={Algebra \& Number Theory},
  volume={14},
  number={10},
  pages={2791--2813},
  year={2020},
  publisher={Mathematical Sciences Publishers}
}

@inproceedings{operator_scaling,
author = {Allen-Zhu, Zeyuan and Garg, Ankit and Li, Yuanzhi and Oliveira, Rafael and Wigderson, Avi},
title = {Operator scaling via geodesically convex optimization, invariant theory and polynomial identity testing},
year = {2018},
isbn = {9781450355599},
publisher = {Association for Computing Machinery},
address = {New York, NY, USA},
doi = {10.1145/3188745.3188942},
booktitle = {Proceedings of the 50th Annual ACM SIGACT Symposium on Theory of Computing},
pages = {172–181},
numpages = {10},
keywords = {positive definite, orbit-closure intersection, operator scaling, geodesic convex, capacity},
location = {Los Angeles, CA, USA},
series = {STOC 2018}
}

@inproceedings{search_problems,
author = {Garg, Ankit and Ikenmeyer, Christian and Makam, Visu and Oliveira, Rafael and Walter, Michael and Wigderson, Avi},
title = {Search problems in algebraic complexity, GCT, and hardness of generators for invariant rings},
year = {2020},
isbn = {9783959771566},
publisher = {Schloss Dagstuhl--Leibniz-Zentrum fuer Informatik},
address = {Dagstuhl, DEU},
url = {https://doi.org/10.4230/LIPIcs.CCC.2020.12},
doi = {10.4230/LIPIcs.CCC.2020.12},
abstract = {We consider the problem of computing succinct encodings of lists of generators for invariant rings for group actions. Mulmuley conjectured that there are always polynomial sized such encodings for invariant rings of SLn (C)-representations. We provide simple examples that disprove this conjecture (under standard complexity assumptions).We develop a general framework, denoted algebraic circuit search problems, that captures many important problems in algebraic complexity and computational invariant theory. This framework encompasses various proof systems in proof complexity and some of the central problems in invariant theory as exposed by the Geometric Complexity Theory (GCT) program, including the aforementioned problem of computing succinct encodings for generators for invariant rings.},
booktitle = {Proceedings of the 35th Computational Complexity Conference},
articleno = {12},
numpages = {17},
keywords = {succinct encodings, generators for invariant rings},
location = {Virtual Event, Germany},
series = {CCC '20}
}

@book{Hell-Nesetril,
  title={Graphs and Homomorphisms},
  author={Hell, Pavol and Nesetril, Jaroslav},
  volume={28},
  year={2004},
  series={Oxford Lecture Series in Mathematics and Its Applications},
  publisher={Oxford University Press},
  isbn={9780198528173}
}

@inproceedings{rattan_weisfeiler,
author = {Gaurav Rattan and Tim Seppelt},
title = {Weisfeiler-Leman and Graph Spectra},
booktitle = {Proceedings of the 2023 Annual ACM-SIAM Symposium on Discrete Algorithms (SODA)},
pages = {2268-2285},
year={2023},
organization={SIAM},
doi = {10.1137/1.9781611977554.ch87}
}

@article{regts_edge_reflection,
author = {Regts, Guus},
title = {Edge-reflection positivity and weighted graph homomorphisms},
year = {2015},
issue_date = {January 2015},
publisher = {Academic Press, Inc.},
address = {USA},
volume = {129},
number = {C},
issn = {0097-3165},
doi = {10.1016/j.jcta.2014.09.006},
journal = {J. Comb. Theory Ser. A},
month = jan,
pages = {80–92},
numpages = {13},
}

@incollection{schrijver_graph_2008,
	address = {Berlin, Heidelberg},
	title = {Graph {Invariants} in the {Edge} {Model}},
	booktitle = {Building {Bridges}: {Between} {Mathematics} and {Computer} {Science}},
	publisher = {Springer},
	author = {Schrijver, Alexander},
	editor = {Grötschel, Martin and Katona, Gyula O. H. and Sági, Gábor},
	year = {2008},
	doi = {10.1007/978-3-540-85221-6_16},
	pages = {487--498},
}

@book{Landsberg_2017, place={Cambridge}, series={Cambridge Studies in Advanced Mathematics}, title={Geometry and Complexity Theory}, publisher={Cambridge University Press}, author={Landsberg, J. M.}, year={2017}, collection={Cambridge Studies in Advanced Mathematics}}

@inproceedings{burgisser2011geometric,
  title={Geometric complexity theory and tensor rank},
  author={B{\"u}rgisser, Peter and Ikenmeyer, Christian},
  booktitle={Proceedings of the forty-third annual ACM Symposium on Theory of Computing (STOC)},
  pages={509--518},
  year={2011}
}

\end{document}